\newtheorem{Lem}{Lemma}
\newtheorem{theorem}{Theorem}
\newtheorem{Cor}{Corollary}
\newtheorem{Inv}{Invariant}
\newcommand{\eps}{\epsilon}
\newcommand{\MSF}[1]{\ensuremath{\mbox{MSF}({#1})}}
\newcommand{\MST}[1]{\ensuremath{\mbox{MST}({#1})}}
\newcommand{\FFE}[1]{\ensuremath{\mbox{FFE}({#1})}}
\title{Fully-Dynamic Minimum Spanning Forest with Improved Worst-Case Update Time}
\author{Christian Wulff-Nilsen
        \footnote{Department of Computer Science,
                  University of Copenhagen,
                  \texttt{koolooz@di.ku.dk},
                  \texttt{http://www.diku.dk/$_{\widetilde{~}}$koolooz/}.}}
\date{}
\begin{document}

\maketitle
\begin{abstract}
We give a Las Vegas data structure which maintains a minimum spanning forest in an $n$-vertex edge-weighted dynamic graph undergoing updates consisting of any mixture of edge insertions and deletions. Each update is supported in $O(n^{1/2 - c})$ expected worst-case time for some constant $c > 0$ and this worst-case bound holds with probability at least $1 - n^{-d}$ where $d$ is a constant that can be made arbitrarily large. This is the first data structure achieving an improvement over the $O(\sqrt n)$ deterministic worst-case update time of Eppstein et al., a bound that has been standing for nearly $25$ years. In fact, it was previously not even known how to maintain a spanning forest of an unweighted graph in worst-case time polynomially faster than $\Theta(\sqrt n)$. Our result is achieved by first giving a reduction from fully-dynamic to decremental minimum spanning forest preserving worst-case update time up to logarithmic factors. Then decremental minimum spanning forest is solved using several novel techniques, one of which involves keeping track of low-conductance cuts in a dynamic graph. An immediate corollary of our result is the first Las Vegas data structure for fully-dynamic connectivity where each update is handled in worst-case time polynomially faster than $\Theta(\sqrt n)$ w.h.p.; this data structure has $O(1)$ worst-case query time.
\end{abstract}

\newpage

\section{Introduction}\label{sec:Intro}
A minimum spanning forest (MSF) of an edge-weighted undirected graph $G$ is a forest consisting of MSTs of the connected components of $G$. Dynamic MSF is one of the most fundamental dynamic graph problems with a history spanning more than three decades. Given a graph $G$ with a set of vertices and an initially empty set of edges, a data structure for this problem maintains an MSF $F$ under two types of updates to $G$, namely the insertion or the deletion of an edge in $G$. After each update to $G$, the data structure needs to respond with the updates to $F$, if any.


An MSF of a graph with $m$ edges and $n$ vertices can be computed in $O(m\alpha(m,n))$ deterministic time~\cite{StaticMSTDeterministic} and in $O(m)$ randomized expected time~\cite{StaticMSTRandomized}. Hence, each update can be handled within either of these time bounds by recomputing an MSF from scratch after each edge insertion or deletion. By exploiting the fact that the change to the dynamic graph is small in each update, better update time can be achieved.

The first non-trivial data structure for fully-dynamic MSF was due to Frederickson~\cite{Frederickson} who achieved $O(\sqrt m)$ deterministic worst-case update time where $m$ is the number of edges in the graph at the time of the update. Using the sparsification technique, Eppstein et al.~\cite{Sparsification} improved this to $O(\sqrt n)$ where $n$ is the number of vertices.

Faster amortized update time bounds exist. Henzinger an King~\cite{HK95} showed how to maintain an MSF in $O(k\log^3n)$ amortized expected update time in the restricted setting where the number of distinct edge weights is $k$. The same authors later showed how to solve the general problem using $O(\sqrt[3]n\log n)$ amortized update time~\cite{HK97}. Holm et al.~\cite{HLT01} presented a data structure for fully-dynamic connectivity with $O(\log^2n)$ amortized update time and showed how it can easily be adapted to handle decremental (i.e., deletions only) MSF within the same time bound. They also gave a variant of a reduction of Henzinger and King~\cite{HK97a} from fully-dynamic to decremental MSF and combining these results, they obtained a data structure for fully-dynamic MSF with $O(\log^4n)$ amortized update time. This bound was slightly improved to $O(\log^4n/\log\log n)$ in~\cite{DynMSFESA}. A lower bound of $\Omega(\log n)$ was shown in~\cite{PatrascuDemaine} and this bound holds even for just maintaining the weight of an MSF in a plane graph with unit weights.

\subsection{Our results}
In this paper, we give a fully-dynamic MSF data structure with a polynomial speed-up over the $O(\sqrt n)$ worst-case time bound of Eppstein et al. Our data structure is Las Vegas, always correctly maintaining an MSF and achieving the polynomial speed-up w.h.p.~in each update. The following theorem states our main result.
\begin{theorem}\label{Thm:Main}
There is a Las Vegas data structure for fully-dynamic MSF which for an $n$-vertex graph has an expected update time of $O(n^{1/2 - c})$ for some constant $c > 0$; in each update, this bound holds in the worst-case with probability at least $1 - n^{-d}$ for a constant $d$ that can be made arbitrarily large.
\end{theorem}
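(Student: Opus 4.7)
The plan is to prove Theorem~\ref{Thm:Main} in two stages, mirroring the structure suggested in the abstract. First, I would establish a black-box reduction from fully-dynamic MSF to \emph{decremental} MSF that preserves worst-case update time up to polylogarithmic factors. The classical reductions of Henzinger--King and Holm et al.\ convert fully-dynamic to decremental in the amortized regime by periodic rebuilding; to port this to the worst-case regime I would use a standard de-amortization scaffolding in which the rebuild work is spread out across the next batch of updates and several decremental instances are pipelined, so that at any moment the MSF can be reconstructed from a short list of ``active'' decremental structures together with a small edge overlay. Care is needed to ensure that queries and MSF updates against this overlay can still be served in $O(\polylog n)$ time per fully-dynamic operation; sparsification~\cite{Sparsification} seems to be the right vehicle for keeping the edge count in each instance to $\tilde{O}(n)$.

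Second, I would design a decremental MSF data structure with expected worst-case update time $O(n^{1/2-c})$ that succeeds with probability $1-n^{-d}$ per update. Because only deletions occur, I would maintain a decomposition of the graph into pieces of high conductance (an expander decomposition at an appropriately chosen scale $n^{1/2-c}$); inside a high-conductance piece, random sampling and short random walks let one certify connectivity and locate replacement edges quickly, while deletions that create a genuinely \emph{low-conductance} cut in some piece are detected by a local cut-tracking subroutine. When such a cut is detected, the piece is split, the two sides are recursively redecomposed, and the minimum replacement edge across the cut is retrieved from a precomputed per-piece data structure. Using a core-tree or top-tree atop the decomposition one can maintain the MSF edges spanning between pieces in $\tilde{O}(1)$ time once local work has finished.

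The main obstacle, I expect, is the cut-tracking subroutine: maintaining, in strictly sublinear worst-case time per deletion, a certificate that every piece still has sufficient conductance, together with the ability on failure to exhibit a sparse cut and rebuild. Static expander decompositions run in near-linear time, which is far too slow as a reaction to a single deletion, so the rebuild must be de-amortized across many future updates while the old decomposition continues to serve queries; simultaneously, the random sampling must be fresh enough that an adaptive sequence of deletions cannot force a low-conductance cut to go undetected. A delicate budgeting is required so that the total rebuild work per piece of size $s$ is $\tilde{O}(s)$, amortized against $\Omega(s^{1/2+c})$ deletions inside the piece before it is split, which yields the $O(n^{1/2-c})$ target after the high-probability concentration is applied.

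Finally, composing the two stages gives the theorem: each fully-dynamic update triggers $O(\polylog n)$ operations against decremental instances, each costing $O(n^{1/2-c})$ in the worst case with probability $\ge 1 - n^{-d'}$ for any chosen constant $d'$. A union bound over this polylogarithmic number of operations, with $d' = d + O(1)$, preserves the $1 - n^{-d}$ worst-case guarantee; taking expectation and absorbing the rare failure rebuilds into the $O(\cdot)$ yields the stated expected worst-case bound of $O(n^{1/2-c})$.
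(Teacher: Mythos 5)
The first stage of your proposal is on target: the paper's Theorem~\ref{Thm:Reduction} is exactly a de-amortized variant of the Holm et al.\ reduction in which each rebuild is pipelined across the next batch of updates, and the paper does further reduce to a restricted decremental setting (bounded degree and a bound $\Delta$ on the total number of deletions). However, your second stage — the decremental structure itself — has gaps serious enough that the plan would not go through as stated.

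First, a flat expander decomposition with a per-piece replacement-edge table is not enough to maintain an MSF (as opposed to a spanning forest). When a tree edge is deleted, the cheapest reconnecting edge may have arbitrary weight relative to the piece boundaries, and without coupling the decomposition to the \emph{weight order} of the edges you have no handle on where to look for it. The paper's restricted decremental structure (Section~\ref{sec:RestrictedMSF}) builds a laminar family of clusters indexed by weight-rank ranges $E_0,\ldots,E_{\ell-1}$, then re-weights inter-cluster edges (graph $G'$) precisely so that Lemma~\ref{Lem:MSFLaminar} holds: $\MSF{G'}$ restricted to any cluster is that cluster's MSF. This is the step that turns ``replacement-edge search'' into a level-local operation, and it is absent from your sketch. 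Relatedly, the paper needs the FFE structure (Theorem~\ref{Thm:MSFFewNonTreeEdges}, $\tilde O(\sqrt h)$ per update when there are at most $h$ non-tree edges) to cheaply glue the level MSFs back into $M$; your ``core-tree or top-tree atop the decomposition'' gestures at this but the $\tilde O(1)$-per-operation claim is not justified because the relevant non-tree edge count is not constant.

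Second, your budgeting argument — $\tilde O(s)$ rebuild cost for a piece of size $s$, amortized against $\Omega(s^{1/2+c})$ deletions inside it — is amortized, not worst-case, and is vulnerable to an adversary that concentrates deletions in a small piece (or splits a freshly rebuilt piece immediately). The paper's approach is structurally different: pieces are \emph{never} recursively redecomposed during the decremental phase. Instead, each expander piece is maintained by \texttt{XPrune} (Section~\ref{sec:XPrune}), which detects low-conductance cuts via an incrementally maintained family of \texttt{Nibble} executions and peels off only the small side, guaranteed to be small \emph{because} the piece started as an $n^{-O(\eps)}$-expander and the total number of deletions is bounded by $\Delta=n^{1/2+\Theta(\eps)}$ (this bound comes from the reduction, not from per-piece accounting). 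The total pruned volume over the whole run is $O(\Delta/\gamma)$, which is what makes the worst-case per-update cost sublinear. Without the $\Delta$ bound and without the ``prune, don't rebuild'' mechanism, your de-amortization scaffolding has nothing to lean on. Finally, the paper sidesteps adaptivity cleanly: since edge weights are made unique, the MSF is unique, so the output reveals no randomness; you would need a comparable argument rather than the informal ``random sampling must be fresh enough'' remark.
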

We have not calculated the precise value of constant $c$ but it is quite small. From a theoretical perspective however, the $O(\sqrt n)$ bound is an important barrier to break. Furthermore, a polynomial speed-up is beyond what can be achieved using word parallelism alone unless we allow a word size polynomial in $n$. Indeed, our improvement does not rely on a more powerful model of computation than what is assumed in previous papers. To get our result, we develop several new tools some of which we believe could be of independent interest. We sketch these tools later in this section.

As is the case for all randomized algorithms and data structures, it is important that the random bits used are not revealed to an adversary. It is well-known that if all edge weights in a graph are unique, its MSF is uniquely defined. Uniqueness of edge weights can always be achieved using some lexicographical ordering in case of ties. This way, our data structure can safely reveal the MSF after each update without revealing any information about the random bits used.

\paragraph{Dynamic connectivity:}
An immediate corollary of our result is a fully-dynamic data structure for maintaining a spanning forest of an unweighted graph in worst-case time $O(n^{1/2 - c})$ with high probability. The previous best worst-case bound for this problem was $O(\sqrt n)$ by Eppstein et al.\cite{Sparsification}; if word-parallelism is exploited it, a slightly better bound of $O(\sqrt{n(\log\log n)^2/\log n})$ was shown by Kejlberg-Rasmussen et al.~\cite{PolyLogWorstCaseSpeedup}. There are Monte Carlo data structures for fully-dynamic connectivty by Kapron et al.\cite{WorstCasePolyLog} and by Gibb et al.\cite{Gibb} which internally maintain a spanning forest in polylogarithmic time per update. However, contrary to our data structure, these structures cannot reveal the spanning forest to an adversary. Kapron et al.~extend their result to maintaining an MSF in $\tilde O(L)$ time\footnote{We use $\tilde O$, $\tilde\Omega$, and $\tilde\Theta$ when suppressing $\log n$-factors.} per update where $L$ is the number of distinct weights. However, their data structure can only reveal the weight of this MSF. Furthermore, if all edge weights are unique, this bound becomes $\tilde O(m)$.

From our main result, we also immediately get the first Las Vegas fully-dynamic connectivity structure achieving w.h.p.~a worst-case update time polynomially faster than $\sqrt n$, improving the previous best Las Vegas bounds of Eppstein et al.\cite{Sparsification} and Kejlberg-Rasmussen et al.~\cite{PolyLogWorstCaseSpeedup}. By maintaining the spanning forest using a standard dynamic tree data structure with polynomial fan-out, our connectivity structure achieves constant worst-case query time.

\paragraph{Monte Carlo data structure:}
It is easy to modify our Las Vegas structure to a Monte Carlo structure which is guaranteed to handle each update in $O(n^{1/2 - c})$ worst-case time. This is done by simply terminating an update if the $O(n^{1/2 - c})$ time bound is exceeded by some constant factor $C$. By picking $C$ sufficiently large, we can ensure that this termination happens only with low probability in each update. An issue here is that once the Monte Carlo structure makes an error, subsequent updates are very likely to also maintain an incorrect MSF. This can be remedied somewhat by periodically rebuilding new MSF structures so that after a small number of updates, the data structure again maintains a correct MSF with high probability; we omit the details as our focus is on obtaining a Las Vegas structure.

\subsection{High-level description and overview of paper}\label{subsec:HighLevel}
In the rest of this section, we give an overview of our data structure as well as how the paper is organized. The description of our data structure here will not be completely accurate and we only highlight the main ideas.

Section~\ref{sec:Prelim} introduces some definitions and notation that will be used throughout the paper.

\paragraph{Restricted Decremental MSF Structure (Section~\ref{sec:RestrictedMSF})}
In Section~\ref{sec:RestrictedMSF}, we present a data structure for a restricted version of decremental MSF where the initial graph has max degree at most $3$ and where there is a bound $\Delta$ on the total number of edge deletions where $\Delta$ may be smaller than the initial number of edges.

The data structure maintains a recursive clustering of the dynamic graph $G = (V,E)$ where each cluster is a subgraph of $G$. This clustering forms a laminar family $\mathcal F$ (w.r.t.~subgraph containment) and can be represented as a rooted tree where the root corresponds to the entire graph $G$; for technical reasons, we refer to the root as a level $-1$-cluster and the children of an $i$-cluster are referred to as level $(i+1)$-clusters. The decremental MSF structure of Holm et al.~\cite{HLT01} also maintains a recursive clustering but ours differs significantly from theirs, as will become clear.

In our recursive clustering, the vertex sets of the level $0$-clusters form a partition $V$ and w.h.p., each level $0$-cluster is an expander graph and the number of inter-cluster edges is small. More specifically, the expansion factor of each expander graph is of the form $n^{-c_1}$ and the number of inter-cluster edges is at most $n^{1 - c_2}$ for some small positive constants $c_1$ and $c_2$. Such a partition is formed with a new algorithm that we present in Section~\ref{sec:PartitionExpanders}.

Next, consider a list of the edges of $E$ sorted by decreasing weight. This list is partitioned into $\ell = m^{\eps}$ sublists each of size $m/\ell$ for some small constant $\eps > 0$. These sublists correspond to suitable subsets $E_0,\ldots,E_{\ell-1}$ ordered by decreasing weight.

Each level $i$-cluster $C$ contains only edges from $E_i\cup\ldots\cup E_{\ell-1}$. To form the children of $C$ in $\mathcal F$, we remove from $C$ the edges in $E_i$ and partition the remaining graph into expander graphs as above; these expander graphs are then the children of $C$. The recursion stops when $C$ has size polynomially smaller than $n$.

Next, we form a new graph $G'$ from $G$ as follows. Initially, $G' = G$. For each $i$ and for each level $i$-cluster $C$, all the edges of $C - E_i$ between distinct child clusters of $C$ are added to an auxiliary structure $\mathcal M'$ that we describe below. In $G'$, their edge weights are artificially increased to a value which is smaller than the weight of any edge of $G$ in $E_{i+1}\cup\ldots\cup E_{\ell}$ and heavier than the weight of any edge of $G$ in $E_0\cup\ldots\cup E_i$. The edges added to $\mathcal M'$ keep their original weights in $G$. An example is shown in Figure~\ref{fig:NestingMSF}.

Now, we have an auxiliary structure $\mathcal M'$ containing a certain subset $E'$ of edges of $G$ and a recursive clustering of $G'$. Because of the way we defined edge weights in $G'$, an MSF $M'$ of this graph has the nice property that it is consistent with the recursive clustering: for any cluster $C$, $M'$ restricted to $C$ is an MSF of $C$. This could also have been achieved if we had simply deleted the edges from $G'$ whose weights were artificially increased above; however, it is important to keep them in $G'$ in order to preserve the property that clusters are expander graphs.

Assuming for now that clusters do not become disconnected during updates, it follows from this property that we can maintain $M'$ by maintaining an MSF for each level independently where level $(i+1)$-clusters are regarded as vertices of the MSF at level $i$. The global MSF $M'$ is then simply the union of (the edges of) these MSFs. Each edge deletion in $G$ only requires an MSF at one level to be updated and we show that the number of edges at this level is polynomially smaller than $n$, allowing us to maintain $M$' in time polynomially faster than $\Theta(\sqrt n)$.

We add the edges of $M'$ to $\mathcal M'$. In order to maintain an MSF $M$ of $G$, we show that it can be maintained as an MSF of the edges added to $\mathcal M'$. This follows easily from observations similar to those of Eppstein et al.~\cite{Sparsification} combined with the fact that any edge that was increased in $G'$ belongs to $\mathcal M'$ with its original weight. We show that the number of non-tree edges in the graph maintained by $\mathcal M'$ is polynomially smaller than $n$. $\mathcal M'$ is an instance of a new data structure (Section~\ref{sec:FewNonTreeEdges}) which maintains an MSF of a graph in $\tilde O(\sqrt h)$ worst-case time per update where $h$ is an upper bound on the number of non-tree edges ever present in the graph. Hence, maintaining $M$ can be done in time polynomially faster than $\Theta(\sqrt n)$.

The main obstacle to overcome is to handle disconnected clusters. If a level $(i+1)$-cluster becomes disconnected, this may affect the MSF at level $i$ and changes can propagate all the way down to level $-1$ (similar to what happens in the data structure in~\cite{HLT01}). Our analysis sketched above then breaks down. However, this is where we exploit the fact that w.h.p., each cluster $C$ is initially an expander graph. This implies that, assuming the total number $\Delta$ of edge deletions is not too big, $C$ can only become disconnected along a cut where one side is small.

Whenever an edge has been deleted from a cluster $C$, a data structure (Sections~\ref{sec:DecDSLowCondCuts},~\ref{sec:LowCondCutsSparsification}, and~\ref{sec:XPrune}) is applied which ``prunes'' off parts of $C$ so that w.h.p., the pruned $C$ remains an expander graph. Because of the property above, only small parts need to be pruned off. As we show, this can be handled efficiently for $\Delta$ polynomially slightly bigger than $\sqrt n$. With a reduction (Section~\ref{sec:Reduction}) from fully-dynamic MSF to the restricted decremental MSF problem with this value of $\Delta$, the main result of the paper follows.

\paragraph{Reduction to decremental MSF (Section~\ref{sec:Reduction})}
In Section~\ref{sec:Reduction}, we give a reduction from fully-dynamic MSF to a restricted version of decremental MSF where the initial $n$-vertex graph has degree at most $3$ and where the total number of edge deletions allowed is bounded by a parameter $\Delta = \Delta(n)$. The reduction is worst-case time-preserving, meaning roughly that if we have a data structure for the restricted decremental MSF problem with small worst-case update time then we also get a data structure for fully-dynamic MSF with small worst-case update time. This is not the case for the reduction presented in~\cite{HLT01} since it only ensures small amortized update time for the fully-dynamic structure.

More precisely, our reduction states that if the data structure for the restricted decremental problem has preprocessing time $P(n)$ and worst-case update time $U(n)$ then there is a fully-dynamic structure with worst-case update time $\tilde O(P(n)/\Delta(n) + U(n))$.

To get this result, we modify the reduction of Holm et al.~\cite{HLT01}. In their reduction, $O(\log n)$ decremental structures (which do not have a $\Delta$-bound on the total number of edge deletions) are maintained. During updates, new decremental structures are added and other decremental structures are merged together. The main reason why this reduction is not worst-case time-preserving is that a merge is done during a single update and this may take up to linear time.

We modify the reduction using a fairly standard deamortization trick of spreading the work of merging decremental structures over multiple updates. This gives the desired worst-case time-preserving reduction from fully-dynamic to decremental MSF. We then show how to further reduce the problem to the restricted variant considered in Section~\ref{sec:RestrictedMSF}.

\paragraph{Fully-dynamic MSF with few non-tree edges (Section~\ref{sec:FewNonTreeEdges})}
In Section~\ref{sec:FewNonTreeEdges}, we present a fully-dynamic MSF structure which has an update time of $\tilde O(\sqrt h)$ where $h$ is an upper bound on the number of non-tree edges ever present in the graph. At a high level, this structure is similar to that of Frederickson~\cite{Frederickson} in that it maintains a clustering of each tree of the MSF $M$ into subtrees of roughly the same size. However, because of the bound on the number of non-tree edges, we can represent $M$ in a more compact way as follows. Consider the union of all paths in $M$ between endpoints of non-tree edges. In this subforest $M'$ of $M$, consider all maximal paths whose interior vertices have degree $2$. The compact representation is obtained from $M'$ by replacing each such path by a single ``super edge''; see Figure~\ref{fig:FewNonTreeEdges}. The compact version of $M'$ only has size $O(h)$.

The update time for Frederickson's structure is bounded by the maximum of the number of clusters and the size of each cluster so to get the $(\sqrt m)$ bound, his structure maintains $O(\sqrt m)$ clusters each of size $O(\sqrt m)$. We use essentially the same type of clustering as Frederickson but for the compact representation of $M$, giving $O(\sqrt h)$ clusters each of size $O(\sqrt h)$. Using a data structure similar to Frederickson for the compact clustering, we show that $M$ can be maintained in $\tilde O(\sqrt h)$ worst-case time per update. Here we get some additional log-factors since we make use of the top tree data structure in~\cite{TopTree} to maintain, e.g., the compact representation of $M$.

\paragraph{Partitioning a graph into expander subgraphs (Section~\ref{sec:PartitionExpanders})}
In Section~\ref{sec:PartitionExpanders}, we present a near-linear time algorithm to partition the vertex set $V$ of an $n$-veretx constant-degree graph such that w.h.p., each set in this partition induces an $n^{-c_1}$-expander graph and the number of edges between distinct sets is $n^{1 - c_2}$ for suitable positive constants $c_1$ and $c_2$. The algorithm is a recursive variant of the \texttt{Partition} algorithm of Spielman and Teng~\cite{SpielmanTeng}.

For our application of this result in Section~\ref{sec:RestrictedMSF}, we need each expander graph $H$ to respect a given partition $\mathcal C$ of $V$, meaning that each $C\in\mathcal C$ is either contained in $V(H)$ or disjoint from $V(H)$. Ensuring this is a main technical challenge in this section.

\paragraph{Decremental Maintenance of Expander Graphs (Sections~\ref{sec:DecDSLowCondCuts},~\ref{sec:LowCondCutsSparsification}, and~\ref{sec:XPrune})}
In Section~\ref{sec:DecDSLowCondCuts}, we present a decremental data structure which, given an initial expander graph $H$ of degree at most $3$ (such as one from Section~\ref{sec:RestrictedMSF}), outputs after each update a subset of vertices such that at any point, there exists a subset $W$ of the set of vertices output so far so that $H[V(H) - W]$ is guaranteed to be connected; furthermore, w.h.p., the set output in each update is small. As we show, this is exactly what is needed in Section~\ref{sec:RestrictedMSF} where we require clusters to be connected at all times and where the vertices pruned off each cluster is small in each update.

This data structure relies on a procedure in Section~\ref{sec:XPrune} which we refer to as \texttt{XPrune}. It detects low-conductance cuts in a decremental graph (which is initially an expander graph) and prunes off the smaller side of such a cut while retaining the larger side.

\texttt{XPrune} uses as a subroutine the procedure \texttt{Nibble} of Spielman and Teng~\cite{SpielmanTeng}. Given a starting vertex $s$ in a (static) graph, \texttt{Nibble} computes (approximate) probability distributions for a number of steps in a random walk from $s$. For each step, \texttt{Nibble} attempts to identify a low-conductance cut based on the probability mass currently assigned to each vertex. Spielman and Teng show that if the graph has a low-conductance cut then \texttt{Nibble} will find such a cut for at least one choice of $s$.

In Section~\ref{sec:XPrune}, we show how to adapt \texttt{Nibble} from a static to a decremental setting roughly as follows. In the preprocessing step, \texttt{Nibble} is started from every vertex in the graph and if a low-conductance cut is found, the smaller side is pruned off. Now, consider an update consisting of the deletion of an edge $e$. We cannot afford to rerun \texttt{Nibble} from every vertex as in the preprocessing step. Instead we show that there is only a small set of starting vertices for which \texttt{Nibble} will have a different execution due to the deletion of $e$. We only run \texttt{Nibble} from starting vertices in this small set; these vertices can easily be identified since they are exactly those for which \texttt{Nibble} in some step sends a non-zero amount of probability mass along $e$ in the graph just prior to the deletion.

Hence, we implicitly run \texttt{Nibble} from every starting vertex after each edge deletion so if there is a low-conductance cut, \texttt{XPrune} is guaranteed to find such a cut. When the smaller side of a cut is pruned off, a similar argument as sketched above implies that \texttt{Nibble} only needs to be rerun from a small number of starting vertices on the larger side.

In order to have \texttt{XPrune} run fast enough, we need an additional trick which is presented in Section~\ref{sec:LowCondCutsSparsification}. Here we show that w.h.p., the conductance of every cut in a given multigraph is approximately preserved in a subgraph obtained by sampling each edge independently with probability $p$; this assumes that $p$ and the min degree of the original graph are not too small. This is somewhat similar to Karger's result that the value of each cut is preserved in a sampled subgraph~\cite{Karger}. We make use of this new result in Section~\ref{sec:XPrune} where we run \texttt{Nibble} on the sampled subgraph rather than the full graph. Combined with the above implicit maintenance of calls to \texttt{Nibble}, this gives the desired performance of \texttt{XPrune}.

We conclude the paper in Section~\ref{sec:ConclRem}.


\section{Preliminaries}\label{sec:Prelim}
We consider only finite undirected graphs and unless otherwise stated, they are simple. An edge-weighted graph is written on the form $G = (V,E,w)$ where $w:E\rightarrow\mathbb R$; we sometimes simply write $G = (V,E)$ even if $G$ is edge-weighted.

For a simple graph or a multigraph $H$, $V(H)$ denotes its vertex set and $E(H)$ denotes its edge set. If $H$ is edge-weighted, we regard any subset $E$ of $E(H)$ as a set of weighted edges and if the edge weight function $w: E(H)\rightarrow\mathbb R$ for $H$ is not clear from context, we write $E(w)$ instead of $E$. We sometimes abuse notation and regard $E(H)$ as a graph with edge set $E(H)$ and vertex set consisting of the endpoints of edges in $E(H)$. When convenient, we regard the edge set of a minor of $H$ as a subset of $E(H)$ in the natural way.

Given two edge-weighted graphs $G_1 = (V_1,E_1,w_1)$ and $G_2 = (V_2,E_2,w_2)$, we let $G_1\cup G_2$ denote the multigraph with vertex set $V_1\cup V_2$ and edge set $E_1\cup E_2$; if both $E_1$ and $E_2$ contain an edge between the same vertex pair $(u,v)$, we keep both edges in $G_1\cup G_2$, one having weight $w_1(u,v)$ and the other having weight $w_2(u,v)$.

In the rest of this section, let $G = (V,E,w)$ be an edge-weighted graph. A \emph{component} of $G$ is a connected component of $G$ and we sometimes regard it as a subset of $V$. For $W\subseteq V$, $G[W]$ is the subgraph of $G$ induced by $W$. When $V$ is clear from context, we say that $W$ \emph{respects} another subset $C$ of $V$ if either $C\subseteq W$ or $C\cap W = \emptyset$. We extend this to a collection $\mathcal C$ of subsets of $V$ and say that $W$ respects $\mathcal C$ if $W$ respects each set in $\mathcal C$; in this case, we let $\mathcal C(W)$ denote the collection of sets of $\mathcal C$ that are contained in $W$. For a subgraph $H$ of $G$, we say that $H$ respects $C$ resp.~$\mathcal C$ if $V(H)$ respects $C$ resp.~$\mathcal C$.

A \emph{cut} of $G$ or of $V$ is a pair $(V_1,V_2)$ such that $V_1\cup V_2 = V$ and $V_1\cap V_2 = \emptyset$. When $V$ is clear from context, we identify a cut $(V_1,V_2)$ with $V_1$ or with $V_2$.

For a subset $S$ of $V$, denote by $\delta_G(S)$ the number of edges of $E$ crossing the cut $(S, V - S)$, i.e., $\delta_G(S) = |E\cap S\times(V-S)|$. The \emph{volume} $\mbox{Vol}_G(S)$ of $S$ in $G$ is the number of edges of $G$ incident to $S$. Assuming both $S$ and $V - S$ have positive volume in $G$, the \emph{conductance} $\Phi_G(S)$ of $S$ (or of $(S, V - S)$) is defined as $\Phi_G(S) = \delta_G(S)/\min\{\mbox{Vol}_G(S),\mbox{Vol}_G(V - S)\}$ (this is called sparsity in~\cite{SpielmanTeng}). When $G$ is clear from context, we define, for $S\subseteq W\subseteq V$, $\delta_W(S) = \delta_{G[W]}(S)$, $\mbox{Vol}_W(S) = \mbox{Vol}_{G[W]}(S)$, and $\Phi_W(S) = \Phi_{G[W]}(S)$. We extend the definitions in this paragraph to multigraphs in the natural way.

Given a real value $\gamma > 0$, we say that $G$ is a \emph{$\gamma$-expander graph} and that $G$ has \emph{expansion} $\gamma$ if for every cut $(S, V - S)$, $\delta_G(S)\geq\gamma\min\{|S|, |V - S|\}$. Note that if $G$ is connected and has constant degree then $\Phi_G(S) = \Theta(\delta_G(S)/\min\{|S|, |V - S|\})$ for every $S\notin\{\emptyset,V\}$; thus, in this special case, $G$ has expansion $\Theta(\gamma)$ iff every such cut has conductance $\Omega(\gamma)$.




We let $\MSF{G}$ resp.~$\MST{G}$ denote an MSF resp.~MST of $G$; in case this forest resp.~tree is not unique, we choose the MSF resp.~MST that has minimum weight w.r.t.~some lexicographical ordering of edge weights. For instance, consider assigning a unique index between $1$ and $n$ to each vertex. If two distinct edges $e_1 = (u_1,v_1)$ and $e_2 = (u_2,v_2)$ have the same weight, we regard $e_1$ as being cheaper than $e_2$ iff the index pair corresponding to $(u_1,v_1)$ is lexicograpically smaller than the index pair corresponding to $(u_2,v_2)$. We extend $\MSF{G}$ and $\MST{G}$ to the case where $G$ is a multigraph.

The fully-dynamic MSF problem is the problem of maintaining an MSF $F$ of an $n$-vertex edge-weighted dynamic simple graph $G$ under updates where each update is either the insertion or the deletion of a single edge. Initially, $G$ contains no edges.

The following is well-known and easy to show for the dynamic MSF problem. When an edge $e = (u,v)$ is inserted into $G$, $e$ becomes a new tree edge (of $F$) if it connects two distinct trees in $F$. If $e$ has both endpoints in the same tree, it becomes a tree edge if the heaviest edge $f$ on the $u$-to-$v$ path in $F$ has weight greater than $e$, and $f$ becomes a non-tree edge; otherwise $e$ becomes a non-tree edge. No other changes happen to $F$. After such an insertion, a data structure for the problem should report whether $e$ becomes a tree edge and if so, it should report $f$ if it exists.

When an edge $e = (u,v)$ is deleted, if $(u,v)$ is a non-tree edge, no updates occur in $F$. Otherwise, $F$ is correctly updated by adding a cheapest reconnecting edge (if any) for the two new trees of $F$ containing $u$ and $v$, respectively. The data structure should report such an edge if it exists.

Decremental MSF is the same problem as fully-dynamic MSF except that we only permit edge deletions; here we have an initial graph with an initial MSF and we allow a preprocessing step (which in particular needs to compute the initial MSF). Both fully-dynamic and decremental MSF extend to multigraphs but unless otherwise stated, we consider these problems for simple graphs. When convenient, we identify a fully-dynamic or a decremental MSF structure with the dynamic graph that it maintains an MSF of.

Our data structure uses the top tree structure of Alstrup et al.~\cite{TopTree}. We assume that the reader is familiar with this structure, including concepts like top tree clusters and top tree operations like \texttt{create}, \texttt{join}, \texttt{split}, \texttt{link}, and \texttt{cut}.

We shall assume the Word-RAM model of computation with standard operations where each word consists of $\Theta(\log n)$ bits plus extra bits (if needed) to store the weight of an edge. We use this model to get a cleaner description of our data structure; with only a logarithmic overhead, our time bound also applies for a pointer machine having the same word size and the same operations as in the Word-RAM model.

We use the notation $O_{f(n)}(\cdot)$, $\Omega_{f(n)}(\cdot)$, and $\Theta_{f(n)}(\cdot)$ when suppressing a factor of $f(n)^{\Theta(1)}$ or $f(n)^{-\Theta(1)}$ so that, e.g., a function $h(n)$ is $\Theta_{f(n)}(g(n))$ if $h(n) = O(g(n)f(n)^{c_1})$ and $h(n) = \Omega(g(n)f(n)^{-c_2})$ for some constants $c_1,c_2\ge 0$.

\section{Restricted Decremental MSF Structure}\label{sec:RestrictedMSF}
In this section, we present our data structure for a restricted version of decremental MSF where for an $n$-vertex graph, the total number of edge deletions allowed is upper bounded by a parameter $\Delta = \Delta(n)$. The following theorem, whose proof can be found in Section~\ref{sec:Reduction}, will imply that this suffices to obtain our fully-dynamic MSF structure.
\begin{theorem}\label{Thm:Reduction}
Let a decremental MSF structure be given which for an $n$-vertex graph of max degree at most $3$ and for constants $c_P\ge 1$ and $0 < c_U,c_{\Delta} < 1$ has preprocessing time at most $n^{c_P}$ and supports up to $n^{c_{\Delta}}$ edge deletions each in worst-case time at most $n^{c_U}$. Then there is a fully-dynamic MSF structure which for an $n$-vertex dynamic graph has worst-case update time $O((n^{c_U} + n^{c_P - c_{\Delta}})\log n)$. If for the decremental structure the preprocessing time and update time bounds hold w.h.p.~then in each update, w.h.p.~the fully-dynamic structure spends no more than $O((n^{c_U} + n^{c_P - c_{\Delta}})\log n)$ worst-case time.
\end{theorem}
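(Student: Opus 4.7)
The plan is to adapt the Henzinger--King / Holm et al.~reduction, which builds a hierarchy of decremental MSF structures and converts insertions into batched rebuilds, and to deamortize the merge phases so the rebuilding cost is paid out across many updates rather than all at once. First, I would maintain a collection of $O(\log n)$ decremental MSF structures $A_0,A_1,\ldots,A_k$ whose sizes roughly follow a geometric progression; the global MSF is recovered from the union of their individual MSFs plus a bounded set of edges that have been displaced from the MSF by deletions at various levels. Insertions feed into the smallest level; when a level becomes ``full'' it is merged into the next. Deletions of an edge $e$ are routed to the structure holding $e$, and if a replacement edge $f$ is returned, it is reinserted at the top of the hierarchy exactly as in Holm et al. I would verify carefully that the global MSF invariants used in~\cite{HLT01} are preserved by this bookkeeping.

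The core modification is to replace the single, synchronous, potentially linear-time merge of Holm et al.~with a deamortized ``builder''. Whenever level $i$ is scheduled to be flushed into a freshly constructed decremental structure at level $i+1$, I would spawn a builder that spends $O(P(n)/\Delta(n))$ time per global update on preprocessing the new structure on a snapshot of the current edge set, while the old $A_{i+1}$ continues to serve deletions and queries during the transition. Since the preprocessing takes $n^{c_P}$ time total and we allocate at least $n^{c_\Delta}$ updates for it to finish, each update contributes only $O(n^{c_P-c_\Delta})$ work to a builder; all deletions that occur during the transition are buffered and replayed into the new structure upon handover, so that after at most $O(\Delta)$ updates the new $A_{i+1}$ is online and the old one is discarded before it exhausts its deletion budget. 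Only $O(\log n)$ such builders are active at any moment, and each level triggers $O(1)$ decremental operations per update, giving a worst-case bound of $O((n^{c_U}+n^{c_P-c_\Delta})\log n)$ per global update.

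To handle the max-degree-$3$ requirement, I would apply a standard vertex-splitting gadget: each vertex of degree $d$ in the dynamic graph is represented inside a decremental structure by a balanced binary tree on $O(d)$ copies, connected by auxiliary tree edges whose weights are chosen to be lighter than any real edge so that they are forced into every MSF. A real insertion or deletion corresponds to a constant number of real and auxiliary edge updates in the underlying structure; since the augmented graph has max degree $3$ and its MSF projects cleanly onto the MSF of the original graph, this preserves correctness with only constant overhead in the update time.

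The main obstacle I expect is the bookkeeping during the deamortized merge: I need to guarantee that at every moment, even mid-transition, the edges tracked across the hierarchy together with the buffer of pending replays determine the correct global MSF, and that no decremental substructure is asked to perform more than $\Delta$ deletions over its lifetime (including the replay burst at handover). Ensuring simultaneously that (i) every edge deletion is accounted for in exactly one active structure, (ii) replacement edges returned by a lower-level structure are reinjected into the hierarchy at a consistent level, and (iii) each builder's budget window strictly contains the $\le\Delta$ deletions its target structure will ever see, is the delicate part; everything else is a direct logarithmic-factor accounting, and the w.h.p.~bound follows from a union bound over the $O(\log n)$ levels by inflating the constant $d$ slightly.
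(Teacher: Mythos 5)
Your high-level plan is on the right track --- the paper does exactly what you describe at the top: take the Holm et al.~hierarchy of $O(\log n)$ decremental structures, deamortize the merges by spreading construction over an interval, and apply vertex splitting to enforce max degree $3$. However, you have collapsed two logically distinct deamortization mechanisms into one, and this creates a genuine gap. In the Holm et al.~hierarchy, the structure at level $i$ is rebuilt at the start of every $2^{i+1}$-interval; the paper deamortizes this by spreading the $P(\min\{m,2^i\log n\},\cdot)$ preprocessing cost over the first half of that $2^{i+1}$-interval and then replaying buffered deletions at double speed in the second half, giving a per-update contribution of $\frac{1}{2^i}P(\cdot)$ at each level, which sums (using $c_P\ge 1$) to $O(m^{c_P-1}\cdot\operatorname{polylog} n)$. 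That is not the $O(P(n)/\Delta)$ term you are charging. The $n^{c_P-c_\Delta}$ term comes from an entirely separate, outer layer of periodic rebuilding that the paper performs on the decremental structure itself to cope with the $\Delta$-deletion limit: split time into $\Delta/3$-intervals, build a fresh instance in one interval, catch it up at double speed in the next, and serve with it in the third. Your proposal charges each hierarchy-level merge $O(P(n)/\Delta)$ per update, which would mean a level-$i$ rebuild takes $\Theta(\Delta)$ updates; but level $i$ must be rebuilt every $2^{i+1}$ updates, so whenever $2^{i+1}<\Delta$ (i.e., for all low levels) the builder will not finish in time, and the hierarchy invariant breaks. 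Conversely, for the large levels you correctly worry about exceeding the $\Delta$ budget --- that is precisely what the outer rebuild layer is for, and your proposal as written has no separate mechanism to bound the number of deletions a level-$L$ structure ever sees.

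Two smaller omissions: you never apply the sparsification of Eppstein et al.~(Theorem~3.3.2 in~\cite{Sparsification}) to pass from an $m$-edge graph to an $O(n)$-edge graph, so all your bounds would be in terms of $m$, not $n$, and the theorem as stated would not follow. And the paper reduces first from the fully-dynamic problem on a simple graph to decremental MSF on a \emph{multigraph} (the compact ``super-edge'' representation in the hierarchy), then splits vertices into a path of degree-$\le 3$ copies; your balanced-tree variant is also fine (attach real edges at leaves), but you should be explicit that the splitting is applied to the multigraphs that the hierarchy hands to the decremental black box, not to the top-level dynamic graph. The w.h.p.\ part via a union bound over $O(\log n)$ structures is fine, though the paper also needs a fallback deterministic update for tiny subinstances where the w.h.p.\ guarantee of the black box degenerates.
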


We shall specify $\Delta$ later but it will be chosen slightly bigger than $\sqrt n$. Parts of the structure are regarded as black boxes here and will be presented in detail in later sections. We assume that the input graph $G = (V,E,w:E\rightarrow\mathbb R)$ has max degree at most $3$ and we will give a data structure with update time polynomially less than $\Theta(\sqrt n)$. In the following, we let $M$ denote the decremental MSF $\MSF{G}$ of $G$ that our data structure should maintain.


A key invariant of our data structure is that it maintains a subgraph of $G$ having the same MSF as $G$ but having polynomially less than $n$ non-tree edges at all times. This allows us to apply the data structure of the following theorem whose proof is delayed until Section~\ref{sec:FewNonTreeEdges}.
\begin{theorem}\label{Thm:MSFFewNonTreeEdges}
Let $H = (V,E_H)$ be a dynamic $n$-vertex graph undergoing insertions and deletions of weighted edges where the initial edge set $E_H$ need not be empty and where the number of non-tree edges never exceeds the value $h$. Then there is a data structure which after $O(n\log n + h\log^2n)$ worst-case preprocessing time can maintain $F = \MSF{H}$ in $O(\sqrt h\log^{3/2}n)$ worst-case time per update where an update is either the insertion or the deletion of an edge in $H$ or a batched insertion of up to $\Theta(\sqrt{h/\log n})$ edges in $H$, assuming this batched insertion does not change $F$.
\end{theorem}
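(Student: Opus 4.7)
The plan is to follow the sketch in Section~\ref{subsec:HighLevel}. Let $F = \MSF{H}$ and let $F' \subseteq F$ be the union of all $F$-paths whose endpoints are endpoints of non-tree edges of $H$. Because at most $h$ non-tree edges are ever present, $F'$ has at most $2h$ leaves and so only $O(h)$ vertices of degree $\neq 2$; collapsing each maximal path of degree-$2$ interior vertices of $F'$ into a single \emph{super edge} yields a compact forest $\bar F'$ with $O(h)$ vertices and edges. Each super edge stores the maximum weight on the $F$-path it represents (needed for reconnection queries) together with a pointer into a top tree on $F$ that represents that path.

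I would maintain the following structures: (i) a top tree $\mathcal T$ on $F$ supporting path-max, \texttt{link}, and \texttt{cut} in $O(\log n)$ time; (ii) the compact forest $\bar F'$ as a second top tree $\bar{\mathcal T}$; (iii) Frederickson's restricted two-level partition of $\bar F'$ into $\Theta(\sqrt h)$ connected clusters, each of $\Theta(\sqrt h)$ edges and with $O(1)$ boundary vertices; and (iv) for each cluster $C$ a priority queue $Q_C$ containing every non-tree edge of $H$ whose endpoints map into $C$, keyed by weight. Each vertex $v\in V$ carries a pointer to its current super edge (hence its cluster), refreshed through $\mathcal T$ whenever $F$ changes. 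The three update types are handled as follows. A non-tree insertion or deletion of $e=(u,v)$ updates $Q_C$ for the (at most two) clusters containing $u,v$; if this flips the non-tree-incidence counter at $u$ or $v$ between zero and non-zero, a constant number of super edges of $\bar F'$ split or merge, each at cost $O(\log n)$. A tree insertion of $e=(u,v)$ is standard: query $\mathcal T$ for the max-weight edge $f$ on the $u$-$v$ path, and if $w(e) < w(f)$, \texttt{cut} $f$, \texttt{link} $e$, demote $f$ to a non-tree edge, and patch $\bar F'$, its partition, and the affected $Q_C$'s locally. The crucial case is the deletion of a tree edge $e$: \texttt{cut} $e$ in $\mathcal T$, mark the smaller side of the resulting cut so that side-membership of any vertex can be tested in $O(\log n)$, then scan the $O(\sqrt h)$ clusters of $\bar F'$ whose super edges meet the smaller side and from each one repeatedly \texttt{extractMin} from $Q_C$ until the extracted non-tree edge crosses the cut, returning the cheapest such edge globally. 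This gives $O(\sqrt h\log n)$ work per tree deletion. A batched insertion of $k=\Theta(\sqrt{h/\log n})$ edges that does not change $F$ reduces to $k$ non-tree insertions, contributing $O(\sqrt h\log^{3/2} n)$ in total.

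The main obstacle I expect is keeping the compact representation $\bar F'$ and the Frederickson partition valid with only $O(1)$ structural changes per update, so that the per-update cost is dominated by the $O(\sqrt h)$ cluster scan and a few $O(\log n)$ top-tree operations. For $\bar F'$ I would maintain a non-tree-incidence counter at each vertex and split/merge super edges only when the counter crosses $0$, which gives $O(1)$ such events per update. For the partition I would invoke Frederickson's rebalancing rule for restricted partitions, which migrates only $O(1)$ super edges per update; each migration requires relocating $O(\sqrt h)$ priority-queue entries at $O(\log n)$ each, which is precisely where the extra $\sqrt{\log n}$ factor enters to give the claimed $O(\sqrt h\log^{3/2} n)$ worst-case bound. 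Preprocessing is dominated by constructing $F$ and $\mathcal T$ in $O((n+h)\log n)$ time, together with $\bar F'$, its two-level partition, and the $O(\sqrt h)$ cluster queues in $O(h\log^2 n)$ time, matching the stated $O(n\log n + h\log^2 n)$ budget.
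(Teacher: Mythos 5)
Your proposal diverges from the paper's construction in two significant ways, and one of them opens a genuine gap. You build the compact forest $\bar F'$ explicitly and impose Frederickson's restricted two-level partition on it, with a single weight-keyed priority queue $Q_C$ per cluster. The paper instead leaves $F$ unmodified, partitions $V(F)$ into regions balanced by $N(R)$ (the count of vertices incident to non-tree edges), keeps the compacted view only implicitly via a \texttt{FindRegionTree} operation on a top tree, and maintains a second top tree $\hat F'$ on $F$ whose top-tree clusters carry, as sorted lists, a \emph{per-region} minimum-weight crossing edge.

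The gap is in your tree-deletion routine. After cutting $e$ and marking the smaller side, you propose to repeatedly \texttt{extractMin} from each $Q_C$ until you hit an edge whose other endpoint lies across the cut. But $Q_C$ holds \emph{all} non-tree edges with an endpoint in $C$, regardless of where the other endpoint lands; nothing guarantees that the cheapest crossing edge is near the front. A single cluster can contain $\Theta(h)$ non-tree edges (the theorem places no degree bound on $H$, so one vertex may carry $\Theta(h)$ non-tree edges, and even balancing by $N(R)$ only controls the number of marked vertices, not the number of incident non-tree edges). In the worst case you extract and later re-insert nearly all of $Q_C$, so a single reconnection can cost $\tilde\Theta(h)$, far exceeding $\tilde O(\sqrt h)$. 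What makes the paper's query $O(\rho + \log n)$ is precisely that $\mathcal L_{\mathit{edge}}(C)$ stores one minimum per destination-region label; comparing the root clusters $C_u$ and $C_v$ is then a merge of two $O(\rho)$-length sorted lists, independent of how many non-tree edges a region carries. Your structure has no per-destination aggregation, so the query does not go through.

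Two secondary concerns. First, the batched insertion: you reduce it to $k=\Theta(\sqrt{h/\log n})$ independent non-tree insertions, each of which can trigger a rebalancing migration you cost at $O(\sqrt h\log n)$; that totals $O(h\sqrt{\log n})$, not $O(\sqrt h\log^{3/2}n)$. The paper avoids this with a lazy rebalancing scheme (one \texttt{SplitRegion} per update, justified by the pebbling argument of Lemma~\ref{Lem:Pebble}) so that a batch costs $O(k\log^2 n)$ plus a single rebalance. Second, collapsing \emph{every} degree-$2$ interior vertex of $F'$ also collapses non-tree-edge endpoints that happen to have degree $2$ in $F'$; these must remain as explicit vertices (the paper's $T_v$ retains all marked vertices, not just branch points). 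This last point is easily patched, but the reconnection query needs a fundamentally different aggregation of non-tree edges to meet the claimed bound.
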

The data structure in Theorem~\ref{Thm:MSFFewNonTreeEdges} is at a high level similar to those of Frederickson~\cite{Frederickson} and Eppstein et al.~\cite{Sparsification} and for this reason, we shall refer to each instance of it as an \emph{FFE structure} (Fast Frederickson/Eppstein et al.) and denote it by $\FFE{H}$.

\subsection{Preprocessing}\label{subsec:PreprocRestrictedDecMSF}
Let $\eps < 1$ be some small positive constant which will be specified later; for now, we only require it to be chosen such that $\ell = m^{\eps}$ is an integer that divides $m$. In the first part of the preprocessing, we sort the weights of edges of the initial graph $G$ in non-decreasing order and assign a rank to each edge between $0$ and $m-1$ according to this order, i.e., the edge of rank $0$ has minimum weight and the edge of rank $m - 1$ has maximum weight. We redefine $w$ such that $w(e)$ equals the rank of each edge $e$. MSF $M$ w.r.t.~these new weights is also an MSF w.r.t.~the original weights and uniqueness of edge weights implies uniqueness of $M$. In particular, $M$ does not reveal any information about the random bits used by our data structure so we may assume that the sequence of edge deletions in $G$ is independent of these bits.

We compute the initial MSF $M$ using Prim's algorithm implemented with binary heaps.\footnote{We could have chosen the faster MSF algorithm in~\cite{StaticMSTDeterministic} but it is more complicated and will not improve the overall performance of our data structure.} It will be convenient to assume that each component of the initial graph $G$ contains at least $n^{\eps}$ vertices. This can be done w.l.o.g.~since we can apply the data structure of Eppstein et al.~for every other component, requiring a worst-case update time of $O(n^{\eps/2})$ which is polynomially less than $\sqrt n$.

Next, Frederickson's \texttt{FINDCLUSTERS} procedure~\cite{Frederickson} is applied to $M$, giving a partition of $V$ into subsets each of size between $n^{\eps}$ and $3n^{\eps}$ and each inducing a subtree of $M$; here we use the fact that $G$ and hence $M$ has degree at most $3$. Let $\mathcal C_M$ denote the collection of these subsets. For each $C\in\mathcal C_M$, we refer to $M[C]$ as an \emph{$M$-cluster}. We denote by $E(\mathcal C_M)$ the union of edges of $M$-clusters.


For $i = -1,\ldots,\ell-1$, let $E_i$ be the set of edges of $E - E(\mathcal C)$ of weights in the range $[m - (i+1)m/\ell,m - im/\ell)$. Note that $E_{-1} = \emptyset$; this set is only defined to give a cleaner description of the data structure. For $i = -1,\ldots,\ell-1$, let $E_{\ge i} = \cup_{j = i}^{\ell-1} E_j$, $E_{\le i} = \cup_{j = -1}^i E_j$, $G_{\ge i} = (V, E_{\ge i}\cup E(\mathcal C_M))$, and $G_{\le i} = (V, E_{\le i}\cup E(\mathcal C_M))$.

\paragraph{Computing a laminar family of clusters:}
Next, a recursive procedure is executed which outputs a family $\mathcal F$ of subgraphs of $G$ that all respect $\mathcal C$. We refer to these as \emph{level $i$-clusters} where $i\in\{-1,\ldots,\ell-1\}$. Collectively (i.e., over all $i$), we refer to them as \emph{$\mathcal F$-clusters} in order to distinguish them from $M$-clusters. Family $\mathcal F$ will be laminar w.r.t.~subgraph containment. We need the following theorem whose proof can be found in Section~\ref{sec:PartitionExpanders}.
\begin{theorem}\label{Thm:RCPartition}
Let $H$ be a constant-degree graph with vertex set $V$ and let $\mathcal C$ be a partition of $V$ into subsets each of size $\Theta(n^{\eps})$ and each inducing a connected subgraph of $H$. Let $c > 0$ and $\xi > 0$ be given constants. There is an algorithm which, given $H$, $\mathcal C$, and any non-empty set $W\subseteq V$ of size $\Omega(n^{1 - \eps})$ respecting $\mathcal C$, outputs a partition $\mathcal X$ of $W$ respecting $\mathcal C$ such that with probability at least $1 - 1/n^c$, the following three conditions hold for suitable $\gamma = \Omega(n^{-2\eps})$ and $\lambda = n^{-\eps/2^{O(1/\xi)}}$:
\begin{enumerate}
\item $H[X]$ is a $\gamma$-expander graph for each $X\in\mathcal X$,
\item the number of edges of $H$ between distinct sets of $\mathcal X$ is at most $\lambda\sum_{X\in\mathcal X}|X|\log(|W|/|X|)$, and
\item the worst-case time for the algorithm is $\tilde O(|W|^{1 + 5\eps + \xi})$.
\end{enumerate}
\end{theorem}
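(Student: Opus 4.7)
The basic approach is to adapt the Spielman-Teng \texttt{Partition} algorithm of \cite{SpielmanTeng} to the constrained setting where the output partition must respect $\mathcal{C}$, then to iterate it $O(1/\xi)$ times to boost the quality of the inter-part edge bound. The single-pass primitive will guarantee condition~1 (expansion $\gamma = \Omega(n^{-2\eps})$) and a weak version of condition~2; recursive composition amplifies condition~2 to the target $\lambda = n^{-\eps/2^{O(1/\xi)}}$.

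\emph{Single-pass primitive.} I would run Spielman-Teng's \texttt{Partition} on $H[W]$ with conductance target $\phi_0 = \Theta(n^{-\eps})$. Internally this procedure calls \texttt{Nibble} from many starting vertices; each call may return a low-conductance cut. The returned cuts need not respect $\mathcal{C}$, so after each candidate cut $(S, W\setminus S)$ I would \emph{snap} it to $\mathcal{C}$ by reassigning every cluster $C\in\mathcal{C}(W)$ that straddles the cut to the side containing the majority of its vertices. Since each $C$ has size $\Theta(n^{\eps})$ and $H$ has constant degree, snapping changes $\delta(S)$ by at most an $O(n^{\eps})$ multiplicative factor and changes volumes negligibly; hence a cut of conductance $\phi_0$ becomes, after snapping, a $\mathcal{C}$-respecting cut of conductance $O(\phi_0\cdot n^{\eps})$. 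Applying the standard Spielman-Teng analysis to the snapped conductance gives parts that are $\Omega(n^{-2\eps})$-expanders (absorbing $\polylog$ factors into $\xi$), establishing condition~1. The runtime of one pass is $\tilde O(|W|/\phi_0^2)=\tilde O(|W|\cdot n^{2\eps})$, plus linear time per snap.

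\emph{Recursive composition.} A single pass already gives an inter-part edge bound of the form $\tilde O(n^{-\eps})\sum_X|X|$, but only $\lambda\approx n^{-\eps}$, which is far from the target when $\xi$ is small. I would therefore apply the primitive recursively to each output part $X$ with the same parameters, $O(1/\xi)$ times, producing a laminar refinement. At each level the new inter-part edges contribute a bound of the form $n^{-\eps}\sum|X'|\log(|X|/|X'|)$; telescoping across the $O(1/\xi)$ levels and paying a squaring-style blow-up at each level (arising because the log-factor in condition~2 is charged once per level) yields a final $\lambda$ of the form $n^{-\eps/2^{O(1/\xi)}}$. The total runtime is $O(1/\xi)$ times the per-level cost and some overhead for bookkeeping, which can be bounded by $\tilde O(|W|^{1+5\eps+\xi})$ provided the constants hidden in $\phi_0$ and the per-level blow-up are chosen so that the accumulated overhead fits into the extra $|W|^{\xi}$ budget.

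\emph{Main obstacles.} The principal difficulty is verifying that snapping to $\mathcal{C}$ is compatible with Spielman-Teng's completeness guarantee: whenever the $\mathcal{C}$-respecting subgraph has a low-conductance $\mathcal{C}$-respecting cut, \texttt{Nibble} must, w.h.p., return a candidate cut whose snapped version has comparable conductance. This requires a careful charging argument showing that the probability mass from which \texttt{Nibble} extracts a good cut is essentially oblivious to within-cluster reshuffling, because clusters have volume only $O(n^{\eps})$ and random walks on a constant-degree graph mix within a cluster on a scale that is negligible relative to $\phi_0^{-1}$. A second obstacle is controlling the failure probability: \texttt{Nibble} fails with polynomially small probability, and we invoke it a polynomial number of times across the $O(1/\xi)$ recursion levels; a union bound still yields failure probability at most $1/n^c$ after choosing the internal constants of \texttt{Nibble} large enough in terms of $c$. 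A third, more technical point is ensuring the recursion terminates: once a part $X$ has no $\mathcal{C}$-respecting cut of conductance $\leq\phi_0\cdot n^{\eps}$ (as certified by all \texttt{Nibble} calls failing to find one), we stop refining it, which is exactly the condition needed to certify that $H[X]$ is a $\gamma$-expander for condition~1.
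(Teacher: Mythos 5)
The central step of your single-pass primitive --- running \texttt{Partition} on $H[W]$ directly and then snapping the output cut by majority vote --- does not work, and the difficulty is more than the one you flag. You correctly observe that snapping can blow up the cut by a multiplicative $O(n^{\eps})$ factor, since a cluster $C$ straddled by the cut might contribute only a single crossing edge in $H[C]$ (which is merely connected, possibly a path), yet reassigning the minority part of $C$ can create $\Theta(n^{\eps})$ new crossing edges. But then the numbers do not close: with $\phi_0 = \Theta(n^{-\eps})$, \texttt{Partition} returns a cut of conductance $\tilde O(\phi_0)$, and after snapping the conductance is $\tilde O(\phi_0 n^{\eps}) = \tilde O(1)$, which is useless as a separator. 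Trying to make the snapped cut useful (say conductance $\tilde O(n^{-\eps})$) forces $\phi_0$ down to $n^{-2\eps}$, which via the $\theta\mapsto\theta^3$ loss in Spielman--Teng's completeness guarantee only certifies that the leaves have no cut of conductance $n^{-6\eps}$, far short of $\gamma=\Omega(n^{-2\eps})$. The proposed fix --- that random walks mix within a cluster faster than $\phi_0^{-1}$ --- is false: $H[C]$ is only required to be connected, so it can be a path on $\Theta(n^{\eps})$ vertices with mixing time $\Theta(n^{2\eps}) \gg \phi_0^{-1}$.

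The paper's resolution is to never snap in $H$ at all. It builds an auxiliary graph $H'$ by replacing each $H[C]$ with a sparse random $1$-expander on $V(C)$ (Lemma~\ref{Lem:SimpleExpander}), and runs \texttt{Partition} on $H'[W]$. In $H'$, snapping by majority vote loses only a \emph{constant} factor: sending the minority side of a straddling cluster to the majority side saves at least $|{\rm minority}|$ crossing edges of the internal $1$-expander while creating only $O(|{\rm minority}|)$ new crossing edges from $H$'s constant degree (Lemma~\ref{Lem:LowConductanceMultigraph}, part~3). The $n^{\eps}$ factor then appears only on the \emph{detection} side (a conductance-$\rho$ $\mathcal C$-respecting cut in $H$ has conductance $O(\rho n^{\eps})$ in $H'$, part~2), and this is absorbed into the $\gamma = n^{-\tau-\eps}$ expansion guarantee rather than into the quality of the output cut. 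Separately, your recursion ``apply the primitive recursively with the same parameters'' and ``squaring-style blow-up'' does not match how $\lambda$ actually arises: in the paper's \texttt{RCPartition}, the threshold $\theta$ \emph{decreases} across levels via $f_+\approx\theta^3$, and $\lambda$ is the \emph{initial} $\theta$, which is the $\lceil 1/\xi\rceil$-fold preimage of $n^{-\Theta(\eps)}$ under cubing, giving the $2^{O(1/\xi)}$ in the exponent. Finally, the per-pass runtime is $\tilde O(|W|/\theta^5)$, not $\tilde O(|W|/\phi_0^2)$; the exponent~$5$ is precisely where the $|W|^{5\eps}$ in condition~3 comes from.
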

We shall pick $\xi = \eps$ in Theorem~\ref{Thm:RCPartition} in the following. We may assume that $\lambda > n^{-\eps}$.

The recursive procedure takes as input an integer $i$ and a set of level $i$-clusters and outputs the level $j$-clusters contained in these level $i$-clusters for $j > i$. The first recursive call is given as input $i = -1$ and $G$ as the single level $-1$-cluster.

In the general recursive step, for each level $i$-cluster $C$, the algorithm of Theorem~\ref{Thm:RCPartition} is applied with $H = G_{\ge i+1}$, $\mathcal C = \mathcal C_M$, and $W = V(C)$, giving a partition $\mathcal X(C)$ of $V(C)$ respecting $\mathcal C_M$ such that for suitable $\gamma = \Theta(n^{-2\eps})$ and $\lambda = n^{-1/2^{O(1/\eps)}}$, the following holds w.h.p.,
\begin{enumerate}
\item $G_{\ge i+1}[X]$ is a $\gamma$-expander graph for each $X\in\mathcal X(C)$, and
\item the are at most $\lambda\sum_{X\in\mathcal X(C)}|X|\log(|V(C)|/|X|)$ edges of $E_{\ge i+1}$ between distinct sets in $\mathcal X(C)$.
\end{enumerate}

The graphs $G_{\ge i+1}[X]$ for all $X\in\mathcal X(C)$ are defined to be level $(i+1)$-clusters. If $i < \ell - 1$ the procedure recurses with $i+1$ and with these level $(i+1)$-clusters. The recursion stops when level $i$-cluster $C$  has at most $m^{1 - \eps} = \Theta(n^{1 - \eps})$ edges of $E_{\ge i}$; this ensures that the lower bound on $|W|$ in Theorem~\ref{Thm:RCPartition} is satisfied for each application of this theorem.

The laminar family $\mathcal F$ of all the clusters is represented as a rooted tree in the natural way where the root is the single level $-1$-cluster $G$ and a level $i$-cluster has as children the level $(i+1)$-clusters contained in it. 

For any subset $F$ of edges of $E$ and for any $\mathcal F$-cluster $C$, we let $F(C)$ be the subset of edges of $F$ belonging to $C$ and having endpoints in distinct children of $C$ in $\mathcal F$; note that $F(C) = \emptyset$ if $C$ is a leaf of $\mathcal F$. We let $E'$ be the union of $E_{\ge i+1}(C)$ over all $i$ and all level $i$-clusters $C$.

Next, a new graph $G' = (V,E,w')$ is formed where for each level $i$-cluster $C$ the weight $w'(e)$ of each $e\in E_{\ge i+1}(C)$ is set to $m - (i+1)m/\ell - \frac 1 2$; note that this ensures that for all $e_1\in E_{\le i}$ and all $e_2\in E_{\ge i+1}$, $w(e_1) > w'(e) > w(e_2)$. For all other edges $e$ of $E$, we define $w'(e) = w(e)$. An example is shown in Figure~\ref{fig:NestingMSF}. Forest $\MSF{G'}$ is computed and an FFE structure $\mathcal M = \FFE{E'(w)\cup\MSF{G'}}$ is initialized.
\begin{figure}
\centerline{\scalebox{0.80}{\input{NestingMSF.pstex_t}}}
\caption{(a): A level $i$-cluster $C$ is shown with four level $(i+1)$-child clusters, for $i = \ell - 4$. Letting $m/\ell = 10$, we have $[m - (i+1)m/\ell,m - im/\ell) = [30,40)$. Edges of $C$ not belonging to its children are shown together with their $w$-weights where thick edges are more expensive than thin edges. (b): The same clusters and edges but with the modified $w'$-weights.}
\label{fig:NestingMSF}
\end{figure}

\subsection{Updates}
We now describe how our data structure handles updates. First, we extend some of the above definitions from the preprocessing step to any point in the sequence of updates as follows. $M$-clusters are the components (trees) of the graph consisting of the initial $M$-clusters minus the edges removed so far. Hence, when an edge of an $M$-cluster $C$ is removed, the two new trees obtained replace $C$ as $M$-clusters. $\mathcal F$-clusters are the initial $\mathcal F$-clusters minus the edges deleted so far. Note that $\mathcal F$ remains a laminar family over all updates. Finally, $E'$, $G'$, and $E(\mathcal C_M)$ are the initial $E'$, $G'$, and $E(\mathcal C_M)$, respectively, minus the edges removed so far.

Data structure $\mathcal M$ maintains an MSF for the dynamic graph $E'(w)\cup\MSF{G'}$. Lemma~\ref{Lem:MSFDecompose} below implies that this MSF is $M$. To show it, we use the following result of Eppstein et al.~\cite{Sparsification}.
\begin{Lem}[\cite{Sparsification}, Lemma 4.1]\label{Lem:MSFSparsification}
Let $H$ be an edge-weighted multigraph and let $H_1$ and $H_2$ be two subgraphs of $H$ such that $H = H_1\cup H_2$. Then $\MSF{H} = \MSF{H_1\cup\MSF{H_2}}$.
\end{Lem}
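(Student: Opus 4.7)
My plan is to prove the lemma via the \emph{cycle property} of minimum spanning forests, which under the fixed lexicographic tie-breaking rule used in the paper states: for every cycle of $H$, the unique (lexicographically) heaviest edge of the cycle does not belong to $\MSF{H}$.

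Let $H' = H_1\cup\MSF{H_2}$. First I would verify that $V(H') = V(H)$ (since $V(\MSF{H_2}) = V(H_2)$ by definition of a spanning forest, so $V(H') = V(H_1)\cup V(H_2) = V(H)$) and that $H'$ and $H$ have the same connected components; the latter holds because $\MSF{H_2}$ already realises all the components of $H_2$, so adding the rest of $E(H_2)$ to $H'$ cannot merge any components.

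The key step is to show $E(\MSF{H}) \subseteq E(H')$. Pick any edge $e \in E(H)\setminus E(H')$. Such an edge must lie in $E(H_2)\setminus E(\MSF{H_2})$, i.e.\ it is a non-tree edge of $H_2$. Then $e$ together with the unique path in $\MSF{H_2}$ between its endpoints forms a cycle $K\subseteq H_2\subseteq H$, and by the cycle property applied to $H_2$ (together with the tie-breaking rule) $e$ is the lexicographically heaviest edge of $K$. Applying the cycle property again, this time to $H$, gives $e\notin\MSF{H}$. So every edge of $H$ not in $H'$ is excluded from $\MSF{H}$, proving the containment.

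Finally I would conclude: since $\MSF{H}\subseteq H'\subseteq H$ and $H,H'$ have the same vertex set and components, $\MSF{H}$ is a spanning forest of $H'$; and every spanning forest of $H'$ is also a spanning forest of $H$, so the weight-minimising spanning forest of $H'$ has weight at most that of $\MSF{H}$ and at least that of $\MSF{H}$, giving equality of weights and then, by uniqueness under the lexicographic rule, $\MSF{H} = \MSF{H'} = \MSF{H_1\cup\MSF{H_2}}$. There is no real obstacle here — the only subtle point is that one has to invoke the fixed lexicographic tie-breaking to make the cycle property pick a well-defined heaviest edge, and this convention is already in force in the paper.
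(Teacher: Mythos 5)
The paper itself does not prove this lemma; it cites Lemma 4.1 of Eppstein et al.~\cite{Sparsification} and remarks only that the statement carries over immediately to multigraphs, so there is no in-paper proof to compare against. Your argument is correct and is the standard one: for any non-tree edge $e$ of $H_2$, the fundamental cycle of $e$ with respect to $\MSF{H_2}$ has $e$ as its unique heaviest edge (by minimality of $\MSF{H_2}$ together with the lexicographic tie-break that makes weights effectively distinct), that cycle lies in $H$, and so the cycle property for $H$ excludes $e$ from $\MSF{H}$; hence $E(\MSF{H})\subseteq E(H_1\cup\MSF{H_2})$, and your closing step -- both forests span the same components and each has minimum weight among spanning forests of $H$, so uniqueness forces $\MSF{H}=\MSF{H_1\cup\MSF{H_2}}$ -- completes the proof.
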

The result was not stated for multigraphs in~\cite{Sparsification} but immediately generalizes to these.
\begin{Lem}\label{Lem:MSFDecompose}
Let $H = (V_H,E_H,w_H)$ be an edge-weighted graph, let $E_H'\subseteq E_H$, and let $H' = (V_H,E_H,w_H')$ where $w_H'(e) = w_H(e)$ for all $e\in E_H - E_H'$ and $w_H'(e) > w_H(e)$ for all $e\in E_H'$. Then $\MSF{H} = \MSF{E_H'(w_H)\cup\MSF{H'}}$.
\end{Lem}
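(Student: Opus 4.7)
The plan is to apply Lemma~\ref{Lem:MSFSparsification} with a suitable decomposition of $H$ into two subgraphs whose union (as a multigraph) is ``MSF-equivalent'' to $H$, and then invoke the cycle property to discard the inflated copies of the edges in $E_H'$.

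Specifically, I would set $H_1 = E_H'(w_H)$ (the subgraph on the endpoints of $E_H'$ with the original weights $w_H$) and $H_2 = H'$. Then by the definition of multigraph union given in Section~\ref{sec:Prelim}, every edge $e \in E_H - E_H'$ appears exactly once in $H_1\cup H_2$ (only in $H_2$) with its original weight $w_H(e) = w_H'(e)$, while every edge $e\in E_H'$ appears as two parallel copies: a \emph{light} copy in $H_1$ of weight $w_H(e)$ and a \emph{heavy} copy in $H_2$ of weight $w_H'(e) > w_H(e)$. The key observation is that any such pair of parallel edges forms a $2$-cycle on which the heavy copy is the strictly unique maximum, and therefore the cycle property forbids the heavy copy from appearing in any MSF (the lexicographic tie-breaking from Section~\ref{sec:Prelim} is irrelevant here since the weights are strictly ordered). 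Hence $\MSF{H_1\cup H_2}$ uses only the light copy of each $e\in E_H'$, which carries weight $w_H(e)$, so $\MSF{H_1\cup H_2}$ is identical to $\MSF{H}$ as a weighted forest.

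Now applying Lemma~\ref{Lem:MSFSparsification} (which, as noted in the excerpt, extends to multigraphs) to the decomposition $H_1\cup H_2$ gives
\[
  \MSF{H_1\cup H_2} \;=\; \MSF{H_1\cup\MSF{H_2}} \;=\; \MSF{E_H'(w_H)\cup\MSF{H'}},
\]
and combining this with $\MSF{H_1\cup H_2} = \MSF{H}$ yields the claim.

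The only mildly delicate step is the first equality $\MSF{H_1\cup H_2}=\MSF{H}$, since $H_1\cup H_2$ is a multigraph while $H$ is simple; I would handle this by the explicit cycle-property argument above, noting that after discarding every heavy parallel copy from $\MSF{H_1\cup H_2}$ one is left with a forest of $H$ of the same total weight (under $w_H$), which by uniqueness of the MSF must equal $\MSF{H}$. Everything else is a direct invocation of Lemma~\ref{Lem:MSFSparsification}.
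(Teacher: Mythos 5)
Your proof is correct and takes essentially the same approach as the paper: both decompose the relevant multigraph as $E_H'(w_H)\cup H'$, invoke Lemma~\ref{Lem:MSFSparsification} on that decomposition, and rely on the observation that the heavy parallel copies of edges in $E_H'$ cannot appear in the MSF of the multigraph. The paper states this very tersely as a chain of three equalities; you spell out the middle step (the cycle-property argument showing $\MSF{E_H'(w_H)\cup H'} = \MSF{H}$) explicitly, which is the only part that requires more than a direct invocation of the cited lemma.
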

\begin{proof}
By Lemma~\ref{Lem:MSFSparsification}, we have
\[
  \MSF{H} = \MSF{E_H'(w_H)\cup H'[E_H - E_H']}
          = \MSF{E_H'(w_H)\cup H'}
          = \MSF{E_H'(w_H)\cup\MSF{H'}}.
\]
\end{proof}
\begin{Cor}
With the above definitions, $M = \MSF{E'(w)\cup\MSF{G'}}$.
\end{Cor}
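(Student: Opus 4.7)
The plan is to recognize this corollary as a direct application of Lemma~\ref{Lem:MSFDecompose}, with $H=G$, $E_H' = E'$, and $H'=G'$. Both $G$ and $G'$ share the same vertex set $V$ and the same edge set $E$; the weight function $w'$ was constructed to agree with $w$ outside $E'$, and to be modified only on $E'$. What remains is therefore to confirm that the one nontrivial hypothesis of the lemma, namely $w'(e) > w(e)$ for every $e\in E'$, is satisfied.

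To verify this hypothesis, I would argue as follows. Fix $e\in E'$; by definition $e\in E_{\ge i+1}(C)$ for some level $i$-cluster $C$, so $e\in E_j$ for some $j\ge i+1$. Since weights have been renormalized to be ranks in $\{0,\ldots,m-1\}$ (hence integers) and $E_j$ consists of edges of rank in $[m-(j+1)m/\ell,\, m - jm/\ell)$, we have $w(e) \le m - jm/\ell - 1 \le m - (i+1)m/\ell - 1$. On the other hand, by the definition of $G'$, $w'(e) = m - (i+1)m/\ell - 1/2$. Hence $w'(e) > w(e)$, as required. Lemma~\ref{Lem:MSFDecompose} then immediately yields $\MSF{G} = \MSF{E'(w)\cup\MSF{G'}}$, and the corollary follows because by definition $M = \MSF{G}$.

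The final point I would make is that although $E'$, $G$, $G'$, and $\mathcal F$ all evolve over the course of updates, the identity continues to hold at every stage. Indeed, the modified definitions specified at the start of Section~3.2 ensure that $G$ and $G'$ always have a common, simultaneously shrinking edge set (namely the current $E$), and the membership condition defining $E'$ together with the weight assignment $w'$ are preserved under edge deletions. Thus the hypothesis of Lemma~\ref{Lem:MSFDecompose} continues to hold, and the corollary is valid at every time step, not merely after preprocessing. There is no substantive obstacle here; the only care needed is the bookkeeping check that $w'(e) > w(e)$ on $E'$, which holds by the integrality of the rank-based weight $w$ and the half-integer shift used in the definition of $w'$.
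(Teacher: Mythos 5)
Your proposal is correct and takes the same route as the paper: the corollary is stated there as an immediate consequence of Lemma~\ref{Lem:MSFDecompose} applied with $H=G$, $E_H'=E'$, $H'=G'$, and the paper's own remark that $w(e_1) > w'(e) > w(e_2)$ for $e_1\in E_{\le i}$, $e_2\in E_{\ge i+1}$ already contains the key inequality $w'(e) > w(e)$ on $E'$. Your explicit verification via the integrality of the rank weights and the half-integer shift, and your note that the identity persists under edge deletions, are correct but minor elaborations of what the paper leaves implicit.
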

As we show later, the number of non-tree edges of $\mathcal M$ is at all times polynomially smaller than $n$. Hence, by Theorem~\ref{Thm:MSFFewNonTreeEdges}, it suffices to give an efficient data structure to maintain $\MSF{G'}$. We present this in the following. In the rest of this section, all edge weights are w.r.t.~$w'$ unless otherwise stated. An advantage
of considering $G'$ rather than $G$ is that $\MSF{G'}$ behaves nicely w.r.t.~the laminar family $\mathcal F$ as the following lemma shows.
\begin{Lem}\label{Lem:MSFLaminar}
For any $\mathcal F$-cluster $C$, $\MSF{G'}[V(C)] = \MSF{C}$.
\end{Lem}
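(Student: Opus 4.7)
The plan is to show both inclusions $\MSF{G'}[V(C)] \subseteq \MSF{C}$ and $\MSF{C} \subseteq \MSF{G'}$: since every edge of $\MSF{C}$ has both endpoints in $V(C)$, the second inclusion gives $\MSF{C} \subseteq \MSF{G'}[V(C)]$, and combined with the first this yields equality. The argument is direct (no induction on level); throughout I assume $C$ is connected, as the pruning machinery of the data structure guarantees. I lean on three elementary facts about $w'$ and the laminar structure: (i) modifications only increase weight, so $w'(e) \geq w(e)$ for every edge $e$; (ii) any edge of $G$ with exactly one endpoint in $V(C)$ has $w'$-weight at least $m - im/\ell - 1/2$, because the cheapest such edges are the level-$(i{-}1)$-ancestor modifications; and (iii) any edge in $V(C)^2 \setminus E(C)$ lies in $E_{\le i-1}$ with its weight unmodified, hence $w' \geq m - im/\ell$.

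For the inclusion $\MSF{C} \subseteq \MSF{G'}$, I would suppose for contradiction $e \in \MSF{C}$ but $e \notin \MSF{G'}$, and let $\Gamma$ be the cycle in $G'$ through $e$ with $e$ as strict $w'$-maximum furnished by the cycle property. If every edge of $\Gamma - \{e\}$ lies in $C$, then $\Gamma$ already contradicts $e \in \MSF{C}$. Otherwise some edge $g$ of $\Gamma - \{e\}$ either leaves $V(C)$ or lies in $V(C)^2 \setminus E(C)$; by (ii) and (iii), $w'(g) \geq m - im/\ell - 1/2$, forcing $w'(e) > m - im/\ell - 1/2$. Because $m/\ell$ is an integer, the interval $[m - im/\ell - 1/2, m - im/\ell)$ contains no integer rank, so a weight-band check rules out $e \in E_{\ge i}$, leaving only the case $e \in E(\mathcal C_M)$ with $w(e) \geq m - im/\ell$. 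This is precisely where I invoke the cycle property of $M = \MSF{G}$ inside the graph $G$: since $E(\mathcal C_M) \subseteq M$, the cycle $\Gamma$ must contain an edge $g^\star \neq e$ with $w(g^\star) > w(e)$, and then (i) gives $w'(g^\star) \geq w(g^\star) > w(e) = w'(e)$, contradicting the $w'$-maximality of $e$ on $\Gamma$.

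For the inclusion $\MSF{G'}[V(C)] \subseteq \MSF{C}$, take $e \in \MSF{G'}$ with both endpoints $u,v \in V(C)$. If $e \notin E(C)$, then by the first inclusion the spanning tree $\MSF{C}$ is contained in $\MSF{G'}$, supplying a $u$-$v$ path $P$ in $\MSF{G'}$; closing $P$ with $e$ produces a cycle in the forest $\MSF{G'}$, a contradiction. Hence $e \in E(C)$, and if further $e \notin \MSF{C}$, the cycle property inside $C$ yields a cycle $\Gamma'$ through $e$ with $e$ as $w'$-maximum, and $\Gamma' \subseteq G'$ contradicts $e \in \MSF{G'}$; therefore $e \in \MSF{C}$. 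The only nontrivial obstacle in the whole proof is the heavy $E(\mathcal C_M)$-edge case in the first inclusion, where cycle witnesses for $e$ in $G'$ may exit $C$ or use heavy non-$C$ edges and so the cycle property inside $C$ does not immediately apply; the resolution is exactly to pull the argument back to the original graph $G$ via $w' \geq w$ and invoke the $G$-cycle property of the global MSF $M$.
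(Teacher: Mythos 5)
Your proof is correct and genuinely different from the paper's. The paper does not argue via the cycle property: it observes that $E(\mathcal C_M)\subseteq E(\MSF{G'})$, then runs a single Kruskal sweep on $E_{\ge 0}$ starting from the forest $E(\mathcal C_M)$, noting that every edge of $E(C)-E(\mathcal C_M)$ is cheaper than (hence processed before) any other edge of $E_{\ge 0}$ incident to $C$, so the partial Kruskal forest restricted to $V(C)$ at that point is a cheapest spanning forest of $C$ and is never extended within $V(C)$. You instead verify the two set inclusions directly by the cycle property. The underlying weight bounds are the same---your (ii) and (iii) are exactly the content of the paper's ``cheaper than all other edges of $E_{\ge 0}$ incident to $C$'' claim---but you additionally isolate the pointwise inequality $w'\ge w$, which is precisely what underlies the paper's opening observation $E(\mathcal C_M)\subseteq E(\MSF{G'})$ (stated there without argument), and you deploy it in the analogous role: pulling the $E(\mathcal C_M)$ case back to the $w$-cycle property of $M$ in $G$. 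Two small remarks. First, with the paper's lexicographic tie-breaking, $w'$-maximality of $e$ on $\Gamma$ yields only $w'(e)\ge w'(g)$, so the bound you should state is $w'(e)\ge m-im/\ell-1/2$ rather than a strict inequality; the contradiction still goes through because $w'(e)<m-im/\ell-1/2$ holds strictly for every $e\in E(C)\cap E_{\ge i}$. Second, your second inclusion is the place where connectivity of $C$ is used (to get a $u$--$v$ path in $\MSF{C}$); the paper's Kruskal step has the same hidden reliance, namely that no edge of $E_{\le i-1}$ with both endpoints inside $V(C)$ ever enters $\MSF{G'}$, so the two arguments have identical scope, and stating the assumption explicitly as you do is the cleaner presentation.
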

\begin{proof}
Observe that $E(\mathcal C_M)\subseteq E(\MSF{G'})$. Hence, we can obtain $\MSF{G'}$ by running a Kruskal-type algorithm on the edges of $E - E(\mathcal C_M) = E_{\ge 0}$ where the initial forest has edge set $E(\mathcal C_M)$.

Given a level $i$-cluster $C$, we have $E(\mathcal C_M)\cap E(C)\subseteq E(\MSF{C})$. By definition of $w'$, all edges of $E(C)\cap E_{\ge i} = E(C) - E(\mathcal C_M)$ are cheaper than all other edges of $E_{\ge 0}$ incident to $C$. Hence, Kruskal's algorithm processes all edges of $E(C) - E(\mathcal C_M)$ before any other edge of $E_{\ge 0}$ incident to $C$ so it will form the spanning forest $\MSF{G'}[V(C)]$ of $C$ as part of $\MSF{G'}$. It must be a cheapest such spanning forest of $C$ since otherwise, the cost of $\MSF{G'}$ could be reduced.
\end{proof}

We now present a data structure $\mathcal M'$ that maintains $\MSF{G'}$. At a high level, this structure is similar to $\mathcal M$ as it makes use of an FFE structure. The edge set of $\mathcal M'$ is maintained using smaller dynamic structures for the various $\mathcal F$-clusters; these structures are described below.

We say that a level $i$-cluster is \emph{small} if initially it contained at most $m^{1 - \eps}$ edges of $E_{\ge i}$; otherwise, the cluster is \emph{large}. Note that a large cluster must have children in $\mathcal F$ since otherwise, it is a level $(\ell - 1)$-cluster and $|E_{\ge\ell-1}|\le m/\ell = m^{1 - \eps}$. Thus small clusters are leaves in $\mathcal F$ while large clusters are interior nodes. We shall make the simplifying assumption that each large cluster is connected over all updates. This is a strong assumption and we shall later focus on how to get rid of it.

Part of $\mathcal M'$ is a data structure $\mathcal M_{\mathit{small}}$ which maintains $M_{\mathit{small}} = \MSF{C_{\mathit{small}}}$ where $C_{\mathit{small}}$ is the union of all small $\mathcal F$-clusters. This structure consists of an FFE structure (in fact, Frederickson's original structure suffices here) for each small $\mathcal F$-cluster which is initialized during preprocessing. For large clusters, we use more involved data structures which we present in the following.

\subsubsection{Compressed clusters}
For each level $i$ and each large level $i$-cluster $C$, we define the \emph{compressed} level $i$-cluster $\overline C$ as the multigraph obtained from $C$ as follows. First, each large child cluster $C'$ of $C$ is contracted to a single vertex called a \emph{large cluster vertex}, and self-loops incident to this new vertex are removed. Second, for each small child cluster $C'$ of $C$, its edge set is replaced by $\MSF{C'}$. Figure~\ref{fig:CompressedCluster}(a) and (b) illustrate $C$ and $\overline C$, respectively. We define three subgraphs of $\overline C$:
\begin{figure}
\centerline{\scalebox{0.80}{\input{CompressedCluster.pstex_t}}}
\caption{(a): A level $i$-cluster $C$ with three large child clusters (left) and four small child clusters (right). Edges of $C$ not belonging to its child clusters are shown. (b): compressed cluster $\overline C$ with an MSF for each child cluster shown. Large cluster vertices are shown in black. (c)--(e): Graphs $G_1(\overline C)$, $G_2(\overline C)$, and $G_3(\overline C)$, respectively.}
\label{fig:CompressedCluster}
\end{figure}
\begin{description}
\item [$G_1(\overline C)$:] consists of the union of $\MSF{C'}$ over all small child clusters $C'$ of $C$ as well as the edges of $\overline C$ with both endpoints in small child clusters of $C$ (Figure~\ref{fig:CompressedCluster}(c)),
\item [$G_2(\overline C)$:] consists of the large cluster vertices of $\overline C$, $\MSF{C'}$ for each small child cluster $C'$ of $C$, and the edges of $\overline C$ having a large cluster vertex as one endpoint and having the other endpoint in a small child cluster of $C$ (Figure~\ref{fig:CompressedCluster}(d)),
\item [$G_3(\overline C)$:] consists of the subgraph of $\overline C$ induced by its large cluster vertices (Figure~\ref{fig:CompressedCluster}(e)).
\end{description}
Note that $G_1(\overline C)$, $G_2(\overline C)$, and $G_3(\overline C)$ together cover all vertices and edges of $\overline C$. Define $M_1(\overline C) = \MSF{G_1(\overline C)}$, $M_2(\overline C) = \MSF{G_2(\overline C)}$, and $M_3(\overline C) = \MSF{G_3(\overline C)}$. Data structure $\mathcal M'$ will use an FFE structure for the graph defined as the union of $M_{\mathit{small}}$ and of $M_1(\overline C)$, $M_2(\overline C)$, and $M_3(\overline C)$ over all compressed clusters $\overline C$. This FFE structure, which we denote by $\FFE{\mathcal M'}$, is initialized during preprocessing. By Lemma~\ref{Lem:MSFLaminar}, it will maintain $\MSF{G'}$ as desired. As we show later, $\FFE{\mathcal M'}$ contains polynomially less than $n$ non-tree edges at all times so that it can be updated efficiently.

Let $\overline C$ be a given compressed cluster. It remains to give efficient data structures that maintain $M_1(\overline C)$, $M_2(\overline C)$, and $M_3(\overline C)$. We maintain $M_1(\overline C)$ using an FFE structure for $G_1(\overline C)$, initialized during preprocessing. In the following, we present structures maintaining $M_2(\overline C)$ and $M_3(\overline C)$.



\subsubsection{Maintaining $M_2(\overline C)$}\label{subsubsec:MaintainM2}
To maintain $M_2(\overline C)$ and $M_3(\overline C)$ efficiently, we shall exploit the fact that both $G_2(\overline C)$ and $G_3(\overline C)$ have a subset of only $O(n^{1-\eps})$ large cluster vertices and (ignoring in $G_2(\overline C)$ the edges of $\MSF{C'}$ for all small child clusters $C'$ of $C$) all edges of these graphs are incident to this small subset.

Forest $M_2(\overline C)$ is represented as a top tree. In the following, we shall abuse notation slightly and refer to this top tree as $M_2(\overline C)$. Each top tree cluster $K$ of $M_2(\overline C)$ has as auxiliary data a pair $(V_{\mathit{large}}(K),E_{\mathit{large}}(K))$ where $V_{\mathit{large}}(K)$ is the set of large cluster vertices of $\overline C$ contained in $K$ and $E_{\mathit{large}}(K)$ contains, for each large cluster vertex $v\in V(\overline C) - V_{\mathit{large}}(K)$ a minimum-weight edge $e_K(v)$ having $v$ as one endpoint and having the other endpoint in $K$; if no such edge exists, $e_K(v)$ is assigned some dummy edge $e_{\mathit{nil}}$ whose endpoints are undefined and whose weight is infinite.

In order to maintain $M_2(\overline C)$, we first describe how to maintain auxiliary data under the basic top tree operations \texttt{create}, \texttt{split}, and \texttt{join} for $M_2(\overline C)$. When \texttt{create} outputs a new cluster $K$ consisting of a single edge, we form $V_{\mathit{large}}(K)$ as the set of at most one large cluster vertex among the endpoints of the edge. Then $E_{\mathit{large}}(K)$ is computed by letting $e_K(v)$ be a cheapest edge incident to both $v$ and $K$ (or $e_{\mathit{nil}}$ if undefined), for each large cluster vertex $v\in V(\overline C) - V_{\mathit{large}}(K)$.

When a \texttt{split}$(K)$ operation is executed for a top tree cluster $K$, we simply remove $V_{\mathit{large}}(K)$ and $E_{\mathit{large}}(K)$. Finally, when two top tree clusters $K_1$ and $K_2$ are joined into a new top tree cluster $K$ by \texttt{join}$(K_1,K_2)$, we first form the set $V_{\mathit{large}}(K) = V_{\mathit{large}}(K_1)\cup V_{\mathit{large}}(K_2)$. Then we form $E_{\mathit{large}}(K)$ by letting $e_K(v)$ be an edge of minimum weight among $e_{K_1}(v)$ and $e_{K_2}(v)$, for each large cluster vertex $v\in V(\overline C) - V_{\mathit{large}}(K)$.


We are now ready to describe how to maintain $M_2(\overline C)$ when an edge $e$ is deleted from $G_2(\overline C)$.

\paragraph{Deleting a non-tree edge:}
Assume first that $e\notin M_2(\overline C)$. Then the topology of $M_2(\overline C)$ is unchanged. If $e$ is incident to a large cluster vertex then let $u_{\mathit{small}}$ be the other endpoint of $e$ ($u_{\mathit{small}}$ cannot be a large cluster vertex); in this case the auxiliary data for each top tree cluster containing $u_{\mathit{small}}$ needs to be updated. We do this bottom-up by first applying \texttt{create} to replace each leaf cluster containing $u_{\mathit{small}}$ with a new leaf cluster and applying \texttt{join} to update all non-leaf clusters containing $u_{\mathit{small}}$.

Note that the new set of top tree clusters is identical to the old set, only their auxiliary data are updated.


\paragraph{Deleting a tree edge:}
Now assume that $e$ belongs to a tree $T$ of $M_2(\overline C)$. Top tree $M_2(\overline C)$ is updated with the operation \texttt{cut}$(e)$. If $e$ belongs to $\MSF{C'}$ for some small child cluster $C'$ of $C$ then $e$ also belongs to $M_{\mathit{small}}$. In this case, if a reconnecting edge was found for $M_{\mathit{small}}$, it is added to $M_2(\overline C)$ as a reconnecting edge for $T$. By Lemma~\ref{Lem:MSFLaminar}, this is the cheapest reconnecting edge for $T$. Top tree $M_2(\overline C)$ is updated using a \texttt{link}-operation.

Now assume that no reconnecting edge was found in $M_{\mathit{small}}$ (which may also happen if $e$ did not belong to $\MSF{C'}$ for any small child cluster $C'$ of $C$). Let $T_1$ and $T_2$ be the two subtrees of $T - e$. After having computed top trees for $T_1$ and $T_2$, let $K_1$ resp.~$K_2$ be the root top tree cluster representing $T_1$ resp.~$T_2$. A cheapest reconnecting edge (if any) is of one of the following two types: a cheapest edge connecting a large cluster vertex in $T_2$ with a vertex of $K_1$ or a cheapest edge connecting a large cluster vertex in $T_1$ with a vertex of $K_2$. We shall only describe how to identify the first type of edge as the second type is symmetric. First, we identify from $K_1$ the set $V_{\mathit{large}}(K_1)$. Then the desired edge is identified as an edge $e_{K_1}(v)\in E_{\mathit{large}}(K_1)$ of minimum weight over all large cluster vertices $v\in\overline C - V_{\mathit{large}}(K_1)$. Having found a cheapest reconnecting edge $e'$ for $T$, if $e'\neq e_{\mathit{nil}}$, we add $e'$ to $M_2(\overline C)$ to reconnect $T$. In the top tree, this is supported by a \texttt{link}-operation.

\subsubsection{Maintaining $M_3(\overline C)$}
Maintaining $M_3(\overline C)$ is quite simple. For all distinct pairs of large cluster vertices $(u,v)$ in $\overline C$, the initial set of edges between $u$ and $v$ in $G_3(\overline C)$ are stored during preprocessing in a list $L(u,v)$ sorted in increasing order of weight. A graph $G_3'(\overline C)$ is formed, containing a cheapest edge (if any) between each such pair $(u,v)$. The initial $M_3(\overline C)$ is computed from $G_3'(\overline C)$ using Prim's algorithm with binary heaps. Whenever an edge $(u,v)$ is deleted from $G_3'(\overline C)$, it is also deleted from $L(u,v)$ and a cheapest remaining edge (if any) between $u$ and $v$ is identified from $L(u,v)$ and added to $G_3'(\overline C)$. Whenever a tree edge is deleted from $M_3(\overline C)$, a simple linear-time algorithm is used to find a cheapest replacement edge by scanning over all edges of $G_3'(\overline C)$.

\subsection{Performance}
We now analyze the performance of the data structure presented above. We start with the preprocessing step. 

\subsubsection{Preprocessing}
Prim's algorithm finds $M$ in $O(n\log n)$ time. Having found $M$, $\mathcal C_M$ can be found in $O(n)$ time since this is the time bound for Frederickson's \texttt{FINDCLUSTERS} procedure.

The time to compute $\mathcal F$ is dominated by the total time spent by the algorithm in Theorem~\ref{Thm:RCPartition}. For each $i$, the total vertex size of all level $i$-clusters is at most $n$ since their vertex sets are pairwise disjoint. Hence, the total size of all sets $W$ given to the algorithm is $O(n\ell) = O(n^{1 + \eps})$. By the third part of Theorem~\ref{Thm:RCPartition}, w.h.p.~the total time for computing $\mathcal F$ is $O_{n^{\eps}}(n)$.

By Theorem~\ref{Thm:MSFFewNonTreeEdges}, the FFE structures $\mathcal M$ and $\mathcal M_{\mathit{small}}$ can be initialized in $O(n\log^2n)$ worst-case time. This is also the case for the FFE structures of graphs $G_1(\overline C)$ since these graphs are compressed versions of subgraphs of $G$ that are pairwise both vertex- and edge-disjoint, implying that their total size is $O(n)$. Finally, to bound the time to initialize \FFE{\mathcal M'}, note that the graph consisting of the union of $M_{\mathit{small}}$ and MSFs $M_1(\overline C)$, $M_2(\overline C)$, and $M_3(\overline C)$ over all $\overline C$ contain a total of $O(n)$ edges and at most $n$ vertices of $G$. Furthermore, the total number of large cluster vertices is $O(n^{\eps}\ell) = O(n^{2\eps})$. Hence, the total worst-case time spent on initializing FFE structures is $O((n+n^{2\eps})\log^2n) = O_{n^{\eps}}(n)$.

We conclude that w.h.p., the total worst-case preprocessing time is $O_{n^\eps}(n)$.


\subsubsection{Updates}
Now we bound the update time of our data structure. We start by bounding the time to update $\mathcal M$ after a single edge deletion in $G$. Recall that $\mathcal M = \FFE{E'(w)\cup\MSF{G'}}$. A single edge deletion in $G$ can cause at most one edge deletion in $E'$, at most one edge deletion in $\MSF{G'}$, and (in case a tree edge was deleted from $\MSF{G'}$) at most one edge insertion in $\MSF{G'}$. Hence, $E'(w)\cup\MSF{G'}$ and thus $\mathcal M$ can be updated with a constant number of edge insertions/deletions.

By Theorem~\ref{Thm:MSFFewNonTreeEdges}, in order to bound the time to update $\mathcal M$ after a single edge insertion/deletion, we need to bound the number of non-tree edges of $\mathcal M$. We do this in the following lemma.
\begin{Lem}
At any time during the sequence of updates, the number of non-tree edges of $\mathcal M$ is $\tilde O(\Delta + \lambda n)$.
\end{Lem}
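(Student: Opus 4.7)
My plan is first to reduce the non-tree edge count of $\mathcal M$ to essentially the size of $E'$, and then to bound $|E'|$ by summing property~(2) of Theorem~\ref{Thm:RCPartition} across the laminar family $\mathcal F$. By the Corollary following Lemma~\ref{Lem:MSFDecompose}, the MSF maintained by $\mathcal M$ of the multigraph $E'(w)\cup\MSF{G'}$ is $M$. Since $G$ and $G'$ coincide as unweighted graphs (they differ only in the weights assigned to edges of $E'$), they share the same connected components at every point in time, giving $|\MSF{G'}|=|M|$. Therefore, counting with multiplicity in the multigraph, the number of non-tree edges of $\mathcal M$ equals $|E'| + |\MSF{G'}| - |M| = |E'|$ at that time, up to at most one extra copy per reconnecting-edge insertion into $\MSF{G'}$ over the course of the update sequence.

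The main work is bounding the initial $|E'|$. I would apply property~(2) of Theorem~\ref{Thm:RCPartition} to each invocation of the partitioning algorithm made when building $\mathcal F$: for every level-$i$ cluster $C$, with probability at least $1-n^{-c}$, the inter-child contribution is bounded by $|E_{\ge i+1}(C)|\le\lambda\sum_{X\in\mathcal X(C)}|X|\log(|V(C)|/|X|)$, and a union bound over the $O(n)$ applications preserves this w.h.p. To aggregate across levels, I would define $T(C)=\sum_{X\in\mathcal X(C)}|X|\log(|V(C)|/|X|)+\sum_{X\in\mathcal X(C)} T(X)$ and prove, by induction on the depth of the recursion in $\mathcal F$, that $T(C)\le|V(C)|\log|V(C)|$: a $k$-way split of $V(C)$ into (roughly equal) pieces contributes $|V(C)|\log k$ at the top level, while the recursion inside the pieces contributes at most $|V(C)|\log(|V(C)|/k)$, summing to $|V(C)|\log|V(C)|$. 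Applied at the root of $\mathcal F$, this gives $T(G)=O(n\log n)$ and hence $|E'|\le \lambda\, T(G)=\tilde O(\lambda n)$ initially, w.h.p.

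For the update sequence, I would observe that $\mathcal F$ is fixed at preprocessing and every $\mathcal F$-cluster only loses edges under deletions, so each $|E_{\ge i+1}(C)|$ is monotonically non-increasing; hence $|E'|$ never grows. The only extra contribution comes from the at most $\Delta$ reconnecting edges that may be inserted into $\MSF{G'}$ over the allowed $\Delta$ deletions, each of which adds at most one new copy to the multigraph $E'(w)\cup\MSF{G'}$ tracked by $\mathcal M$ and therefore at most one to the non-tree edge count. Combining both contributions yields the claimed $\tilde O(\Delta+\lambda n)$ bound at every point in time. The main technical obstacle is the telescoping estimate for $T(G)$: a naive per-level sum would introduce an extra factor of $\ell=n^\eps$ and ruin the bound, so it is crucial that the entropy-style induction on $T(C)$ collapses the recursion to $O(n\log n)$ uniformly in the depth of $\mathcal F$.
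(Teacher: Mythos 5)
Your proof is correct, and the second half (bounding $|E'|$ by an entropy/telescoping induction over the laminar family $\mathcal F$, using property~(2) of Theorem~\ref{Thm:RCPartition} and the identity $\sum_{X}|X|\log(|V(C)|/|X|) = |V(C)|\log|V(C)| - \sum_X|X|\log|X|$) is essentially identical to the paper's argument. The first half, however, takes a genuinely different and in fact tighter route. The paper argues structurally: edges of $\MSF{G'}$ lying in $E(\mathcal C_M)$ are tree edges of $\mathcal M$, the remaining edges of $\MSF{G'}$ number at most (number of $M$-clusters)$-1 = O(n^{1-\eps}+\Delta)$, and every non-tree edge must come from $E'$ or from these contracted $\MSF{G'}$-edges, yielding $O(n^{1-\eps}+\Delta+|E'|)$. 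You instead use the exact counting identity: the multigraph $E'(w)\cup\MSF{G'}$ has $|E'|+|\MSF{G'}|$ edges (with multiplicity), its MSF is $M$ with $|M|$ edges, and since $G$ and $G'$ share the same unweighted edge set they have identically sized spanning forests, so the non-tree count is exactly $|E'|+|\MSF{G'}|-|M|=|E'|$. This is cleaner and gives the sharper bound $\tilde O(\lambda n)$ directly, with the $n^{1-\eps}$ and $\Delta$ terms absorbed trivially. One small inaccuracy worth flagging: your hedge about an ``extra copy per reconnecting-edge insertion'' contributing an additive $\Delta$ is not actually needed and reflects a misconception. The identity $|E'|+|\MSF{G'}|-|M|=|E'|$ holds pointwise at every instant (after a deletion, whether or not a reconnecting edge is found in $\MSF{G'}$, one is found in $M$ precisely when the component stays connected, so $|\MSF{G'}|=|M|$ is maintained); there is no accumulation over the update history. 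Since the lemma as stated permits the extra $\Delta$ term, the hedge is harmless, but it obscures the fact that your argument already yields the stronger conclusion.
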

\begin{proof}
Observe that edges of $E(\mathcal C_M)$ are edges of $M$ (since they belonged to $M$ initially and since we only delete edges from $G$). In particular, edges of $\MSF{G'}$ belonging to $E(\mathcal C_M)$ are tree edges of $\mathcal M$. Furthermore, if each $M$-cluster is contracted to a vertex in $\MSF{G'}$ then the number of remaining edges is at most the number of $M$-clusters minus $1$. The initial number of $M$-clusters in a tree of $M$ is $O(n^{1 - \eps})$ and the number of $M$-clusters can increase by at most $1$ per edge deletion in $G$. Since we have a bound of $\Delta$ on the total number of edge deletions in $G$, we conclude that at all times, the number of non-tree edges of $\mathcal M$ is $O(n^{1 - \eps} + \Delta + |E'|)$.

Next, we bound $|E'|$. By the second property of Theorem~\ref{Thm:RCPartition}, for $i = -1,\ldots,\ell - 1$, and for each non-leaf level $i$-cluster $C$, $|E_{\ge i+1}(C)| \le \lambda\sum_{X\in\mathcal X(C)}|X|\log(|V(C)|/|X|) = \lambda(|V(C)|\log|V(C)| - \sum_{X\in\mathcal X(C)}|X|\log|X|)$ where $\mathcal X(C)$ is the partition of $V(C)$ found by the algorithm in Theorem~\ref{Thm:RCPartition}. By a telescoping sums argument applied to laminar family $\mathcal F$, it follows that $|E'| = O(\lambda n\log n) = \tilde O(\lambda n)$. Since $n^{1-\eps}\le \lambda n$, the lemma follows.
\end{proof}

To also bound the time to update $\mathcal M'$, we similarly bound its number of non-tree edges. Observe that the compressed clusters are pairwise edge-disjoint. Since we assume that no large cluster becomes disconnected, it follows that at most one compressed cluster is affected by an edge deletion in $G$; let $\overline C$ be such a cluster. Then the number of edge insertions/deletions in each of $M_1(\overline C)$, $M_2(\overline C)$, and $M_3(\overline C)$ is $O(1)$. Similarly, the number of edge insertions/deletions in $M_{\mathit{small}}$ is $O(1)$. Hence, the number of updates required in $\mathcal M'$ is $O(1)$ so it suffices to bound the number of non-tree edges of $\mathcal M'$.
\begin{Lem}
At any time during the sequence of updates, the number of non-tree edges of $\mathcal M'$ is $O(n^{1 - \eps} + \Delta)$.
\end{Lem}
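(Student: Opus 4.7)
My approach is to bound $|\mathrm{input}|-|\MSF{G'}|$, where (as the excerpt has just shown via Lemma~\ref{Lem:MSFLaminar}) the tree edges of $\mathcal M'$ coincide with $\MSF{G'}$. The key decomposition I would exploit is that, for any compressed cluster $\overline C$ and any $i\in\{1,2,3\}$, every edge of $M_i(\overline C)$ whose two endpoints lie in a common small child $C'$ of $C$ must belong to $\MSF{C'}$, because by construction $G_i(\overline C)$ restricted to $V(C')$ equals $\MSF{C'}$. Since $M_{\mathit{small}}$ contains $\MSF{C'}$ for every small leaf $C'$ of $\mathcal F$, all such intra-small-child edges already lie in $M_{\mathit{small}}$, and the edge set of $\FFE{\mathcal M'}$ decomposes cleanly as $M_{\mathit{small}}\cup X$, where $X$ collects those edges of $\bigcup_{\overline C,i} M_i(\overline C)$ whose endpoints lie in distinct children of their enclosing cluster.

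To bound $|X|$, I would fix $\overline C$ and contract every child of $C$ to a single point; because $M_i(\overline C)$ is a forest, the inter-child edges project to an acyclic subgraph on at most $O(\#\text{children of } C)$ vertices (including the large cluster vertices for $i=2,3$), so the contribution of $\overline C$ is $O(\#\text{children of } C)$. Summing over all compressed clusters gives $|X|=O(|\mathcal F|)$, and the key numerical estimate is then $|\mathcal F|=O(n^{1-\eps})$: the leaves of $\mathcal F$ partition $V$ and each respects $\mathcal C_M$, so each contains $\Omega(n^\eps)$ vertices, giving $O(n^{1-\eps})$ leaves; the non-leaf (large) clusters at level $i$ have pairwise disjoint subsets of $E_{\ge i}$ each of size $>m^{1-\eps}$, so there are at most $n^\eps$ of them per level and $O(n^{2\eps})$ in total, dominated by the leaf count for $\eps$ sufficiently small. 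Hence $|X|=O(n^{1-\eps})$.

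Letting $c(H)$ denote the number of connected components of $H$, the non-tree count of $\mathcal M'$ then equals $|M_{\mathit{small}}|+|X|-|\MSF{G'}|=(n-c(C_{\mathit{small}}))+|X|-(n-c(G'))=|X|-(c(C_{\mathit{small}})-c(G'))$; since $C_{\mathit{small}}$ is a spanning subgraph of $G'$ we have $c(G')\le c(C_{\mathit{small}})$, so the correction is non-positive and the count is at most $|X|=O(n^{1-\eps})$, safely within $O(n^{1-\eps}+\Delta)$ (the extra $\Delta$ slack mirrors how the preceding lemma charges one additional non-tree edge per deletion to absorb any deletion-driven growth in component counts). The main technical obstacle I anticipate lies not in the counting above but in the preliminary claim that $\MSF{G'}$ is entirely contained in the input of $\FFE{\mathcal M'}$, which is what legitimizes the subtraction; this requires a short case analysis via the cut property of MSFs, verifying that every inter-child edge of $\MSF{C}$ for a cluster $C$ is cheapest across some cut both in $C$ and in the relevant $G_i(\overline C)$, and hence lies in $M_i(\overline C)$.
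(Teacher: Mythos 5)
Your proof is correct, and it takes a genuinely different route from the paper's. The paper's own argument contracts the \emph{dynamic} $M$-clusters (the pieces produced by \texttt{FINDCLUSTERS} on $M$, whose count increases by one per tree-edge deletion); it observes that all $M$-cluster edges are tree edges of $\mathcal M'$, and that contracting $M$-clusters in $M_{\mathit{small}}$ and in each of the three union-of-$M_i$ forests yields forests with $O(n^{1-\eps}+\Delta)$ edges, so the total number of non-$M$-cluster edges (and a fortiori the non-tree count) is $O(n^{1-\eps}+\Delta)$. You instead contract children within the \emph{static} laminar family $\mathcal F$: since $\mathcal F$ is fixed at preprocessing, the number of inter-child edges is $O(|\mathcal F|)$, and $|\mathcal F|=O(n^{1-\eps})$ follows because leaves respect $\mathcal C_M$ (so each has $\Omega(n^\eps)$ vertices) and there are only $O(n^{2\eps})$ non-leaves. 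Combined with your exact component-count accounting, this gives a strictly sharper $O(n^{1-\eps})$ bound with no $\Delta$ term. The trade-off: the paper's argument is shorter and reuses exactly the $M$-cluster machinery from the immediately preceding lemma (which genuinely needs $\Delta$ there), while yours requires the extra observation that $\mathcal F$ is static and the explicit component bookkeeping; downstream, the sharper constant doesn't change Corollary~\ref{Cor:TimeFFEStructs} since the bound for $\mathcal M$ dominates anyway. One small simplification available to you: the ``technical obstacle'' you flag at the end (that $\MSF{G'}$ lies inside the input graph) can be sidestepped entirely, since the non-tree count of $\FFE{\mathcal M'}$ is $|\text{input}| - (n - c(\text{input}))$ and $c(\text{input})\le c(C_{\mathit{small}})$ holds trivially because the input contains $M_{\mathit{small}}$, which already has the same components as $C_{\mathit{small}}$.
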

\begin{proof}
At any time, the number of $M$-clusters is $O(n^{1 - \eps} + \Delta)$ and the edges of $M$-clusters are all tree edges in $\mathcal M'$. Contracting $M$-clusters to vertices in $M_{\mathit{small}}$ gives a forest with $O(n^{1 - \eps} + \Delta)$ edges. For $i = 1,2,3$, consider the graph consisting of the union of $M_i(\overline C)$ over all compressed clusters $\overline C$ and $\MSF{C'}$ over all small clusters $C'$. This graph is a forest and contracting all $M$-clusters gives a forest with $O(n^{1 - \eps} + \Delta)$ edges. This shows the lemma.
\end{proof}

Combining the above with Theorem~\ref{Thm:MSFFewNonTreeEdges}, it follows that the total time to update $\mathcal M$ and $\mathcal M'$ is $\tilde O(\sqrt{\Delta + \lambda n})$. Furthermore, $\mathcal M_{\mathit{small}}$ can be maintained within this time as well since each small cluster has size $O(n^{1 - \eps})$ and at most one such cluster is affected by an edge deletion in $G$. We now have the following corollary.
\begin{Cor}\label{Cor:TimeFFEStructs}
After each edge deletion in $G$, the total time to update $\mathcal M$, $\mathcal M'$, and $\mathcal M_{\mathit{small}}$ is $\tilde O(\sqrt{\Delta + \lambda n})$.
\end{Cor}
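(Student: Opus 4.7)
The plan is to assemble the corollary directly from the two preceding lemmas and Theorem~\ref{Thm:MSFFewNonTreeEdges}, treating each of the three structures in turn and then taking the maximum. The one point to justify carefully is that each edge deletion in $G$ triggers only $O(1)$ updates in each of $\mathcal M$, $\mathcal M'$, and $\mathcal M_{\mathit{small}}$, since Theorem~\ref{Thm:MSFFewNonTreeEdges} charges $\tilde O(\sqrt h)$ per update but in principle a single deletion in $G$ could induce many internal updates.

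For $\mathcal M = \FFE{E'(w)\cup\MSF{G'}}$, the preceding paragraph already observes that deleting one edge from $G$ causes at most one deletion in $E'$ and at most one deletion plus at most one insertion in $\MSF{G'}$; hence at most a constant number of updates in $\mathcal M$. The first lemma bounds its non-tree edges by $\tilde O(\Delta+\lambda n)$, so Theorem~\ref{Thm:MSFFewNonTreeEdges} gives $\tilde O(\sqrt{\Delta+\lambda n})$ per edge deletion in $G$.

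For $\mathcal M'$, I would invoke the standing assumption of this subsection that no large cluster becomes disconnected, together with the fact (noted right before the second lemma) that compressed clusters are pairwise edge-disjoint: a deletion in $G$ therefore affects only one compressed cluster $\overline C$ and causes $O(1)$ edge updates in each of $M_1(\overline C)$, $M_2(\overline C)$, $M_3(\overline C)$, and also $O(1)$ updates in $M_{\mathit{small}}$. So $\FFE{\mathcal M'}$ takes $O(1)$ edge updates, and the second lemma bounds its non-tree edges by $O(n^{1-\eps}+\Delta)$. Using $n^{1-\eps}\le\lambda n$ (recorded in the proof of the first lemma), Theorem~\ref{Thm:MSFFewNonTreeEdges} yields $\tilde O(\sqrt{\Delta+\lambda n})$ time to update $\mathcal M'$.

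For $\mathcal M_{\mathit{small}}$, each small cluster has initial size at most $m^{1-\eps}=\Theta(n^{1-\eps})$, and at most one such cluster is affected by the deletion. The FFE structure (in fact Frederickson's original structure) maintaining the MSF of that small cluster therefore spends $\tilde O(\sqrt{n^{1-\eps}})=\tilde O(\sqrt{\lambda n})$ worst-case time per deletion. Taking the maximum of the three bounds gives $\tilde O(\sqrt{\Delta+\lambda n})$ as claimed. The only mild subtlety, and the step I would write out most carefully, is the bookkeeping that a single deletion in $G$ propagates to only $O(1)$ updates in each FFE instance; everything else is a direct substitution of the two lemmas into Theorem~\ref{Thm:MSFFewNonTreeEdges}.
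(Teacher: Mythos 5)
Your proof is correct and follows essentially the same path as the paper: combine the two preceding non-tree-edge lemmas with Theorem~\ref{Thm:MSFFewNonTreeEdges}, use the $O(1)$-updates-per-deletion observations already recorded before those lemmas, and handle $\mathcal M_{\mathit{small}}$ separately via the $O(n^{1-\eps})$ small-cluster size together with $n^{1-\eps}\le\lambda n$. You spell out the $O(1)$-propagation bookkeeping slightly more explicitly than the paper does, but the argument is the same.
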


\paragraph{Maintaining $M_1$-forests:}
It remains to bound the time over all $\overline C$ to update forests $M_1(\overline C)$, $M_2(\overline C)$, and $M_3(\overline C)$ after an edge deletion in $G$. We first focus on $M_1$-forests. Note that at most one forest $M_1(\overline C)$ needs to be updated after such a deletion. Since the edges of $\MSF{C'}$ over all child clusters $C'$ of $C$ are all tree edges of $M_1(\overline C)$, the number of non-tree edges in $G_1(\overline C)$ is $O(|E_{\ge i}(C)|) = O(|E_i| + |E_{\ge i+1}(C)| = O(n^{1 - \eps} + \lambda n\log n) = O(\lambda n\log n)$ so maintaining the $M_1$-forests can be done in $\tilde O(\sqrt{\lambda n})$ per edge deletion in $G$.

\paragraph{Maintaining $M_2$-forests:}
The data structure for maintaining $M_2$-forests is described in Section~\ref{subsubsec:MaintainM2}. To efficiently support the \texttt{join} of top tree clusters, the large cluster vertices of each compressed cluster $\overline C$ are arbitrarily labeled from $1$ to $k$ where $k$ is the number of large cluster vertices in $\overline C$. For each top tree cluster $K$ of $M_2(\overline C)$, the set $V_{\mathit{large}}(K)$ is represented as an array of $k$ bits where the $i$th bit is $1$ iff large cluster vertex $i$ belongs to $V_{\mathit{large}}(K)$. Note that $k = O(n^{\eps})$. The set $E_{\mathit{large}}(K)$ is represented as an array of length $k$ where the $i$th entry contains the edge $e_K(v)\in E(\overline C)$ where $v$ is the $i$th large cluster vertex.

With this representation of auxiliary data, it is easy to see that each \texttt{join} of two top tree clusters in $M_2(\overline C)$ and each \texttt{split} can be done in $O(k) = O(n^{\eps})$ time. Since $G$ has constant degree, we can support \texttt{create} within this time bound as well. No more than $O(\log n)$ of these operations are required in $M_2(\overline C)$ in each update, taking a total of $O(n^{\eps}\log n)$ time. From our description in Section~\ref{subsubsec:MaintainM2}, it is easy to see that finding a minimum-weight replacement edge can be done in linear time in the size of the auxiliary data stored in two top tree clusters, i.e., in time $O(n^{\eps})$.

We conclude that maintaining $M_2$-forests can be done in $\tilde O(n^{\eps})$ time per edge deletion in $G$.

\paragraph{Maintaining $M_3$-forests:}
As observed above, the number of large cluster vertices in a compressed cluster $\overline C$ is $O(n^{\eps})$ and hence $|G_3'(\overline C)| = O(n^{2\eps})$. Maintaining the graph $G_3'(\overline C)$ can be done in constant time per edge deletion in $G$ and the brute-force algorithm to find a cheapest replacement edge in $M_3(\overline C)$ can be done in $O(|G_3'(\overline C)|) = O(n^{2\eps})$ time.

We can now summarize the results of this subsection.
\begin{Lem}\label{Lem:MainPerformance}
Assume that no large $\mathcal F$-cluster becomes disconnected during a sequence of at most $\Delta$ edge deletions to $G$. Then w.h.p., the data structure of this section has worst-case preprocessing time $O_{n^{\eps}}(n)$ and worst-case update time $\tilde O(\sqrt{\Delta + n^{1-1/2^{O(1/\eps)}}} + n^{2\eps})$.
\end{Lem}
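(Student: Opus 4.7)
The proof is essentially a consolidation of the bounds already established earlier in the Performance subsection, together with the substitution $\lambda = n^{-1/2^{O(1/\eps)}}$ from Theorem~\ref{Thm:RCPartition}. The plan has three steps.

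First, I would handle preprocessing. The MSF $M$ is computed by Prim's algorithm in $O(n\log n)$ time and Frederickson's \texttt{FINDCLUSTERS} yields $\mathcal C_M$ in $O(n)$ time. To construct $\mathcal F$, the algorithm of Theorem~\ref{Thm:RCPartition} is invoked on each $\mathcal F$-cluster. Since the vertex sets of level $i$-clusters are disjoint, their total size is $O(n)$ per level, giving $O(n\ell) = O(n^{1+\eps})$ aggregate, which by the third bullet of Theorem~\ref{Thm:RCPartition} is $O_{n^\eps}(n)$ w.h.p. The FFE structures $\mathcal M$, $\mathcal M_{\mathit{small}}$, $\FFE{\mathcal M'}$, and $\FFE{G_1(\overline C)}$ are initialized by Theorem~\ref{Thm:MSFFewNonTreeEdges} in $\tilde O(n)$ total time (noting that the total large-cluster-vertex count is $O(n^{2\eps})$ and the compressed clusters are edge-disjoint). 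A union bound over the $\operatorname{poly}(n)$ invocations of Theorem~\ref{Thm:RCPartition} (adjusting the constant $c$) yields the $O_{n^\eps}(n)$ preprocessing bound w.h.p.

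Next, I would bound the per-update time by summing the four contributions already accounted for. Corollary~\ref{Cor:TimeFFEStructs} gives $\tilde O(\sqrt{\Delta + \lambda n})$ for maintaining $\mathcal M$, $\mathcal M'$, and $\mathcal M_{\mathit{small}}$. The $M_1$-forest analysis contributes $\tilde O(\sqrt{\lambda n})$, which is dominated by the previous term. The $M_2$-forest analysis contributes $\tilde O(n^\eps)$ from the $O(\log n)$ top-tree operations each costing $O(n^\eps)$. The $M_3$-forest brute-force replacement search contributes $O(n^{2\eps})$. Adding these yields worst-case update time $\tilde O(\sqrt{\Delta + \lambda n} + n^{2\eps})$.

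Finally, substituting $\lambda = n^{-1/2^{O(1/\eps)}}$ gives $\lambda n = n^{1 - 1/2^{O(1/\eps)}}$, and the claimed bound $\tilde O(\sqrt{\Delta + n^{1 - 1/2^{O(1/\eps)}}} + n^{2\eps})$ follows. The assumption that no large cluster becomes disconnected is used implicitly: it guarantees that each edge deletion in $G$ triggers only $O(1)$ updates to each affected $M_i(\overline C)$ and to $\mathcal M'$, as was relied on throughout the Performance subsection. Since there is no new technical content, the only genuine care needed is the union bound making all w.h.p.\ statements hold simultaneously, which is routine given that the number of invocations of Theorem~\ref{Thm:RCPartition} is polynomial in $n$.
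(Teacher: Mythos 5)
Your proof is correct and matches the paper's own reasoning, which treats the lemma as a summary of the preceding Performance analysis: preprocessing contributions sum to $O_{n^\eps}(n)$ w.h.p., and the update cost is dominated by Corollary~\ref{Cor:TimeFFEStructs}, the $M_1$/$M_2$/$M_3$-forest bounds, and the substitution $\lambda n = n^{1-1/2^{O(1/\eps)}}$. Nothing is missing.
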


\subsection{Handling disconnected large clusters}
We now remove the simplifying assumption that large clusters do not become disconnected. To handle the general case, the following theorem is crucial; its proof can be found in Section~\ref{sec:DecDSLowCondCuts}.
\begin{theorem}\label{Thm:DisconnectExpanderGraph}
Let $c > 0$ be a constant and let $H$ be a dynamic graph of max degree at most $3$ and with $n\ge |V(H)| = \Omega(n^{1-\eps})$ which undergoes a sequence of at most $\Delta = \Omega(\sqrt{|V(H)|}))$ updates each of which is an edge deletion. Assume that w.h.p., $H$ is initially a $\gamma$-expander graph where $\gamma = O_{n^{\eps}}(1)$. Then there is a dynamic data structure for $H$ which w.h.p.~has worst-case preprocessing time $\tilde O(n) + O_{n^{\eps}}(\Delta^2/\sqrt n)$. If the sequence of updates is independent of the random bits used by the data structure then in the $k$th update, the data structure outputs a subset $V_k$ of $V(H)$ such that
\begin{enumerate}
\item $H[V(H) - W_k]$ is connected just after the update for some subset $W_k$ of $\cup_{i = 1}^k V_i$, and
\item w.h.p., $V_k$ has size $O(n^{1/2 - 4\eps})$ and is output in $O(n^{1/2 - 4\eps}) + O_{n^{\eps}}(\Delta/n^{1/4} + \Delta^4/n^2)$ worst-case time.
\end{enumerate}
\end{theorem}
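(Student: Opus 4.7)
The plan is to invoke as a black box the XPrune procedure previewed in Section~\ref{sec:XPrune}. During preprocessing I would run XPrune on the initial graph $H$, which by hypothesis is w.h.p.~a $\gamma$-expander with $\gamma = \Omega(n^{-2\eps})$, and prune off any low-conductance cut that XPrune detects. After each edge deletion I would call the decremental update routine of XPrune. Then I define $V_k$ to be the set of vertices newly pruned in the $k$-th update, and $W_k = \bigcup_{i \le k} V_i$.

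For condition~1, the strategy is to maintain as an invariant that $H[V(H) - W_k]$ has conductance at least some threshold $\phi = \Theta(\gamma)$ at every step, so in particular it is connected. This invariant is precisely what XPrune is designed to guarantee: as described in Section~\ref{sec:XPrune}, it implicitly runs Nibble from every starting vertex and prunes off the smaller side of any cut of conductance below $\phi$ that it discovers, so no low-conductance cut can survive in $H[V(H) - W_k]$.

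For the size and time bounds I would first observe, by an expander-cut argument, that any set $S$ ever pruned satisfies $\delta_H(S) \ge \gamma|S|$ in the original graph; since at most $\Delta$ edges have been deleted, at least $\gamma|S| - \Delta$ cut edges must still survive incident to $S$, which in turn caps the cumulative pruned volume by $O(\Delta/\gamma) = O_{n^\eps}(\Delta)$. For the sharper per-update bound $|V_k| = O(n^{1/2 - 4\eps})$, I would rely on the Section~\ref{sec:XPrune} guarantee that a single edge deletion can expose only a small new low-conductance region w.h.p.; the polynomial saving in the exponent comes from tuning $\phi$ and the edge-sampling probability from Section~\ref{sec:LowCondCutsSparsification} against $\gamma$ and $\Delta$. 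The running time follows by summing: the preprocessing is dominated by running Nibble once from each vertex on a sparsified subgraph, giving an $\tilde O(n)$ term plus an $O_{n^\eps}(\Delta^2/\sqrt n)$ term absorbing the initial pruning cost; each subsequent update is dominated by Nibble restarts from the small set of starting vertices whose execution changes because of the deleted edge, plus the cost of propagating the prune through the auxiliary data.

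The hard part will be the analysis deferred to Sections~\ref{sec:DecDSLowCondCuts}--\ref{sec:XPrune}: verifying simultaneously that (i) pruning genuinely preserves a positive expansion constant, so $H[V(H) - W_k]$ stays connected; (ii) the per-update pruned size concentrates below $n^{1/2 - 4\eps}$ with high probability rather than only on average across the $\Delta$ updates; and (iii) the sampling probability and Nibble parameters can be chosen to realize both the $O_{n^\eps}(\cdot)$ time overheads and the claimed w.h.p.~guarantees, given only that the input is a $3$-regular $\gamma$-expander. At the level of the present theorem the argument reduces to instantiating XPrune with the hypothesized parameter $\gamma = \Omega(n^{-2\eps})$ and checking that the resulting preprocessing and per-update costs match the stated bounds.
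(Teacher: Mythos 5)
Your high-level instinct—drive everything through \texttt{XPrune} and use the expander-cut inequality $\Delta \ge \gamma|S| - \delta(S)$ to cap the total pruned volume—matches the paper. But there is a genuine gap in how you get from that to the \emph{per-update} size and time bounds, and the fix is a separate mechanism you have not described.

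You set $V_k$ equal to whatever is newly pruned at step $k$ and then rely on a claim that ``a single edge deletion can expose only a small new low-conductance region w.h.p.'' That claim is false as stated, and it is not what \texttt{XPrune} guarantees. The second property of \texttt{XPrune} (Lemma~\ref{Lem:TotSizeSmallCuts}) bounds the \emph{cumulative} pruned volume by $O(\Delta/\gamma)$ over the entire sequence of $\Delta$ deletions; a single edge deletion can legitimately trigger the removal of a far larger set (multiple iterations of the while-loop, each dropping $\Theta(\kappa)$ vertices, with no per-call cap other than the total). To turn an aggregate bound into a worst-case per-update bound of $O(n^{1/2-4\eps})$ the paper introduces an explicit buffer $\mathcal C'$ of pruned clusters and \emph{drains} it at the controlled rate of $\Theta(n^{1/2-4\eps})$ vertices per update. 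The theorem's wording—``some subset $W_k$ of $\bigcup_{i\le k}V_i$''—exists precisely to accommodate this: at time $k$ there may be clusters already removed from $X$ that have not yet been emitted in any $V_i$, so the output sets lag behind the actual prunes, and $W_k$ has to be chosen relative to the last time the buffer was empty, not equal to $\bigcup_{i\le k}V_i$ as you assume. You flag this concentration issue as a ``hard part'' but offer no mechanism; the buffer is the mechanism.

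Relatedly, you prove connectivity (Condition~1) by asserting an invariant that $H[V(H)-W_k]$ keeps conductance $\Theta(\gamma)$. The paper instead maintains an explicit connectivity certificate: a spanning forest $F(H)$ of a sampled subgraph (an instance of the Section~\ref{sec:FewNonTreeEdges} structure), organized over a $\kappa$-size clustering with $\kappa = n^{1/2-\Theta(\eps)}$. Lemma~\ref{Lem:SkeletonCertificate} shows w.h.p.~that this sparse sampled forest spans $X$, and the sets $W_e$ you never mention are exactly the vertex sets of small trees that split off $F(H)$—bounded in size by Lemma~\ref{Lem:HPrimeSplitBound} via a delayed-cut argument over at most $N = \Theta_{n^\eps}(\Delta/\sqrt n)$ consecutive non-emptying updates. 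Without this certificate you have no efficient way to detect connectivity or to identify which side of a split should be output, and the conductance invariant alone does not give you the $O(n^{1/2-4\eps})$ size bound on the piece that falls off. So while your framing is in the right direction, the proposal is missing the two load-bearing ideas (the output buffer $\mathcal C'$ and the sampled forest certificate $F(H)$) that actually deliver the worst-case per-update guarantees.
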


Given this theorem, the modification to the data structure described in the previous subsections is quite simple. The preprocessing step is extended by setting up an instance $\mathcal D(C)$ of the data structure of Theorem~\ref{Thm:DisconnectExpanderGraph} for each large cluster $C$.

Now, consider an update where an edge $e$ is to be deleted from $G$. It will prove useful to split the update into two phases where $e$ is not deleted until the second phase. In the first phase, the following is done for each large cluster $C$ and the at most one large child cluster $C'$ of $C$ containing $e$. First, $\mathcal D(C')$ is updated with the deletion of $e$. Letting $V'\subseteq V(C')$ be the set output by $\mathcal D(C')$, all edges of $C$ incident to $V'$ are inserted into $\FFE{\mathcal M'}$, excluding those edges already present in this structure. Then all edges incident to $V'$ are removed from $G_2(\overline C)$ and $G_3(\overline C)$ and forests $M_2(\overline C)$ and $M_3(\overline C)$ are updated accordingly; new edges added to these forests are inserted into $\FFE{\mathcal M'}$ but edges removed from the forests are not removed from this structure. In the second phase, the same is done as in the previous subsections.

\subsubsection{Correctness}
We show that with the above modifications, our data structure still maintains MSF $M$ of $G$. It suffices to show that $\mathcal M'$ correctly maintains $\MSF{G'}$ and by Lemma~\ref{Lem:MSFLaminar}, this follows if we can show that for any $\mathcal F$-cluster $C$, $\FFE{\mathcal M'}$ contains $\MSF{C}$ after each update. We show the latter by induction on the height of the subtree of $\mathcal F$ rooted at $C$. The base case where the height is $0$ is straightforward since then $C$ is a small cluster and $\FFE{\mathcal M'}$ contains $M_{\mathit{small}}$ at all times and thus also $\MSF{C}$.

Now assume that the height is positive and that the claim holds for smaller heights and consider an update where an edge $e$ is to be deleted from $G$. Assuming that the claim holds at the beginning of the update, we will show that it also holds at the end of the update.

Consider the end of the first phase. Observe that the claim must hold at this point since it did so at the beginning of the update and in the first phase we only add edges to $\FFE{\mathcal M'}$. For the analysis, we construct a subgraph $D$ of $C$ by initializing $D = C$ and then doing the following for each large child cluster $C'$ of $C$. Let $V_{C'}$ be the union of subsets output by $\mathcal D(C')$ so far and let $W_{C'}$ be a subset of $V_{C'}$ such that $C'[V(C') - W_{C'}]$ is connected; such a subset must exist by Theorem~\ref{Thm:DisconnectExpanderGraph}. We remove from $D$ all edges of $E(C) - E(C')$ incident to $V_{C'}$ as well as remove all vertices of $W_{C'}$; see Figure~\ref{fig:DGraph}. Define $C'[V(C') - W_{C'}]$ to be a large child cluster of $D$.
\begin{figure}
\centerline{\scalebox{0.80}{\input{DGraph.pstex_t}}}
\caption{(a): A cluster $C$ and a large child cluster $C'$ with subsets $W_{C'}\subseteq V_{C'}\subseteq V(C')$. In $\overline C$, $C'$ is contracted to a large cluster vertex. (b): Subgraph $D$ obtained from $C$ by removing, for each large child cluster $C'$, the vertex set $W_{C'}$ as well as all edges (dashed) incident to $V_{C'}$. Subgraph $C'[V(C') - W_{C'}]$ is connected. In $\overline D$, this subgraph is contracted to a large cluster vertex and is identified with the large cluster vertex in $\overline C$ obtained by contracting $C'$. Among the edges shown, the dashed ones are exactly those belonging to $F - e$.}
\label{fig:DGraph}
\end{figure}

We have defined the large child clusters of $D$ and we define the small child clusters of $D$ to be the small child clusters of $C$. With these definitions, let $\overline D$ be obtained from $D$ exactly in the same manner as $\overline C$ is obtained from $C$. For each large child cluster $C'$ of $C$, there is a unique large child cluster $D'$ of $D$ such that $D'\subseteq C'$. For all such pairs $(C',D')$, we identify the large cluster vertex in $\overline C$ corresponding to $C'$ with the large cluster vertex in $\overline D$ corresponding to $D'$. At the end of the first phase, we then have $\overline C = \overline D$.

Now, consider the end of the second phase of the update. At this point, $e$ has been deleted from $C$. Since the large cluster vertices of $\overline C$ are identified with large cluster vertices of $\overline D$, they correspond to subgraphs of $C$ which by Theorem~\ref{Thm:DisconnectExpanderGraph} are connected. Let $F'$ be the union of these subgraphs and let $C'$ be the union of all child clusters of $C$; note that $F' = C'[V(D)]$. By the induction hypothesis, $\FFE{\mathcal M'}$ contains $\MSF{C'}$. Let $F = E(C) - (E(C')\cup E(D))$; see Figure~\ref{fig:DGraph}(b). In the first phase, the edges of $F$ were all inserted into $\FFE{\mathcal M'}$ and they must still be present in this structure since they were all removed from $\overline C$ in the first phase. Hence, $\FFE{\mathcal M'}$ contains $F\cup\MSF{C'}\cup M_1(\overline C)\cup M_2(\overline C)\cup M_3(\overline C)$ where we view $M_i(\overline C) = M_i(\overline D)$ as a subset of edges of $G'$, for $i = 1,2,3$. With a proof similar to that of Lemma~\ref{Lem:MSFLaminar}, we have $\MSF{F'\cup G_i(\overline D)} = \MSF{F'}\cup M_i(\overline D)$. Hence, by Lemma~\ref{Lem:MSFSparsification}, $\FFE{\mathcal M'}$ also contains
\begin{align*}
  \MSF{F\cup\MSF{C'}\cup M_1(\overline C)\cup M_2(\overline C)\cup M_3(\overline C)}
  & = \MSF{F\cup C'\cup\MSF{F'}\cup M_1(\overline D)\cup M_2(\overline D)\cup M_3(\overline D)}\\
  & = \MSF{F\cup C'\cup F'\cup G_1(\overline D)\cup G_2(\overline D)\cup G_3(\overline D)}\\
  & = \MSF{F\cup C'\cup F'\cup\overline D}\\
  & = \MSF{F\cup C'\cup (D - e)}\\
  & = \MSF{C},
\end{align*}
which shows the induction step. Hence, with the above modification, $\mathcal M$ correctly maintains $M$.



\subsubsection{Performance}
We now analyze the additional preprocessing and update time required with the above modifications. The total number of large clusters is $O(n^{2\eps})$ so by Theorem~\ref{Thm:DisconnectExpanderGraph}, w.h.p.~the additional preprocessing time is $O_{n^{\eps}}(n + \Delta^2/\sqrt n)$.

Now, consider the deletion of an edge $e$ from $G$. The number of large clusters containing $e$ is $O(n^{\eps})$ and by keeping pointers from each edge to the large clusters containing it, these clusters can be identified in $O(n^{\eps})$ time.

Consider one such large cluster $C$. By Theorem~\ref{Thm:DisconnectExpanderGraph}, updating $\mathcal D(C)$ takes $O(n^{1/2 - 4\eps}) + O_{n^{\eps}}(\Delta/n^{1/4} + \Delta^4/n^2)$ time with high probability. Let $V'$ be the set output by this update. During the first phase, no changes are made to $\MSF{G'}$ so all edges inserted into $\FFE{\mathcal M'}$ when $C$ is processed must belong to $E - E(\MSF{G'})$. By Theorem~\ref{Thm:MSFFewNonTreeEdges}, they can thus be inserted with batched insertions into the FFE structure, taking a total worst-case time of $\tilde O(n^{1/2 - 4\eps})$. By our earlier analysis of the data structures maintaining $M_2$- and $M_3$-forests, it follows that removing edges incident to $V'$ from $G_2(\overline C)$ and $G_3(\overline C)$ and updating $M_2(\overline C)$ and $M_3(\overline C)$ accordingly takes $O(|V'|n^{2\eps}) = O(n^{1/2 - 2\eps})$ worst-case time. Summing over all $C$, it follows that w.h.p., each update can be supported in $\tilde O(n^{1/2 - \eps}) + O_{n^{\eps}}(\Delta/n^{1/4} + \Delta^4/n^2)$ worst-case time.

Combining the above with Lemma~\ref{Lem:MainPerformance}, we are now ready to choose $\Delta$ in order to obtain Theorem~\ref{Thm:Main}. We have shown that w.h.p., preprocessing time for the structure of this section is $O_{n^\eps}(n + \Delta^2/\sqrt n)$ and update time is $n^{1/2 - 1/2^{O(1/\eps})} + \tilde O(\sqrt\Delta) + O_{n^{\eps}}(\Delta/n^{1/4} + \Delta^4/n^2)$; this is under the assumption that $\Delta = \Omega(\sqrt n)$ since we applied Theorem~\ref{Thm:DisconnectExpanderGraph} above. By Theorem~\ref{Thm:Reduction}, this gives a fully-dynamic MSF structure which for any update requires $n^{1/2 - 1/2^{O(1/\eps})} + \tilde O(\sqrt\Delta) + O_{n^{\eps}}(n/\Delta + \Delta/n^{1/4} + \Delta^4/n^2)$ time with high probability. Picking constant $\eps$ sufficiently small and picking suitable $\Delta = n^{1/2 + \Theta(\eps)}$ gives an update time of $n^{1/2 - 1/2^{O(1/\eps)}}$. This shows Theorem~\ref{Thm:Main} except for the expected time bound. The latter can easily be obtained as follows. If in an update the $n^{1/2 - 1/2^{O(1/\eps)}}$ time bound is exceeded, the data structure can update the MSF deterministically in $O(n)$ time (scanning over all edges) and then rebuild a new data structure for the next update. Since the $O(n)$ time is only spent with low probability, we get an expected time bound of $n^{1/2 - 1/2^{O(1/\eps)}}$.


\section{Reduction to Decremental MSF}\label{sec:Reduction}
In this section, we give a reduction from fully-dynamic MSF to a restricted form of decremental MSF, showing Theorem~\ref{Thm:Reduction}.

\subsection{The reduction of Holm et al.}
Holm et al.~\cite{HLT01} gave a reduction from fully-dynamic to decremental minimum spanning forest. Unfortunately, this reduction will not suffice for our problem since it is not worst-case time-preserving, implying that with a decremental structure having small \emph{worst-case} update time, the reduction only yields a fully-dynamic structure with small \emph{amortized} update time. In the following, we sketch a variant of the reduction in~\cite{HLT01} but where we assume that for the black-box decremental structure, we have a bound on its worst-case update time. In the next subsection, we modify it to a worst-case time-preserving reduction.

It will be convenient to reduce from fully-dynamic MSF in a simple graph to decremental MSF in a multigraph. Assume that for an $n$-vertex multigraph with initially $m$ edges, we have a black-box decremental MSF structure with preprocessing time $P(m,n)$ and worst-case update time at most $U(m,n)$. To simplify our bounds, we shall assume that $P$ and $U$ are non-decreasing in $m$ and $n$. Since we may assume that $P$ and $U$ are bounded by polynomial functions in $m$ and $n$, we may assume that a constant-factor increase in $m$ or $n$ increases $P$ and $U$ by no more than a constant factor. We will obtain a fully-dynamic MSF structure with $O(\log^3n + U(m,n)\log n + \sum_{i = 0}^L \frac 1{2^i}P(\min\{m,2^i\log n\},\min\{n,2^i\log n\}))$ amortized update time over any sequence of updates where $m$ is the maximum number of edges present in $G$ over all updates and $L = \lceil\lg m\rceil$.

Let $G = (V,E)$ be the dynamic (simple) graph, let $F$ be the MSF of $G$, and let $\mathcal F$ be the fully-dynamic MSF structure that maintains $F$. This structure consists of pairs of decremental MSF structures, $(\mathcal D_0,\mathcal D_0'),\ldots,(\mathcal D_L,\mathcal D_L')$. We let $G_i$ denote the multigraph and let $F_i$ denote the MSF of $G_i$ maintained by $\mathcal D_i$. Furthermore, we let $N_i = E(G_i) - E(F_i)$. Similarly, we define $G_i'$, $F_i'$, and $N_i'$ for $\mathcal D_i'$.  Initially, all multigraphs $G_i$ and $G_i'$ are empty. In the general step, we require that every non-tree edge of $G$ is a non-tree edge of one of the multigraphs $G_0,G_0',\ldots,G_L,G_L'$; with the same proof as in~\cite{HLT01}, this ensures that whenever an edge of $F$ is deleted, a cheapest reconnecting edge (if any) is one of the reconnecting edges identified by the decremental MSF structures.

MSF structure $\mathcal F$ will need an auxiliary operation that, given two sets of edges $E_1$ and $E_2$, outputs a decremental structure as follows. First, a new multigraph is formed consisting of the union of $E_1\cup E_2$ and a subgraph $F'$ of $F$ consisting of all simple paths in $F$ between vertex pairs $(u,v)$ where $u$ resp.~$v$ is an endpoint of an edge of $E_1\cup E_2$. The latter ensures that any non-tree edge of $G$ belonging to $E_1\cup E_2$ is a non-tree edge of the new multigraph formed. Then a decremental structure is initialized for this new multigraph and the structure is output.

\paragraph{Edge insertions:}
We now describe how updates are handled. At the end of each update, regardless of whether it is an insertion or deletion, a cleanup procedure is applied which we describe below. First we describe the first part of the update. We start with an insertion of an edge $e$ into $G$. If $e$ connects distinct trees in $F$, $e$ is added to $F$ and no further updates are done. Now assume that the endpoints of $e$ are connected by a path $P$ in $F$. If $e$ is lighter than the heaviest edge $f$ on $P$, $e$ replaces $f$ in $F$ and the auxiliary operation is applied with $E_1 = \{f\}$ and $E_2 = \emptyset$; let $\mathcal D$ be the structure output by this operation. If $\mathcal D_0$ is empty, we set it equal to $\mathcal D$ and otherwise we set $\mathcal D_0'$ equal to $\mathcal D$. Conversely, if $e$ is heavier than $f$, we we do the same as just described but with $E_1 = \{e\}$ and $E_2 = \emptyset$.

\paragraph{Edge deletions:}
Now, consider the deletion of an edge $e$ from $G$. First, in each (multigraph represented by the) decremental structure containing $e$, let $P$ be a maximal path containing $e$ whose interior vertices have degree $2$. Viewing $P$ as a single ``super edge'' (defined below), $P$ is removed and the decremental structure outputs at most one reconnecting edge. Let $R$ be the set of replacement edges found by all decremental structures. If $e\in F$, we delete $e$ from $F$ and reconnect $F$ with the cheapest reconnecting edge from $R$, if any. Finally, we apply the same procedure as for edge insertions but with $E_1 = R$ and $E_2 = \emptyset$.

\paragraph{The cleanup procedure:}
We next describe the cleanup procedure which is applied at the end of each update. First, we need a definition. Assign time steps $0,1,\ldots$ to the updates and for each integer $k\ge 0$, define a \emph{$k$-interval} as an interval of the form $[\ell k,(\ell+1)k)$ where $\ell$ is an integer.

Now consider an update $j$. For all $i$ in increasing order, if $j$ is divisible by $2^{i+1}$, i.e., if $j$ is the beginning of a $2^{i+1}$-interval, we do as follows. The auxiliary operation is applied to $N_i\cup N_i'$, giving a new decremental structure $\mathcal D_{i+1}''$. Then $\mathcal D_i$ and $\mathcal D_i'$ are made empty and if $\mathcal D_{i+1}$ is empty, we update it to $\mathcal D_{i+1}''$, otherwise we update $\mathcal D_{i+1}'$ to $\mathcal D_{i+1}''$.

\paragraph{Implementation and analysis:}
To show correctness, note that when a new decremental structure is about to be added to a pair $(\mathcal D_i,\mathcal D_i')$, either $\mathcal D_i$ or $\mathcal D_i'$ must be empty since we only add such a new structure at the beginning of a $2^i$-interval and both $\mathcal D_i$ and $\mathcal D_i'$ are made empty at the start of a $2^{i+1}$-interval. Thus we maintain the invariant that every non-tree edge of $G$ is a non-tree edge of some decremental structure. Correctness now follows using the same analysis as in~\cite{HLT01}.

We now sketch the implementation details. We maintain $F$ as a top tree, allowing us to insert and delete edges in $F$ and to find the lightest/heaviest edge on a path; each top tree operation takes $O(\log n)$ time.

For performance reasons, we would like $|E(G_i)| = O(|N_i|)$ (and similarly $|E(G_i')| = O(|N_i'|)$). This is done by modifying the auxiliary operation above so that instead of explicitly including the subgraph $F'$ of $F$ in the new multigraph $G_i$, it instead adds \emph{super edges} each of which corresponds to a maximal path in $F'$ where interior nodes have degree $2$. As shown in~\cite{HLT01}, this compact representation of $G_i$ has size $O(|N_i|)$ and can be identified in $O(|N_i|\log n)$ time with a suitable top tree $F_i$ (and $F_i'$ for $G_i'$) of $F$. In total, we maintain $2(L+1)$ such top trees, one for each decremental structure.

In each update, we add at most $L$ edges to either $\mathcal D_0$ or $\mathcal D_0'$. Hence, at the beginning of every $2^{i+1}$-interval, $|N_i\cup N_i'| \le 2^{i+1}L$ so in the cleanup phase, it takes $O(2^i\log^2n)$ time to form the multigraph of $\mathcal D_{i+1}''$ plus $O(P(\min\{m,2^i\log n\},\min\{n,2^i\log n\}))$ time to initialize $\mathcal D_{i+1}''$. When applying a top tree to form a subgraph $F'$ of $F$ as described above, the information in this top tree changes. After having formed $F'$, we undo these changes so that the top tree is ready to form the next such tree. Undoing the changes can be done within the time to form $F'$.

Since the work just described is only done every $2^{i+1}$ updates it follows by summing over all $i$ that the amortized cost per update for the cleanup phase is
\[
  O(\log^3n + \sum_{i = 0}^L\frac 1{2^i}P(\min\{m,2^i\log n\},\min\{n,2^i\log n\})).
\]

In the first part of an update, we delete at most one edge from each decremental structure. Since the bound on $|N_i\cup N_i'|$ implies $|E(G_i)\cup E(G_i')| = O(2^iL)$, the first part of an update takes $O(\sum_{i = 0}^L U(\min\{m,2^iL\},\min\{n,2^iL\})) = O(U(m,n)\log n)$ time. Updating all top trees in an update to the new forest $F$ takes a total of $O(\log^2n)$ time.

We conclude that each update takes amortized time
\[
  O(\log^3n + U(m,n)\log n + \sum_{i = 0}^L\frac 1{2^i}P(\min\{m,2^i\log n\},\min\{n,2^i\log n\})).
\]

\subsection{A worst-case time-preserving reduction}
We now modify $\mathcal F$ to get a similar worst-case update time bound for this structure. A standard deamortization trick is used of spreading the work of constructing a new decremental structure over multiple updates rather than doing all the work in a single update.

For this to work, we introduce, in addition to each pair $(\mathcal D_i,\mathcal D_i')$, an additional pair $(\mathcal B_i,\mathcal B_i')$ of decremental structures. We can think of $\mathcal B_i$ and $\mathcal B_i'$ as snapshots of $\mathcal D_i$ and $\mathcal D_i'$ at the beginning of each $2^{i+1}$-interval and we use these snapshots to build $\mathcal D_{i+1}''$ in the background during this interval.

More precisely, we modify $\mathcal F$ so that in the beginning of a $2^{i+1}$-interval $I$, we move $\mathcal D_i$ to  $\mathcal B_i$ and $\mathcal D_i'$ to $\mathcal B_i'$ and identify $\mathcal D_i$ and $\mathcal D_i'$ with empty decremental structures. We now start forming $\mathcal D_{i+1}''$ as described above but with $\mathcal B_i$ and $\mathcal B_i'$ rather than $\mathcal D_i$ and $\mathcal D_i'$. The work for forming $\mathcal D_{i+1}''$ is spread evenly over each update of the first half of $I$. In the second half, we delete edges from $\mathcal D_{i+1}''$ at ``double speed''. More precisely, in the $k$th update of the second half of $I$, we delete from $\mathcal D_{i+1}''$ the edges deleted from $G$ in the $2k$th and $(2k+1)$th update of $I$. Hence, at the end of the last update of $I$, $\mathcal D_{i+1}''$ is up-to-date with the current graph $G$. At this point, if $\mathcal D_{i+1}$ is empty, we update it to $\mathcal D_{i+1}''$ and otherwise we update $\mathcal D_{i+1}'$ to $\mathcal D_{i+1}''$.

All other parts of $\mathcal F$ are updated exactly as in the previous version.

\paragraph{Implementation and analysis:}
In the previous version of $\mathcal F$, $\mathcal D_{i+1}''$ was constructed and included in $(\mathcal D_{i+1},\mathcal D_{i+1}')$ at the beginning of a $2^{i+1}$-interval $I$. At this point, both $\mathcal D_i$ and $\mathcal D_i'$ could be made empty since every non-tree edge of $G$ which is a non-tree edge of $\mathcal D_i$ or $\mathcal D_i'$ is a non-tree edge of either $\mathcal D_{i+1}$ or $\mathcal D_{i+1}'$. In the new version of $\mathcal F$, $\mathcal D_{i+1}''$ is instead added to $(\mathcal D_{i+1},\mathcal D_{i+1}')$ at the end of $I$. Hence, during the updates of $I$, each non-tree edge of $G$ which was a non-tree edge of $\mathcal D_{i+1}''$ in the old version is a non-tree edge of either $\mathcal D_i$ or $\mathcal D_i'$ in the new version. Correctness now follows since the new version also maintains the invariant that every non-tree edge of $G$ is a non-tree edge of some structure $\mathcal D_i$ or $\mathcal D_i'$.

We change the implementation such that the top trees $F_i$ and $F_i'$ are not updated during the first half of a $2^{i+1}$-interval $I$. Hence, in each update in the first half of $I$, both $F_i$ and $F_i'$ are top trees representing $F$ at the start of $I$, allowing $\mathcal D_{i+1}''$ to be formed correctly. During the last half of $I$, $F_i$ and $F_i'$ are updated at double speed in the same way that $\mathcal D_{i+1}''$ is updated in this half of $I$.

We now bound the worst-case update time of $\mathcal F$. We first focus on the time to construct and update a decremental structure $\mathcal D_{i+1}''$ during a $2^{i+1}$-interval $I$. In the first half of $I$, $O(\frac 1{2^i}(2^i\log^2n + P(\min\{m,2^i\log n\},\min\{n,2^i\log n\})))$ time per update is spent on forming $\mathcal D_{i+1}''$. In the last half of $I$, $O(\log n + U(\min\{m,2^i\log n\},\min\{n,2^i\log n\}))$ time is spent on updating $\mathcal D_{i+1}''$ and top trees $F_i$ and $F_i'$ at double speed. Summing over all $i$ gives a worst-case time bound of $O(\log^3n + U(m,n)\log n + \sum_{i = 0}^L\frac 1{2^i}P(\min\{m,2^i\log n\},\min\{n,2^i\log n\}))$.

In every update, we delete at most one edge from each $\mathcal D_i$- and $\mathcal D_i'$-structure. This is $O(U(m,n)\log n)$ worst-case time. We can now conclude this subsection with the following theorem.
\begin{theorem}\label{Thm:WorstCasePreservingReduction}
Let a decremental MSF structure be given which for an $n$-vertex multigraph with initially $m$ edges has $P(m,n)$ preprocessing time and $U(m,n)$ worst-case update time where both $P$ and $U$ are non-decreasing. Then there is a fully-dynamic MSF structure with worst-case update time $O(\log^3n + U(m,n)\log n + \sum_{i = 0}^{\lceil\lg m\rceil}\frac 1{2^i}P(\min\{m,2^i\log n\},\min\{n,2^i\log n\}))$ for an $n$-vertex dynamic graph in which the number of edges never exceeds the value $m$.
\end{theorem}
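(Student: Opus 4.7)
The plan is to follow the layered structure of the Holm et al.\ reduction, which maintains pairs $(\mathcal D_i, \mathcal D_i')$ for $i = 0, \ldots, L = \lceil \lg m \rceil$ under the invariant that every non-tree edge of $G$ lies in some $\mathcal D_i$ or $\mathcal D_i'$, but to deamortize the construction of the merged structure $\mathcal D_{i+1}''$ that the original reduction builds in one shot at the end of each $2^{i+1}$-interval. The idea is a standard deamortization trick: introduce auxiliary snapshot pairs $(\mathcal B_i, \mathcal B_i')$ and spread the preprocessing of $\mathcal D_{i+1}''$ evenly across the updates of the $2^{i+1}$-interval, then replay deletions at double speed in the second half so that $\mathcal D_{i+1}''$ is fully up to date by the end of the interval.

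Concretely, at the start of a $2^{i+1}$-interval $I$, I move $(\mathcal D_i, \mathcal D_i')$ into the snapshots $(\mathcal B_i, \mathcal B_i')$ and reset $\mathcal D_i, \mathcal D_i'$ to empty. In each of the first $2^i$ updates of $I$ I perform a $1/2^i$ fraction of: (a) extracting the super-edge graph from a frozen copy of the top trees $F_i, F_i'$ (which represent $F$ as of the start of $I$), and (b) invoking the black-box preprocessing on $N_i \cup N_i'$ together with those super edges. In each of the next $2^i$ updates, I apply to $\mathcal D_{i+1}''$ and to $F_i, F_i'$ the two edge deletions that $G$ received in the corresponding two updates of $I$. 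At the end of $I$, $\mathcal D_{i+1}''$ is current and is installed into whichever of $\mathcal D_{i+1}, \mathcal D_{i+1}'$ is empty.

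Correctness reduces to verifying that the Holm et al.\ invariant still holds throughout $I$: while $\mathcal D_{i+1}''$ is being built, the non-tree edges that would have migrated out of level $i$ remain accessible through $\mathcal B_i, \mathcal B_i'$, and by the time those snapshots are discarded the merged structure has absorbed them. For the running time I would count: (i) each update pushes $O(L) = O(\log n)$ edges into either $\mathcal D_0$ or $\mathcal D_0'$, so $|N_i \cup N_i'| = O(2^i \log n)$ at the start of a level-$i$ interval; (ii) extracting super edges costs $O(2^i \log^2 n)$ via top trees and the black-box preprocessing costs $P(\min\{m, 2^i\log n\}, \min\{n, 2^i\log n\})$, so spreading over $2^i$ updates yields $O(\log^2 n + \tfrac{1}{2^i}P(\cdot))$ per update per level, summing over the $O(\log n)$ levels to $O(\log^3 n + \sum_i \tfrac{1}{2^i}P(\cdot))$; (iii) per update, deletion work touches at most one edge in each of the $O(\log n)$ decremental structures, costing $O(U(m,n)\log n)$ by monotonicity of $U$; (iv) maintaining the top tree of $F$ itself is $O(\log n)$ per update.

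The main obstacle I expect is nailing down the freeze-and-catch-up accounting: I must ensure that the super edges extracted from $F_i, F_i'$ during the first half of $I$ are exactly those that the one-shot construction would have built at the start of $I$, and that the double-speed replay in the second half matches the deletions $G$ actually underwent in the first half. Freezing $F_i, F_i'$ during the first half of $I$ and then driving them synchronously with $\mathcal D_{i+1}''$ in the second half suffices, but the off-by-one around the midpoint of $I$ and the simultaneous interleaving of several different active intervals at different levels is where the argument needs the most care; once that is settled, the time bound follows from monotonicity of $P, U$ and the geometric collapse of the per-level contributions.
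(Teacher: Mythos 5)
Your proposal is correct and takes essentially the same approach as the paper: the same snapshot pairs $(\mathcal B_i,\mathcal B_i')$, the same split of each $2^{i+1}$-interval into a first half that spreads the preprocessing of $\mathcal D_{i+1}''$ evenly and a second half that replays deletions at double speed, the same freezing of $F_i, F_i'$ during the first half, and the same per-level time accounting. The concerns you flag about interval synchronization are handled in the paper exactly as you anticipate (e.g., pairing the $k$th second-half update with the $2k$th and $(2k+1)$th deletions of the interval), so there is no gap.
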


\subsection{Reduction to Restricted Decremental MSF}
We now reduce the decremental MSF problem further, first to the case where we have an initial $n$-vertex (simple) graph of max degree at most $3$ and then further to the case where the total number of edge deletions is bounded by some parameter $\Delta$ that may be smaller than the initial number of edges.

\paragraph{Reduction to degree at most $3$:}
Let $G$ be an edge-weighted multigraph with $m$ edges and $n$ vertices. Construct a new simple graph $G'$ of degree at most $3$ from $G$ as follows. For each vertex $u$ of $G$, let $v_1,\ldots,v_k$ be its neighbors. We replace $u$ with $k$ copies, $u_1,\ldots,u_k$, replace edges incident to $u$ with edges $(u_i,v_i)$ for $i = 1,\ldots,k$ without changing their weights, and add edges $(u_i,u_{i+1})$ for $i = 1,\ldots,k-1$; the weight of each such edge $(u_i,u_{i+1})$ is chosen to be smaller than any edge weight in $G$. We identify each edge $(u_i,v_i)$ with its corresponding edge in $G$.

It is easy to see that an MSF of $F$ can be obtained from an MSF of $G'$ by contracting all edges that are not present in $G$. It follows that if we have a decremental MSF structure for an $n'$-vertex graph of degree at most $3$ with preprocessing time at most $P(n')$ and worst-case update time at most $U(n')$ where $P$ and $U$ are non-decreasing, then there is a decremental MSF structure of $G$ with preprocessing time $O(P(m + n))$ and worst-case update time $O(U(m + n))$.

\paragraph{Reduction to at most $\Delta$ deletions:}
We shall make a further reduction from decremental MSF in an $n$-vertex graph of degree at most $3$ to the same problem but where we have a bound $\Delta = \Delta(n)$ on the total number of edge deletions permitted. Assume for simplicity that $\Delta$ is divisible by $3$ and assume access to a decremental MSF structure $\mathcal D_{\Delta}$ which for an $n$-vertex graph of degree at most $3$ has preprocessing time at most $P(n)$ and supports up to $\Delta(n)$ edge deletions each in worst case time at most $U(n)$.

Now, let $G$ be an $n$-vertex graph of degree at most $3$. We obtain a decremental MSF structure for $G$ as follows. At the start of each $\Delta/3$-interval $I_1$, a new instance $\mathcal D_{\Delta}(I_1)$ of $\mathcal D_{\Delta}$ is initialized for the current graph $G$; the work for this is spread evenly over the updates of $I_1$. In the $\Delta/3$-interval $I_2$ following $I_1$, we delete edges at double speed from $\mathcal D_{\Delta}(I_1)$ (similar to what we did in the previous subsection when setting up an instance $\mathcal D_{i+1}''$) so that at the end of $I_2$, $\mathcal D_{\Delta}(I_1)$ is up to date with the current graph $G$. At this point, $2\Delta/3$ edges have been deleted from $\mathcal D_{\Delta}(I_1)$ so it still supports an additional $\Delta/3$ deletions. At the beginning of the $\Delta/3$-interval $I_3$ following $I_2$, $\mathcal D_{\Delta}(I_1)$ the active structure, responsible for maintaining the MSF of $G$ during the updates of $I_3$.

Hence, in any $\Delta/3$-interval, one instance of $\mathcal D_{\Delta}$ is being initialized, one instance has edges deleted from it at double speed, and one instance is the active one, maintaining the MSF of $G$.

\paragraph{Proving Theorem~\ref{Thm:Reduction}:}
We are now ready to prove Theorem~\ref{Thm:Reduction}.

Using the approach above gives a decremental structure for an $n$-vertex graph of degree at most $3$ with preprocessing time $O(n^{c_P})$ and worst-case update time $O(n^{c_U} + n^{c_P - c_{\Delta}})$. The vertex-splitting argument above gives a decremental structure with preprocessing time $O((m+n)^{c_P})$ and worst-case update time $O((m+n)^{c_U} + (m+n)^{c_P - c_{\Delta}})$ for an $n$-vertex multigraph with initially $m$ edges. By Theorem~\ref{Thm:WorstCasePreservingReduction}, we get a fully-dynamic MSF structure with worst-case update time $O(((m+n)^{c_U} + (m+n)^{c_P - c_{\Delta}})\log n + (m+n)^{c_P-1}) = O(((m+n)^{c_U} + (m+n)^{c_P - c_{\Delta}})\log n)$. Applying Theorem 3.3.2 of~\cite{Sparsification} now gives the first part of the theorem.

To show the second part, assume that the preprocessing and update time bound of the restricted decremental MSF structure hold with probability at least $1 - (n')^C$ for an $n'$-vertex graph where $C$ is a large constant $C$. Let $n$ be the number of vertices in the graph for the fully-dynamic problem that we reduce from. Note that $n'$ may be $\Theta(1)$ in our reduction, meaning that the preprocessing and update time bounds may fail with constant probability. To handle this, we modify the decremental structure so that if $n' < n^{c_U/2}$, the preprocessing step consists of computing the initial MSF in $O(n'\log n') = O(n^{c_P/2})$ time with Prim's algorithm, and each update is handled in $O(n') = O(n^{c_U/2})$ time by a simple deterministic linear-time update procedure. This ensures that for any value of $n'$, the decremental structure achieves a preprocessing bound of $O((n' + \sqrt n)^{c_P})$ and a worst-case update time bound of $O((n'+\sqrt n)^{c_U})$ with probability at least $1 - n^{c_UC}$. Going through the reduction steps as above now shows the second part of the theorem.

\section{Fully-Dynamic MSF With Few Non-tree Edges}\label{sec:FewNonTreeEdges}
In this section, prove Theorem~\ref{Thm:MSFFewNonTreeEdges}. We consider an edge-weighted graph $H = (V, E_H, w: E_H\rightarrow\mathbb R)$ and present a dynamic data structure with update time polynomially faster than $\sqrt n$ when this property holds. The dynamic problem considered here differs from the standard fully-dynamic MSF problem in that we allow extra operations giving more control to the user regarding the structure of the forest maintained; in fact, the forest is not required to span $H$.

Another way the problem differs from the standard fully-dynamic version is that we do not require that we start with $E_H = \emptyset$ and we allow a preprocessing phase for this initial graph, as stated in Theorem~\ref{Thm:MSFFewNonTreeEdges}.

In addition to $H$, we assume that the preprocessing algorithm is given an initial forest $F$ in $H$ as input. This forest will be maintained during edge updates. We keep a partition of $F$ into subtrees whose vertex sets are called \emph{regions}. Regions are similar to Frederickson's clusters except that we do not balance them by size but by the number of endpoints of non-tree edges they contain. Hence, in our application, the average region size will be polynomially greater than $\sqrt n$, assuming the number of non-tree edges is polynomially smaller than $n$. For a region $R$, we denote by $N(R)$ the number of vertices of $R$ that are incident to edges in $E_H - E(F)$. Each region is given a unique label in $\{1,\ldots,n\}$. Our data structure will not fix such a label for regions $R$ where $N(R) = 0$ and $R$ is a component of $H$; this is a technicality that will simplify our implementation.

We assign some auxiliary information to each edge $e$ of $H$. A bit $f(e)$ is $1$ iff $e\in F$. Another bit $h(e)$ is $1$ iff $e\in H - F$ or $e$ is an inter-region edge of $F$. If $h(e) = 1$, we assign to $e$ a label $\mathcal R(e)$ which is a pair of labels denoting the regions containing the endpoints of $e$; if $h(e) = 0$, we leave $\mathcal R(e)$ undefined. Our data structure will only store region labels in such pairs $\mathcal R(e)$.

\subsection{The region forest}
We use a top tree structure $\hat F$ called the \emph{region forest} to maintain and dynamically merge and split regions, similarly to what is needed for clusters in Frederickson's data structure. Forest $\hat F$ has the same edge set as $F$ and supports various operations that we focus on in the following.

In $\hat F$, certain vertices are marked. More specifically, a vertex is marked iff it is incident to an edge $e$ with $h(e) = 1$. In addition to the operations below, $\hat F$ supports the $O(\log n)$ time operations of finding a nearest marked vertex to a given vertex, marking and unmarking a vertex, and inserting and deleting an edge.

When implementing \texttt{MergeRegion} and \texttt{SplitRegion} below, it will prove useful to have $\hat F$ support the auxiliary operation \texttt{FindRegionTree}$(v)$ for a $v\in V(\hat F)$. Letting $R_v$ be the region containing $v$ and letting $M(R_v)$ be the set of marked vertices in $R_v$, this operation returns the tree $T_v$ consisting of the union of all simple paths between pairs of marked vertices; see Figure~\ref{fig:FewNonTreeEdges}.
\begin{figure}
\centerline{\scalebox{0.80}{\input{FewNonTreeEdges.pstex_t}}}
\caption{A region $R_v$ containing a vertex $v$ and having a set $M(R_v)$ of six marked vertices (black) which are incident to tree edges leaving $R_v$ or to non-tree edges incident to $R_v$ (dotted segments). The tree $T_v$ consists of both black and white vertices and contains nine edges, one for each fat path between two consecutive black or white vertices.}
\label{fig:FewNonTreeEdges}
\end{figure}

Letting $b = |M(R_v)|$, we can support \texttt{FindRegionTree}$(v)$ in $O(b\log n)$ as follows. First we find a nearest marked vertex $v'$ to $v$ in $\hat F$. Among the at most three edges of $H$ incident to $v'$, delete from $\hat F$ those that belong to $F$ and leave $R_v$ by checking their $f$ and $h$ bitmaps. Then unmark $v'$ and repeat the procedure recursively on $v'$ until no more marked vertices are encountered. The edges deleted are exactly the inter-region edges of $F$ leaving $R_v$ so this makes $\hat F[R_v]$ a component of $\hat F$. The set of unmarked vertices is exactly $M(R_v)$. Extending $\hat F$ with additional operations as described in~\cite{HLT01} (see the implementation subsection for fully-dynamic minimum spanning tree) allows us to obtain $T_v$ in $O(b\log n)$ time. Finally, we clean up by marking the unmarked vertices of $\hat F$ and inserting back the deleted edges. This clean-up step also takes $O(b\log n)$ time. Note that $|T_v| = O(b)$.

\subsection{Merging and splitting regions}
The following two types of operations in $\hat F$ allow for merging and splitting regions, respectively:
\begin{description}
\item [\texttt{MergeRegion}$(v)$:] merges the region $R_v$ containing $v$ with a region incident to $R_v$ in $F$, assuming such a region exists; the two regions are thus replaced by their union,
\item [\texttt{SplitRegion}$(v,t)$:] splits the region $R_v$ containing $v$ into subregions such that for each such subregion $R$, $\frac 1 3 t\leq N(R)\leq t$; it is assumed that $3\leq t\leq N(R_v)$.
\end{description}
Although these operations apply to $\hat F$, we shall require them to also correctly update the auxiliary information stored at edges of $H$.

\subsubsection{Implementing \texttt{MergeRegion}$(v)$}
We first apply \texttt{FindRegionTree}$(v)$ to $\hat F$, giving $T_v$. For each vertex of $T_v$, we check if any of its incident edges in $H$ is an inter-region edge of $F$ (again by checking the $f$ and $h$ bitmaps of these edges). If no such edge is found, $R_v$ is not incident in $F$ to any region. Otherwise, let $e = (v_1,v_2)$ be one such edge where $v_1\in R_v$. From label $\mathcal R(e)$, we obtain the label $\ell_v$ of $R_v$ and the label $\ell_v'$ of a region $R_v'$ which is incident to $R_v$ in $F$ with $v_2\in R_v'$. We will merge $R_v$ and $R_v'$ into a new region with label $\ell_v$. To do this, we first apply \texttt{FindRegionTree}$(v_2)$ to get $T_{v_2}$. We then visit all edges of $H$ incident to $V(T_{v_2})$. For each such edge $e'$, if $\mathcal R(e')$ is defined, we update each occurence of $\ell_v'$ in this set to $\ell_v$. Finally we set $h(e)$ to $0$ and delete $\mathcal R(e)$ as $e$ is no longer an inter-region edge of $F$.

The running time of \texttt{MergeRegion}$(v)$ is $O(b\log n)$ where $b$ is the total number of marked vertices in $R_v$ and $R_v'$, i.e., the number of marked vertices in the merged region.

\subsubsection{Implementing \texttt{SplitRegion}$(v,t)$}
We first apply \texttt{FindRegionTree}$(v)$ to $\hat F$ to get $T_v$. Let $b$ be the number of marked vertices in $T_v$. We then apply Frederickson's linear-time \texttt{FINDCLUSTERS} procedure~\cite{Frederickson} to partition $T_v$ into subtrees $T_1,\ldots,T_k$ whose vertex sets $R_i = V(T_i)$ form a partition of $V(T_v)$ where $\frac 1 3 t\leq |V(T_i)|\leq t$ for $i = 1,\ldots,k$. Let $\ell_1,\ldots,\ell_i\in\{1,\ldots,n\}$ be new unique labels for $R_1,\ldots,R_k$, i.e., distinct labels none of which are equal to labels of existing regions.

For $i = 1,\ldots,k$ and for each edge $e$ of $H - F$ incident to $R_i$, we update to $\ell_i$ the labels in $\mathcal R(e)$ for each endpoint of $e$ in $T_i$. For each edge $e$ of $F$ leaving $R_i$, we set $h(e)$ to $1$ since $e$ is now an inter-region edge of $F$; furthermore, we update to $\ell_i$ the label in $\mathcal R(e)$ for the endpoint of $e$ in $T_i$.

Excluding the time to find unique labels, the above takes $O(b\log n)$ time where $b$ is the number of marked vertices in $R_v$. To quickly find $\ell_1,\ldots,\ell_k$, our data structure maintains a dynamic list $\mathcal L$ consisting of those labels in $\{1,\ldots,n\}$ that are currently not used by any region. Whenever two regions are merged, the now unused label of one of the two regions is added to $\mathcal L$. When a new region is formed, we extract the first label from $\mathcal L$ and use it to label this region. Each operation on $\mathcal L$ takes constant time if we use a linked list. In particular, finding $\ell_1,\ldots,\ell_k$ can be done in $O(k) = O(b)$ time. Total time for \texttt{SplitRegion} is thus $O(b\log n)$.

\subsection{Edge insertions}
We now describe operations that maintain $\hat F$ and the auxiliary information stored at edges of $H$ under edge insertions and deletions in $H$. We start with insertions. Since we later need to have some control over which edges belong to $F$ (equivalently, to $\hat F$), we extend the insert operation with an extra argument specifying whether the new edge should be a tree edge or a non-tree edge; if it should be a tree edge, it is assumed that it connects two distinct trees in the current forest $F$. Note that this allows for $F$ to be a non-spanning forest of $H$; we later present an operation to find a minimum-weight connecting edge between two trees in $F$. This will be needed if we want to maintain $F$ as an MSF of $H$.

Consider the insertion of an edge $e = (u,v)$ into $H$. Assume first that it should be added as a non-tree edge. We look for a nearest marked vertex $u'$ to $u$ in $\hat F$. If $u'$ exists, we obtain from one of its incident edges the label $\ell_u$ of the region $R_u$ containing $u$. If $u'$ does not exist, we extract from $\mathcal L$ a new label $\ell_u$ for $R_u$. Similarly, we find the label $\ell_v$ of the region $R_v$ containing $v$. We then add $e$ to $H$, mark $u$ and $v$ in $\hat F$, set $f(e) = 0$, $h(e) = 1$, and $\mathcal R(e) = (\ell_u,\ell_v)$.

Now suppose that $e$ should be added as a tree edge. Prior to the insertion, $e$ must connect two distinct trees in $F$ so it will be an inter-region edge of $F$. Let $R_u$ and $R_v$ be the regions containing $u$ and $v$, respectively. With the same procedure as above, we find labels $\ell_u$ and $\ell_v$ for $R_u$ and $R_v$, respectively. We then insert $e$ into $H$ and $\hat F$ and set $f(e) = 1$, $h(e) = 1$, and $\mathcal R(e) = (\ell_u,\ell_v)$.

The running time for inserting $e$ is $O(\log n)$.

\subsection{Edge deletions}
Now consider the operation of deleting an edge $e = (u,v)$ from $H$. We do not require this operation to look for a reconnecting edge if $e$ belongs to $F$ since we later give an operation for this.

We start by removing $e$ from $H$. Assume first that $e$ was in $F$; we can check this in constant time by inspecting $f(e)$. Now, consider the subcase that $h(e) = 1$. Then $e$ must be an inter-region edge of $F$. In this case, we unmark $u$ resp.~$v$ in $\hat F$ unless that vertex is still incident to an edge $e'$ in $H$ with $h(e') = 1$ after the deletion of $e$. In case the region $R$ containing $u$ resp.~$v$ no longer has a marked vertex, we must have that $N(R) = 0$ and $R$ is a component of $H$; in this case, the label of that region is added to $\mathcal L$. Finally, we delete $e$ from $\hat F$.

Now assume that $e$ was in $F$ and $h(e) = 0$. Then $e$ must have both endpoints in some region $R_e$. The deletion of $e$ will split $R_e$ into two subregions. Prior to deleting $e$, we apply \texttt{FindRegionTree}$(u)$ to $\hat F$ to get $T_u$ which contains exactly the marked vertices of $R_e$. We obtain $T_u'$ resp.~$T_v'$ which is the tree in $T_u - \{e\}$ containing $u$ resp.~$v$. Using the same procedure as in the implementation of \texttt{SplitRegion} above, we split $R_e$ into two subregions containing $T_u'$ and $T_v'$, respectively, and give unique labels to each of them. Finally, we delete $e$ from $\hat F$.

The remaining case is when $e$ was a non-tree edge. This is handled in the same way as the case when $e$ was a tree edge and $h(e) = 1$.

Total time for handling an edge deletion is $O(b\log n)$ where $b$ is the number of marked vertices in $R_e$. We summarize the results above in the following lemma.
\begin{Lem}\label{Lem:MaintainHatF}
A call to \texttt{MergeRegion} or \texttt{SplitRegion} can be done in $O((r+\rho)\log n)$ time where $r$ is the maximum value $N(R)$ of any region $R$ and $\rho$ is the maximum number of regions; this includes the time for updating $\hat F$, $f$- and $h$-bitmaps, and pairs of labels $\mathcal R(e)$ for edges $e\in H - F$. Updating these for an edge insertion in $H$ takes $O(\log n)$ time and takes $O((r+\rho)\log n)$ time for an edge deletion in $H$.
\end{Lem}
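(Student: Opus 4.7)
The plan is to observe that the running-time analyses derived operation-by-operation in the preceding subsections are all already expressed as $O(b\log n)$, where $b$ is the number of marked vertices in the affected region(s) (with $O(\log n)$ for edge insertion as a special case). So the lemma reduces to a single combinatorial claim: for any region $R$, $|M(R)| = O(r+\rho)$. Once that is proved, the stated bounds fall out by substitution.

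The bound I would prove is $|M(R)| \le r + \rho - 1$ for every region $R$. Recall that a vertex $u\in R$ is marked iff it is incident to some edge $e$ with $h(e)=1$, i.e., either a non-tree edge of $H$ or an inter-region edge of $F$. The contribution from non-tree edges is bounded by $N(R)\le r$ by the very definition of $N$. For the inter-region tree edges, contracting each region to a single vertex turns $F$ into a forest on $\rho$ vertices, so $F$ contains at most $\rho-1$ inter-region edges in total; hence at most $\rho-1$ distinct endpoints of such edges can lie in $R$. Adding the two contributions gives $|M(R)|\le r+\rho-1$.

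From this I would read off the four parts of the lemma. For \texttt{MergeRegion}$(v)$, the preceding subsection shows cost $O(b\log n)$ where $b$ is the combined marked-vertex count of $R_v$ and its merge partner $R_v'$; each count is $O(r+\rho)$, giving $O((r+\rho)\log n)$ in total, including the $\mathcal R$-label rewrites for edges incident to $V(T_{v_2})$ and the $f$/$h$-bitmap update on $e$. For \texttt{SplitRegion}$(v,t)$, the cost is $O(b\log n)$ with $b=|M(R_v)|$, plus the linear-time Frederickson \texttt{FINDCLUSTERS} call on $T_v$ of size $O(b)$ and constant-time label draws from $\mathcal L$; again $O((r+\rho)\log n)$. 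For edge insertion, the work already analyzed is $O(\log n)$, consisting of one nearest-marked-vertex query per endpoint, a single \texttt{link} or \texttt{mark} operation, and constant-time edits to $f$, $h$, and $\mathcal R$. For edge deletion, the most expensive case is an intra-region tree edge $e$ with $h(e)=0$, which triggers a split of $R_e$ by the same \texttt{SplitRegion}-style procedure on $T_u$ at cost $O(|M(R_e)|\log n)=O((r+\rho)\log n)$; the remaining subcases ($h(e)=1$, or $e$ a non-tree edge) are a constant number of top-tree edits in $O(\log n)$ time.

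The only subtlety I would flag is verifying that the $O(\log n)$ insertion bound genuinely avoids any region traversal: one must check that the label of $R_u$ can be recovered in $O(\log n)$ by reading the $\mathcal R$-pair stored on any edge incident to a nearest marked vertex $u'$ of $u$ in $\hat F$, rather than by scanning $R_u$, and symmetrically for $v$; the case $M(R_u)=\emptyset$ is handled by drawing a fresh label from $\mathcal L$ in $O(1)$ time. Given the marked-vertex bound, the remainder of the argument is straightforward substitution into the per-operation analyses already established.
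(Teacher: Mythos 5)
Your proposal is correct and follows the same approach as the paper: the paper's own proof is a single sentence reducing to the observation that $|M(R)|=O(r+\rho)$, which you then make explicit via the decomposition into non-tree edge endpoints (at most $N(R)\le r$) and inter-region tree edge endpoints (at most $\rho-1$, from contracting regions to a forest). The rest is, as you say, substitution into the per-operation bounds already derived.
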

\begin{proof}
The lemma follows from the above and from the observation that the number of marked vertices in a region is $O(r+\rho)$.
\end{proof}

\subsection{Finding minimum-weight connecting edges}
We now extend our data structure to support the operation \texttt{connect}$(u,v)$ which, given two vertices $u$ and $v$ in distinct trees of $F$, finds a minimum-weight edge of $H$ (if any) connecting these trees.

To support \texttt{connect}$(u,v)$, we introduce a new top tree structure $\hat F'$. Like $\hat F$, it contains the same edge set as $F$. Each top tree cluster $C$ in $\hat F'$ has as auxiliary data two lists $\mathcal L_{\mathit{edge}}(C)$ and $\mathcal L_{\mathit{region}}(C)$. List $\mathcal L_{\mathit{edge}}(C)$ consists of the pairs $(e,\ell)$ where $e$ is a minimum-weight edge in $H - E(F)$ with at least one endpoint in $C$ and at least one endpoint in the region with label $\ell$. List $\mathcal L_{\mathit{region}}(C)$ consists of the labels $\ell$ of regions sharing vertices with $C$. We implement both lists as red-black trees where elements are kept in sorted order by their $\ell$-value.

If we can maintain $\hat F'$, we can support \texttt{connect}$(u,v)$ as we show in the following. Let $C_u$ and $C_v$ be the root clusters of $\hat F'$ corresponding to the trees $T_u$ and $T_v$ in $F$ containing $u$ and $v$, respectively; identifying these top tree clusters from $u$ and $v$ can be done in $O(\log n)$ time. Observe that if there is a connecting edge for $T_u$ and $T_v$, a cheapest such edge has one endpoint in $C_u$ and the other endpoint in a region of $T_v$. It can be chosen as an edge $e$ of minimum weight over all pairs $(e,\ell)\in\mathcal L_{\mathit{edge}}(C_u)$ where $\ell\in\mathcal L_{\mathit{region}}(C_v)$. Searching in parallel through the two lists in sorted order, $e$ is identified in $O(|\mathcal L_{\mathit{edge}}(C_u)| + |\mathcal L_{\mathit{region}}(C_v)|) = O(\rho)$ time. Total time for \texttt{connect}$(u,v)$ is thus $O(\rho + \log n)$.


\subsubsection{Maintaining $\hat F'$}
It remains to describe how $\hat F'$ is maintained. There are two types of updates to $\hat F'$, topological and non-topological changes. The topological changes happen when an edge is deleted from or inserted into $\hat F'$ which causes updates to top tree clusters. The non-topological changes happen when $H - F$ changes or labels in pairs $\mathcal R(e)$ are updated for edges $e$ in $H - F$.

\paragraph{Topological changes:}
Supporting a topological change in $\hat F'$ reduces to supporting a sequence of $O(\log n)$ top tree operations \texttt{create}$()$, \texttt{join}$(A, B)$, and \texttt{split}$(C)$. When a leaf top tree cluster $C$ for an edge $e$ is constructed with \texttt{create}$()$, we can obtain $\mathcal L(C)$ in constant time since the endpoints of $e$ are incident to only a constant number of edges of $H$. Supporting \texttt{split}$(C)$ takes $O(\rho)$ time since we simply remove $\mathcal L(C)$ which has length $O(\rho)$.

It remains to support \texttt{join}$(A, B)$. For the output top tree cluster $C$, we compute $\mathcal L_{\mathit{region}}(C)$ by traversing $\mathcal L_{\mathit{region}}(A)$ and $\mathcal L_{\mathit{region}}(B)$ in parallel in sorted order and merging these into a single list with duplicates removed. To compute $\mathcal L_{\mathit{edge}}(C)$, we similarly merge $\mathcal L_{\mathit{edge}}(A)$ and $\mathcal L_{\mathit{edge}}(B)$ into a single list but instead of removing duplicates, we do as follows: if we encounter two elements $(e_1,\ell)$ and $(e_2,\ell)$ with the same region label $\ell$, we only add one of the elements to $\mathcal L_{\mathit{edge}}(C)$, namely the one whose edge has the smaller weight. This correctly computes the auxiliary data for $C$ and takes $O(\rho)$ time.

It follows that each topological change in $\hat F'$ can be supported in $O(\rho\log n)$ time.

\paragraph{Non-topological changes:}
A non-topological change in $\hat F'$ occurs when a label changes in a pair $\mathcal R(e)$ for an edge $e\in H - F$ and when an edge is inserted into or deleted from $H - F$.

Consider first the case where a label in $\mathcal R(e)$ changes from $\ell$ to $\ell'$ for an endpoint $u$ of some edge $e\in H - F$. Since $\hat F'$ has constant degree, the number of top tree clusters containing $u$ on each level in the binary rooted tree representation of $\hat F'$ is $O(1)$ for a total of $O(\log n)$ over all levels, and we can find these top tree clusters in $O(\log n)$ time. We process them bottom-up. Let $C$ be the current top tree cluster. If $C$ is a leaf cluster, we can update its auxiliary data in $O(1)$ time since $H$ has $O(1)$ degree. Otherwise, let $A$ and $B$ be the child top tree clusters of $C$. To update $\mathcal L_{\mathit{region}}(C)$, we search for $\ell$ in $\mathcal L_{\mathit{region}}(A)$ and $\mathcal L_{\mathit{region}}(B)$. If $\ell$ is not found in either of the two lists, it is removed from $\mathcal L_{\mathit{region}}(C)$. We then add $\ell'$ to $\mathcal L_{\mathit{region}}(C)$ if it is not already in this list. To update $\mathcal L_{\mathit{edge}}(C)$, we search for an entry of the form $(e_A,\ell)$ in $\mathcal L_{\mathit{edge}}(A)$ and an entry of the form $(e_B,\ell)$ in $\mathcal L_{\mathit{edge}}(B)$; in case $e_A$ resp.~$e_B$ is undefined, regard it as a dummy edge of infinite weight. We remove the entry of the form $(e_C,\ell)$ in $\mathcal L_{\mathit{edge}}(C)$ (if any). If at least one of $e_A$ and $e_B$ is defined, we add a new pair $(e_C',\ell)$ to $\mathcal L_{\mathit{edge}}(C)$ where $e_C'$ is an edge of smaller weight among $e_A$ and $e_B$. Similar operations are done for label $\ell'$.

Using standard red-black tree operations to update the lists on each of the $O(\log n)$ levels, it follows that the update to $\hat F'$ caused by a single label change can be supported in $O(\log^2n)$ time. In a similar manner, the update to $\hat F'$ caused by the insertion or deletion of an edge in $H - F$ can be supported in $O(\log^2n)$ time.

From the above observations, we can now bound the time for a call \texttt{connect}$(u,v)$ in $\hat F'$ and for maintaining $\hat F'$.
\begin{Lem}\label{Lem:MaintainHatFPrime}
A call \texttt{connect}$(u,v)$ takes $O(\rho\log n)$ time where $\rho$ is the maximum number of regions in a tree of $F$. Updating $\hat F'$ after a call to \texttt{MergeRegion} or \texttt{SplitRegion} can be done in $O(r\log^2n)$ time where $r$ is the maximum value $N(R)$ of any region $R$. Updating $\hat F'$ takes $O(\log^2n)$ time after an edge insertion or deletion in $H - F$, takes $O(\rho\log n)$ time after an edge insertion in $F$, and takes $O(r\log^2n + \rho\log n)$ time after an edge deletion in $F$.
\end{Lem}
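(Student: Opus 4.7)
The plan is to assemble the claimed bounds directly from the per-operation costs established earlier in this subsection; no new mechanism is needed.

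For \texttt{connect}$(u,v)$, I would first locate the root top-tree clusters $C_u,C_v$ of $\hat F'$ containing $u$ and $v$ in $O(\log n)$ time, then run the parallel in-order scan of the sorted lists $\mathcal L_{\mathit{edge}}(C_u)$ and $\mathcal L_{\mathit{region}}(C_v)$. Since each list has length $O(\rho)$ and each red-black tree successor step costs at most $O(\log n)$ in the worst case, the scan finishes in $O(\rho\log n)$.

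For the single-edge update bounds I would invoke the atomic costs already derived. On the topological side, \texttt{create} on a leaf cluster is $O(1)$, \texttt{split} discards two length-$O(\rho)$ auxiliary lists in $O(\rho)$, and \texttt{join}$(A,B)$ merges two sorted lists of length $O(\rho)$ in $O(\rho)$. An edge insertion or deletion in $F$ triggers a top-tree \texttt{link} or \texttt{cut} and hence $O(\log n)$ such operations, giving $O(\rho\log n)$. On the non-topological side, inserting or deleting an edge of $H-F$, or changing a single label in some $\mathcal R(e)$, only touches the $O(\log n)$ ancestor clusters of the affected endpoints in $\hat F'$ (the tree has constant degree); at each ancestor the paired update of $\mathcal L_{\mathit{edge}}$ and $\mathcal L_{\mathit{region}}$ requires a constant number of red-black tree searches and updates at $O(\log n)$ each, for $O(\log^2 n)$ total.

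For \texttt{MergeRegion} and \texttt{SplitRegion} the crucial observation is that neither operation inserts or deletes an edge of $F$, so both induce only non-topological changes in $\hat F'$: the relabellings of pairs $\mathcal R(e)$ and the flips of $h$-bits on tree edges. As in the analysis of \texttt{FindRegionTree}, the number of such elementary changes is at most a constant times the number of marked vertices in the affected region(s), which is $O(r)$; at $O(\log^2 n)$ per change this sums to $O(r\log^2 n)$. A tree-edge deletion is then handled by combining a \texttt{cut} in $\hat F'$ with the ensuing region split, contributing $O(\rho\log n)$ and $O(r\log^2 n)$ respectively for a total of $O(r\log^2 n+\rho\log n)$. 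The main obstacle is keeping the two accountings cleanly separated: naively mixing them would produce an $r\rho$ product term, but once it is verified that \texttt{MergeRegion}/\texttt{SplitRegion} never invoke \texttt{link}/\texttt{cut} on $\hat F'$ and that every non-topological update is localised to the $O(\log n)$ ancestors of a single endpoint, the stated bounds follow.
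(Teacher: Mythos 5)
Your overall decomposition matches the paper's: you correctly observe that \texttt{MergeRegion}/\texttt{SplitRegion} never modify the edge set of $F$ and hence cause only non-topological changes in $\hat F'$, you charge $O(\log^2 n)$ per non-topological change and $O(\rho\log n)$ per topological \texttt{link}/\texttt{cut}, and your accounting for \texttt{connect}, for edge updates in $H-F$, and for edge insertions/deletions in $F$ is essentially identical to the paper's. The gap is in the count of non-topological changes caused by \texttt{MergeRegion}/\texttt{SplitRegion}. You bound this by ``a constant times the number of marked vertices in the affected region(s), which is $O(r)$,'' but a marked vertex is any vertex incident to an edge with $h(\cdot)=1$, which includes endpoints of inter-region $F$-edges; a single region can be incident to up to $\rho-1$ such edges, so the correct bound on marked vertices in a region is $O(r+\rho)$ --- this is exactly the bound the paper itself establishes and uses in the proof of Lemma~\ref{Lem:MaintainHatF}. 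Moreover, you explicitly list the flips of $h$-bits on tree edges among the elementary changes to be charged. Carried through consistently, your accounting yields $O((r+\rho)\log^2 n)$ for \texttt{MergeRegion}/\texttt{SplitRegion}, not the stated $O(r\log^2 n)$.

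The observation you are missing is that the auxiliary data of $\hat F'$ (the lists $\mathcal L_{\mathit{edge}}$ and $\mathcal L_{\mathit{region}}$) is built exclusively from edges of $H-F$ and their label pairs $\mathcal R(\cdot)$, so the non-topological update procedure reacts only to relabellings of $\mathcal R(e)$ for $e\in H-F$ and to insertions/deletions of edges in $H-F$. The $h$-bit flips on $F$-edges, and the relabellings of $\mathcal R(e')$ for inter-region $F$-edges $e'$, are invisible to $\hat F'$ and require no propagation there at all. Consequently the number of $\hat F'$-relevant relabellings per \texttt{MergeRegion} or \texttt{SplitRegion} is bounded not by the number of marked vertices but by the number of endpoints of non-tree edges in the affected region, which is $N(R)\leq r$ by definition. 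That is what drops the $\rho$ term and gives the claimed $O(r\log^2 n)$; without it your bound for this case (and hence also the last term in the edge-deletion case, where you invoke the same region split) is off by $\rho$.
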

\begin{proof}
The first part of the lemma was shown above.

Neither a call \texttt{MergeRegion} nor a call \texttt{SplitRegion} results in topological changes to $\hat F'$. The number of label changes for such a call is $O(r)$ and by the above, the corresponding non-topological changes in $\hat F'$ can be supported in $O(r\log^2n)$ time.

The insertion of deletion of an edge in $H - F$ causes a non-topological change in $\hat F'$ which can be supported in $O(\log^2n)$ time, as shown above. Inserting an edge in $F$ causes a topological change in $\hat F'$ which we can support in $O(\rho\log n)$ time, again by the above.

Deleting an edge $e$ from $F$ causes a topological change in $\hat F'$ which can be supported in $O(\rho\log n)$ time. If $h(e) = 0$, the deletion of $e$ splits the region $R_e$ containing $e$. This causes $O(r)$ labels to be updated and the corresponding non-topological changes to $\hat F'$ can be supported in $O(r\log^2n)$ time.
\end{proof}

\subsection{Keeping regions balanced}
We now focus on maintaining regions in such a way that we can simultaneously get good bounds for the above defined parameters $r$ and $\rho$. We leave these values unspecified for now since their optimal choices will be easier to derive later.

\subsubsection{Preprocessing}
The preprocessing step is as follows. We are given a forest $F$ in $H$ as part of the input such that the number of non-tree edges is at most $h$. In $O(n+h)$ time, we can obtain $f$- and $h$-bitmaps and label pairs $\mathcal R(e)$ and find a partition into regions such that $N(R)\leq r$ for each region $R$ and such that for any region $R$ which is not equal to a component of $H$, $N(R)\geq r/3$. We obtain such a set of regions by applying Frederickson's \texttt{FINDCLUSTERS} procedure~\cite{Frederickson} to each tree in $F$ with the slight modification that during the construction of regions, instead of keeping track of number of vertices, the modified procedure keeps track of the number of endpoints of non-tree edges that are incident to the region currently being built; here, a non-tree edge with both endpoints in a region contributes a value of $2$ to the number of these endpoints.

Each region is given an arbitrary unique label from $\{1,\ldots,n\}$ except regions $R$ where $N(R) = 0$ and $R$ is a component of $H$. The list $\mathcal L$ of unused labels is set up in $O(n)$ time. In time $O(n+h)$, we initialize another list $\mathcal L'$ consisting of tuples $(\ell_R, v_R, s_R)$ for each label $\ell_R$ of a region $R$ where $v_R\in R$ is a representative vertex of $R$, and $s_R = N(R)$. We keep these tuples sorted by both label value and by $s_R$-value and like $\mathcal L$, list $\mathcal L'$ will be maintained during updates. We implement $\mathcal L'$ using two red-black trees, one for each of the two sorted orders, so that each update to it takes $O(\log n)$ time; to simplify the presentation, we shall simply refer to $\mathcal L'$ as a list. It is not hard to extend the above operations without an increase in running time so that whenever a labeled region $R$ is updated, its representative vertex $v_R$ and $s_R$-value are updated accordingly.

Setting up $\hat F$ can be done in $O(n)$ time. Within the same time bound, we can set up $\hat F'$, excluding the time to form auxiliary data for each top tree cluster. The latter can be done in $O(h\log^2n)$ time since the total length of all lists $\mathcal L_{\mathit{region}}(C)$ and $\mathcal L_{\mathit{edge}}$ over all top tree clusters $C$ on a single level of the top tree is $O(h)$ and it takes $O(h\log n)$ time to sort them.

\subsubsection{Handling updates}
Now, consider edge insertions and deletions. For a suitably large constant $c$ (that we leave unspecified for now), our data structure will maintain the following invariants:
\begin{description}
\item [Upper bound invariant:] $N(R)\leq cr\log n$ for any region $R$,
\item [Lower bound invariant:] $N(R)\geq \frac 1 3 r$ for any region $R$ which is not equal to a component of $H$.
\end{description}
The preprocessing step ensures that both invariants hold initially. The lower bound invariant implies that at all times, the maximum number of regions in any tree of $F$ is $O(h/r)$.

By Lemmas~\ref{Lem:MaintainHatF} and~\ref{Lem:MaintainHatFPrime}, updating $\hat F$, $\hat F'$, $f$- and $h$-bitmaps, and label pairs $\mathcal R(\cdot)$ after an edge update in $H$ takes $O(r\log^2n + \rho\log n)$ time.

\paragraph{Maintaining the lower bound invariant:}
An edge update in $H$ may cause the invariants to be violated. To reestablish them, we first focus on the lower bound invariant. It can be violated because an edge of $H - F$ is deleted or if an intra-region edge of $F$ is deleted, causing a region to be split in two.

Consider first the case where an edge $e\in H - F$ is deleted. From $\mathcal R(e)$, we obtain labels of the at most two regions incident to $e$. For each such region $R$, we obtain from its label $\ell_R$ the tuple $(\ell_R,v_R,s_R)$ in $\mathcal L'$ and we check if $s_R = N(R)$ violates the invariant. If so, we apply \texttt{MergeRegion}$(v_R)$. If $R$ was not merged with any region, $R$ must be a component of $H$ and hence cannot violate the invariant. Otherwise, we obtain $s_{R'} = N(R')$ and the label $\ell_{R'}$ for the merged region $R'$ together with its representative vertex $v_{R'}$ with a call to \texttt{FindRegionTree}$(u_R)$, and we replace $(\ell_R,v_R, s_R)$ with $(\ell_{R'},v_{R'}, s_{R'})$ in $\mathcal L'$. Note that $R'$ cannot violate the lower bound invariant since the region that $R$ was merged with did not violate it.

Now, consider the case where an intra-region edge $e$ belonging to $F$ is deleted. For each of the two subregions, $R$, of the split region, we obtain its label $\ell_R$ by applying \texttt{FindRegionTree} to the endpoint of $e$ in $R$ and from it identify a marked vertex in $R$; $\ell_R$ is obtained from $\mathcal R(e')$ for one of the edges $e'$ incident to this vertex. In case $R$ contains no marked vertex, it has no incident non-tree edge and so $R$ must be a component of $H$ in which case it does not violate the invariant. Otherwise, we proceed as above by applying \texttt{MergeRegion}$(v_R)$ to maintain the invariant for $R$.

The time for the updates above is $O(r\log^2n+\rho\log n)$ by Lemmas~\ref{Lem:MaintainHatF} and~\ref{Lem:MaintainHatFPrime}.

\paragraph{Maintaining the upper bound invariant:}
Next we focus on maintaining the upper bound invariant. Note that initially, $N(R)\leq r$ for each region $R$. To ensure that $N(R)$ never exceeds $r$ by more than a logarithmic factor, we employ the following simple greedy procedure. After each update, find a tuple $(\ell_R, v_R, s_R)$ in $\mathcal L'$ with maximum $s_R$-value and apply \texttt{SplitRegion}$(v_R,r)$.

Total time per update for applying this greedy procedure is $O(r\log^2n+\rho\log n)$. To show that the greedy procedure maintains the upper bound invariant, we need Lemma~\ref{Lem:Pebble} below which follows fairly easily from a result in~\cite{Pebble}.

First, we need some definitions. For a vector $\vec v$ in $\mathbb R^{\mathbb N}$, let $\vec v[i]$ denote the $i$th coordinate of $\vec v$. We let $\delta(\vec v)$ denote the vector obtained from $\vec v$ by replacing $\vec v[i]$ with $\max\{\vec v[i] - r, 0\rangle$ for all $i\in\mathbb N$. For two vectors $\vec v$ and $\vec w$ in $\mathbb R^{\mathbb N}$, we say that $\vec v$ \emph{dominates} $\vec w$ if $\vec v[i]\geq \vec w[i]$ for all $i\in\mathbb N$.
\begin{Lem}\label{Lem:Pebble}
Consider a finite dynamic set of objects distributed into bins $b_1,b_2,\ldots$ such that initially each bin contains at most $r$ objects and such that each update is of one of the following types:
\begin{description}
\item [\texttt{addtobin}$(o_1,o_2,\ldots)$:] adds $o_i$ objects to $b_i$ for $i = 1,2,\ldots$ where $\sum_{i = 1}^{\infty}o_i \leq r$,
\item [\texttt{removefrombin}$(i,k)$:] deletes $\min\{k,n_i\}$ objects from $b_i$ where $n_i$ is the number of objects in $b_i$ just prior to this update,
\item [\texttt{splitbin}:] picks a bin $b_i$ with maximum number of objects; if $b_i$ contains more than $r$ objects, these objects are distributed into empty bins such that each such bin contains at most $r$ objects after the update.
\end{description}
Let $k\in\mathbb N$ be a constant and consider a sequence of $n$ updates such that every subsequence of $k$ consecutive updates includes at least one call to \texttt{splitbin}. Then at any time during this sequence, the maximum number of objects in any bin is $O(r\log n)$.
\end{Lem}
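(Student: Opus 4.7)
My plan is to reduce the statement to the pebble-redistribution bound of~\cite{Pebble}. The proof breaks into three steps: a reduction eliminating the \texttt{removefrombin} operations, a rewriting of the remaining game as the pebble game of~\cite{Pebble}, and invocation of its $O(r\log n)$ bound on the maximum pile size.

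First I would observe that no call to \texttt{removefrombin} can ever cause a violation of the upper bound invariant, because such a call only decreases some $n_i$. Hence it suffices to bound $\max_i n_i(t)$ in the worst case over adversarial sequences that use only \texttt{addtobin} and \texttt{splitbin}, with the constraint (inherited from the hypothesis) that any window of $k$ consecutive operations contains at least one \texttt{splitbin}. In particular, between any two consecutive splits at most $k-1$ adds occur, and each add deposits at most $r$ new objects in total.

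Next, I would recast the simplified game in the language of~\cite{Pebble}. Write $\vec n_t\in\mathbb R^{\mathbb N}$ for the bin-count vector at time $t$. Then \texttt{addtobin} increments $\vec n_t$ coordinatewise by a non-negative vector of $\ell_1$-norm at most $r$, while \texttt{splitbin} acts on the largest coordinate $n_i>r$ by zeroing it and redistributing its mass into previously empty coordinates, each receiving at most $r$ (the sum is preserved). Equivalently, $\delta(\vec n_t)$ loses its largest entry at every split. This is precisely (up to renaming of bins) the pebble-redistribution game studied in~\cite{Pebble}, with parameter $k$ controlling the interleaving of deposits and splits. The desired conclusion $\|\vec n_t\|_\infty = O(r\log n)$ is exactly that paper's main bound, with a hidden constant depending only on $k$.

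The main obstacle is verifying that our game matches the variant analyzed in~\cite{Pebble}, as minor points such as how ties for the maximum bin are broken by \texttt{splitbin} are not explicitly specified in our statement and could in principle affect the analysis. If a full proof is desired rather than a citation, I would redo the geometric level argument in place: define $L_j(t)$ as the collection of bins containing more than $2^j r$ objects at time $t$, and argue inductively on $j$ that a bin reaching level $j+1$ must have accumulated at least $2^j r$ new objects since its last emptying, hence must have witnessed at least $2^j/(k-1)$ splits during that period, each acting on a competitor at level $\geq j$; this recursion, combined with the total-mass bound $\leq rn$, forces the maximal attained level to be $O(\log n)$, yielding $\max_i n_i(t)=O(r\log n)$ as required.
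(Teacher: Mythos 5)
There are several genuine gaps here, and the central one is the dismissal of \texttt{removefrombin}. You argue that since a call to \texttt{removefrombin} only decreases bin counts, it cannot violate the upper bound, and that it therefore suffices to analyze sequences without removes. This is a non-sequitur: removing objects changes which bin is the maximum, hence which bin \texttt{splitbin} acts on, hence the entire downstream trajectory. In fact a remove can \emph{prevent} a useful split and thereby increase the eventual max. Concretely, take $r=10$, initial state $(10,10,10)$, and the sequence: add $10$ to $b_1$ giving $(20,10,10)$; remove $15$ from $b_1$ giving $(5,10,10)$; \texttt{splitbin} (the max bin now has only $10\le r$ objects, so this is a no-op); add $10$ to $b_1$ giving $(15,10,10)$. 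If instead the remove were a no-op, the \texttt{splitbin} would have zeroed $b_1$ and the final max would be $10$, not $15$. So replacing removes by no-ops does not majorize the original trajectory, and you have given no other reduction.

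The second gap is your claim that the game is ``precisely'' the pebble game of~\cite{Pebble}. It is not: \texttt{splitbin} \emph{zeroes} a bin and scatters its contents into fresh bins each holding at most $r$, whereas the operation analyzed in~\cite{Pebble} \emph{halves} the largest pile. Zeroing is intuitively ``better,'' but that intuition does not yield a pointwise comparison between the two trajectories (again because of the adaptive choice of which bin to split). The paper bridges both this gap and the \texttt{removefrombin} gap at once by introducing an auxiliary sequence of vectors $\vec v_i$ that (a) is left unchanged by \texttt{removefrombin}, (b) genuinely undergoes the \emph{halving} operation of~\cite{Pebble} on its own largest coordinate at each \texttt{splitbin}, and (c) is shown to dominate $\delta(\vec u_i)$ at every step via a coordinate-swap that reconciles the fact that the actual \texttt{splitbin} and the tracking halving may act on different indices. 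That explicit domination argument is the content of the proof; your proposal asserts the equivalence without establishing it.

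Finally, your fallback ``geometric level'' sketch does not close: with $L_j$ defined as bins exceeding $2^j r$ objects, a conclusion that the maximal level is $O(\log n)$ would only give $\max_i n_i = 2^{O(\log n)} r = n^{O(1)}r$, not $O(r\log n)$. To get $O(r\log n)$ with multiplicative levels you would need the top level to be $O(\log\log n)$, and the accumulation-counting argument you outline does not support that. So as written, the backup route is also broken.
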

\begin{proof}
We can view a distribution $o_1,o_2,\ldots$ of objects into bins $b_1,b_2,\ldots$ as the vector $\langle o_1,o_2,\ldots\rangle\in\mathbb R^{\mathbb N}$. Now, consider a sequence of $n$ updates such that every subsequence of $k$ consecutive updates includes at least one call to \texttt{splitbin}. Let $\vec u_0,\vec u_1,\ldots,\vec u_n$ be the sequence of vectors where $\vec u_0$ is the distribution vector for objects prior to the first update and $\vec u_i$ is the distribution vector just after the $i$th update, $i = 1,\ldots,n$.

We will define a different sequence of distribution vectors $\vec v_0,\vec v_1,\ldots,\vec v_n$ such that $\vec v_i$ dominates $\delta(\vec u_i)$ for $i = 0,\ldots,n$. The initial vector $\vec v_0$ is the zero vector $\vec 0$; this vector is equal to $\delta(\vec u_0)$ and hence dominates it.

Having defined vectors $\vec v_0,\ldots,\vec v_{i-1}$ dominating $\delta(\vec u_0),\ldots,\delta(\vec u_{i-1})$, respectively, we will define a vector $\vec v_i$ dominating $\delta(\vec u_i)$. If the $i$th update is \texttt{addtobin}$(o_1,o_2,\ldots)$, we let $\vec v_i = \vec v_{i-1} + \langle o_1,o_2,\ldots\rangle$, ensuring that $\vec v_i$ dominates $\delta(\vec u_i)$. If the $i$th update is \texttt{removefrombin}$(i,k)$, we let $\vec v_i = \vec v_{i-1}$ which clearly also ensures that $\vec v_i$ dominates $\delta(\vec u_i)$.

Finally, if the $i$th update is \texttt{splitbin}, let $j_{\max}$ be a coordinate of maximum value in $\vec v_{i-1}$ and let $j_{\min}$ be a coordinate of value $0$ in $\vec v_{i-1}$ (such coordinates must exist). Let $b_{j_{\max}'}$ be the bucket which is split by \texttt{splitbin}. We let $\vec w_i$ be the vector such that $\vec w_i[j] = \vec v_{i-1}[j]$ for $j\in\mathbb N - \{j_{\max},j_{\min}\}$ and $\vec w_i[j_{\min}] = \vec w_i[j_{\max}] = \lfloor\frac 1 2 \vec v_{i-1}[j_{\max}]\rfloor$. Note that for each $j\in\mathbb N$, $\delta(\vec u_i)[j]$ is either equal to $\delta(\vec u_{i-1})[j]$ or is equal to $0$; this follows since \texttt{splitbin} distributes objects from $b_{j_{\max}'}$ into empty buckets so that each of these buckets contains at most $r$ objects after the update. We define $\vec v_i$ as $\vec w_i$ with coordinates $j_{\max}$ and $j_{\max}'$ swapped and it follows that $\vec v_i$ dominates $\delta(\vec u_i)$.

For $i = 0,\ldots,n$ and for all $j\in\mathbb N$, we have shown that $\vec u_i[j]\leq\delta(\vec u_i)[j] + r\leq\vec v_i[j] + r$. The lemma will thus follow if we can show that each coordinate in $\vec v_i$ is $O(r\log n)$, for $i = 0,\ldots,n$.

Note that the total value added to all coordinates of $\vec v$-vectors is $O(r)$ between two consecutive \texttt{splitbin} updates or before the first \texttt{splitbin} or after the last \texttt{splitbin} update; this holds since each such subsequence consists of at most $k = O(1)$ updates. In each of these subsequences, we can only increase the values of coordinates of these vectors, never decrease them. Each \texttt{splitbin} update corresponds to the splitting operation from~\cite{Pebble} to the piles defined by the $\vec v$-vectors. It follows from that paper that for $i = 0,\ldots,n$, each coordinate of $\vec v_i$ is $O(r\log n)$, as desired.
\end{proof}

By viewing regions as bins and the number of objects in a bin corresponding to a region $R$ as $N(R)$, it follows from Lemma~\ref{Lem:Pebble} that our upper bound invariant holds for a sufficiently large constant $c$. The lemma actually implies an additional result, namely that an update consisting of a batched insertion of any number $k = O(r)$ of edges in $H - F$ can be supported while still maintaining both the upper and lower bound invariants. The lemmas above imply that the time for this operation plus the time to reestablish the invariants is $O(k\log^2n + r\log^2n + \rho\log n) = O(r\log^2n + \rho\log n)$.

By setting $r = \sqrt{h/\log n}$ and $\rho = h/r = \sqrt{h\log n}$, we get the following lemma which may be of independent interest.
\begin{Lem}\label{Lem:FewNonTreeEdges}
Let $H = (V,E_H)$ be a dynamic $n$-vertex graph undergoing insertions and deletions of weighted edges where the initial edge set $E_H$ need not be empty and where the number of non-tree edges never exceeds the value $h$. Then there is a data structure which after $O(n + h\log^2n)$ worst-case preprocessing time can maintain a forest $F$ in $H$ in $O(\sqrt h\log^{3/2}n)$ worst-case time per update where an update is either inserting an edge in $F$, deleting an edge in $H$, the operation \texttt{connect}, or a batched insertion of up to $\Theta(\sqrt{h/\log n})$ edges in $H - F$. An arbitrary initial forest $F$ may be specified as part of the input to the preprocessing algorithm.
\end{Lem}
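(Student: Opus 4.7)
The plan is to assemble the per-update pieces already established (Lemmas~\ref{Lem:MaintainHatF} and~\ref{Lem:MaintainHatFPrime} together with the upper/lower bound invariants and Lemma~\ref{Lem:Pebble}) and to balance the two parameters $r$ (region size) and $\rho$ (regions per tree) against the target bound $O(\sqrt h\log^{3/2}n)$. The natural choice is $r = \sqrt{h/\log n}$ and, via the lower bound invariant $N(R)\ge r/3$ for non-component regions, $\rho = O(h/r) = O(\sqrt{h\log n})$. With this choice both $r\log^2 n$ and $\rho\log n$ equal $\Theta(\sqrt h\log^{3/2}n)$, so the $O(r\log^2 n + \rho\log n)$ bound from Lemmas~\ref{Lem:MaintainHatF} and~\ref{Lem:MaintainHatFPrime} collapses to the claimed update time.

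For preprocessing I would first run Frederickson's \texttt{FINDCLUSTERS} (modified to count non-tree endpoints rather than vertices) on the input forest $F$ in $O(n+h)$ time to produce regions satisfying both invariants initially, set up $f$, $h$, and $\mathcal R(\cdot)$ in $O(n+h)$ time, build $\hat F$ in $O(n)$ time, and build $\hat F'$ with its auxiliary lists in $O(h\log^2 n)$ time (sorting the total $O(h)$ list content across each of $O(\log n)$ levels). This gives the preprocessing bound $O(n + h\log^2 n)$.

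To bound a single update, the edge insertion/deletion itself plus the maintenance of $\hat F$, $\hat F'$, $f$, $h$, and $\mathcal R$ cost $O(r\log^2 n + \rho\log n)$ by Lemmas~\ref{Lem:MaintainHatF} and~\ref{Lem:MaintainHatFPrime}. The lower bound invariant is restored by at most $O(1)$ calls to \texttt{MergeRegion} (one per region potentially dropping below $r/3$, located in $O(\log n)$ time via the tuple stored in $\mathcal L'$), each costing $O(r\log^2 n + \rho\log n)$. The upper bound invariant is handled by the greedy \texttt{SplitRegion} call on the maximum-$s_R$ tuple; correctness of this bound is exactly Lemma~\ref{Lem:Pebble}, applied with regions as bins and $N(R)$ as object count, since each update adds at most $O(1)$ new endpoints (so the \texttt{addtobin} hypothesis is trivially satisfied) and the greedy step plays the role of \texttt{splitbin} on every update, so $k=1$. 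A \texttt{connect}$(u,v)$ call takes $O(\rho\log n)=O(\sqrt h\log^{3/2}n)$ by Lemma~\ref{Lem:MaintainHatFPrime}.

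The main obstacle is the batched insertion of up to $k=\Theta(\sqrt{h/\log n})$ edges in $H - F$: here I must check that Lemma~\ref{Lem:Pebble} still applies after batching. Since $k = O(r)$, a single batch counts as one \texttt{addtobin} call whose total added mass is $O(k) = O(r)$, satisfying the hypothesis $\sum_i o_i \le r$ (adjusting the constant in $r$ if needed); following each batch by one \texttt{splitbin} (i.e.\ the greedy step) keeps every consecutive subsequence of $k_0 = 2$ updates containing a \texttt{splitbin}, so the maximum bin remains $O(r\log n)$ and the upper bound invariant is preserved. The $k$ individual edge insertions into $H - F$ each cost $O(\log^2 n)$ for the non-topological update to $\hat F'$ by Lemma~\ref{Lem:MaintainHatFPrime}, totalling $O(k\log^2 n) = O(\sqrt{h/\log n}\cdot\log^2 n) = O(\sqrt h\log^{3/2}n)$, plus the $O(r\log^2 n + \rho\log n)$ cost of the single rebalancing step. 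Plugging in the chosen $r$ and $\rho$ yields $O(\sqrt h\log^{3/2}n)$ worst-case time per update of any of the four types, establishing Lemma~\ref{Lem:FewNonTreeEdges}.
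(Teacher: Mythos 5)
Your proposal is correct and follows the paper's own argument almost step for step: the same parameter balance $r = \sqrt{h/\log n}$, $\rho = h/r$, the same preprocessing via the modified \texttt{FINDCLUSTERS} and direct construction of $\hat F$ and $\hat F'$, the same appeal to Lemmas~\ref{Lem:MaintainHatF}, \ref{Lem:MaintainHatFPrime}, and~\ref{Lem:Pebble}, and the same observation that a batch of $k = O(r)$ insertions into $H - F$ is just a single \texttt{addtobin} call with $O(k)$ total mass followed by one greedy \texttt{splitbin}. No meaningful deviation from the paper's proof.
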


We are now ready to prove Theorem~\ref{Thm:MSFFewNonTreeEdges}. With Prim's algorithm implemented with binary heaps, we can compute the initial MSF $F$ in $O((n + h)\log n)$ worst-case time. Setting up the data structure $\mathcal D$ of Lemma~\ref{Lem:FewNonTreeEdges} takes $O(n + h\log^2n)$ worst-case time. In $O(n)$ time, a top tree $\hat F$ for $F$ is set up to support queries of the form ``given vertices $u$ and $v$ in $\hat F$, is there a $uv$-path in $\hat F$ and if so, what is the heaviest edge on this path?''. With standard top tree operations, each edge insertion/deletion and each query in $\hat F$ can be executed in $O(\log n)$ time. In the following, when we refer to updates in $F$ and $H$, these updates are applied to $\mathcal D$.

Supporting the insertion of an edge $e = (u,v)$ in $H$ is done as follows. First, we query $\hat F$ with the pair $(u,v)$. If no $uv$-path exists, $e$ is inserted in $F$.

Now, suppose a $uv$-path does exist in $\hat F$ and let $e_{\max}$ be the heaviest edge on this path that the query to $\hat F$ returns. If $w(e) \ge w(e_{\max})$, $e$ is inserted as a non-tree edge in $H$. Otherwise, $e_{\max}$ is deleted from $F$ and reinserted as a non-tree edge in $H$ and $e$ is inserted in $F$.

Supporting the deletion of an edge $e$ from $H$ is done as follows. First, we apply a delete operation to delete $e$ from $H$. If $e$ was in $F$, \texttt{connect}$(u,v)$ is applied and if an edge is returned, it is inserted as a tree edge in $F$.

It follows from Lemma~\ref{Lem:FewNonTreeEdges} and the above description that $F$ can be maintained in $O(\sqrt h\log^{3/2}n)$ worst-case time per update where an update is the insertion of deletion of a single edge in $H$. It also follows that a batched insertion that does not change $F$ can be supported within this time bound as well since all the edges inserted must belong to $H - F$. This shows Theorem~\ref{Thm:MSFFewNonTreeEdges}.

\section{Partitioning a Graph Into Expander Subgraphs}\label{sec:PartitionExpanders}
In this section, we prove Theorem~\ref{Thm:RCPartition} from Section~\ref{subsec:PreprocRestrictedDecMSF} by giving an algorithm which, given any non-empty subset $W$ of $V$ of size $\Omega(n^{1-\eps})$ respecting $\mathcal C$, finds a partition $\mathcal X$ of $W$ satisfying the requirements in the theorem. To simplify notation, we shall only present the algorithm for the case where $W = V$. To generalize this to arbitrary subsets $W$ respecting $\mathcal C$, one issue is that the lower bound on the probability that the algorithm succeeds is of the form $1 - O(1/|W|^c)$ rather than $1 - O(1/n^c)$. However, since $W = \Omega(n^{1-\eps})$, we can get a probability of $1 - O(1/n^d)$ for an arbitrarily big constant $d$ by choosing $c = d/(1-\eps)$. We take care of the remaining issues at the end of this section.


For any $\theta > 0$, define $\theta_+ = \theta^3$ and its inverse $\theta_- = \theta^{1/3}$. Spielman and Teng~\cite{SpielmanTeng} presented a procedure called \texttt{Partition} with the properties stated in the following lemma.
\begin{Lem}[\cite{SpielmanTeng}]\label{Lem:Partition}
Let $H = (V_H, E_H)$ be a graph and let $\theta > 0$. Let $S\subseteq V_H$ satisfy $\mbox{Vol}_{V_H}(S)\leq\frac 2 3 \mbox{Vol}_{V_H}(V_H)$ and $\Phi_{V_H}(S)\leq\theta_+$. Let $\{D_j\}$ be the sets of cuts output by \texttt{Partition}$(H,\theta)$, and let $D = \cup_j D_j$. Then $\mbox{Vol}_{V_H}(D)\leq\frac{65}{72}\mbox{Vol}_{V_H}(V_H)$, and the following two properties hold
\begin{enumerate}
\item with probability at least $1 - 1/|E_H|^3$,
\[
  \mbox{either }\max\mbox{Vol}_{V_H}(D_j)\geq\frac{35}{144}\mbox{Vol}_{V_H}(V_H)\mbox{ or }
  \mbox{Vol}_{V_H - D}(S\cap (V_H - D))\leq\frac 1 2 \mbox{Vol}_{V_H}(S),
\]
\item with probability $1 - \tilde O(1/(\theta^5|E_H|^3))$, $\Phi_{V_H}(D) = \tilde O(\theta)$.
\end{enumerate}
\texttt{Partition} runs in time $\tilde O(|E_H|/\theta^5)$.
\end{Lem}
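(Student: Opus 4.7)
The statement (and its proof) is from Spielman and Teng~\cite{SpielmanTeng}, so my plan is to follow their analysis. I would describe \texttt{Partition} as an outer loop that repeatedly invokes \texttt{Random Nibble} on the current residual graph, accumulating the returned cuts into the sequence $\{D_j\}$, and stopping the moment the running union $D$ would otherwise push past $\tfrac{65}{72}\mbox{Vol}_{V_H}(V_H)$. Since \texttt{Nibble} is designed to return a set whose volume is only a small constant fraction of the current residual (in particular, far less than $\tfrac{7}{72}\mbox{Vol}_{V_H}(V_H)$), this stopping rule immediately gives the volume bound $\mbox{Vol}_{V_H}(D)\le\tfrac{65}{72}\mbox{Vol}_{V_H}(V_H)$ claimed in the lemma.

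The two core properties of \texttt{Nibble} that I would import from~\cite{SpielmanTeng} are: (a) every cut $D_j$ that \texttt{Nibble} outputs has conductance $\tilde O(\theta)$ in the residual graph in which it was found, with failure probability polynomially small in $|E_H|$ and $\theta$; and (b) if $S\subseteq V_H$ has $\Phi_{V_H}(S)\le\theta_+$, then a constant fraction of the volume of $S$ consists of ``good starts'' from which \texttt{Nibble} is guaranteed to return a cut entirely contained in $S$. Property (b) is the real engine; its proof is the Lov\'{a}sz--Simonovits curve argument, where slow escape of a lazy random walk out of $S$ forces the sorted probability-mass curve above a prescribed line, whence a sweep cut of low conductance on the $S$-side must exist.

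To prove Property~1, I would let $S^{(t)}$ denote the part of $S$ still present in the residual graph before iteration $t$ and condition on (i) no $D_j$ so far having volume $\ge\tfrac{35}{144}\mbox{Vol}_{V_H}(V_H)$ and (ii) $\mbox{Vol}_{V_H}(S^{(t)}) > \tfrac12 \mbox{Vol}_{V_H}(S)$. Under (i) the residual retains a constant fraction of the original volume, and combined with $\mbox{Vol}_{V_H}(S)\le\tfrac23\mbox{Vol}_{V_H}(V_H)$ and (ii) one checks that $S^{(t)}$ remains a valid target for \texttt{Nibble} in the residual graph (its conductance there is still controlled by $\theta_+$ up to constants, using the trivial bound on edges cut by previous iterations). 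By (b), a uniformly-degree-weighted random start hits a good vertex of $S^{(t)}$ with probability $\Omega(\mbox{Vol}_{V_H}(S)/\mbox{Vol}_{V_H}(V_H))$, and then $D_j\subseteq S^{(t)}$ strips off a constant fraction of $\mbox{Vol}_{V_H}(S^{(t)})$. A Chernoff-style concentration over $\tilde O(1/\theta^{O(1)})$ iterations shows that (ii) fails before the volume cap is reached, except with probability $1/|E_H|^3$; in other words, either some $D_j$ is already large or $S$'s residual volume has halved.

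For Property~2, every edge of $\partial_{V_H}(D)$ is an edge of $\partial(D_j)$ in the residual graph where some $D_j$ was extracted, so summing (a),
\[
  \delta_{V_H}(D)\;\le\;\sum_j \delta(D_j)\;\le\;\tilde O(\theta)\sum_j \mbox{Vol}(D_j)\;\le\;\tilde O(\theta)\,\mbox{Vol}_{V_H}(D),
\]
and since $\mbox{Vol}_{V_H}(D)\le\tfrac{65}{72}\mbox{Vol}_{V_H}(V_H)$ the quantity $\min\{\mbox{Vol}_{V_H}(D),\mbox{Vol}_{V_H}(V_H-D)\}$ is $\Theta(\mbox{Vol}_{V_H}(D))$, yielding $\Phi_{V_H}(D)=\tilde O(\theta)$. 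A union bound over the $\tilde O(1/\theta^{O(1)})$ \texttt{Nibble} calls, each failing with probability $\tilde O(1/(\theta^5|E_H|^3))$, gives the stated aggregate probability. The runtime bound is immediate from \texttt{Nibble}'s per-call cost of $\tilde O(1/\theta^5)$ and the total iteration count. The genuine obstacle is reproducing the truncated Lov\'{a}sz--Simonovits analysis underlying (b); this is the main technical content of~\cite{SpielmanTeng} and I would lean on it as a black box rather than reprove it.
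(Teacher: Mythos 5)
The paper does not prove this lemma: it is quoted from Spielman and Teng as a black box, and the only original content in the paper is the remark immediately following the statement that replacing their definition $\theta_+=\theta^3/(14^4\ln^2(|E_H|e^4))$ by the simpler $\theta_+=\theta^3$ costs only polylogarithmic factors in the second guarantee and the running time. Your sketch is a faithful high-level account of the Spielman--Teng argument, and deferring to their Lov\'asz--Simonovits machinery as a black box is exactly the right attitude; your Property~2 derivation (summing residual boundaries and then bounding the conductance using $\mbox{Vol}_{V_H}(D)\le\frac{65}{72}\mbox{Vol}_{V_H}(V_H)$) is correct as stated. Two local inaccuracies are worth flagging. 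First, \texttt{Nibble} may return a set of volume as large as a $5/6$-fraction of the graph on which it is invoked (line~9 of the pseudocode in Figure~\ref{fig:Nibble}), so your claim that each nibble is ``far less than $\frac{7}{72}\mbox{Vol}_{V_H}(V_H)$'' is false and the $65/72$ volume bound requires a more careful bookkeeping argument than a simple stopping rule. Second, the good-start guarantee (Lemma~\ref{Lem:Nibble}) is that the output $C$ has large volume intersection with $S$, i.e.\ $\mbox{Vol}_{V_H}(C\cap S)\ge\frac 4 7 2^{b-1}$, not that $C\subseteq S$. Finally, your sketch should record the delta that constitutes the paper's actual contribution here, namely the simplification of $\theta_+$ and the observation that this only degrades the bounds by $\log$-factors.
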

This lemma is not quite identical to~\cite{SpielmanTeng} since they define $\theta_+$ as $\theta^3/(14^4\ln^2(|E_H|e^4))$ whereas we use the simpler definition $\theta_+ = \theta^3$. It is easy to see that with this simplification, we only lose $\log n$-factors in the second condition and in the running time of Lemma~\ref{Lem:Partition}.

With the definitions in Theorem~\ref{Thm:RCPartition}, we shall need a variant of \texttt{Partition} which when applied to $H[W]$ ensures that the output cut respects $\mathcal C$. This variant, which we refer to as \texttt{CPartition}, is presented in Lemma~\ref{Lem:CPartition} below. \texttt{CPartition} applies \texttt{Partition} to a graph $H'[W]$ obtained from $H[W]$ by replacing $H[C]$ with a sparse $1$-expander graph for $V(C)$ for each $C\in\mathcal C$ belonging to $W$. Then \texttt{CPartition} modifies the cut $(D',W-D')$ output by \texttt{Partition} to a new cut $(D,W-D)$ respecting $\mathcal C$ such that if $(D',W-D')$ had sufficiently low conductance in $H'[W]$ then $(D,W-D)$ has low conductance in $H[W]$.

Before presenting Lemma~\ref{Lem:CPartition}, we need Lemma~\ref{Lem:SimpleExpander} which shows how to efficiently find sparse $1$-expander graphs, and we need Lemma~\ref{Lem:LowConductanceMultigraph} which implies that we can modify a cut as sketched above.

We say that a graph is \emph{nowhere dense} if there is a constant $c$ such that every subgraph $S$ has at most $c|V(S)|$ edges.
\begin{Lem}\label{Lem:SimpleExpander}
Let $W$ be a vertex set of size $s$ and let $c > 0$ be any given constant. There is an $O(s)$ worst-case time algorithm constructing a simple graph $H$ for $W$ such that with probability $1 - O(1/s^c)$, $H$ is a nowhere dense $1$-expander graph for $W$ of max degree $O(\log s)$.

\end{Lem}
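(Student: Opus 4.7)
\textbf{Proof plan for Lemma~\ref{Lem:SimpleExpander}.} The plan is to construct $H$ as the edge-union of $d$ independent uniformly random perfect matchings of $W$, where $d = d(c) = \Theta(\log s)$ is chosen large enough for the union bound below to give failure probability $O(1/s^c)$. If $|W|$ is odd, the construction first appends a dummy vertex to make $|W|$ even, builds all matchings, and then discards the dummy together with its incident edges. Each random perfect matching is produced in $O(s)$ time via a Fisher--Yates shuffle of $W$ followed by pairing consecutive positions; summed over the $d$ matchings and allowing the stated logarithmic degree to absorb a $\log s$ factor, this matches the $O(s)$ construction-time bound in the lemma.

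By construction each vertex lies in at most one edge per matching, so $H$ has max degree at most $d = O(\log s)$. This yields both the max-degree conclusion and the nowhere-dense conclusion with constant $d/2$, since any subgraph on $t$ vertices has at most $dt/2$ edges.

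For the $1$-expansion property, the plan is to fix a cut $(S, W \setminus S)$ with $|S| = k \le s/2$ and bound $\Pr[\delta_H(S) < k]$. Let $X_i$ denote the number of edges of matching $M_i$ crossing this cut; the $X_i$ are independent because the matchings are independent. A direct calculation gives $\mathbb{E}[X_i] = k(s-k)/(s-1) \ge k/2$, and within a single matching concentration will be obtained via Azuma's inequality applied to the process that reveals matched pairs one at a time (each step changes $X_i$ by $O(1)$), giving $\Pr[X_i < k/4] \le e^{-\Omega(k)}$. A Chernoff bound across the $d$ independent matchings then yields $\Pr[\sum_i X_i < dk/8] \le e^{-\Omega(dk)}$. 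Finally, $\delta_H(S)$ equals $\sum_i X_i$ minus the number of duplicate edges crossing the cut, and the expected total number of duplicate edges in $H$ is $O(d^2)$ (with concentration), so for $d = C \log s$ with $C = C(c)$ a sufficiently large constant one obtains $\delta_H(S) \ge k$ with probability $1 - e^{-\Omega(dk)}$. A union bound over the $\binom{s}{k} \le (es/k)^k$ subsets of size $k$, summed over $k = 1, \ldots, s/2$, gives total failure probability at most $\sum_{k} (es/k)^k e^{-\Omega(dk)}$, which is $O(1/s^c)$ once $C$ is large enough in terms of $c$.

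The main obstacle is making the union bound work for the smallest values of $k$: for $k = O(1)$ there are $\Theta(s^k)$ subsets of size $k$, so the per-set failure probability $e^{-\Omega(dk)}$ must beat $s^{-k}$, and this is exactly what forces $d$ to grow like $\log s$ rather than a constant, matching the stated max-degree bound. A secondary technical point is the concentration of $X_i$ inside a single matching, where the edge indicators are negatively correlated rather than independent; the Azuma argument on the matched-pair revelation process replaces a naive Chernoff bound and cleanly handles this dependence.
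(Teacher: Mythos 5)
Your construction does not achieve the lemma's time bound or its nowhere-dense conclusion, and both failures trace to the same source: you take $d = \Theta(\log s)$ where a constant $d$ is needed. Producing $d = \Theta(\log s)$ random matchings costs $\Theta(s\log s)$ time (your remark that the ``logarithmic degree can absorb a $\log s$ factor'' does not change the time bound, which is stated as $O(s)$, not $O(s\log s)$). Similarly, with $\Theta(\log s)$ matchings every subgraph on $t$ vertices can have up to $\Theta(t\log s)$ edges, so the density constant is $\Theta(\log s)$; the paper's definition of nowhere dense requires an \emph{absolute} constant, which your $H$ does not satisfy.

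The deeper issue is your claim that the union bound ``forces $d$ to grow like $\log s$''. That is an artifact of the Azuma bound $\Pr[X_i < k/4]\le e^{-\Omega(k)}$, which is far from tight here. For a cut with smaller side of size $k$, the probability of having few crossing edges actually decays like $(k/s)^{\Omega(k)}=e^{-\Omega(k\ln(s/k))}$, because each escape attempt independently stays inside $S$ with probability only $\approx k/s$; this extra $\ln(s/k)$ factor is exactly what lets a constant $d$ beat the $\binom{s}{k}\le(es/k)^k$ union bound. The paper exploits this by a different construction: each vertex independently chooses $d=O(1)$ random distinct neighbors, giving a direct binomial-tail estimate with the $\ln(s/k)$ factor built in. Construction time is then $O(ds)=O(s)$ and the graph is nowhere dense with constant $d$. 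The $O(\log s)$ max-degree bound is \emph{not} enforced deterministically; rather, it is a secondary high-probability consequence (a Chernoff bound on the in-degree a vertex accumulates from others' choices). If you want to salvage the matching approach, you would need to replace Azuma with a hypergeometric/permanental tail bound giving $e^{-\Omega(k\ln(s/k))}$ per matching, which lets you take $d=O(1)$; but as written the proposal has a genuine gap.
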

\begin{proof}
Let $d > 0$ be a constant integer, to be specified later. In the proof, we shall implicitly assume that $s$ is larger than $d$ by a sufficiently big constant factor. We consider the algorithm that in a first phase constructs a simple graph $H$ with vertex set $W$ by adding, for each $v\in W$, $d$ edges all with one endpoint in $v$ and where the $i$th endpoint is chosen independently and uniformly at random among the remaining $s - i$ endpoints in $W - v$, for $i = 1,\ldots,d$. Note that $H$ need not be simple since it may contain two edges between any given pair of vertices $(u,v)$. In a second phase, the algorithm replaces each such pair by a single edge $(u,v)$. It is easy to implement the algorithm to have worst-case running time $O(ds) = O(s)$. Furthermore, any subgraph $H'$ of $H$ has $O(d|V(H')|) = O(|V(H')|)$ edges; this follows e.g.~by observing that the edges of $H'$ can be directed so that each vertex has at most $d$ outgoing edges. Hence, $H$ is nowhere dense. We will show that for sufficiently large $d$, $H$ satisfies also the remaining conditions of the lemma.

We first show that w.h.p., $H$ is a $1$-expander graph. We will use that for any integers $a\ge b\ge 0$, $\binom a b\le (ae/b)^b$ where we abuse notation and define $(a'/0)^{0} = 1$ for $a'\ge 0$. Furthermore, we exploit the fact that for any $t > 0$, the real function $x\mapsto (t/x)^x$ with domain $(0,\infty)$ achieves its maximum at $x = t/e$ with value $e^{t/e}$.

Let $k\in\{1,\ldots,\lfloor s/2\rfloor\}$ be given. We show that w.h.p., for every cut where the smaller side contains exactly $k$ vertices, the number of edges crossing this cut is greater than $k$. We shall only count the subset of edges chosen by the first phase of the algorithm when it processes the vertices on the side of the cut of size $k$. This number will be a lower bound on the final number of edges crossing the cut, obtained after the second phase. Hence, in the analysis, we can ignore this second phase. By a union bound, the probability that at least one cut with one side having size $k$ has fewer than $k$ edges crossing it is at most
\begin{align*}
\sum_{i = 0}^k\left(1 - \frac {k+d} s\right)^i\left(\frac k s\right)^{dk-i}\binom{dk}{i}\binom s k
  & \le \sum_{i = 0}^k\left(\frac k s\right)^{dk - i}\left(\frac{dke}i\right)^i\left(\frac{se}k\right)^k\\
  & \le \sum_{i = 0}^k\left(\frac k s\right)^{(d-1)k - i}\left(\frac{k}i\right)^i(de^2)^k\\
  & \le \sum_{i = 0}^k e^{(d-2)k\ln(k/s)}e^{k/e}(de^2)^k\\
  &   = (k+1)\left(e^{-(d-2)\ln(s/k) + 1/e + 2 + \ln d}\right)^k.
\end{align*}
We now consider two cases, $k\le\sqrt s$ and $k > \sqrt s$. If $k\le\sqrt s$, we get an upper bound on the probability of
\[
  (k+1)\left(e^{-\frac 1 2 (d-2)\ln(s) + 1/e + 2 + \ln d}\right)^1\le se^{-\frac 1 2 (d-2)\ln s + 1/e + 2 + \ln d}\le e^{-\frac 1 2 (d-4)\ln s + 1/e + 2 + \ln d}.
\]
We can choose $d$ sufficiently large so that this is at most $s^{-d/3 - 1}$.

Now, assume that $k > \sqrt s$. Using the fact that $k\le s/2$, we get an upper bound on the probability of
\[
  (k+1)\left(e^{-(d-2)\ln 2 + 1/e + 2 + \ln d}\right)^{\sqrt s}
  \le \left(e^{-(d-2)\ln 2 + 1/e + 3 + \ln d}\right)^{\sqrt s}.
\]
For sufficiently large constant $d$, this is also at most $s^{-d/3 - 1}$.

Taking a union bound over all choices for $k$ and picking constant $d$ sufficiently big, it follows that $H$ is a $1$-expander graph with probability at least $1 - 1/s^{d/3}$.

Finally, we show that w.h.p., $H$ has max degree $O(\log s)$. Fix a vertex $v\in W$ and let $v_1,\ldots,v_{s-1}$ be an arbitrary ordering of the remaining vertices. Introduce indicator variables $X_1,\ldots,X_{s-1}$ where for $i = 1,\ldots,s-1$, $X_i = 1$ iff $v$ is chosen as a random neighbor of $v_i$ when the first phase of the algorithm processes $v_i$. In total, $d$ edges are added from $v_i$. For $j = 1,\ldots,d$, consider the $j$th edge added from $v_i$. The probability that its other endpoint is $v$ is $0$ if $v$ was already chosen as an endpoint of one of the previous $j-1$ edges or the probability is at most $1/(s-j)\le 1/(s-d)$ since there are $s-j$ endpoints available for the $j$th edge. Since we may assume that $d \le (s+1)/2$, a union bound gives $\Pr(X_i = 1) \le\sum_{j = 1}^d 1/(s-d)\le 2d/(s-1)$ for $i = 1,\ldots,s-1$.

We observe that variables $X_1,\ldots,X_{s-1}$ are independent Poisson trials and that $X = \sum_{i = 1}^{s-1}X_i$ has expectation $\mu = E[X]\le 2d$. Let $\delta$ be the value such that $2d\ln s = (1+\delta)\mu$. Note that $1+\delta\ge\ln s$. We may assume that $e/\ln s \le 1/e$ and a Chernoff bound gives
\[
  \Pr(X > 2d\ln s) = \Pr(X > (1+\delta)\mu) < \left(\frac e{1+\delta}\right)^{(1+\delta)\mu}
                 \le \left(\frac e{\ln s}\right)^{2d\ln s}
                 \le s^{-2d}. 
\]
Observe that $v$ has degree $d + X$ after the first phase and hence degree at most $d + X$ after the second phase. It follows that with probability at least $1 - s^{-2d}$, $v$ has degree at most $d + 2d\ln s = O(\log s)$ after the second phase. A union bound shows that with probability at least $1 - s^{1-2d}$, $H$ has degree $O(\log s)$.

By a union bound and by picking $d\ge 3c$, it follows that with probability $1 - O(1/s^c)$, $H$ is a simple $1$-expander graph of max degree $O(\log s)$.
\end{proof}

\begin{Lem}\label{Lem:LowConductanceMultigraph}
Let $H$ and $\mathcal C$ be as in Theorem~\ref{Thm:RCPartition}. Let $H'$ be the graph obtained from $H$ by replacing, for each $C\in\mathcal C$, the edges of $H[C]$ with a simple nowhere dense $1$-expander graph of $V(C)$ with $O(|V(C)|)$ edges. Then for any subset $W$ of $V$ respecting $\mathcal C$ and for any cut $(S,W - S)$ of $W$,
\begin{enumerate}
\item $\Phi_{H[W]}(S) = O(\Phi_{H'[W]}(S))$ and if $(S,W-S)$ respects $\mathcal C$ then $\Phi_{H[W]}(S) = \Theta(\Phi_{H'[W]}(S))$,
\item $\Phi_{H'[W]}(S) = O(\Phi_{H[W]}(S)n^{\eps})$,
\item there is an $O(|W|)$ time algorithm which, assuming $\Phi_{H'[W]}(S)$ is less than a sufficiently small constant, obtains from $(S,W-S)$ a cut $(S',W-S')$ that respects $\mathcal C$ such that $\mbox{Vol}_{H'[W]}(S') = \Theta(\mbox{Vol}_{H'[W]}(S))$, $\mbox{Vol}_{H'[W]}(W-S') = \Theta(\mbox{Vol}_{H'[W]}(W-S))$, and $\Phi_{H'[W]}(S') = O(\Phi_{H'[W]}(S))$.
\end{enumerate}
\end{Lem}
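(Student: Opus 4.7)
The plan is to decompose every cut and volume cluster-by-cluster. Writing $S_C = S\cap C$, one has $\delta_{H[W]}(S) = \sum_C \delta_{H[C]}(S_C) + e_0$ and $\delta_{H'[W]}(S) = \sum_C \delta_{H'[C]}(S_C) + e_0$ with the same inter-cluster count $e_0$, and the volumes split analogously. The three structural facts I will use are: (a)~$H[C]$ is connected with max degree $3$, so $1\le\delta_{H[C]}(S_C)\le 3\min(|S_C|,|C\setminus S_C|)$ on any nontrivial cluster cut; (b)~$H'[C]$ is a $1$-expander, giving $\delta_{H'[C]}(S_C)\ge\min(|S_C|,|C\setminus S_C|)$; (c)~$H'[C]$ is nowhere dense, i.e., $|E(H'[X])|\le c|X|$ for every $X\subseteq V(C)$, and $|E(H'[C])|=O(|C|)=O(n^{\eps})$.

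For the cut comparisons, (a) and (b) together give $\delta_{H[W]}(S)\le 3\delta_{H'[W]}(S)$, while (a) and (c) give $\delta_{H'[C]}(S_C)\le|E(H'[C])|\le O(n^{\eps})\delta_{H[C]}(S_C)$ on each nontrivial cluster, summing to $\delta_{H'[W]}(S)=O(n^{\eps})\delta_{H[W]}(S)$. For volumes, $\mbox{Vol}_{H'[C]}(S_C)=2|E(H'[S_C])|+\delta_{H'[C]}(S_C)$; using (c) and the degree-$3$ bound on inter-cluster degrees of $S$, this sums to $\mbox{Vol}_{H'[W]}(S)\le O(|S|)+\delta_{H'[W]}(S)$, and the analogous bound holds for $W-S$. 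Connectedness of both $H[C]$ and $H'[C]$ implies $\mbox{Vol}_{H[W]}(S),\mbox{Vol}_{H'[W]}(S)\ge|S|$, while $\mbox{Vol}_{H[W]}(S)\le 3|S|$, so $\min(\mbox{Vol}_{H'[W]}(S),\mbox{Vol}_{H'[W]}(W-S))\ge\min(\mbox{Vol}_{H[W]}(S),\mbox{Vol}_{H[W]}(W-S))/3$, which together with the $O(n^{\eps})$ cut ratio yields Part~2. For Part~1, if $\Phi_{H'[W]}(S)\ge 1/2$ the claim is trivial (both conductances are $\le 1$); otherwise $\delta_{H'[W]}(S)\le\frac{1}{2}\min(\mbox{Vol}_{H'[W]}(S),\mbox{Vol}_{H'[W]}(W-S))$, which I substitute into the volume upper bound to absorb the $\delta$ term and conclude $\min(\mbox{Vol}_{H'[W]}(S),\mbox{Vol}_{H'[W]}(W-S))=O(\min(\mbox{Vol}_{H[W]}(S),\mbox{Vol}_{H[W]}(W-S)))$, whence $\Phi_{H[W]}(S)=O(\Phi_{H'[W]}(S))$. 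When $(S,W-S)$ respects $\mathcal{C}$, every $\delta_{H[C]}(S_C)$ and $\delta_{H'[C]}(S_C)$ vanishes so $\delta_{H[W]}(S)=\delta_{H'[W]}(S)=e_0$, and for each $C$ lying entirely on one side, $\mbox{Vol}_{H[C]}(C)=\Theta(|C|)=\Theta(\mbox{Vol}_{H'[C]}(C))$ by connectedness and nowhere-denseness, giving $\mbox{Vol}_{H[W]}(S)=\Theta(\mbox{Vol}_{H'[W]}(S))$ and the matching $\Theta$ bound.

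For Part~3, the algorithm scans in $O(|W|)$ time the clusters $C$ cut by $(S,W-S)$ and moves the smaller of $S_C$ and $C\setminus S_C$ to the opposite side, producing a cut $(S',W-S')$ respecting $\mathcal C$. Let $M_{\text{out}}$ be the union of pieces moved out of $S$ and $M_{\text{in}}$ the union of pieces moved into $S$. For each moved piece $X_C$, $1$-expansion of $H'[C]$ gives $|X_C|\le\delta_{H'[C]}(S_C)$, nowhere-denseness gives $|E(H'[X_C])|\le c|X_C|$, and the degree-$3$ bound contributes at most $3|X_C|$ inter-cluster edges incident to $X_C$, so $\mbox{Vol}_{H'[W]}(X_C)=O(\delta_{H'[C]}(S_C))$. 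Summing yields $\mbox{Vol}_{H'[W]}(M_{\text{out}})+\mbox{Vol}_{H'[W]}(M_{\text{in}})=O(\delta_{H'[W]}(S))$, and the number of inter-cluster crossings that change status is at most $3(|M_{\text{out}}|+|M_{\text{in}}|)=O(\delta_{H'[W]}(S))$. Writing $\delta_{H'[W]}(S)=\Phi_{H'[W]}(S)\cdot\min(\mbox{Vol}_{H'[W]}(S),\mbox{Vol}_{H'[W]}(W-S))$ and using the small-constant hypothesis on $\Phi_{H'[W]}(S)$, these perturbations are at most a constant fraction of either volume, so $\mbox{Vol}_{H'[W]}(S')=\Theta(\mbox{Vol}_{H'[W]}(S))$ and $\mbox{Vol}_{H'[W]}(W-S')=\Theta(\mbox{Vol}_{H'[W]}(W-S))$; since $(S',W-S')$ respects $\mathcal C$, $\delta_{H'[W]}(S')$ consists only of inter-cluster edges and is $O(\delta_{H'[W]}(S))$, yielding $\Phi_{H'[W]}(S')=O(\Phi_{H'[W]}(S))$.

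The main obstacle I anticipate is preventing the constants in Part~1 (and the volume preservation in Part~3) from degrading by a $\log n$ factor due to the $O(\log n)$ max degree of $H'[C]$. The resolution is to bound volumes only via the global nowhere-dense inequality $\mbox{Vol}_{H'[W]}(S)\le O(|S|)+\delta_{H'[W]}(S)$ rather than per-vertex degrees, and to absorb the residual $\delta$ term via a self-referential inequality that is valid under the small-$\Phi_{H'[W]}(S)$ assumption.
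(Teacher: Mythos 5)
Your proposal is correct and follows essentially the same route as the paper's proof: Parts 1 and 2 use the same cluster-by-cluster comparison of cut edges (via $1$-expansion and connectivity) together with the nowhere-dense bound $\mbox{Vol}_{H'[W]}(S) = \Theta(|S|) + \delta_{H'[W]}(S)$, and Part 3 is the paper's $\mathcal C_1/\mathcal C_2$ rebalancing (move the smaller cluster piece across the cut) with the identical perturbation bound $O(\delta_{H'[W]}(S))$ absorbed under the small-conductance hypothesis. The only cosmetic difference is that you handle the degenerate case in Part 1 by branching on $\Phi_{H'[W]}(S)\ge 1/2$ whereas the paper branches on $\delta_{H'[W]}(S)>\min\{|S|,|W-S|\}$; these are equivalent up to constants.
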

\begin{proof}
We split the proof into three parts, corresponding to the three cases in the lemma.
\paragraph{Part $1$:}
First note that since every set of $\mathcal C$ contained in $W$ has size greater than $1$ and induces a connected subgraph of both $H[W]$ and $H'[W]$, every vertex has degree at least $1$ in both $H[W]$ and $H'[W]$. Since $H[W]$ has constant degree, we have $\mbox{Vol}_{H[W]}(S) = \Theta(|S|)$ and $\mbox{Vol}_{H[W]}(W-S) = \Theta(|W-S|)$, and since $H'[W]$ is nowhere dense, we have $\mbox{Vol}_{H'[W]}(S) = \delta_{H'[W]}(S) + \Theta(|S|)$ and $\mbox{Vol}_{H'[W]}(W - S) = \delta_{H'[W]}(S) + \Theta(|W - S|)$.

Now, let $C\in\mathcal C$ be a subset intersecting both sides of $(S,W-S)$. The number of edges of $H[C]$ crossing $(S,W-S)$ is $O(\min\{|C\cap S|,|C\cap(W-S)|\})$ since $H$ has constant degree. The number of edges of $H'[C]$ crossing $(S,W-S)$ is at least $\min\{|C\cap S|,|C\cap(W-S)|\}$ since $H'[C]$ is a $1$-expander graph. Hence, $\delta_{H[W]}(S) = O(\delta_{H'[W]}(S))$.

To show that $\Phi_{H[W]}(S) = O(\Phi_{H'[W]}(S))$, we may assume that $\delta_{H'[W]}(S)\le\min\{|S|,|W - S|\}$ since otherwise, $\min\{\mbox{Vol}_{H'[W]}(S),\mbox{Vol}_{H'[W]}(W-S)\} = \delta_{H'[W]}(S) + \Theta(\min\{|S|,|W-S|\}) = \Theta(\delta_{H'[W]}(S))$, implying that $\Phi_{H'[W]}(S) = \Theta(1)$ and we trivially have $\Phi_{H[W]}(S)\le 1$. We get
\[
  \Phi_{H[W]}(S) = O\left(\frac{\delta_{H'[W]}(S)}{\min\{|S|,|W-S|\}}\right)
               = \Theta\left(\frac{\delta_{H'[W]}(S)}{\delta_{H'[W]}(S) + \min\{|S|,|W-S|\}}\right)
               = \Theta(\Phi_{H'[W]}(S)),
\]
as desired. Now assume that $(S,W-S)$ respects $\mathcal C$. Then $\delta_{H'[W]}(S) = \delta_{H[W]}(S)$ and note that $\delta_{H[W]}(S) = O(\min\{|S|,|W-S|\})$. Again we may assume that $\delta_{H'[W]}(S)\le\min\{|S|,|W - S|\}$ since otherwise, $\min\{\mbox{Vol}_{H'[W]}(S),\mbox{Vol}_{H'[W]}(W-S)\} = \delta_{H'[W]}(S) + \Theta(\min\{|S|,|W-S|\}) = \Theta(\delta_{H'[W]}(S))$ and $\min\{\mbox{Vol}_{H[W]}(S),\mbox{Vol}_{H[W]}(W-S)\} = \Theta(\min\{|S|,|W-S|\}) = \Theta(\delta_{H[W]}(S))$ so both $\Phi_{H'[W]}(S)$ and $\Phi_{H[W]}(S)$ are $\Theta(1)$. We get
\[
  \Phi_{H[W]}(S) = \Theta\left(\frac{\delta_{H'[W]}(S)}{\min\{|S|,|W-S|\}}\right)
               = \Theta\left(\frac{\delta_{H'[W]}(S)}{\delta_{H'[W]}(S) + \min\{|S|,|W-S|\}}\right)
               = \Theta(\Phi_{H'[W]}(S)).
\]

\paragraph{Part $2$:}
Let $C\in\mathcal C$ intersect both sides of $(S,W-S)$. The number of edges of $H[C]$ crossing $(S,W-S)$ is at least $1$ since $H[C]$ is connected. The number of edges of $H'[C]$ crossing $(S,W-S)$ is $O(n^{\eps})$ since $H'[C]$ is sparse. Hence, we have
\[
  \Phi_{H'[W]}(S) = \Theta\left(\frac{\delta_{H'[W]}(S)}{\delta_{H'[W]}(S) + \min\{|S|,|W-S|\}}\right) = O\left(\frac{n^{\eps}\delta_{H[W]}(S)}{\min\{|S|,|W-S|\}}\right) = O(n^{\eps}\Phi_{H[W]}(S)),
\]
as desired.

\paragraph{Part $3$:}
Let $\mathcal C_1\subseteq\mathcal C$ consist of the subsets $C$ intersecting both sides of $(S,W - S)$ and $|C - S|\leq |C\cap S|$ and let $\mathcal C_2\subseteq\mathcal C$ consist of the remaining sets intersecting both sides of $(S,W-S)$. Let $S' = (S\cup\cup_{C\in\mathcal C_1} C) - (\cup_{C\in\mathcal C_2}C)$. Clearly, $(S',W-S')$ respects $\mathcal C$ and can be formed in $O(|W|)$ time.

For each $C\in\mathcal C_1$, the number of edges of $H'[W]$ crossing $(C\cap S,C - S)$ is at least $|C - S|$ since $H'[C]$ is a $1$-expander graph. Since $H[W]$ has constant degree and since the edges of $H'[W]$ crossing $(C - S,(W - S) - C)$ all belong to $H[W]$, the number of such edges is $O(|C - S|)$. Similarly, for each $C\in\mathcal C_2$, the number of edges of $H'[W]$ crossing $(C\cap S,C - S)$ is at least $|C\cap S|$ while the number of edges crossing $(C\cap S,S - C)$ is $O(|C\cap S|)$. It follows that the number of edges of $H'[W]$ crossing $(S',W - S')$ is $O(\Phi_{H'[W]}(S)\min\{\mbox{Vol}_{H'[W]}(S),\mbox{Vol}_{H'[W]}(W-S)\})$.

Next, we consider the volumes of $S'$ and $W - S'$ in $H'[W]$. Note that the number of edges of $H'[W]$ crossing $(S,W - S)$ is at least $\sum_{C\in\mathcal C_2}|C\cap S|$ and hence $\sum_{C\in\mathcal C_2}|C\cap S| \le \Phi_{H'[W]}(S)\mbox{Vol}_{H'[W]}(S)$. Similarly, the number of edges of $H'$ crossing $(S,W - S)$ is at least $\sum_{C\in\mathcal C_1}|C - S|$ so $\sum_{C\in\mathcal C_1}|C - S|\le \Phi_{H'[W]}(S)\mbox{Vol}_{H'[W]}(W-S)$.

Let $C\in\mathcal C_2$. Since $H'[W]$ is nowhere dense, we have $|H'[C\cap S]| = O(|C\cap S|)$ and since $H$ has constant degree, we have $\delta_{H'[W]}(C\cap S) = O(|C\cap S|)$. Combining this gives $\mbox{Vol}_{H'[S]}(C\cap S) = O(|C\cap S|)$.

We can now bound the volume of $S'$ in $H'[W]$ from below as follows:
\begin{align*}
\mbox{Vol}_{H'[W]}(S') & = \mbox{Vol}_{H'[W]}(S) + \sum_{C\in\mathcal C_1}\mbox{Vol}_{H'[W]}(C - S) - \sum_{C\in\mathcal C_2}\mbox{Vol}_{H'[W]}(C\cap S)\\
               & \ge \mbox{Vol}_{H'[W]}(S) - \delta_{H'[W]}(S) - \sum_{C\in\mathcal C_2}\mbox{Vol}_{H'[S]}(C\cap S)\\
               & = \mbox{Vol}_{H'[W]}(S) - O(\Phi_{H'[W]}(S)\mbox{Vol}_{H'[W]}(S) + \sum_{C\in\mathcal C_2}|C\cap S|)\\
               & = (1 - O(\Phi_{H'[W]}(S)))\mbox{Vol}_{H'[W]}(S),
\end{align*}
and similarly, we get $\mbox{Vol}_{H'[W]}(W - S') = (1 - O(\Phi_{H'[W]}(S)))\mbox{Vol}_{H'[W]}(W - S)$. Hence, assuming $\Phi_{H'[W]}(S)$ is below a sufficiently small constant, we have $\mbox{Vol}_{H'[W]}(S') = \Theta(\mbox{Vol}_{H'[W]}(S))$ and $\mbox{Vol}_{H'[W]}(W-S') = \Theta(\mbox{Vol}_{H'[W]}(W-S))$ and hence $\Phi_{H'[W]}(S') = O(\Phi_{H'[W]}(S))$. This shows the third part of the lemma.
\end{proof}

We are now ready to present our algorithm \texttt{CPartition}.
\begin{Lem}\label{Lem:CPartition}
Let $\theta\ge n^{-\eps/2}$, let $c > 0$ be a constant, and let $H'$, $\mathcal C$, and $W$ be as in Theorem~\ref{Thm:RCPartition} and Lemma~\ref{Lem:LowConductanceMultigraph}. Let $(S,W-S)$ be a cut such that $\mbox{Vol}_{H'[W]}(S)\le\mbox{Vol}_{H'[W]}(W-S)$ and $\Phi_{H'[W]}(S)\leq\theta_+$. If $\theta$ is less than $\log^{-c'}n$ for a sufficiently large constant $c'$, then there is a constant $d$ with $0 < d < 1$ and an algorithm \texttt{CPartition}$(H'[W],\theta,c)$ which outputs a cut $(D,W - D)$ respecting $\mathcal C$ such that with probability at least $1 - 1/n^c$, we have $\mbox{Vol}_{H'[W]}(D)\leq(1-d)\mbox{Vol}_{H'[W]}(W)$ as well as the following two conditions:
\begin{enumerate}
\item $\mbox{Vol}_{H'[W]}(D) = \Omega(\mbox{Vol}_{H'[W]}(S))$,
\item $\Phi_{H'[W]}(D) = \tilde O(\theta)$.
\end{enumerate}
\texttt{CPartition} runs in worst-case time $\tilde O(|W|/\theta^5)$.
\end{Lem}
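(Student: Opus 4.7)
The plan is to invoke Spielman--Teng's \texttt{Partition} from Lemma~\ref{Lem:Partition} on $H'[W]$ with parameter $\theta$, let $D^{\circ}=\cup_{j}D_{j}$ be the union of the cuts it outputs, and then round $D^{\circ}$ to a $\mathcal{C}$-respecting cut $D$ using Part~3 of Lemma~\ref{Lem:LowConductanceMultigraph}. Condition on both probabilistic guarantees of Lemma~\ref{Lem:Partition} holding, which gives $\mbox{Vol}_{H'[W]}(D^{\circ})\le\tfrac{65}{72}\mbox{Vol}_{H'[W]}(W)$ and $\Phi_{H'[W]}(D^{\circ})=\tilde{O}(\theta)$. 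The running time will be $\tilde{O}(|W|/\theta^{5})$ for \texttt{Partition} plus $O(|W|)$ for the rounding, matching the claimed bound.

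The central analytic step is to show $\mbox{Vol}_{H'[W]}(D^{\circ})=\Omega(\mbox{Vol}_{H'[W]}(S))$. The hypothesis $\mbox{Vol}(S)\le\mbox{Vol}(W-S)$ gives $\mbox{Vol}(S)\le\tfrac{1}{2}\mbox{Vol}(W)\le\tfrac{2}{3}\mbox{Vol}(W)$, so the prerequisite of Lemma~\ref{Lem:Partition} is met. The lemma then gives two alternatives. In the first, some $D_{j}$ has volume at least $\tfrac{35}{144}\mbox{Vol}(W)\ge\tfrac{35}{72}\mbox{Vol}(S)$ and we are immediately done. In the second, $\mbox{Vol}_{W-D^{\circ}}(S\cap(W-D^{\circ}))\le\tfrac{1}{2}\mbox{Vol}(S)$. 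Decomposing $\mbox{Vol}_{H'[W]}(S\cap(W-D^{\circ}))=\mbox{Vol}_{W-D^{\circ}}(S\cap(W-D^{\circ}))+e$, where $e$ is the number of edges of $H'[W]$ between $S\cap(W-D^{\circ})$ and $D^{\circ}$, and bounding $e\le\delta_{H'[W]}(D^{\circ})\le\mbox{Vol}(D^{\circ})$, I will conclude $\mbox{Vol}(D^{\circ})\ge\mbox{Vol}(S\cap D^{\circ})\ge\tfrac{1}{2}\mbox{Vol}(S)-\mbox{Vol}(D^{\circ})$, hence $\mbox{Vol}(D^{\circ})\ge\tfrac{1}{4}\mbox{Vol}(S)$.

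Next, since $\theta<\log^{-c'}n$ for a sufficiently large constant $c'$, the bound $\Phi_{H'[W]}(D^{\circ})=\tilde{O}(\theta)$ lies below the small-constant threshold required by Part~3 of Lemma~\ref{Lem:LowConductanceMultigraph}. Applying that part to $(D^{\circ},W-D^{\circ})$ in $O(|W|)$ time produces a $\mathcal{C}$-respecting cut $(D,W-D)$ with $\mbox{Vol}(D)=\Theta(\mbox{Vol}(D^{\circ}))$, $\mbox{Vol}(W-D)=\Theta(\mbox{Vol}(W-D^{\circ}))$, and $\Phi_{H'[W]}(D)=O(\Phi_{H'[W]}(D^{\circ}))=\tilde{O}(\theta)$. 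The first estimate together with the lower bound from the previous paragraph gives $\mbox{Vol}(D)=\Omega(\mbox{Vol}(S))$, and the second estimate combined with $\mbox{Vol}(W-D^{\circ})\ge\tfrac{7}{72}\mbox{Vol}(W)$ yields $\mbox{Vol}(D)\le(1-d)\mbox{Vol}(W)$ for the constant $d>0$ hidden in the $\Theta(\cdot)$.

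The main obstacle I anticipate is controlling the failure probability at the target $1-1/n^{c}$. Lemma~\ref{Lem:Partition}'s native bound is $\tilde{O}(1/(\theta^{5}|E_{H'[W]}|^{3}))$; since $|W|=\Omega(n^{1-\eps})$ forces $|E_{H'[W]}|=\Omega(n^{1-\eps})$ and $\theta\ge n^{-\eps/2}$, one can verify this is polynomially small in $n$, and a constant number of independent repetitions of \texttt{Partition} (keeping any run whose output satisfies the required volume and conductance guarantees) drives the failure probability below $1/n^{c}$ for any desired constant $c$ without affecting the $\tilde{O}(|W|/\theta^{5})$ time bound.
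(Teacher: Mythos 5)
Your high-level plan matches the paper's: run Spielman--Teng's \texttt{Partition} on $H'[W]$, get a low-conductance cut $D^{\circ}$, round it to a $\mathcal C$-respecting cut via Part~3 of Lemma~\ref{Lem:LowConductanceMultigraph}; and your derivation of $\mbox{Vol}_{H'[W]}(D^{\circ})\ge\frac14\mbox{Vol}_{H'[W]}(S)$ from the disjunction in Lemma~\ref{Lem:Partition} is correct. But there are two real gaps.

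The first is the probability amplification. You rely on $|W|=\Omega(n^{1-\eps})$, yet this lemma is applied recursively inside \texttt{RCPartition} to sets $W$ that can be as small as $\Theta(n^{\eps})$ (the only guarantee is that $W$ is a nonempty union of $\mathcal C$-sets, each of size $\Theta(n^\eps)$). For such $W$, the single-run failure probability $\tilde O(1/(\theta^5|E(H'[W])|^3))=\tilde O(n^{-\eps/2})$ is bounded away from $1$ but not polynomially small in $n$, so constantly many repetitions cannot reach $1-1/n^c$. The paper instead runs $\Theta(\log n)$ independent iterations of \texttt{Partition}, which still fits the $\tilde O(|W|/\theta^5)$ budget. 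Relatedly, your selection rule --- ``keeping any run whose output satisfies the required volume and conductance guarantees'' --- is not implementable, because $\mbox{Vol}(D)=\Omega(\mbox{Vol}(S))$ cannot be checked without knowing $S$; the paper retains across iterations the rounded cut of maximum $\min$-side volume among those passing the observable conductance test, which automatically has volume $\Omega(\mbox{Vol}(S))$ once some iteration succeeds.

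The second gap is that the paper first modifies \texttt{Partition} so that when $H'[W]$ is disconnected and every component has volume at most $\frac12\mbox{Vol}_{H'[W]}(W)$, a balanced cut is produced deterministically by greedily assigning components to sides. This ensures randomization is used only when $H'[W]$ has a dominant component and avoids relying on \texttt{Partition}'s random-walk guarantees on scattered inputs; your proposal does not address this case.
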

\begin{proof}
Before describing \texttt{CPartition}, we first modify \texttt{Partition}$(H,\theta)$ from Lemma~\ref{Lem:Partition} slightly so that if every component of $H = (V_H,E_H)$ has volume at most $\frac 1 2\mbox{Vol}_{V_H}(V_H)$ then a cut $D$ is output such that $\frac 1 3\mbox{Vol}_{V_H}(V_H)\le\mbox{Vol}_{V_H}(D)\le\frac 2 3\mbox{Vol}_{V_H}(V_H)$. Such a cut is obtained with a simple $O(|H|)$ time greedy algorithm that starts with $(D,V_H-D) = (\emptyset,V_H)$, then considers the components in order of decreasing volume, and adds the current component to the side of $(D,V_H - D)$ with smaller volume. This cut satisfies the requirements of Lemma~\ref{Lem:Partition} and allows us to only use randomization when $|E_H| = \Theta(|V_H|)$.

Now, we are ready to describe algorithm \texttt{CPartition}$(H'[W],\theta,c)$. It consists of an outer loop consisting of $C\lceil\log n\rceil$ iterations for some constant $C > 0$ to be specified later. Initially, $D = \emptyset$. In each iteration, the (modified) algorithm \texttt{Partition}$(H'[W],\theta)$ is called; let $D'$ be the union of sets output by this call. If the bound on $\Phi_{H'[W]}(D')$ in the second property of Lemma~\ref{Lem:Partition} does not hold then the next iteration is executed. Otherwise, the algorithm in the third part of Lemma~\ref{Lem:LowConductanceMultigraph} is applied with $D'$ playing the role of $S$, giving a new cut $(S',W-S')$ respecting $\mathcal C$. If $\min\{\mbox{Vol}_{H'[W]}(S'),\mbox{Vol}_{H'[W]}(W - S')\} > \mbox{Vol}_{H'[W]}(D)$, $D$ is updated to the side of the cut $(S',W-S')$ of smaller volume in $H'[W]$. Once all iterations have been executed, the algorithm outputs $D$ and then halts.

Clearly, the set $D$ output by this algorithm respects $\mathcal C$. Since $H'$ is nowhere dense, we have $|H'[W]| = O(|W|)$. Excluding the time for obtaining $D'$, each iteration takes $O(|H'[W]|) = O(|W|)$ time by the third part of Lemma~\ref{Lem:LowConductanceMultigraph}. Each call to \texttt{Partition} takes $\tilde O(|E(H'[W])|/\theta^5) = \tilde O(|W|/\theta^5)$ time and computing the union $D'$ of sets can clearly be done within this time bound as well. Hence, the entire algorithm above runs in $\tilde O(|W|/\theta^5)$ time.

Let $(S,W-S)$ be a cut with $\mbox{Vol}_{H'[W]}(S)\le\mbox{Vol}_{H'[W]}(W-S)$ and $\Phi_{H'[W]}(S)\le\theta_+$. We need to show that the two conditions of the lemma are satisfied with probability at least $1 - 1/n^c$. Note that $\mbox{Vol}_{H'[W]}(D)$ can only increase over time and can never be larger than $\frac 1 2 \mbox{Vol}_{H'[W]}(W)$. Consider any iteration and let $S_{\min}'$ be the side of $(S',W-S')$ of smaller volume in $H'[W]$. It suffices to show that for sufficiently large constant $C$ and sufficiently small constant $d$, the two conditions of the corollary, with $D$ replaced by $S_{\min}'$, are satisfied with probability at least $1/2$ in any given iteration.

Consider an arbitrary iteration and let $D'$ be the union of set output by \texttt{Partition}$(H'[W],\theta)$. By Lemma~\ref{Lem:Partition}, $\mbox{Vol}_{H'[W]}(D')\le\frac{65}{72}\mbox{Vol}_{H'[W]}(W)$ and the following two properties hold with a certain probability that we show is at least $1/2$:
\begin{enumerate}
\item either $\mbox{Vol}_{H'[W]}(D')\ge\frac{35}{144}\mbox{Vol}_{H'[W]}(W)$ or $\mbox{Vol}_{H'[W - D']}(S\cap(W - D'))\le\frac 1 2 \mbox{Vol}_{H'[W]}(S)$,
\item $\Phi_{H'[W]}(D') = \tilde O(\theta)$.
\end{enumerate}
The probability that these properties hold is $1 - \tilde O(1/(\theta^5|W|^3))$ due to our modifification to \texttt{Partition} described above. Since $\theta\ge n^{-\eps/2}$ and since $|W| = \Omega(n^{\eps})$, we get a lower bound on the probability of $1 - \tilde O(n^{(5/2 - 3)\eps}) = 1 - \tilde O(n^{-\eps/2})$ which is at least $1/2$ for $n$ larger than some constant.

We assume in the following that both of these conditions hold. Since $\Phi_{H'[W]}(D') = \tilde O(\theta)$ and since $\theta < \log^{-c'}n$, picking $c'$ large enough ensures that $\Phi_{H'[W]}(D')$ is less than a small constant factor such that the condition in the third part of Lemma~\ref{Lem:LowConductanceMultigraph} is satisfied, with $D'$ playing the role of $S$. Hence, $\mbox{Vol}_{H'[W]}(S') = \Theta(\mbox{Vol}_{H'[W]}(D'))$, $\mbox{Vol}_{H'[W]}(W - S') = \Theta(\mbox{Vol}_{H'[W]}(W - D'))$, and $\Phi_{H'[W]}(S_{\min}') = \Theta(\Phi_{H'[W]}(D')) = \tilde O(\theta)$.

It remains to show that for suitable constant $d\in(0,1)$ either $\mbox{Vol}_{H'[W]}(S_{\min}')\ge d\mbox{Vol}_{H'[W]}(W)$ or $\mbox{Vol}_{H'[W]}(S_{\min}') = \Omega(\mbox{Vol}_{H'[W]}(S))$. Assume first that $\mbox{Vol}_{H'[W]}(D')\ge\frac{35}{144}\mbox{Vol}_{H'[W]}(W)$. Since also $\mbox{Vol}_{H'[W]}(D')\le\frac{65}{72}\mbox{Vol}_{H'[W]}(W)$, it follows from the above that $\mbox{Vol}_{H'[W]}(S') = \Theta(\mbox{Vol}_{H'[W]}(W))$ and that
\[
  \mbox{Vol}_{H'[W]}(W - S') = \Theta(\mbox{Vol}_{H'[W]}(W - D')) = \Theta(\mbox{Vol}_{H'[W]}(W) - \mbox{Vol}_{H'[W]}(D')) = \Theta(\mbox{Vol}_{H'[W]}(W)).
\]
Picking $d$ sufficiently small gives $\mbox{Vol}_{H'[W]}(S_{\min}') = \min\{\mbox{Vol}_{H'[W]}(S'),\mbox{Vol}_{H'[W]}(W-S')\}\ge d\mbox{Vol}_{H'[W]}(W)$. This shows the desired since $d\mbox{Vol}_{H'[W]}(W) = \Omega(\mbox{Vol}_{H'[W]}(S)$.

Finally, assume that $\mbox{Vol}_{H'[W]}(D') < \frac{35}{144}\mbox{Vol}_{H'[W]}(W)$ and $\mbox{Vol}_{H'[W - D']}(S\cap(W - D'))\le\frac 1 2 \mbox{Vol}_{H'[W]}(S)$. The latter implies that $\mbox{Vol}_{H'[W]}(D') = \Omega(\mbox{Vol}_{H'[W]}(S))$ so by the above, $\mbox{Vol}_{H'[W]}(S') = \Omega(\mbox{Vol}_{H'[W]}(S))$. Hence, if $\mbox{Vol}_{H'[W]}(S')\le \mbox{Vol}_{H'[W]}(W - S')$, we get $\mbox{Vol}_{H'[W]}(S_{\min}') = \Omega(\mbox{Vol}_{H'[W]}(S))$ as desired. If $\mbox{Vol}_{H'[W]}(S') > \mbox{Vol}_{H'[W]}(W - S')$ then since $\mbox{Vol}_{H'[W]}(D')\le\frac{65}{72}\mbox{Vol}_{H'[W]}(W)$, we have
\[
  \mbox{Vol}_{H'[W]}(S_{\min}') = \mbox{Vol}_{H'[W]}(W - S') = \Theta(\mbox{Vol}_{H'[W]}(W - D')) = \Theta(\mbox{Vol}_{H'[W]}(W)) = \Omega(\mbox{Vol}_{H'[W]}(S)),
\]
again showing the desired.
\end{proof}

We will give a recursive version of \texttt{CPartition} called \texttt{RCPartition} which w.h.p.~outputs our desired partition $\mathcal X$. Let $\xi > 0$ be a given constant and let $\tau > 0$ be a constant to be specified later. Let $f_{+}:\mathbb R_+\rightarrow\mathbb R_+$ be a function mapping a value $\theta$ to a value which is $\tilde\Theta(\theta_+) = \tilde\Theta(\theta^3)$ so that the cut output by a call \texttt{CPartition}$(H,f_{+}(\theta),c)$ has conductance at most $\theta_+/\log n$, assuming the two conditions in Lemma~\ref{Lem:CPartition} are satisfied. Let $f_- = f_+^{-1}$ be its inverse.

In the following, let $H$, $H'$, and $\mathcal C$ be as in Theorem~\ref{Thm:RCPartition} and Lemma~\ref{Lem:LowConductanceMultigraph}. Pseudocode for \texttt{RCPartition} can be seen in Figure~\ref{fig:RCPartition}; we assume it has access to $H'$ and $\mathcal C$. Define $\theta_{\mathit{init}} = f_-^{(\lceil 1/\xi\rceil)}((n^{-\tau})_-)$. The initial call has parameters $W = V$, $\theta = \theta_{\mathit{init}}$, and $d = 1$.


\begin{figure}[!ht]
\begin{tabbing}
\rule{\linewidth}{\arrayrulewidth}\\
d\=dd\=\quad\=\quad\=\quad\=\quad\=\quad\=\quad\=\quad\=\quad\=\quad\=\quad\=\quad\=\kill
\>\texttt{RCPartition}$(W, \theta, d, c)$:\\\\
\>1. \>\>let $D$ be the output of \texttt{CPartition}$(H'[W], \theta, c+1)$\\
\>2. \>\>if $\mbox{Vol}_{H'[W]}(D) < n^{2\tau}$ then return $\{W\}$\\
\>3. \>\>if $\mbox{Vol}_{H'[W]}(D) > n^{1 - d\xi}$ then return $\mbox{\texttt{RCPartition}}(D, \theta, d, c)\cup
                 \mbox{\texttt{RCPartition}}(W - D, \theta, d, c)$\\
\>4. \>\>else return $\mbox{\texttt{RCPartition}}(W, f_+(\theta), d+1, c)$\\

\rule{\linewidth}{\arrayrulewidth}
\end{tabbing}
\caption{Pseudocode for the recursive algorithm \texttt{RCPartition} which outputs a partition of $V$ with the properties stated in Lemma~\ref{Lem:RCPartition}.}\label{fig:RCPartition}
\end{figure}

\begin{Lem}\label{Lem:RCPartition}
Let $c > 0$, $\xi > 0$, and $\tau > 0$ be constants where $\tau\le\frac 3 2 \eps$. Algorithm \texttt{RCPartition}$(V, \theta_{\mathit{init}}, 1, c)$ outputs a partition $\mathcal X$ of $V$ respecting $\mathcal C$ such that with probability at least $1 - 1/n^c$, the following three conditions hold:
\begin{enumerate}
\item for every $X\in\mathcal X$ and for every cut $(S, X - S)$ of $X$ where $\mbox{Vol}_{H[X]}(X - S)\geq \mbox{Vol}_{H[X]}(S) = \Omega(n^{2\tau})$, we have $\delta_{H[X]}(S) = \Omega(|S|n^{-\tau-\eps})$,
\item the number of edges of $H$ between distinct sets of $\mathcal X$ is $n^{-\tau/2^{O(1/\xi)}}\sum_{X\in\mathcal X}|X|\log(n/|X|)$, and
\item the worst-case running time of \texttt{RCPartition}$(V, \theta, 1, c)$ is $\tilde O(n^{1 + 5\tau + \xi})$.
\end{enumerate}
\end{Lem}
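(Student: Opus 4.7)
The plan is to analyze the recursion tree of \texttt{RCPartition} whose internal nodes correspond to line-3 branchings and line-4 parameter refinements, and whose leaves correspond to line-2 returns. Along any root-to-leaf path $d$ is non-decreasing and $\theta$ is non-increasing; since line~4 cannot fire once $n^{1-d\xi}<n^{2\tau}$, I will observe that $d\le\lceil 1/\xi\rceil+1$ and $\theta\ge(n^{-\tau})_{-}=n^{-\tau/3}\ge n^{-\eps/2}$ (using $\tau\le\frac 3 2\eps$), so the hypothesis of Lemma~\ref{Lem:CPartition} is always satisfied. Every \texttt{CPartition} call will be invoked with probability parameter $c+1$, and since the recursion tree has at most $O(n)$ nodes, a union bound gives an overall failure probability of $O(n^{-c})$; the rest of the argument will assume that every \texttt{CPartition} call succeeds.

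For condition~1, I will consider any leaf $X$ returned by line~2 at a call with current parameter $\theta$. The condition $\mbox{Vol}_{H'[X]}(D)<n^{2\tau}$, combined with the $\mbox{Vol}(D)=\Omega(\mbox{Vol}(S))$ guarantee of Lemma~\ref{Lem:CPartition}, implies (by contrapositive) that $H'[X]$ admits no cut $(S,X-S)$ with $\mbox{Vol}_{H'[X]}(S)\le\mbox{Vol}_{H'[X]}(X-S)$, $\mbox{Vol}_{H'[X]}(S)=\Omega(n^{2\tau})$, and $\Phi_{H'[X]}(S)\le\theta_+$. Since $\theta_+\ge n^{-\tau}$ uniformly, every such cut satisfies $\Phi_{H'[X]}(S)>n^{-\tau}$. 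For an arbitrary cut $(S,X-S)$ meeting the hypotheses of condition~1, constant degree and minimum degree $\ge 1$ in $H'[X]$ give $\mbox{Vol}_{H'[X]}(S)=\Omega(n^{2\tau})$ with the volume inequality preserved up to constants. Part~2 of Lemma~\ref{Lem:LowConductanceMultigraph} will then transfer the conductance lower bound back to $H[X]$ at a cost of a factor $n^{\eps}$, yielding $\Phi_{H[X]}(S)=\Omega(n^{-\tau-\eps})$ and hence $\delta_{H[X]}(S)=\Omega(|S|n^{-\tau-\eps})$ via constant degree.

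For condition~2, every edge of $H$ between distinct sets of $\mathcal X$ must be a boundary edge of some line-3 split. Each such cut $D$ satisfies $\Phi_{H'[W]}(D)=\tilde O(\theta)\le\tilde O(\theta_{\mathit{init}})=\tilde O(\lambda)$ for $\lambda=n^{-\tau/2^{O(1/\xi)}}$, and since \texttt{CPartition} enforces that $(D,W-D)$ respects $\mathcal C$, part~1 of Lemma~\ref{Lem:LowConductanceMultigraph} preserves the conductance bound up to constants when passing to $H[W]$; thus $\delta_{H[W]}(D)=\tilde O(\lambda|D|)$ using constant degree and that $D$ is the smaller side. I will finish by charging each boundary edge to a vertex in $D$ and observing that every vertex $v$ lies on the smaller side of at most $O(\log(n/|X_v|))$ splits over the course of the recursion (its containing region's volume at least halves each such time), yielding the claimed total $\tilde O(\lambda\sum_{X\in\mathcal X}|X|\log(n/|X|))$.

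The hard part will be condition~3. Each \texttt{CPartition}$(H'[W],\theta,c+1)$ call costs $\tilde O(|W|/\theta^5)\le\tilde O(|W|n^{5\tau/3})$, so the total running time equals $\tilde O(n^{5\tau/3})\sum_v(\text{depth of $v$ in the recursion tree})$, and I will reduce the problem to proving an $\tilde O(n^\xi)$ bound on the per-vertex depth; this would give $\sum_v\le\tilde O(n^{1+\xi})$ and hence total work $\tilde O(n^{1+5\tau/3+\xi})\le\tilde O(n^{1+5\tau+\xi})$. Line-4 steps contribute $O(1/\xi)$ per path, and smaller-side line-3 steps contribute $O(\log n)$ per vertex since $v$'s region halves each time. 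For larger-side line-3 steps at parameter $d$, each removes at least $n^{1-d\xi}$ volume from $v$'s region. The key technical step will be to show that on first entering level $d$, $v$'s containing region has volume $O(n^{1-(d-1)\xi})$, which I plan to extract from the line-4 firing condition at level $d-1$ (\texttt{CPartition} there reported a cut of volume at most $n^{1-(d-1)\xi}$) together with the $\mbox{Vol}(D)\le(1-d_0)\mbox{Vol}(W)$ constraint of Lemma~\ref{Lem:CPartition}, which ties $\mbox{Vol}(W)$ to the largest cut size \texttt{CPartition} can discover at the $\theta_{+,d-1}$ conductance scale. Granting this tying, the number of larger-side steps at level $d$ will be $O(n^{1-(d-1)\xi}/n^{1-d\xi})=O(n^{\xi})$, and summing over the $O(1/\xi)$ levels will give the desired depth bound. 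The principal obstacle is this coupling between \texttt{CPartition}'s guarantees at each level and the geometry of subsequent subregions, which is needed to prevent an unbounded chain of highly unbalanced splits from blowing up the running time.
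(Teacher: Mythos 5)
Your arguments for conditions 1 and 2 track the paper's proof closely and are essentially correct. For condition 3, however, the key claim you plan to establish — that on first entering level $d$, a vertex's containing region has volume $O(n^{1-(d-1)\xi})$ — is false. Line 4 of \texttt{RCPartition} calls \texttt{RCPartition}$(W, f_+(\theta), d+1, c)$ with the \emph{same} $W$: the region is never shrunk by a level transition. The trigger for line 4 is only that $\mbox{Vol}_{H'[W]}(D)\leq n^{1-d\xi}$ for the cut $D$ that \texttt{CPartition} \emph{outputs}, which says nothing about $\mbox{Vol}(W)$ itself; for example if $H'[W]$ is an expander, line 4 fires repeatedly while $W$ stays at volume $\Theta(n)$. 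Your count of larger-side steps at level $d$ would then be $V_d/n^{1-d\xi}$ with $V_d$ possibly $\Theta(n)$, which sums over $d$ up to $1/\xi$ to a useless $\tilde\Theta(n)$ depth bound, so the "unbounded chain of highly unbalanced splits" cannot be ruled out by bounding the region size.

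What the paper proves instead is a bound on the \emph{total volume of pieces peeled off} at level $d+1$, not on the region size. For a level-$d$ node $W$ firing line 4, the maximal disjoint sets $D_1,\ldots,D_k$ split off along the spine of the subtree rooted at $(W,f_+(\theta),d+1)$ satisfy $\mbox{Vol}_{H'[W]}(\cup_i D_i)=O(n^{1-d\xi})$. The contradiction argument: each $D_{i+1}$ has conductance $\leq\theta_+/\log n$ in its residual region $W-S_i$, so a prefix $S_{k'+1}=\cup_{i\leq k'+1}D_i$ has $\delta_{H'[W]}(S_{k'+1})\leq(\theta_+/\log n)\mbox{Vol}_{H'[W]}(S_{k'+1})$, making $(S_{k'+1},W-S_{k'+1})$ a cut of conductance at most $\theta_+$ in $H'[W]$; but \texttt{CPartition}$(H'[W],\theta,c+1)$ returned a cut $D$ with $\mbox{Vol}(D)\leq n^{1-d\xi}$, and the $\mbox{Vol}(D)=\Omega(\mbox{Vol}(S))$ guarantee of Lemma~\ref{Lem:CPartition} applied with $S=S_{k'+1}$ forces $\mbox{Vol}(S_{k'+1})=O(n^{1-d\xi})$, contradicting the choice of $k'$. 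Since each level-$(d+1)$ line-3 cut has volume $>n^{1-(d+1)\xi}$ and lies inside a set of volume $O(n^{1-d\xi})$, there are only $\tilde O(n^{\xi})$ of them, which gives the required node count (and hence per-vertex depth) bound. The quantity that can actually be tied to \texttt{CPartition}'s level-$d$ output via Lemma~\ref{Lem:CPartition} is the accumulated boundary of the peeled-off pieces, not $\mbox{Vol}(W)$.
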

\begin{proof}
By Lemma~\ref{Lem:CPartition}, the output is a partition $\mathcal X$ respecting $\mathcal C$.

We observe that the number of leaves of the recursion tree is $O(n^{1 - 2\tau})$ since each leaf corresponds to a subset of $V$ of size $\Omega(n^{2\tau})$ and the subsets corresponding to all leaves form a partition of $V$. Also, the maximum possible value of $d$ in any recursive step is at most $1/\xi = O(1)$, implying that the total number of nodes of the recursion tree is $O(n^{1 - 2\tau})$. Note that for all values of $\theta$ in the recursive calls, $\theta > f_+(\theta)$. Hence, the upper bound on $d$ implies that in any recursive step, $\theta \ge f_+^{(\lceil 1/\xi\rceil)}(\theta_{\mathit{init}}) = (n^{-\tau})_-$. Since $x\mapsto x_+$ is a monotonically increasing function, we thus have $\theta_+ \ge ((n^{-\tau})_-)_+ = n^{-\tau}$. Since $\tau\le\frac 3 2 \eps$, we also have $\theta\ge (n^{-\tau})_- = n^{-\tau/3}\ge n^{-\eps/2}$, as required by Lemma~\ref{Lem:CPartition}.

The probability that a single call to \texttt{CPartition} in line $1$ succeeds to satisfy the conditions in Lemma~\ref{Lem:CPartition} is at least $1 - 1/n^{c+1}$. By a union bound over all nodes in the recursion tree, all calls to \texttt{CPartition} succeed with probability at least $1 - 1/n^c$. When we show the three conditions below, we assume that this indeed is the case for every call to \texttt{CPartition}.

\paragraph{Condition $1$:} Let $X\in\mathcal X$ be given and let $(S, X - S)$ be a cut of $X$ with $\Phi_{H[X]}(S)\le c'n^{-\tau - \eps}$ in $H[X]$ where $\mbox{Vol}_X(S)\le\mbox{Vol}_X(X - S)$ and $c' > 0$ is a constant specified below. We choose $S$ so that $\mbox{Vol}_X(S)$ is maximized over all such cuts $(S, X - S)$; if $S$ does not exist, condition $1$ cannot be violated for set $X$.

Consider the recursive call where $X$ is output in line $2$ and let $D$ be the set computed in line $1$ in that recursive call. We have $\mbox{Vol}_{H'[X]}(D) < n^{2\tau}$.  By Lemma~\ref{Lem:CPartition}, $X$ respects $\mathcal C$ so by the second part of Lemma~\ref{Lem:LowConductanceMultigraph}, $\Phi_{H'[X]}(S) = O(\Phi_{H[X]}(S)n^{\eps}) = O(c'n^{-\tau})$. We choose $c'$ sufficiently small so that $\Phi_{H'[X]}(S)\le n^{-\tau}\le \theta_+$. Applying Lemma~\ref{Lem:CPartition} gives $n^{2\tau} > \mbox{Vol}_{H'[X]}(D) = \Omega(\mbox{Vol}_{H'[X]}(S))$, implying that $\mbox{Vol}_{H'[X]}(S) = O(n^{2\tau})$. By the choice of $S$, we have shown that for any cut in $H[X]$ of conductance at most $c'n^{-\tau - \eps}$, one of the two sides of the cut has volume $O(n^{2\tau})$. This implies that for any cut $(S,X - S)$ where $\mbox{Vol}_{H[X]}(X - S)\ge\mbox{Vol}_{H[X]}(S) = \Omega(n^{2\tau})$, we have $\delta_{H[X]}(S) = \Omega(\mbox{Vol}_{H[X]}(S)n^{-\tau-\eps}) = \Omega(|S|n^{-\tau-\eps})$, showing the first condition.

\paragraph{Condition $2$:} For each call \texttt{CPartition}$(H'[W],\theta,c+1)$, let $D$ be the set output and consider charging the number of edges of $H'[W]$ crossing the cut $(D,W-D)$ evenly to the vertices on the smaller side of the cut. Then for each $X\in\mathcal X$, each vertex of $X$ is charged at most $\log(n/|X|)$ times. Since $\theta$ decreases with $d$, the second part of Lemma~\ref{Lem:CPartition} implies that the amount charged to the vertex each time is $\tilde O(\theta_{\mathit{init}}) = n^{-\tau/2^{O(1/\xi)}}$. This shows the second condition.


\paragraph{Condition $3$:} Consider a fixed node of the recursion tree corresponding to a call \texttt{RCPartition}$(W,\theta,d,c)$ for which line $4$ is executed. Let $D_1,\ldots,D_k$ be the maximal subsets $D\subset W$ for which the two recursive calls in line $3$ are made with the first having input $(D,f_+(\theta),d+1,c)$. We order the sets such that $D_i$ is obtained before $D_{i+1}$ by the algorithm for $i = 1,\ldots,k-1$. Note that these are pairwise disjoint subsets of $W$. For $i = 0,\ldots,k$, let $S_i = \cup_{j = 1}^i D_i$.

Before showing condition $3$, we first show that $\mbox{Vol}_{H'[W]}(S_k) = O(n^{1 - d\xi})$. For some constant $C > 1$ to be specified below, we may assume that $\mbox{Vol}_{H'[W]}(W) > 2Cn^{1 - d\xi}$ and $\mbox{Vol}_{H'[W]}(S_k) > Cn^{1 - d\xi}$. We will show how to derive a contradiction when $C$ is sufficiently large.

Consider the largest index $k'\in\{0,\ldots,k-1\}$ for which $\mbox{Vol}_{H'[W]}(S_{k'}) \le Cn^{1 - d\xi} < \frac 1 2 \mbox{Vol}_{H'[W]}(W)$. For $i = 0,\ldots,k'$, since $D_{i+1}$ was obtained by a call \texttt{CPartition}$(H'[W-S_i],f_+(\theta),c+1)$, we have $\Phi_{H'[W-S_i]}(D_{i+1})\le \theta_+/\log n$. Hence
\[
  \delta_{H'[W]}(S_{k'+1}) \le \sum_{i = 0}^{k'}\delta_{H'[W - S_i]}(D_{i+1})
                      \le \frac{\theta_+}{\log n}\sum_{i = 0}^{k'}\mbox{Vol}_{H'[W - S_i]}(D_{i+1})
                      \le \frac{\theta_+}{\log n}\mbox{Vol}_{H'[W]}(S_{k'+1}).
\]
By the choice of $k'$ and by Lemma~\ref{Lem:CPartition}, it follows that 
$\mbox{Vol}_{H'[W]}(W - S_{k'+1}) = \Theta(\mbox{Vol}_{H'[W]}(W)) = \Omega(\mbox{Vol}_{H'[W]}(S_{k'+1}))$. Thus, $\Phi_{H'[W]}(S_{k'+1}) = O(\theta_+/\log n)$ so for $n$ bigger than some constant, we have $\Phi_{H'[W]}(S_{k'+1})\le\theta_+$. By the choice of $W$, \texttt{CPartition}$(H'[W],\theta,c+1)$ gave a set $D$ with $\mbox{Vol}_{H'[W]}(D)\le n^{1 - d\xi}$. Applying Lemma~\ref{Lem:CPartition} with $S_{k'+1}$ playing the role of $S$, we can choose constant $C$ large enough so that $\mbox{Vol}_{H'[W]}(S_{k'+1}) < C\mbox{Vol}_{H'[W]}(D)\le Cn^{1 - d\xi}$, contradicting the choice of $k'$.

It follows from the above that $|S_k| = O(\mbox{Vol}_{H'[W]}(S_k)) = O(n^{1 - d\xi})$. We can use this to bound the number of sets $D\subset W$ for which the test in line $3$ succeeds with parameter $d+1$. For each such $D$, we have $D\subseteq S_k$ and $\mbox{Vol}_{H'[W']}(D) > n^{1 - (d+1)\xi}$ for some $W'\subseteq W$. Since $H'$ is nowhere dense, since $H$ has constant degree, and since $D$ respects $\mathcal C$, we have $|D| = \Omega(|E(H'[D])| + |\delta_{H'[W']}(D)|) = \Omega(\mbox{Vol}_{H'[W']}(D)) = \Omega(n^{1 - (d+1)\xi}) = \Omega(|S_k|/n^{\xi})$ so the number of choices for $D$ is $O(n^{\xi})$.

We have shown that the total number of recursion nodes of the form $(W',f_+(\theta),d+1,c)$ with $W'\subseteq W$ is $O(n^{\xi})$. The time spent in each of them is dominated by a single call to \texttt{CPartition}$(W',f_+(\theta),c+1)$ which by Lemma~\ref{Lem:CPartition} takes worst-case time $\tilde O(|W'|/(f_+(\theta))^5) = \tilde O(|W|n^{5\tau})$. Summing over all choices of $W'$, this is $\tilde O(|W|n^{5\tau+\xi})$. Over all $W$, this is $\tilde O(n^{1+5\tau+\xi})$. Finally, summing over all $O(1/\xi) = O(1)$ choices of $d$ gives a total worst-case running time for \texttt{RCPartition}$(V,\theta_{\mathit{init}},1,c)$ of $\tilde O(n^{1+5\tau+\xi})$, showing the third condition.
\end{proof}
We are now ready to prove Theorem~\ref{Thm:RCPartition} from Section~\ref{subsec:PreprocRestrictedDecMSF} in the case where $W = V$. First, we construct $H'[W]$ which by Lemma~\ref{Lem:SimpleExpander} takes $O(|W|)$ time. We then apply Lemma~\ref{Lem:RCPartition} with $\tau = \eps$, $\gamma$ of the form $Cn^{-2\eps}$ for suitable constant $C$, $\lambda = \gamma^{1/2^{O(1/\xi)}}$ for suitable hidden constants. Denote by $\mathcal X'$ the output partition of $V$. We may assume that the conditions in the lemma are satisfied since this holds with probability at least $1 - 1/n^c$. Let $\mathcal X$ be the set of components in $H[X']$ over all $X'\in\mathcal X'$. Clearly, each set of $\mathcal X$ respects $\mathcal C$ and the second and third conditions of Theorem~\ref{Thm:RCPartition} hold with the above substitutions.

To show the first condition, let $X\in\mathcal X$ be given and consider a cut $(S,X - S)$ in $X$ where $\mbox{Vol}_X(S)\leq \mbox{Vol}_X(X - S)$. We will show that $\delta_{H[X]}(S) = \Omega(|S|n^{-2\eps})$. Assume first that $\mbox{Vol}_{H[X]}(S) = \Omega(n^{2\tau})$ and consider the cut $(S,X' - S)$ where $X\subseteq X'\in \mathcal X'$. By picking the hidden constant in $\Omega(n^{2\tau})$ sufficiently big, the first condition of Lemma~\ref{Lem:RCPartition} implies that $\delta_{H[X]}(S) = \delta_{H[X']}(S) = \Omega(|S|n^{-\tau-\eps}) = \Omega(|S|n^{-2\eps}))$, as desired.

Now assume that $\mbox{Vol}_{H[X]}(S) = O(n^{2\tau}) = O(n^{2\eps})$. Since $H[X]$ is connected, $\delta_{H[X]}(S)\ge 1 = \Omega(\mbox{Vol}_{H[X]}(S)n^{-2\eps}) = \Omega(|S|n^{-2\eps})$, again showing the desired. Hence, $H[X]$ is a $\gamma$-expander graph for suitable choice of constant $C$, completing the proof of Theorem~\ref{Thm:RCPartition} in the special case where $W = V$.

In the general case, $W = \Omega(n^{1 - \eps})$ and replacing $n$ by $|W|$ above, we get $\gamma = \Omega(n^{-2\eps})$ and $\lambda = \gamma^{1/2^{O(1/\xi)}}$. This shows Theorem~\ref{Thm:RCPartition}.

\section{Decremental Maintenance of Expander Graphs}\label{sec:DecDSLowCondCuts}
In this section, we present the data structure of Theorem~\ref{Thm:DisconnectExpanderGraph}. We shall refer to the dynamic graph as $G = (V,E)$ here rather than $H$ and to simplify notation, we assume it to have $n$ vertices; it is easy to see that the problem only becomes easier if there are fewer than $n$ vertices. We require that $G$ has max degree at most $3$, that it is initially a $\gamma$-expander graph w.h.p., where $\gamma = \Theta_{n^{\eps}}(1)$, and that the total number of edge deletions in $G$ is at most $\Delta = \Omega(\sqrt n)$. We also require that the sequence of updates is independent of the random bits used by the data structure. We regard $G$ as an unweighted graph since its edge weights are not relevant in this section. However, since we will apply Lemma~\ref{Lem:FewNonTreeEdges} which assumes an edge weight function, we pick some arbitrary lexicographical ordering of the edges of $G$. We shall refer to an FFE structure here as an instance of the data structure in this lemma and denote it by $\FFE{H}$ for a graph $H$.

\subsection{Preprocessing}\label{subsec:PreprocDecSF}
We start by describing the preprocessing step of our data structure. We may restrict our attention to the case where the initial graph $G$ is connected as follows. During preprocessing, the data structure checks if $G$ is connected. If $G$ is not, it cannot be a $\gamma$-expander graph and the data structure simply lets the first set $V_1$ output be equal to $V$ as this will satisfy the conditions in Theorem~\ref{Thm:DisconnectExpanderGraph}.

First, for some parameter $\kappa$ between $1$ and $n$, we apply Frederickson's \texttt{FINDCLUSTERS} procedure~\cite{Frederickson} to an arbitrary spanning tree of of $G$, giving a partition of $V$ into a set $\mathcal C$ of \emph{clusters} where for each $C\in\mathcal C$, $\kappa\le |C|\le 3\kappa$, and $G[C]$ is connected; we choose $\kappa = n^{1/2 - \Theta(\eps)}$ where the hidden constant will be picked sufficiently big (but independent of $\eps$ which we regard as a variable here) to make our arguments in this and the next two sections carry through. We compute a spanning tree $T(C)$ of $G[C]$ for each cluster $C$. The set $\mathcal C$ will be dynamic and our data structure maintains this clustering as well as spanning tree $T(C)$ of each $C\in\mathcal C$.

Next, we obtain a subset $E'$ of $E$ by sampling each edge independently with some probability $p$ to be specified later. We form a subgraph $H$ of $G$ consisting of edge set $E'$ and of $T(C)$ for each $C\in\mathcal C$. We apply Lemma~\ref{Lem:FewNonTreeEdges} to set up an FFE structure $\FFE{H}$ for $H$ where the initial forest $F(H)$ is a spanning forest of $H$ containing $T(C)$ for each $C\in\mathcal C$. The purpose of $F(H)$ will be to certify connectivity of $G[V - W_k]$ where $W_k$ is a subset satisfying the first requirement in Theorem~\ref{Thm:DisconnectExpanderGraph}.


Our data structure will maintain a subset $X$ of $V$ respecting $\mathcal C$. We require that vertices can never be added to $X$, only removed. With high probability, at all times, $G[X]$ is an expander graph for some later specified expansion factor and $H[X]$ is connected. We initialize $X = V$.


Finally, we do some additional preprocessing for a procedure called \texttt{XPrune}. Its purpose will be to ``prune'' $X$ in each update by removing some clusters from this set such that w.h.p., $G[X]$ remains an expander graph. We describe this procedure in detail in Section~\ref{sec:XPrune}; in this section, we shall regard it as a black box. When an edge $e$ is deleted from $G$, \texttt{XPrune}$(e)$ will update $X$ and output the clusters of $\mathcal C$ that are removed from $X$. For some later specified value $\gamma'\le\gamma$, we require the following two properties to hold w.h.p.~when \texttt{XPrune}$(e)$ returns:
\begin{enumerate}
\item for each $\mathcal C$-respecting cut $(K,X-K)$ of $X$, the number of edges of $G[X]$ crossing $(K,X-K)$ is at least $\gamma'\min\{|K|,|X-K|\}$, and
\item the total size of all clusters output by \texttt{XPrune} over all updates is $O(\Delta/\gamma)$.
\end{enumerate}
For now, we require that $\gamma' = \Theta_{n^{\eps}}(1)$.

\texttt{XPrune} will have access to $G$ and $\mathcal C$ but not to $E'$ so the updates to $X$ will be independent of the random bits used to form $E'$.

\subsection{Updates}
Given the above preprocessing, we now describe how to handle updates. The following invariants will be maintained:
\begin{Inv}\label{Inv:Clustering}
For each $C\in\mathcal C$, $G[C]$ contains a spanning tree $T(C)$, $|C|\leq 3\kappa$, and either $|C|\geq\kappa$ or no edge of $G[X]$ leaves $C$.
\end{Inv}
\begin{Inv}\label{Inv:FH}
$F(H)$ is a spanning forest of $H$ such that for each $C\in\mathcal C$, $F(H)$ contains $T(C)$.
\end{Inv}

\subsubsection{Maintaining clusters}\label{subsec:UpdateClusters}
Now we describe how to handle an update to $G$ consisting of the deletion of an edge $e = (u,v)$ such that the invariants above are maintained. For now, we only focus on maintaining clusters and ignore updates to $X$. The approach we use is similar to how regions are maintained in Section~\ref{sec:FewNonTreeEdges}. We first delete $e$ from $G$, $H$, and $\FFE{H}$. We then consider two cases:

\paragraph{Edge $e$ is an intra-cluster edge:}\label{subsubsec:IntraClusterEdge}
In this case, $e\in G[C]$ for some cluster $C\in\mathcal C$. If $e\notin T(C)$, no further updates are needed and the invariants are maintained, so consider the case when $e\in T(C)$. The deletion of $e$ splits $T(C)$ into two subtrees $T(C_u)\ni u$ and $T(C_v)\ni v$ spanning subsets $C_u$ and $C_v$ of $C$, respectively. We visit the edges of $G[X]$ incident to $T(C_u)$ to look for a replacement edge for $T(C)$. If such an edge $f$ is found, $C$ remains a cluster, $T(C)$ is updated to $T(C_u)\cup T(C_v)\cup\{f\}$, and $\FFE{H}$ is updated by adding $f$ as a tree edge to $F(H)$.

The remaining case is when no replacement edge for $T(C)$ was found among the edges in $G[X]$. First, $C$ is removed from $\mathcal C$. Next, $C_u$ and $C_v$ are updated; we only describe the update for $C_u$ as $C_v$ is handled similarly. If $|C_u|\geq\kappa$, $C_u$ becomes a new cluster and is added to $\mathcal C$ with spanning tree $T(C_u)$. Otherwise, we look for an edge of $G[X]$ leaving $C_u$. If no such edge is found, $C_u$ is added to $\mathcal C$ with spanning tree $T(C_u)$. Otherwise, let $e' = (u',v')$ be the lexicographically smallest such edge\footnote{This particular choice of $e'$ is not important and is mainly made to emphasize that at any step, the clusters of $\mathcal C$ do not depend on the random bits used to form $E'$.} where $u'\in C_u$ and $v'$ belongs to some other cluster $C'$. We form $C'' = C_u\cup C'$ and let $T(C'')$ be the spanning tree $T(C_u)\cup T(C')\cup\{e'\}$ of $C''$. Note that $|C''|\geq\kappa$. If also $|C''|\leq 3\kappa$, we add $C''$ to $\mathcal C$. Otherwise, we apply Frederickson's \texttt{FINDCLUSTERS} to $T(C'')$ to partition $C''$ into $O(1)$ sub-clusters each inducing a subtree of $T(C'')$ and each of size between $\kappa$ and $3\kappa$. We replace $C_u$ and $C'$ with these sub-clusters in $\mathcal C$.

It is easy to see that the above satisfies Invariant~\ref{Inv:Clustering}. To satisfy Invariant~\ref{Inv:FH}, we do as follows. If a replacement edge $f$ was found for $T(C)$ in the above procedure, we add it to $F(H)$. Otherwise, if $C''$ could be formed when processing $C_u$ above, we add $e'$ to $F(H)$. This may create a cycle in $F(H)$ in which case we delete an inter-cluster edge incident to $C''$ belonging to this cycle. A similar update is done when processing $C_v$.

At this point, it may happen that $F(H)$ is no longer a spanning forest of $H[X]$. We apply \texttt{connect}$(u,v)$ to $\FFE{H}$ and if a reconnecting edge is found, it is added to $F(H)$.

\paragraph{Edge $e$ is an inter-cluster edge:}
This case is handled in the same way as above except that $\mathcal C$ remains unchanged.

It is easy to see that the above satisfies the invariants. Note that since the procedure above only looks for reconnecting edges in $G[X]$, a cluster not in $\mathcal C(X)$ can never be merged with another cluster, it can only be split into smaller clusters and these will never intersect $X$. In particular, vertices will never be added to $X$, satisfying our requirement above.

\subsubsection{Updating $X$}
We now present the entire data structure for handling updates which, in addition to maintaining clusters, also supports updates to $X$ with the procedure \texttt{XPrune}. At all times, $X$ respects $\mathcal C$ and this procedure implicitly maintains $X$ by maintaining the set $\mathcal C(X)$ of clusters of $\mathcal C$ contained in $X$.

The data structure maintains a subset $\mathcal C'$ of $\mathcal C$ which is initialized to be empty during preprocessing. This set can be regarded as a buffer of clusters whose vertex sets are waiting to be output in subsets $V_k$ in Theorem~\ref{Thm:DisconnectExpanderGraph}. At all times, $\mathcal C'\subseteq\mathcal C - \mathcal C(X)$.

Now consider an update consisting of the deletion of an edge $e = (u,v)$ from $G$. We split the update into two phases where the first phase takes place prior to $e$ being deleted and the second phase starts with the deletion of $e$.

\paragraph{Phase $1$:}
We check if $\mathcal C'$ is empty. If not, we continue with Phase $2$. Otherwise, we first remove from $H$ and $\FFE{H}$ every edge $(u',v')\in E'$ incident to a cluster of $\mathcal C - \mathcal C(X)$ and apply \texttt{connect}$(u',v')$ in $\FFE{H}$ to maintain Invariant~\ref{Inv:FH}. Then we check if $F(H)$ contains a tree spanning $X$. If not, we output $V$ and halt, skipping Phase $2$.

\paragraph{Phase $2$:}
At the beginning of Phase $2$, either $\mathcal C'$ is non-empty or $F(H)$ contains a tree spanning $X$. We first apply the procedure described above for updating clusters. If a cluster $C\in\mathcal C'$ is split into two sub-clusters in this procedure, they replace $C$ in $\mathcal C'$; note that clusters in $\mathcal C'$ can never be merged since $\mathcal C'\subseteq\mathcal C - \mathcal C(X)$. Let $T_u$ and $T_v$ be the trees of $F(H)$ containing $u$ and $v$, respectively, after this update. If $T_u\ne T_v$, we set $W_e$ to be the smaller of the two sets $V(T_u)$ and $V(T_v)$; otherwise, $W_e = \emptyset$. We then execute \texttt{XPrune}$(e)$ which updates $\mathcal C(X)$ and outputs a subset $\mathcal C_e$ of clusters; we update $\mathcal C'\leftarrow\mathcal C'\cup\mathcal C_e$. Next, we remove a subset $\mathcal C_e'$ of clusters from $\mathcal C'$ whose total vertex size is between $n^{1/2-4\eps}$ and $n^{1/2-4\eps}+3\kappa$; if the total size of clusters in $\mathcal C'$ is less than $n^{1/2-4\eps}$, we set $\mathcal C_e' = \mathcal C'$, thereby emptying $\mathcal C'$. Note that by Invariant~\ref{Inv:Clustering}, $C_e'$ is well-defined. Finally, we output $V_e = \cup_{C\in\mathcal C_e'}C\cup W_e$.

\subsubsection{Correctness}
We now show that the update procedure described above satisfies the requirements of Theorem~\ref{Thm:DisconnectExpanderGraph} except that we delay the bound on running time until later.
\begin{Lem}
For suitable sets $W_k$, the procedure above satisfies the first requirement of Theorem~\ref{Thm:DisconnectExpanderGraph}.
\end{Lem}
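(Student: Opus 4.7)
The plan is to maintain inductively a tree $T^*_k$ inside $F(H)_k$ whose vertex set shrinks monotonically in $k$ and satisfies $V(T^*_{k-1}) \setminus V(T^*_k) \subseteq V_k$. Setting $W_k := V \setminus V(T^*_k)$, telescoping yields $W_k \subseteq \bigcup_{i \le k} V_i$, and $G[V \setminus W_k] = G[V(T^*_k)]$ is connected in $G$ because $T^*_k$ is a tree contained in $F(H)_k \subseteq H \subseteq G$. (I abuse notation slightly and continue to call the dynamic graph $G$ rather than $H$, matching the section's convention.)

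Base case: if the initial $H$ is disconnected then preprocessing already outputs $V_1 = V$ and the claim is vacuous for all $k$; otherwise set $T^*_0 := F(H)_0$. For the inductive step in update $k+1$, I trace the effect of each phase on $F(H)$. Phase~1 runs only when $\mathcal C'$ is empty at the start of the update, so every vertex of $V \setminus X$ has already been flushed and hence lies in $\bigcup_{i \le k} V_i$. The only edges Phase~1 removes are $E'$-edges with an endpoint in a flushed cluster, each followed by a \texttt{connect} call; if $F(H)$ afterwards fails to span $X$ the algorithm halts with $V_{k+1} = V$, vacuously satisfying the claim. Otherwise $X$ lies in a single tree of $F(H)$ after Phase~1 and I let $T^*_{k+1}$ be that tree; since $V(T^*_k) \setminus V(T^*_{k+1}) \subseteq V \setminus X$, this loss is already covered by $\bigcup_{i \le k} V_i$. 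Phase~2's cluster-maintenance subroutine preserves $T(C) \subseteq F(H)$ for every cluster $C$ (Invariants~\ref{Inv:Clustering} and~\ref{Inv:FH}), so the only vertex-level change to $F(H)$'s trees is the removal of $e$ itself. If the subsequent \texttt{connect}$(u,v)$ call reconnects $u$ and $v$, $V(T^*_{k+1})$ is unchanged; otherwise $T_u \ne T_v$ and the algorithm emits $W_e$, the smaller of $V(T_u)$ and $V(T_v)$, into $V_{k+1}$, so I update $T^*_{k+1}$ to be the tree on the larger side, giving $V(T^*_k) \setminus V(T^*_{k+1}) \subseteq W_e \subseteq V_{k+1}$.

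The main obstacle will be the Phase~1 analysis: showing that the cascade of $E'$-edge removals (each incident to a flushed cluster) and intervening \texttt{connect} reconnections cannot sever a non-flushed cluster vertex from the $X$-spanning tree unless the spans-$X$ check at the end of Phase~1 catches the failure and triggers the halt. This rests on the structural fact that each non-flushed cluster $C \in \mathcal C(X)$ is held together internally by $T(C) \subseteq F(H)$, so a non-flushed vertex can only be cut off by destroying every $F(H)$-bridge between $C$ and the rest of $X$ without replacement — precisely the failure mode caught by the span check. Formalizing the companion invariant that $F(H)$ carries $X$ inside a single tree whenever the algorithm has not halted, and justifying that the larger side of $(T_u, T_v)$ can always be adopted consistently as the new $T^*$, is the delicate bookkeeping step.
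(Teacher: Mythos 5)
Your proof takes essentially the same approach as the paper's: exhibit a tree of $F(H)$ spanning exactly $V - W_k$, using that $F(H)\subseteq G$, with the spans-$X$ check at the end of Phase~$1$ as the anchor. The paper's presentation avoids the per-update $T^*$-bookkeeping you flag as delicate at the end (in particular the case where the deleted edge $e$ lies outside the current $T^*$, so that neither $T_u$ nor $T_v$ is the right successor for $T^*$) by fixing $k'$ to be the latest update with $\mathcal C'(k') = \emptyset$ and writing $W_k = (V - X(k'))\cup\bigcup_{k'\le k''\le k}W_e(k'')$ directly, arguing once from Phase~$1$ of update $k'$ forward.
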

\begin{proof}
We may assume that $V$ is not output in Phase $1$ of any update since if such an update $k$ exists, we can simply pick $W_{k'} = V$ for every $k'\ge k$ and we only need to focus on updates $k' < k$. We shall denote the set $X$ resp.~$\mathcal C'$ at the beginning of an update $k$ by $X(k)$ resp.~$\mathcal C'(k)$ and denote the set $W_e$ formed in update $k$ by $W_e(k)$.

Now, consider an update $k$ and let $k'\le k$ be the latest update for which $\mathcal C'(k') = \emptyset$; note that $k'$ exists since $\mathcal C'$ is empty at the beginning of the first update. We show that $W_k = (V - X(k'))\cup\bigcup_{k'\le k''\le k}W_e(k'')$ satisfies the first requirement of Theorem~\ref{Thm:DisconnectExpanderGraph}. 

Since $\mathcal C'(k') = \emptyset$, every vertex of $V - X(k')$ has been output in updates prior to $k'$. Furthermore, $W_e(k'')$ is output in update $k''$ for $k'\le k''\le k$. Hence, $W_k$ is contained in the union of sets output during the $k$ first updates.

It remains to show that $G[V - W_k]$ is connected at the end of update $k$. At the end of Phase $1$ of update $k'$, $F(H)$ contains a tree spanning $X(k')$. Hence, at the end of update $k$, $F(H)$ contains a tree spanning $X(k') - \bigcup_{k'\le k''\le k}W_e(k'') = V - W_k$. Since $F(H)$ is contained in $G$, it follows that $G[V - W_k]$ is connected.
\end{proof}

We now consider the second requirement of Theorem~\ref{Thm:DisconnectExpanderGraph}. We delay the analysis of the running time until Section~\ref{subsec:DecExpImpl} below and show that w.h.p., each set output has size $O(n^{1/2 - 4\eps})$.

In the following, let $N$ be an integer which w.h.p.~is an upper bound on the maximum number of consecutive updates for which $\mathcal C'$ fails to be emptied. Since w.h.p., the total size of all clusters output by \texttt{XPrune} over all updates is $O(\Delta/\gamma)$ and since we output $\Omega(n^{1/2 - 4\eps})$ vertices in each update that does not empty $\mathcal C'$, we can pick $N = \Theta(\Delta/(\gamma n^{1/2 - 4\eps})) = \Theta_{n^{\eps}}(\Delta/\sqrt n)$.

Fix sampling probability $p = 8c_p(\ln n)/(\gamma'\kappa) = \Theta_{n^\eps}(1/\kappa)$ for a sufficiently large constant $c_p > 0$. We get the following lemma, showing that the data structure is unlikely to output the entire vertex set $V$ at the end of Phase $1$.
\begin{Lem}\label{Lem:SkeletonCertificate}
W.h.p., at the beginning of each update, $H[X]$ is connected.
\end{Lem}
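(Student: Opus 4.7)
The plan is to reduce connectivity of $H[X]$ to a statement about sampled crossing edges, and then bound the failure probability by Chernoff plus a union bound. First I would observe that, at any time, $T(C)\subseteq H$ by Invariant~\ref{Inv:Clustering} and the maintenance procedure, so for every $C\in\mathcal C(X)$ the cluster $C$ is internally connected in $H[X]$ via its spanning tree. Hence $H[X]$ is disconnected if and only if there exists a $\mathcal C$-respecting cut $(K,X-K)$ of $X$ with both sides non-empty such that no edge of $E'$ that still belongs to $G$ crosses $(K,X-K)$: only edges of $E'$ (not tree edges $T(C)$, which are intra-cluster) can connect distinct clusters of $\mathcal C(X)$ inside $H[X]$. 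So it suffices to show that w.h.p., every such cut has at least one surviving sampled edge.

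For a fixed update $k$ and a fixed realization of $\mathcal C_k$ and $X_k$ at the beginning of the update, the property of \texttt{XPrune} guarantees that for every $\mathcal C$-respecting cut $(K,X_k-K)$ with $|K|\le|X_k-K|$ consisting of $s\ge 1$ clusters on the smaller side, the number of edges of $G[X_k]$ crossing $(K,X_k-K)$ is at least $\gamma'|K|\ge\gamma's\kappa$. Since $\mathcal C_k$ and $X_k$ depend only on $G$, the update sequence, and the random bits of \texttt{XPrune} (but not on $E'$), and since the update sequence is assumed independent of the random bits used by the data structure, the subset $E'$ chosen by sampling each edge of $E$ independently with probability $p$ is independent of $(\mathcal C_k,X_k)$ and of the set of crossing edges. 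Conditioning on $(\mathcal C_k,X_k)$, the probability that no sampled edge crosses the given cut and still lies in $G$ is at most $(1-p)^{\gamma's\kappa}\le e^{-p\gamma's\kappa}=n^{-8c_ps}$ by our choice $p=8c_p(\ln n)/(\gamma'\kappa)$.

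Next I would bound the number of candidate cuts. Since every cluster of $\mathcal C$ has size at least $\kappa$, $|\mathcal C_k|\le n/\kappa=n^{1/2+\Theta(\eps)}$, so the number of $\mathcal C$-respecting cuts of $X_k$ whose smaller side consists of exactly $s$ clusters is at most $\binom{|\mathcal C_k|}{s}\le(en/(s\kappa))^s\le n^{s(1/2+\Theta(\eps))}$. A union bound over $s\ge 1$ gives total failure probability
\[
\sum_{s\ge 1} n^{s(1/2+\Theta(\eps))}\cdot n^{-8c_ps}\le \sum_{s\ge 1} n^{-s(8c_p-1)}=O\bigl(n^{-(8c_p-1)}\bigr)
\]
for $c_p$ a sufficiently large constant. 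Finally, a union bound over the at most $\Delta+1$ updates (which is polynomial in $n$) shows that by picking $c_p$ large enough, with probability at least $1-n^{-d}$ for any desired constant $d$, at the beginning of every update every $\mathcal C$-respecting cut of $X$ has a surviving sampled crossing edge, so $H[X]$ is connected.

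The main obstacle is making the independence argument rigorous: one must verify that at every update, the set $X$ (and the dynamic clustering $\mathcal C$) truly is independent of $E'$, so that conditioning on any realized $(\mathcal C_k,X_k)$ preserves the product-measure structure of $E'$ needed for the Chernoff-type bound. This uses the explicit design choice that \texttt{XPrune} has access to $G$ and $\mathcal C$ but not to $E'$, together with the assumption in Theorem~\ref{Thm:DisconnectExpanderGraph} that the update sequence is independent of the random bits used by the data structure; combined with the fact that the cluster-updating procedure of Section~\ref{subsec:UpdateClusters} uses only $G$ and $X$ (not $E'$), this gives the required independence.
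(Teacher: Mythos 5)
Your proof follows essentially the same route as the paper's: condition on the \texttt{XPrune} expansion guarantee and on the independence of $E'$ from the evolution of $X$ and $\mathcal C$, then apply a concentration/union bound over all $\mathcal C$-respecting cuts. The only substantive differences are stylistic: you bound the probability that \emph{zero} sampled edges cross a cut directly via $(1-p)^{\gamma' s\kappa}$ instead of the paper's Chernoff bound on the count, and you spell out the union bound over the $\Delta+1$ updates and the per-update conditioning argument, both of which the paper leaves implicit. One small omission: at the beginning of the first update \texttt{XPrune} has not yet run, so the crossing-edge lower bound must come from the assumption that $G$ is initially a $\gamma$-expander (the paper handles this separately); also, the bound $|\mathcal C_k|\le n/\kappa$ relies on every cluster in $\mathcal C(X)$ having size $\ge\kappa$, which Invariant~\ref{Inv:Clustering} only gives once you have reduced to the case $|\mathcal C(X)|>1$ and used that $G[X]$ is connected. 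Both are minor and easily patched, and do not change the structure of the argument.
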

\begin{proof}
We may assume that $|\mathcal C(X)| > 1$ since otherwise, $H[X]$ is connected as every cluster in $\mathcal C(X)$ is spanned by a tree belonging to $H[X]$. Given this assumption and since w.h.p., $G[X]$ is connected by the first property of \texttt{XPrune}, it follows from Invariant~\ref{Inv:Clustering} that w.h.p., each cluster in $\mathcal C(X)$ contains at least $\kappa$ vertices.

Assume in the following that $G$ initially is a $\gamma$-expander graph and that the first property of \texttt{XPrune} holds after each call to this procedure. We may make these assumptions since they hold with high probability.

Consider the beginning of some update. If it is the first update then since $G[X] = G$ is a $\gamma$-expander graph, we have in particular that for any $\mathcal C$-respecting cut $(K,X-K)$, the number of edges of $E(G[X])$ crossing $(K,X - K)$ is at least $\gamma\min\{|K|,|X - K|\}\ge\gamma'\min\{|K|,|X-K|\}$. If it is not the first update then since \texttt{XPrune} was executed at the end of the previous update, the number of edges crossing each such cut is at least $\gamma'\min\{|K|,|X-K|\}$.

Updates to $X$ are independent of the sampled edges of $H$ so for any $\mathcal C$-respecting cut $(K,X-K)$, the expected number of edges of $H$ crossing $(K,X - K)$ is at least $p\gamma'\min\{|K|,|X - K|\}$.

Consider a $\mathcal C$-respecting cut in $X$ where the smaller side contains $k$ clusters. The expected number of edges of $H$ crossing the cut is at least $p\gamma'\kappa k = 8c_pk\ln n$. By a Chernoff bound, the probability that the number of edges of $H$ crossing the cut is less than $4c_p k\ln n$ is at most $n^{-c_p k}$. The number of $\mathcal C$-respecting cuts of $X$ where the smaller side contains $k\geq 1$ clusters is less than $n^k$. A union bound over all such cuts and over all $k$ shows that with probability at least $1 - n^{2 - c_p}$, the number of edges of $H$ crossing any $\mathcal C$-respecting cut of $X$ is at least $4c_p\ln n$. In particular, $H[X]$ is connected with probability at least $1 - n^{2 - c_p}$. Picking $c_p$ sufficiently large shows the lemma.
\end{proof}
We pick $\kappa = n^{1/2 - \Theta(\eps)}$ sufficiently small such that $\kappa N = O(n^{1/2 - 4\eps})$.
\begin{Lem}\label{Lem:HPrimeSplitBound}
W.h.p., at the end of each update, every tree of $F(H)$ except one has size $O(n^{1/2 - 4\eps})$.
\end{Lem}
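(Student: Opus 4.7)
The plan is to freeze $H$ at the most recent cleanup update $t_0 \le k$ (the last one at which Phase~1 found $\mathcal{C}'$ empty; take $t_0=0$ if none exists) and then track how far the component structure can drift during the $O(N)$ edge-changes separating $H_{t_0}$ from $H_k$. Every update that fails to empty $\mathcal{C}'$ is charged $\Omega(n^{1/2-4\eps})$ vertices from the output, and the total size of clusters \texttt{XPrune} ever pushes into $\mathcal{C}'$ is $O(\Delta/\gamma)$ w.h.p.\ by its second property, so by the choice of $N$ we have $k-t_0\le N$ w.h.p. Immediately after the cleanup at $t_0$, the components of $H_{t_0}$ are: (a) one giant component on $X(t_0-1)$, connected w.h.p.\ by Lemma~\ref{Lem:SkeletonCertificate}; and (b) an isolated tree $T(C)$ of size $\le 3\kappa$ for each $C \in \mathcal{C}(t_0-1)\setminus\mathcal{C}(X(t_0-1))$. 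Every edge later inserted into $H$ is a replacement edge $f$ or a merging edge $e'$ drawn from $G[X]\subseteq G[X(t_0-1)]$, so a component of $H_k$ that meets $V\setminus X(t_0-1)$ stays inside some type-(b) tree and already has size $\le 3\kappa = O(n^{1/2-4\eps})$. The lemma therefore reduces to showing that at most one $H_k$-component $C\subseteq X(t_0-1)$ has $|C| > T := c'' n^{1/2-4\eps}$ for a suitable constant $c''$.

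\textbf{Cluster-closed cut.} Assume for contradiction two such components $C_1,C_2 \subseteq X(t_0-1)$, with $|C_1|\le |X(t_0-1)|/2$. Let $C_1^+$ be the union of clusters $C'\in\mathcal{C}(t_0-1)$ that meet $C_1$; then $(C_1^+, X(t_0-1)\setminus C_1^+)$ respects $\mathcal{C}(t_0-1)$. Since $H$ loses only $O(N)$ edges between $t_0$ and $k$ (one per $G$-deletion plus $O(1)$ per cluster reorganisation) and each cluster $C'$ split by $C_1$ carries at least one $T(C')$-edge across $(C_1,V\setminus C_1)$ in $H_{t_0}$ that cannot survive in $H_k$, at most $O(N)$ clusters are split, giving $|C_1^+\setminus C_1| = O(\kappa N)$; an analogous argument for clusters meeting both $C_1$ and $C_2$ yields $|C_2 \cap C_1^+| = O(\kappa N)$. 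With parameters chosen so that $\kappa N \ll T$, both sides of the cut have size $\Omega(T)$ and contain $\Omega(T/\kappa) = \Omega(n^{\eps})$ clusters, so the Chernoff/union-bound computation inside the proof of Lemma~\ref{Lem:SkeletonCertificate} forces at least $Q = \Omega(n^{\eps}\log n)$ edges of $H_{t_0}$ across this cluster-respecting cut w.h.p.

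\textbf{Upper bound on $Q$ (the main obstacle).} The technical heart of the proof is a matching $O(N)$ upper bound on $Q$. Any $H_k$-edge crossing $(C_1^+, X(t_0-1)\setminus C_1^+)$ must have an endpoint in $C_1^+\setminus C_1$, because otherwise it would place a $C_1$-vertex and a $(V\setminus C_1)$-vertex in the same $H_k$-component. The $E'$-edges with such an endpoint number $O(|C_1^+\setminus C_1|\cdot p) = O(\kappa N\cdot\Theta_{n^{\eps}}(1/\kappa)) = O(N)$ in expectation and hence $O(N)$ w.h.p.\ by Chernoff; and a current $T(C'')$-edge ($C''\in\mathcal{C}(k)$) that crosses the cut must have been inserted into $H$ during one of the $\le N$ updates, because every original $T(C')$ with $C'\in\mathcal{C}(t_0-1)$ lives entirely inside the cluster-closed set $C_1^+$ or entirely inside its complement. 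Combined with the $O(N)$ deleted edges, this gives $Q\le O(N) = n^{O(\eps)}$. Fixing the $\Theta(\eps)$ constants in $\kappa = n^{1/2-\Theta(\eps)}$ and $\Delta = n^{1/2+\Theta(\eps)}$ so that the exponent constant of $\kappa$ exceeds that of $\Delta$ by a sufficient margin makes $N = o(n^{\eps}/\log n)$, contradicting the lower bound on $Q$. Hence no two large $H_k$-components coexist inside $X(t_0-1)$, which together with the easy case establishes the lemma.
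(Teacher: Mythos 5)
Your proof is correct and rests on the same pillars as the paper's: anchor at the most recent cleanup update $t_0$, exploit that only $O(N)$ deletions (and hence $O(N)$ cluster reorganisations) occur between $t_0$ and $k$, and combine the expander guarantee of \texttt{XPrune} at time $t_0$ with a Chernoff/union bound over cluster-respecting cuts of $X(t_0-1)$. The packaging, however, is genuinely different. The paper works with cuts respecting the \emph{current} clustering $\mathcal C_{j+1}$ (which $F(H)$-trees automatically respect, since $F(H) \supseteq T(C)$ for all $C\in\mathcal C_{j+1}$), relates each such cut back to a $\mathcal C_{t_i}$-respecting cut with only an $O(N)$-cluster discrepancy, and deduces directly that w.h.p.\ every $\mathcal C_{j+1}$-respecting cut with $\Omega(N)$ clusters on the smaller side has at least one crossing $H$-edge; no ``upper bound on $Q$'' is needed because the w.h.p.\ statement is just ``at least one edge crosses.'' You instead work with cuts respecting the \emph{old} clustering $\mathcal C(t_0-1)$, which forces you to enlarge a hypothetical $H_k$-component cut $(C_1, \cdot)$ to a cluster-closed cut $(C_1^+, \cdot)$ and to add the extra technical step (your ``main obstacle'') of upper-bounding $Q$ via the $O(N)$ deletions plus the sampled edges incident to the $O(\kappa N)$ reorganised vertices. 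That step is sound — $C_1^+\setminus C_1$ is contained in a union of reorganised clusters that is determined by the update sequence alone, independent of $E'$, so Chernoff applies — and both routes hinge on the same $O(N)$ discrepancy, just applied at opposite ends of the time window. One thing to make explicit: your contradiction needs the lower bound $\Omega\bigl((T/\kappa)\log n\bigr)$ to dominate the upper bound, which is $O_{n^{\eps}}(N)$ rather than $O(N)$ (the expected $E'$-edge count is $\kappa N p = \Theta_{n^\eps}(N)$), so $\kappa$ must be chosen somewhat smaller than what the bare constraint $\kappa N = O(n^{1/2-4\eps})$ demands; the paper's ``pick $\kappa$ sufficiently small'' provides the needed slack, and your closing sentence about fixing the $\Theta(\eps)$ constants acknowledges this, but it is worth spelling out since the paper's formulation (comparing $\omega(N)$ clusters against a cleaner ``$\ge 1$ crossing edge'' event) sidesteps the polynomial factor entirely.
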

\begin{proof}
Assign numbers $1,2,\ldots$ to the updates in the order they occur. For $i\ge 1$, define $t_i$ such that in the beginning of update $t_i$, $\mathcal C'$ is empty and such that this happened exactly $i-1$ times in previous updates. Note that $t_1 < t_2 <\cdots$ and since $\mathcal C'$ is empty initially, we have $t_1 = 1$. We denote by $X_i$ the set $X$ at the start of update $t_i$.

Note that from the start of Phase $2$ of update $t_i$ until the end of the last update, all inter-cluster edges of $H$ are contained in $G[X_i]$ so by Invariant~\ref{Inv:Clustering}, all trees of $F(H)$ not in $G[X_i]$ have size $O(\kappa)$. Hence, from the end of update $t_i$ until the end of update $t_{i+1}-1$, we only need to show the lemma for trees of $F(H)$ contained in $X_i$. By Invariant~\ref{Inv:Clustering} and Lemma~\ref{Lem:SkeletonCertificate}, we may assume that at the beginning of update $t_i$, each cluster has size $\Theta(\kappa)$.

Consider an update $j$ and pick $i$ such that $t_i\le j < t_{i+1}$. With the same arguments as in the proof of Lemma~\ref{Lem:SkeletonCertificate}, it follows that at the start of update $t_i$, w.h.p., for every $\mathcal C$-respecting cut $(K,X_i-K)$, the number of edges of $G[X_i]$ crossing $(K,X_i-K)$ is at least $\gamma'\min\{|K|,|X_i-K|\}$. Let $\mathcal C_{t_i}$ resp.~$\mathcal C_{j+1}$ be the set $\mathcal C$ at the start of update $t_i$ resp.~$j+1$. Note that $\mathcal C_{j+1}$ is also the set $\mathcal C$ at the end of update $j$.

Now, consider the end of update $j$ and let $(K,X_i-K)$ be a $\mathcal C_{j+1}$-respecting cut where the smaller side contains $k$ clusters. We have $j - t_i + 1 \le t_{i+1} - t_i \le N$. Hence, only $O(N)$ clusters of $\mathcal C_{t_i}$ intersect both sides of $(K,X_i-K)$ since each update changes only $O(1)$ clusters. Let $K'\subseteq K$ be the union of clusters of $\mathcal C_{t_i}\cap\mathcal C_{j+1}$ contained in $K$. Note that $K'$ contains $k - O(N)$ clusters of $\mathcal C_{t_i}$. By the first property of \texttt{XPrune}, the number of edges of $G[X_i]$ that crossed $(K',X_i - K')$ at the start of update $t_i$ was $\gamma'(k - O(N))\Theta(\kappa)$. The number of such edges which have one endpoint in $K'$ and one endpoint in $K$ is $O(|K - K'|) = O(N\kappa)$ so the number of edges of $G[X_i]$ that crossed $(K,X_i - K)$ at the start of update $t_i$ was $\gamma'(k - O(N))\Theta(\kappa)$. Since no more than $N$ edges have been deleted since then, there are  $\gamma'(k - O(N))\Theta(\kappa)$ edges of $G[X_i]$ and hence $\Theta(\ln n)(k - O(N))$ expected number of edges of $H[X_i]$ crossing $(K,X_i-K)$ at the end of update $j$.

Using Chernoff bounds as in the proof of Lemma~\ref{Lem:SkeletonCertificate}, it follows that at the end of update $j$, w.h.p., for every $\mathcal C_{j+1}$-respecting cut where the smaller side contains $\Omega(N)$ clusters, there is at least one edge of $H$ crossing this cut.

If at the end of update $j$ there were two trees in $F(H)$ of size $\omega(\kappa N)$, there would be a $\mathcal C_{j+1}$-respecting cut where the smaller side contains $\omega(N)$ clusters and where no edge of $H$ crosses this cut which by the above only occurs with low probability. The lemma now follows since $\kappa N = O(n^{1/2 - 4\eps})$.
\end{proof}
We can now show that the size bound in the second requirement of Theorem~\ref{Thm:DisconnectExpanderGraph} holds.
\begin{Lem}
W.h.p., for each update, the set output in Phase $2$ has size $O(n^{1/2 - 4\eps})$.
\end{Lem}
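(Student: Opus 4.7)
The plan is to decompose the output as $V_e = \bigcup_{C\in\mathcal C_e'} C \,\cup\, W_e$ and bound the two pieces separately, reducing the lemma to a bookkeeping exercise on top of the preceding Lemma~\ref{Lem:HPrimeSplitBound}.

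For the cluster piece, I would give a deterministic bound. By the stopping rule used to select $\mathcal C_e'$ from $\mathcal C'$, the total vertex size of $\mathcal C_e'$ is at most $n^{1/2-4\eps}+3\kappa$ (the overshoot comes from adding one last cluster whose size is bounded by $3\kappa$ via Invariant~\ref{Inv:Clustering}). Since $\kappa = n^{1/2 - \Theta(\eps)}$ was chosen with hidden constant large enough that $\kappa N = O(n^{1/2-4\eps})$, in particular $\kappa = O(n^{1/2-4\eps})$, so this piece contributes $O(n^{1/2-4\eps})$ vertices.

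For the $W_e$ piece, I would appeal directly to Lemma~\ref{Lem:HPrimeSplitBound}: w.h.p., at the end of each update, every tree of $F(H)$ except at most one has size $O(n^{1/2-4\eps})$. By construction $W_e$ is either empty or the smaller of the two trees $T_u, T_v$ of $F(H)$ produced at the end of the cluster-maintenance step; at most one of these can coincide with the exceptional large tree, so the smaller one has size $O(n^{1/2-4\eps})$ w.h.p. Combining the two estimates via a union bound yields $|V_e| = O(n^{1/2-4\eps})$ w.h.p.

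There is no real obstacle here: all the probabilistic content lives in Lemma~\ref{Lem:HPrimeSplitBound}, and the present lemma is essentially a sanity check that the quantities chosen in the update procedure (the $n^{1/2-4\eps}$ budget for $\mathcal C_e'$, the smaller-side convention for $W_e$, and the choice of $\kappa$) fit together to give the claimed output size. The only point that requires a moment's attention is that the $+3\kappa$ slack does not spoil the bound, which is precisely why the constants in $\kappa = n^{1/2-\Theta(\eps)}$ were set as they were.
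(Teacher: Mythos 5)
Your proof is correct and follows essentially the same route as the paper: decompose $V_e$ into the $\mathcal C_e'$-clusters (bounded deterministically via the stopping rule and $\kappa=O(n^{1/2-4\eps})$) and $W_e$ (bounded via Lemma~\ref{Lem:HPrimeSplitBound} and the smaller-side choice). The one small point the paper makes explicit that you leave implicit is that $F(H)$ does not change after $W_e$ is formed (i.e.\ \texttt{XPrune} does not touch $F(H)$), which is what lets you apply Lemma~\ref{Lem:HPrimeSplitBound}---an end-of-update statement---to the forest as it looked when $W_e$ was chosen.
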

\begin{proof}
By Lemma~\ref{Lem:SkeletonCertificate}, w.h.p., sets are only output in Phase $2$. Consider an execution of this phase when an edge $e$ is deleted. By Invariant~\ref{Inv:Clustering}, the size of each subset $\cup_{C\in\mathcal C_e'}C$ is $O(n^{1/2 - 4\eps})$. Note that when $W_e$ is formed, $F(H)$ does not change for the rest of the update. Hence, at the end of the update, if $W_e$ is not empty, it must be the vertex set of some tree of $F(H)$ and because of the way we choose $W_e$, this cannot be the tree with the most vertices. Lemma~\ref{Lem:HPrimeSplitBound} then implies that $|W_e| = O(n^{1/2 - 4\eps})$.
\end{proof}

\subsection{Implementation and performance}\label{subsec:DecExpImpl}
We now give the implementation details for the data structure of this section and analyze its preprocessing and update time. The implementation and analysis of the performance of $\texttt{XPrune}$ is delayed until Section~\ref{sec:XPrune}.

By Lemma~\ref{Lem:FewNonTreeEdges}, the preprocessing can be done in $\tilde O(n)$ worst-case time. We shall maintain $\mathcal C'$ as a linked list so that each insertion/deletion of a cluster in this list takes $O(1)$ time. In the following, we focus on an update consisting of the deletion of an edge $e$.

\paragraph{Phase $1$:}
Observe that at all times, the edges of $H$ that do not belong to $F(H)$ must all belong to sampled set $E'$ and w.h.p., $|E'| = O(np) = O_{n^\eps}(n/\kappa)$. Thus, by Lemma~\ref{Lem:FewNonTreeEdges}, w.h.p.~each update to $\FFE{H}$ can be done in $O_{n^{\eps}}(\sqrt{n/\kappa}) = O_{n^\eps}(n^{1/4})$ worst-case time.

By the second property of \texttt{XPrune}, w.h.p., the total number of vertices in clusters of $\mathcal C - \mathcal C(X)$ is $O(\Delta/\gamma)$. Since updates to $X$ are independent of $E'$, w.h.p., the expected number of edges of $E'$ incident to these clusters is $O(p\Delta/\gamma)$. By a Chernoff bound, w.h.p.~the actual number of such edges is $\tilde O(p\Delta/\gamma)$. For each cluster $C\in\mathcal C$, we shall maintain a linked list of the edges of $E'$ incident to $C$; this can easily be done in $O(\kappa)$ time per update since a single update only affects $O(1)$ clusters.

By the first property of \texttt{XPrune}, w.h.p., in every update, each cluster of $\mathcal C(X)$ has size $\Theta(\kappa)$. Hence, w.h.p., for any execution of Phase $1$ where $\mathcal C'$ is empty, the number of clusters of $\mathcal C - \mathcal C(X)$ that have not been processed in a previous such execution is $O(\Delta/(\gamma\kappa) + N)$. If we use the edge-lists associated with clusters, we can identify the edges of $E'$ incident to clusters of $\mathcal C - \mathcal C(X)$ in worst-case time $\tilde O(\Delta/(\gamma\kappa) + N + p\Delta/\gamma) = O_{n^\eps}(\Delta/\sqrt n)$ with high probability. By the above, w.h.p.~the total worst-case time for updating $F(H)$ is $\tilde O((p\Delta/\gamma)\sqrt{np}) = O_{n^\eps}(\Delta/n^{1/4})$.

In order to detect if $F(H)$ contains a tree spanning $X$, we shall maintain $|\mathcal C - \mathcal C(X)|$ as well as the number of trees in $F(H)$. This can easily be done within the above time bounds. We observe that after the update of $F(H)$, the number of trees in $F(H)$ is equal to $|\mathcal C - \mathcal C(X)| + 1$ iff $F(H)$ contains a tree spanning $X$. Hence, detecting whether the latter holds takes constant time. By Lemma~\ref{Lem:SkeletonCertificate}, we can afford to spend linear time to output $V$ since this case occurs with low probability.

Combining all of the above, it follows that w.h.p., Phase $1$ can be executed in $O_{n^{\eps}}(\Delta/n^{1/4})$ worst-case time.

\paragraph{Phase $2$:}
Each execution of the procedure in Section~\ref{subsec:UpdateClusters} can easily be done in $O(\kappa)$ time plus the time to execute an operation in $F(H)$ where an operation is either detecting a cycle when inserting an edge in $F(H)$ or the operation \texttt{connect} in $\FFE{H}$. The latter takes $\tilde O(\sqrt{np}) = O_{n^\eps}(n^{1/4})$ worst-case time. With the notation in Section~\ref{subsec:UpdateClusters}, if adding $e'$ creates a cycle in $F(H)$, we can identify the inter-cluster edge on the cycle incident to $C''$ in $O(\log n)$ time by maintaining a top tree for $F(H)$ which supports the operation of finding the first inter-cluster edge on a path between two query vertices; inter-cluster edges of $F(H)$ are marked in the top tree and finding the nearest marked node is an operation that such a data structure supports. A top tree can also be used to maintain the vertex size of each tree in $F(H)$ in $O(\log n)$ time. By Lemma~\ref{Lem:HPrimeSplitBound}, w.h.p.~the set $W_e$ can thus be formed in $O(n^{1/2 - 4\eps})$ time.

We can easily maintain the size of each cluster within the time bounds above so extracting set $\mathcal C_e'$ from $\mathcal C'$ can be done in $O(n^{1/2 - 4\eps} + \kappa) = O(n^{1/2-4\eps})$ worst-case time. It now follows that w.h.p., Phase $2$ can be executed in $O(n^{1/2 - 4\eps}) + O_{n^{\eps}}(n^{1/4})$ worst-case time, excluding the time for \texttt{XPrune}. Total time for both phases is thus $O(n^{1/2 - 4\eps}) + O_{n^{\eps}}(\Delta/n^{1/4})$ which is within the time bound of Theorem~\ref{Thm:DisconnectExpanderGraph}.


\section{Low-conductance Cuts and Sparsification}\label{sec:LowCondCutsSparsification}
In this section, we show Corollary~\ref{Cor:LowConductance} which will be needed in the next section. It shows a result somewhat similar to Karger~\cite{Karger} but for conductance instead of cut values. The corollary is a bit technical but it roughly implies that in order to find low-conductance cuts in a graph, it suffices to look for them in a sparse sampled representative of this graph. First we need the following lemma.
\begin{Lem}\label{Lem:LowConductance}
Given $c > 0$, $\kappa\geq 1$, and $\rho\leq 1$, let $G_{\mathcal C} = (\mathcal C, E_{\mathcal C})$ be an $n$-vertex multigraph with a finite number of edges and degree at least $\kappa\rho$. Let $G_{\mathcal C}' = (\mathcal C, E_{\mathcal C}')$ be the multigraph obtained from $G_{\mathcal C}$ by sampling each edge independently with probability $p = \min\{1, (12c + 24)(1/(\rho^2\kappa))\ln n\}$. Then with probability $1 - O(1/n^c)$, for every cut $(S,\mathcal C - S)$ in $\mathcal C$,
\begin{enumerate}
\item if $\Phi_{G_{\mathcal C}}(S)\geq \rho$ then $\Phi_{G_{\mathcal C}'}(S)$ deviates from $\Phi_{G_{\mathcal C}}(S)$ by a factor of at most $4$, and
\item if $\Phi_{G_{\mathcal C}}(S) < \rho$ then $\Phi_{G_{\mathcal C}'}(S) \leq 6\rho$.
\end{enumerate}
\end{Lem}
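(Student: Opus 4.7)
My plan is to fix a cut $(S, \mathcal C - S)$, bound each of $\delta_{G_{\mathcal C}'}(S)$, $\mbox{Vol}_{G_{\mathcal C}'}(S)$, and $\mbox{Vol}_{G_{\mathcal C}'}(\mathcal C - S)$ via multiplicative Chernoff bounds, and then take a union bound over all cuts. Each of these three quantities is a weighted sum of independent Bernoulli$(p)$ indicators, one per edge of $G_{\mathcal C}$: for the cut value the weights are in $\{0,1\}$; for a volume the weights lie in $\{0,1,2\}$ since an edge with both endpoints on a side contributes $2$ while a crossing edge contributes $1$. Crucially, the minimum-degree hypothesis $\deg_{G_{\mathcal C}}(v) \geq \kappa\rho$ yields $T := \min\{\mbox{Vol}_{G_{\mathcal C}}(S), \mbox{Vol}_{G_{\mathcal C}}(\mathcal C - S)\} \geq \kappa\rho r$, where $r := \min\{|S|, |\mathcal C - S|\}$.

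I will parameterise cuts by $r$ and observe that with $p = (12c+24)(\ln n)/(\rho^2\kappa)$ the expected ``mass'' satisfies $pT \geq (12c+24)\,r\ln n / \rho \geq (12c+24)\,r\ln n$. Thus for any individual cut, each of the three Bernoulli sums whose mean is nonzero has mean $\Omega(cr\ln n)$, so a multiplicative Chernoff bound with constant relative error delivers deviation failure probability $n^{-\Omega(cr)}$ per cut. Since the number of cuts with $\min\{|S|,|\mathcal C - S|\} = r$ is at most $2\binom{n}{r}\leq 2n^{r}$, summing the union bound over $r = 1,\ldots,\lfloor n/2\rfloor$ yields the claimed $O(1/n^{c})$ global failure probability once the constant in front of $\ln n$ inside $p$ is chosen large enough (which is precisely what the factor $12c + 24$ buys).

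In case~1 ($\Phi_{G_{\mathcal C}}(S)\geq\rho$) we also have $E[\delta_{G_{\mathcal C}'}(S)] = p\delta_{G_{\mathcal C}}(S) \geq p\rho T = \Omega(cr\ln n)$, so a two-sided multiplicative Chernoff bound (taking, say, relative error $1/2$ on the cut value and $1/4$ on the two volumes) simultaneously pins all three quantities to within constant factors of $p$ times their unsampled counterparts. The resulting ratio $\Phi_{G_{\mathcal C}'}(S) = \delta_{G_{\mathcal C}'}(S)/\min\{\mbox{Vol}_{G_{\mathcal C}'}(S), \mbox{Vol}_{G_{\mathcal C}'}(\mathcal C - S)\}$ lies in $[(1/2)/(5/4),\,(3/2)/(3/4)]\cdot\Phi_{G_{\mathcal C}}(S) = [2/5,\,2]\cdot\Phi_{G_{\mathcal C}}(S)$, comfortably inside the factor-$4$ window.

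The main obstacle is case~2 ($\Phi_{G_{\mathcal C}}(S)<\rho$), where $\delta_{G_{\mathcal C}}(S)$ may be tiny and so $E[\delta_{G_{\mathcal C}'}(S)]$ provides no cushion for a two-sided Chernoff bound. Here I will use stochastic dominance: $\delta_{G_{\mathcal C}'}(S)$ is dominated by a Binomial with $\lfloor\rho T\rfloor$ trials and success probability $p$, whose mean $p\rho T = \Omega(cr\ln n)$ is again large. Applying the Poisson-form upper tail $P(X\geq k) \leq (e\,p\,\delta_{G_{\mathcal C}}(S)/k)^{k}$ at $k = 5p\rho T$ gives $(e/5)^{5p\rho T}\leq n^{-\Omega(cr)}$, which survives the union bound. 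Combined with volume concentration (tightened if necessary so that $\mbox{Vol}_{G_{\mathcal C}'}(S),\mbox{Vol}_{G_{\mathcal C}'}(\mathcal C - S)\geq(5/6)\,p\,\mbox{Vol}_{G_{\mathcal C}}(\cdot)$), this yields $\Phi_{G_{\mathcal C}'}(S)\leq 5p\rho T/((5/6)pT) = 6\rho$, as required.
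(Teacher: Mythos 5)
Your proposal is correct and follows essentially the same route as the paper: you apply multiplicative Chernoff bounds to the cut value and both volumes, use the minimum-degree hypothesis to lower-bound the expected mass by $\Omega(r\ln n)$ so each cut's failure probability is $n^{-\Omega(r)}$, union-bound over the $\le n^{r}$ cuts with smaller side of size $r$, and split into the same two conductance cases (two-sided concentration when $\Phi\ge\rho$, a one-sided upper tail on the sampled cut plus volume concentration when $\Phi<\rho$). The only cosmetic differences are your exact Chernoff constants and phrasing the low-conductance tail as stochastic dominance by a larger Binomial rather than the paper's direct $(e^\delta/(1+\delta)^{1+\delta})^{\mu_\delta}$ calculation.
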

\begin{proof}
We may assume that $p < 1$. Let $c' = 12c + 24$ so that $p = c'(1/(\rho^2\kappa))\ln n$. Let positive integer $k\leq \lfloor n/2\rfloor$ be given. Consider a cut $(S,\mathcal C - S)$ in $\mathcal C$ where the smaller side has size $k$. Let $S_v\in\{S,\mathcal C - S\}$ resp.~$S_v'\in\{S,\mathcal C - S\}$ be a side of the cut with minimum volume in $G_{\mathcal C}$ resp.~$G_{\mathcal C}'$. Let $\mu_{\delta} = E[\delta_{G_{\mathcal C}'}(S_v)] = p\delta_{G_{\mathcal C}}(S_v)$, $\mu_v = E[\mbox{Vol}_{G_{\mathcal C}'}(S_v)] = p\mbox{Vol}_{G_{\mathcal C}}(S_v)$, and $\mu_v' = E[\mbox{Vol}_{G_{\mathcal C}'}(S_v')] = p\mbox{Vol}_{G_{\mathcal C}}(S_v')$. By the degree lower bound,
\[
  \mu_v \geq pk\kappa\rho = (c'k/\rho)\ln n.
\]

Assume first that $\Phi_{G_{\mathcal C}}(S)\geq \rho$. Since $\mu_v \geq c'k\ln n$, a Chernoff bound implies that the probability that $\mbox{Vol}_{G_{\mathcal C}'}(S_v)$ deviates by at most a factor of $2$ from $\mu_v$ is at least $1 - 2e^{-\mu_v/{12}}\geq 1 - 2n^{-c'k/12}$. Similarly, $\mbox{Vol}_{G_{\mathcal C}'}(S_v')$ deviates by at most a factor of $2$ from $\mu_v'$ with probability at least $1 - 2n^{-c'k/12}$.

We have
\[
  \mu_{\delta}  = p(\mbox{Vol}_{G_{\mathcal C}}(S_v)\Phi_{G_{\mathcal C}}(S_v)) = \mu_v\Phi_{G_{\mathcal C}}(S_v) = \mu_v\Phi_{G_{\mathcal C}}(S)\geq c'k\ln n.
\]
By a Chernoff bound, the probability that $\delta_{G_{\mathcal C}'}(S_v)$ deviates from $\mu_{\delta}$ by at most a factor of $2$ is at least $1 - 2n^{-c'k/12}$. A union bound then implies that all three Chernoff bounds hold with probability at least $1 - 6n^{-c'k/12}$, in which case
\begin{align*}
  \frac 1 4 \Phi_{G_{\mathcal C}'}(S) & = \frac{\delta_{G_{\mathcal C}'}(S_v)}{4\mbox{Vol}_{G_{\mathcal C}'}(S_v')}
                      \leq \frac{2p\delta_{G_{\mathcal C}}(S_v)}{4\frac p 2 \mbox{Vol}_{G_{\mathcal C}}(S_v')}
                      \leq \frac{2p\delta_{G_{\mathcal C}}(S_v)}{4\frac p 2 \mbox{Vol}_{G_{\mathcal C}}(S_v)}
                      = \Phi_{G_{\mathcal C}}(S)
                      \leq \frac{\frac 2 p\cdot\delta_{G_{\mathcal C}'}(S_v)}{\frac 1{2p} \mbox{Vol}_{G_{\mathcal C}'}(S_v)}\\
                      & \leq \frac{4\delta_{G_{\mathcal C}'}(S_v')}{\mbox{Vol}_{G_{\mathcal C}'}(S_v')}
                      = 4\Phi_{G_{\mathcal C}'}(S).
\end{align*}
Hence, if $\Phi_{G_{\mathcal C}}(S) \geq \rho$ then the first condition of the lemma holds for cut $(S, \mathcal C - S)$ with probability at least $1 - 6n^{-c'k/12}$.

Now assume that $\Phi_{G_{\mathcal C}}(S) < \rho$. Using the observations above and the fact that $\mu_v\leq \mu_v'$, we can bound the probability that $\Phi_{G_{\mathcal C}'}(S)$ is greater than $6\rho$ by
\begin{align*}
\Pr(\Phi_{G_{\mathcal C}'}(S) > 6\rho) & = \Pr(\delta_{G_{\mathcal C}'}(S_v) > 6\mbox{Vol}_{G_{\mathcal C}'}(S_v')\rho)\\
                                 & = \Pr(\delta_{G_{\mathcal C}'}(S_v) > 6\mbox{Vol}_{G_{\mathcal C}'}(S_v')\rho\land \mu_v'\leq 2\mbox{Vol}_{G_{\mathcal C}'}(S_v')) +{}\\
                                 & \phantom{{} = {}} \Pr(\delta_{G_{\mathcal C}'}(S_v) > 6\mbox{Vol}_{G_{\mathcal C}'}(S_v')\rho\land \mu_v' > 2\mbox{Vol}_{G_{\mathcal C}'}(S_v'))\\
                                 & \leq \Pr(\delta_{G_{\mathcal C}'}(S_v) > 6\mbox{Vol}_{G_{\mathcal C}'}(S_v')\rho\land \mu_v\leq 2\mbox{Vol}_{G_{\mathcal C}'}(S_v')) + \Pr(\mu_v' > 2\mbox{Vol}_{G_{\mathcal C}'}(S_v'))\\
                                 & \leq \Pr(\delta_{G_{\mathcal C}'}(S_v) > 3\mu_v\rho\land \mu_v\leq 2\mbox{Vol}_{G_{\mathcal C}'}(S_v')) + 2n^{-c'k/12}\\
                                 & \leq \Pr(\delta_{G_{\mathcal C}'}(S_v) > 3\mu_v\rho) + 2n^{-c'k/12}.
\end{align*}

We will use a Chernoff bound to show that $\Pr(\delta_{G_{\mathcal C}'}(S_v) > 3\mu_v\rho) = O(n^{-c'k/12})$. Pick real number $\delta$ such that $3\mu_v\rho = (1+\delta)\mu_{\delta}$. Since $\mu_{\delta}/\mu_v = \Phi_{G_{\mathcal C}}(S) < \rho$, we have $1 + \delta > 3$ and hence $\delta > 2$. Furthermore, it follows from the above that $(1+\delta)\mu_{\delta} = 3\mu_v\rho\geq 3c'k\ln n$ and a Chernoff bound now shows that
\begin{align*}
\Pr(\delta_{G_{\mathcal C}'}(S) > 3\mu_v\rho) & = \Pr(\delta_{G_{\mathcal C}'}(S) > (1+\delta)\mu_{\delta})
                                          < \left(\frac{e^{\delta}}{(1+\delta)^{1+\delta}}\right)^{\mu_{\delta}}
                                          < \left(\frac{e}{3}\right)^{(1+\delta)\mu_{\delta}}
                                          \leq \left(\frac{e}{3}\right)^{3c'k\ln n}\\
                                        & < n^{-c'k/12},
\end{align*}
as desired. We conclude that $\Pr(\Phi_{G_C'}(S) > 6\rho)\leq 3n^{-c'k/12}$.

Combining all of the above, it follows that with probability at least $1 - 6n^{-c'k/12}$, $(S,\mathcal C - S)$ satisfies the first condition of the lemma when $\Phi_{G_{\mathcal C}}(S)\geq \rho$ and the second condition when $\Phi_{G_{\mathcal C}}(S) < \rho$. The number of cuts of $\mathcal C$ where the smaller side has size $k$ is at most $n^k$. By a union bound, the probability that the conditions of the lemma hold for all such cuts is at least $1 - 6n^{k - c'k/12} = 1 - 6n^{k(1 - (12c+24)/12)} = 1 - 6n^{k(-c-1)}\geq 1 - 6n^{-c-1}$. The lemma now follows by a union bound over all $k\leq\lfloor n/2\rfloor$.
\end{proof}

\begin{Cor}\label{Cor:LowConductance}
Let $G = (V,E)$ be an $n$-vertex graph of max degree $d$, let $c > 0$ be a constant and let $\kappa\geq 1$, and $\rho\leq 1$ be given. Let $\mathcal C$ be a clustering of $V$ such that for each $C\in\mathcal C$, $\kappa\leq |C| \leq 3\kappa$ and $G[C]$ is connected. Let $G_{\mathcal C}$ be the multigraph $(\mathcal C, E_{\mathcal C})$ where $E_{\mathcal C}$ is the set of edges of $E$ between distinct clusters of $\mathcal C$ and assume that $G_{\mathcal C}$ has min degree at least $\kappa\rho$. Let $G_{\mathcal C}' = (\mathcal C, E_{\mathcal C}')$ be the multigraph obtained from $G_{\mathcal C}$ by sampling each edge independently with probability $p = \min\{1, (12c + 24)(1/(\rho^2\kappa))\ln n\}$. Then with probability $1 - O(1/|\mathcal C|^c)$, the following holds for every cut $(S,\mathcal C - S)$ in $\mathcal C$:
\begin{enumerate}
\item if $\Phi_{G_{\mathcal C}'}(S) \leq \rho'$ then $\Phi_G(\cup_{C\in S}C) \leq 4\rho'$ for any $\rho'\geq \rho/4$, and
\item if $\Phi_{G_{\mathcal C}'}(S) > 6\rho$ then $\Phi_G(\cup_{C\in S}C) \geq \rho^3/(9d^3)$.
\end{enumerate}
\end{Cor}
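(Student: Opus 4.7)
The plan is to apply Lemma~\ref{Lem:LowConductance} to $G_{\mathcal C}$ directly (the hypotheses match: $G_{\mathcal C}$ is a finite multigraph with min degree $\kappa\rho$, and the sampling probability $p$ is the same) and then translate the conclusions from cuts in $\mathcal C$ back to cuts in $G$ using the cluster structure. The translation hinges on two observations. First, if $U=\cup_{C\in S}C$, then $\delta_G(U)=\delta_{G_{\mathcal C}}(S)$ since the edges of $G$ crossing $(U,V-U)$ are precisely the inter-cluster edges between $S$ and $\mathcal C-S$. Second, because every cluster satisfies $\kappa\le |C|\le 3\kappa$ and $G$ has max degree $d$, a straightforward count gives $\text{Vol}_{G_{\mathcal C}}(S)\le \text{Vol}_G(U)\le 3d\kappa|S|$, while the min-degree assumption on $G_{\mathcal C}$ yields $\text{Vol}_{G_{\mathcal C}}(S)\ge \tfrac{1}{2}\kappa\rho|S|$. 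Combining these, $\text{Vol}_G(U)$ and $\text{Vol}_{G_{\mathcal C}}(S)$ agree up to a multiplicative factor of $O(d/\rho)$, and likewise for the complementary side.

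For the first conclusion, I would assume $\Phi_{G_{\mathcal C}'}(S)\le \rho'$ with $\rho'\ge \rho/4$ and do a case split on $\Phi_{G_{\mathcal C}}(S)$. If $\Phi_{G_{\mathcal C}}(S)<\rho$, then $\Phi_{G_{\mathcal C}}(S)<\rho\le 4\rho'$ trivially. Otherwise $\Phi_{G_{\mathcal C}}(S)\ge\rho$, and part~1 of Lemma~\ref{Lem:LowConductance} gives $\Phi_{G_{\mathcal C}}(S)\le 4\Phi_{G_{\mathcal C}'}(S)\le 4\rho'$. Either way $\Phi_{G_{\mathcal C}}(S)\le 4\rho'$, and since $\delta_G(U)=\delta_{G_{\mathcal C}}(S)$ while $\text{Vol}_G\ge \text{Vol}_{G_{\mathcal C}}$ on both sides, we immediately get $\Phi_G(U)\le \Phi_{G_{\mathcal C}}(S)\le 4\rho'$.

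For the second conclusion, I would assume $\Phi_{G_{\mathcal C}'}(S)>6\rho$ and use the contrapositive of part~2 of Lemma~\ref{Lem:LowConductance}: this forces $\Phi_{G_{\mathcal C}}(S)\ge \rho$, i.e.\ $\delta_{G_{\mathcal C}}(S)\ge \rho\cdot\min\{\text{Vol}_{G_{\mathcal C}}(S),\text{Vol}_{G_{\mathcal C}}(\mathcal C-S)\}$. Using the volume comparison from the first paragraph, $\min\{\text{Vol}_G(U),\text{Vol}_G(V-U)\}\le O(d/\rho)\cdot \min\{\text{Vol}_{G_{\mathcal C}}(S),\text{Vol}_{G_{\mathcal C}}(\mathcal C-S)\}$, and substituting into the definition of $\Phi_G(U)$ yields a lower bound of the form $\Omega(\rho^2/d)$, which is comfortably at least $\rho^3/(9d^3)$ for $\rho\le 1$ and $d\ge 1$. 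The probability bound follows from that of Lemma~\ref{Lem:LowConductance}, together with the fact that $|\mathcal C|$ plays the role of the vertex count.

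The only real subtlety is bookkeeping the volume comparison cleanly (since $\text{Vol}$ is defined as the number of incident edges rather than the sum of degrees, small factor-2 losses arise, and one must be careful that the smaller-volume side in $G_{\mathcal C}$ may not coincide with the smaller-volume side of $U$ in $G$); this is what forces the constants in the statement to be somewhat loose rather than the tighter $\rho^2/d$ one might expect, but it does not affect the structure of the argument.
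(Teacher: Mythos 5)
Your proof is correct and follows the paper's approach: apply Lemma~\ref{Lem:LowConductance} to $G_{\mathcal C}$, then translate to $G$ via $\delta_G(\cup_{C\in S}C)=\delta_{G_{\mathcal C}}(S)$ and the volume comparisons coming from the cluster-size and degree bounds. Your treatment of the second conclusion is in fact slightly cleaner than the paper's: by bounding $\min\{\mathrm{Vol}_G(U),\mathrm{Vol}_G(V-U)\}$ directly by $O(d/\rho)\cdot\min\{\mathrm{Vol}_{G_{\mathcal C}}(S),\mathrm{Vol}_{G_{\mathcal C}}(\mathcal C-S)\}$ you avoid the case where $G$ and $G_{\mathcal C}$ disagree on which side of the cut is smaller, whereas the paper fixes the w.l.o.g.\ on the $G$ side and pays an extra $\rho/d^2$ factor to resolve the mismatch, ending at the stated $\rho^3/(9d^3)$ rather than your tighter $\Omega(\rho^2/d)$.
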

\begin{proof}
Let $\rho'\geq \rho/4$ be given. By Lemma~\ref{Lem:LowConductance}, with probability $1 - O(1/|\mathcal C|^c)$, for every cut $(S,\mathcal C - S)$ in $\mathcal C$, if $\Phi_{G_\mathcal C'}(S) \leq \rho'$ then $\Phi_{G_{\mathcal C}}(S) \leq 4\rho'$ and if $\Phi_{G_{\mathcal C}'}(S) > 6\rho$ then $\Phi_{G_{\mathcal C}}(S)\geq \rho$. Assume that this property holds in the following.

Let $(S,\mathcal C - S)$ be a cut in $\mathcal C$ such that $\Phi_{G_{\mathcal C}'}(S) \leq \rho'$ and let $S_G = \cup_{C\in S}C$. Assume w.l.o.g.~that $\mbox{Vol}_G(S_G)\leq\mbox{Vol}_G(V - S_G)$. The first part of the corollary follows from
\[
  \Phi_G(S_G) = \frac{\delta_G(S_G)}{\mbox{Vol}_G(S_G)}
               = \frac{\delta_{G_{\mathcal C}}(S)}{\mbox{Vol}_G(S_G)}
               \leq \frac{\delta_{G_{\mathcal C}}(S)}{\mbox{Vol}_{G_{\mathcal C}}(S)}
               \leq \Phi_{G_{\mathcal C}}(S)
               \leq 4\rho'.
\]

For the second part, assume instead that $\Phi_{G_{\mathcal C}'}(S) > 6\rho$. Since each cluster is incident to no more than $3d\kappa$ edges of $E$ and since it has degree at least $\kappa\rho$ when viewed as a vertex in $G_{\mathcal C}$, we get
\[
  \mbox{Vol}_G(S_G)\leq |S|\cdot 3d\kappa\leq\frac{\mbox{Vol}_{G_{\mathcal C}}(S)}{\kappa\rho}\cdot 3d\kappa = \frac{3d\mbox{Vol}_{G_{\mathcal C}}(S)}{\rho},
\]
and similarly $\mbox{Vol}_G(V - S_G)\le 3d\mbox{Vol}_{G_{\mathcal C}}(\mathcal C - S)/\rho$. Since $G[C]$ is connected for each $C\in\mathcal C$, each vertex of $G$ has degree at least $1$ so
\[
  \mbox{Vol}_{G_\mathcal C}(S)\le d|S_G|\le d\mbox{Vol}_G(S_G)\le d\mbox{Vol}_G(V - S_G)\le \frac{3d^2\mbox{Vol}_{G_{\mathcal C}}(\mathcal C - S)}{\rho}.
\]
It follows that $\min\{\mbox{Vol}_{G_\mathcal C}(S),\mbox{Vol}_{G_{\mathcal C}}(\mathcal C - S)\}\ge \rho\mbox{Vol}_{G_{\mathcal C}}(S)/(3d^2)$ and hence,
\[
  \Phi_G(S_G) = \frac{\delta_{G_{\mathcal C}}(S)}{\mbox{Vol}_G(S_G)}
              \geq\frac{\rho\delta_{G_{\mathcal C}}(S)}{3d\mbox{Vol}_{G_{\mathcal C}}(S)}
              \geq \frac{\rho^2\Phi_{G_{\mathcal C}}(S)}{9d^3}
              \geq \frac{\rho^3}{9d^3},
\]
as desired.
\end{proof}

\section{The \texttt{XPrune} Procedure}\label{sec:XPrune}
In this section, we present the procedure \texttt{XPrune} which we used as a black box in Section~\ref{sec:DecDSLowCondCuts}. It makes use of a new dynamic version of the procedure \texttt{Nibble} of Spielman and Teng~\cite{SpielmanTeng} so before moving on, we will introduce some notation used in their paper as well as the procedure \texttt{Nibble}.

When we refer to vectors in the following, we assume that each of them has an entry for each vertex in a graph that should be clear from context. We denote by $d(S)$ the sum of degrees of vertices in a subset $S$ and we write $d(v)$ instead of $d(\{v\})$ for a vertex $v$. Let $A$ be the adjacency matrix for the graph, let $D$ be the diagonal matrix where entry $(i,i)$ is the degree of the $i$th vertex, and let $I$ be the identity matrix of the same dimensions as $A$ and $D$. We define the matrix $P$ by $P = (AD^{-1} + I)/2$.

For a graph $H$ and for a vertex $s\in V(H)$, let $\chi_s$ be the vector with an entry for each vertex in $V(H)$ where $\chi_s(s) = 1$ and $\chi_s(v) = 0$ for all $v\neq s$. For a vector $p$ and for $\varepsilon > 0$, define the truncation operation $[p]_\varepsilon$ by
\[
  [p]_\varepsilon(v) = \left\{\begin{array}{ll} p(v) & \mbox{if } p(v)\geq 2\varepsilon d(v),\\
                                           0    & \mbox{otherwise}.
                         \end{array}\right.
\]

\subsection{The \texttt{Nibble} procedure}
Pseudocode for \texttt{Nibble}$(H,s,\theta,b)$ can be seen in Figure~\ref{fig:Nibble}.
\begin{figure}
\begin{tabbing}
d\=dd\=\quad\=\quad\=\quad\=\quad\=\quad\=\quad\=\quad\=\quad\=\quad\=\quad\=\quad\=\kill
\>\texttt{Procedure} \texttt{Nibble}$(H,s,\theta,b)$\\\\
\>1. \>\>set $\tilde p_0 \leftarrow \chi_s$\\
\>2. \>\>set $t_0 \leftarrow 49\ln(|E(H)|e^4)/\theta^2$ and $\epsilon_b \leftarrow \theta/(56\ln(|E(H)|e^4)t_02^b$)\\
\>3. \>\>for $t\leftarrow 1$ to $t_0$\\
\>4. \>\>\>set $\tilde p_t \leftarrow [P\tilde p_{t-1}]_{\epsilon_b}$\\
\>5. \>\>\>compute a permutation $\tilde\pi_t$ of $V(H)$ such that for all $i$, $\tilde p_t(\tilde\pi_t(i))\geq\tilde p_t(\tilde\pi_t(i+1))$\\
\>6. \>\>\>if there exists a $\tilde j$ such that\\
\>7. \>\>\>\>\textbullet\> $\Phi_{V(H)}(\tilde\pi_t(\{1,\ldots,\tilde j\}))\leq\theta$,\\
\>8. \>\>\>\>\textbullet\> $\tilde p_t(\tilde\pi_t(\tilde j))\geq 5\theta/(392(\ln(|E(H)|e^4))\mbox{Vol}_{V(H)}(\tilde\pi_t(\{1,\ldots,\tilde j\})))$, and\\
\>9. \>\>\>\>\textbullet\> $\frac 5 6\mbox{Vol}_{V(H)}\geq\mbox{Vol}_{V(H)}(\tilde\pi_t(\{1,\ldots,\tilde j\}))\geq\frac 5 7 2^{b-1}$,\\
\>10.\>\>\>then output $C \leftarrow \tilde\pi_t(\{1,\ldots,\tilde j\})$ and halt\\
\>11.\>\>return \emph{failed}
\end{tabbing}
\caption{Pseudocode for procedure \texttt{Nibble}. It is assumed that vertices in $V(H)$ are indexed from $1$ to $|V(H)|$.}\label{fig:Nibble}
\end{figure}
It calculates truncated probability distributions for $t_0$ steps of a random walk in $H$ starting in vertex $s$ where in each step, the walk stays in the current vertex with probability $1/2$ and otherwise goes to one of the adjacent vertices with equal probability. It then derives from one of these truncated probability distributions a low-conductance cut, assuming a suitable starting vertex $s$ is chosen.

In this section, we define $\theta_+ = \theta^3/(14^4\ln^2(3ne^4))$ and we shall implicitly assume that each graph contains at most $3n$ edges, as is the case for $G$. Spielman and Teng~\cite{SpielmanTeng} showed the following property of \texttt{Nibble}.
\begin{Lem}[\cite{SpielmanTeng}]\label{Lem:Nibble}
Let $H$ be a graph. For each $\theta\leq 1$ and for each $S\subseteq V(H)$ satisfying
\[
  \mbox{Vol}_{V(H)}(S)\leq\frac 2 3\mbox{Vol}_{V(H)}(V(H))\mbox{ and }
  \Phi_{V(H)}(S)\leq 2\theta_+,
\]
there is a subset $S^g\subseteq S$ such that $\mbox{Vol}_{V(H)}(S^g)\geq\mbox{Vol}_{V(H)}(S)/2$ and this subset can be decomposed into sets $S_b^g$ for $b = 1,\ldots,\lceil\lg(E(H))\rceil$ such that for each $b$ and any $s\in S_b^g$, \texttt{Nibble}$(H,s,\theta,b)$ outputs a vertex set $C$ such that
\begin{enumerate}
\item $\Phi_{V(H)}(C)\leq\theta$,
\item $\frac 4 7 2^{b-1}\leq\mbox{Vol}_{V(H)}(C\cap S)$, and
\item $\mbox{Vol}_{V(H)}(C)\leq\frac 5 6 \mbox{Vol}_{V(H)}(V(H))$.
\end{enumerate}
For all $b$, \texttt{Nibble} can be implemented to run in worst-case time $O(2^b\ln^4(E(H))/\theta^5)$.
\end{Lem}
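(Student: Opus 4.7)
This lemma is cited directly from~\cite{SpielmanTeng}, so my plan is to invoke their result; I sketch the argument briefly for context. The central object is the lazy random walk $p_t = P^t \chi_s$ started at $s\in S$ and its truncated variant $\tilde p_t = [P\tilde p_{t-1}]_{\epsilon_b}$. The parameters $t_0 = \Theta(\theta^{-2}\log|E(H)|)$ and $\epsilon_b$ are calibrated so that a Lov\'asz--Simonovits ``curve'' argument on the $\tilde p_t$ yields a sweep cut of conductance at most $\theta$ whenever the truncated walk fails to mix rapidly.

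First I would define $S^g$ via an escape condition: call $s$ ``good'' if the probability mass that the lazy walk from $s$ deposits outside $S$ within $t_0$ steps is below a suitable constant (say $1/2$). By reversibility, the expected escape mass averaged over $s\in S$ with weights $d(s)/\mbox{Vol}_{V(H)}(S)$ is at most $t_0\cdot \Phi_{V(H)}(S) = O(t_0 \theta_+) = O(\theta)$, so Markov's inequality gives $\mbox{Vol}_{V(H)}(S^g) \ge \mbox{Vol}_{V(H)}(S)/2$. Next I would stratify $S^g$ into buckets $S_b^g$ by the volume scale at which the sweep cut succeeds: place $s$ in $S_b^g$ if the first cut $C$ found has $\mbox{Vol}_{V(H)}(C) \in [\tfrac{5}{7}2^{b-1}, \tfrac{5}{6}\mbox{Vol}_{V(H)}(V(H))]$. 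The truncation threshold $\epsilon_b = \Theta(\theta/(t_0\, 2^b \log|E(H)|))$ is chosen so that over all $t_0$ iterations the cumulative truncation error is much smaller than the mass the walk leaves inside $S$, while the support of each $\tilde p_t$ has size $O(1/\epsilon_b) = O(2^b\log^2|E(H)|/\theta^2)$.

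The heart of the argument, and the hard part, is proving that for every $s\in S_b^g$ some iteration $t$ finds a prefix of $\tilde\pi_t$ satisfying all three sweep conditions on lines~$7$--$9$. This is the truncated Lov\'asz--Simonovits inequality of~\cite{SpielmanTeng}: if no such cut were found at any step, then at every step the distribution $\tilde p_t$ would spread substantially, forcing the walk to mix faster than is compatible with the fact that (since $s$ is good) most of its mass is still concentrated inside $S$ at time $t_0$. Condition~2, $\tfrac{4}{7}\cdot 2^{b-1} \le \mbox{Vol}_{V(H)}(C\cap S)$, follows from the bucket definition combined with the good-escape guarantee (most of $\mbox{Vol}_{V(H)}(C)$ lies in $S$), while condition~3 is the upper endpoint of the bucket range.

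The running time bound $O(2^b\log^4|E(H)|/\theta^5)$ is immediate once the support bound is established: each of $t_0 = O(\log|E(H)|/\theta^2)$ iterations computes $P\tilde p_{t-1}$ in time linear in $|\mathrm{supp}(\tilde p_{t-1})|$ times the degree and sorts the result for the sweep, both achievable in $\tilde O(1/\epsilon_b) = \tilde O(2^b\log^2|E(H)|/\theta^3)$ time using a priority queue. In summary, my ``proof'' is a citation to the relevant lemma of~\cite{SpielmanTeng}; the nontrivial adaptation required in this paper will not come here but in Section~\ref{sec:XPrune}, where \texttt{Nibble} must be made to work in a decremental setting.
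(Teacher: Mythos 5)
The paper does not prove this lemma; it cites it verbatim from Spielman and Teng, exactly as you do, and your accompanying sketch of the escape-probability/Lov\'asz--Simonovits argument is a faithful summary of their proof. Your approach matches the paper's.
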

We will not need the full strength of this result but only the following simpler corollary.
\begin{Cor}\label{Cor:SimpleNibble}
Let $H$ be a graph of max degree at most $d_{\max}$. For each $\theta\leq 1$ and for each $S\subseteq V(H)$ with $\Phi_{V(H)}(S)\leq 2\theta_+$, there is an $s\in V(H)$ and an integer $b\in\{1,\ldots,\lceil\lg(E(H))\rceil\}$ such that in worst-case time $\tilde O(2^b/\theta^5)$, \texttt{Nibble}$(H,s,\theta,b)$ outputs a vertex set $C$ of size between $\Omega(2^b/d_{\max})$ and $\tilde O(2^b/\theta^3)$ with $\Phi_{V(H)}(C)\leq\theta$. Furthermore, for any $s$ and $b$, if \texttt{Nibble}$(H,s,\theta,b)$ outputs a set, this set has size between $\Omega(2^b/d_{\max})$ and $\tilde O(2^b/\theta^3)$ and has conductance at most $\theta$ in $H$.
\end{Cor}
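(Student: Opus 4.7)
The corollary consists of two closely related claims: an existence statement producing a good choice of $s$ and $b$, and a universal statement asserting that \emph{any} successful output of \texttt{Nibble} already satisfies the size and conductance bounds. I will obtain the first statement from Lemma~\ref{Lem:Nibble} after a symmetry reduction, and derive the universal statement directly from the termination conditions in the pseudocode of Figure~\ref{fig:Nibble} together with a volume bound on the support of the truncated probability vectors $\tilde p_t$.

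\textbf{The existence statement.} Since $\Phi_{V(H)}$ is symmetric under $S \mapsto V(H)-S$, I may replace $S$ by whichever side has smaller volume and thereby assume $\mbox{Vol}_{V(H)}(S) \leq \tfrac12\mbox{Vol}_{V(H)}(V(H)) \leq \tfrac23\mbox{Vol}_{V(H)}(V(H))$. Lemma~\ref{Lem:Nibble} then applies and yields a non-empty subset $S^g \subseteq S$ partitioned into $S_1^g,\ldots,S_{\lceil\lg|E(H)|\rceil}^g$. At least one $S_b^g$ is non-empty; fixing such a $b$ and any $s\in S_b^g$, the lemma guarantees that \texttt{Nibble}$(H,s,\theta,b)$ outputs $C$ with $\Phi_{V(H)}(C)\leq \theta$ in worst-case time $O(2^b\ln^4|E(H)|/\theta^5) = \tilde O(2^b/\theta^5)$. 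The size bounds then follow from the universal argument below.

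\textbf{Universal size and conductance bounds.} Whenever \texttt{Nibble} outputs a set $C$, the conductance bound $\Phi_{V(H)}(C)\leq\theta$ is immediate from the test in line~7. From the test in line~9, $\mbox{Vol}_{V(H)}(C) \geq \tfrac57 2^{b-1}$, and combined with $\mbox{Vol}(C)\leq d_{\max}|C|$ this gives $|C|=\Omega(2^b/d_{\max})$. For the upper bound on $|C|$, I will bound the volume of the support of $\tilde p_t$. The matrix $P=(AD^{-1}+I)/2$ is column-stochastic, so $\sum_v(P\tilde p_{t-1})(v) = \sum_v \tilde p_{t-1}(v)$; the truncation $[\cdot]_{\epsilon_b}$ never increases the sum, so inductively $\sum_v \tilde p_t(v)\leq \sum_v\tilde p_0(v) = 1$. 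On the other hand, the truncation rule forces every $v$ with $\tilde p_t(v)>0$ to satisfy $\tilde p_t(v)\geq 2\epsilon_b d(v)$. Combining these,
\[
  \mbox{Vol}(\mbox{supp}(\tilde p_t)) \leq \frac{1}{2\epsilon_b} = \frac{56\ln(|E(H)|e^4)\,t_0\,2^b}{2\theta} = \tilde O(2^b/\theta^3),
\]
using $t_0 = 49\ln(|E(H)|e^4)/\theta^2$. Since $C\subseteq\mbox{supp}(\tilde p_t)$ and $|C|\leq \mbox{Vol}(C)+1$ (allowing for a possibly isolated $s$), we conclude $|C| = \tilde O(2^b/\theta^3)$.

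\textbf{Main obstacle.} The only non-routine step is the upper bound $|C| = \tilde O(2^b/\theta^3)$; the key insight is that the iterated truncation operator $[\cdot]_{\epsilon_b}$ enforces a minimum probability mass of $2\epsilon_b d(v)$ at each surviving vertex, so the total probability mass budget of at most $1$ immediately caps the volume of the support in terms of $1/\epsilon_b$. Once this invariant is established, plugging in the values of $\epsilon_b$ and $t_0$ from lines~1--2 of \texttt{Nibble} gives the claimed bound; everything else is a direct reading of the pseudocode together with the max-degree assumption.
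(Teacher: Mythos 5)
Your proof is correct and follows essentially the same approach as the paper: a symmetry reduction to apply Lemma~\ref{Lem:Nibble}, followed by reading the size and conductance bounds from the pseudocode and the truncation invariant. You handle the universal claim somewhat more cleanly than the paper by deriving the lower bound $|C| = \Omega(2^b/d_{\max})$ directly from the volume test in line~9 of Figure~\ref{fig:Nibble} (which applies to any successful output), rather than from the second property of Lemma~\ref{Lem:Nibble} (which is stated only for $s\in S_b^g$); this makes the "for any $s$ and $b$" part of the corollary immediate, whereas the paper merely asserts it without elaboration.
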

\begin{proof}
We may assume that $\mbox{Vol}_{V(H)}(S)\le \frac 1 2 \mbox{Vol}_{V(H)}(V(H))$ since if this does not hold, we can redefine $S$ to be $V(H) - S$. Then $S$ satisfies the requirements of Lemma~\ref{Lem:Nibble} which for suitable $s$ and $b$ gives a set $C$ with $\Phi_{V(H)}(C)\le\theta$ and $|C| \ge \mbox{Vol}_{V(H)}(C\cap S)/d_{\max}\ge\frac 4 7 2^{b-1}/d_{\max}$. It follows from the pseudocode in Figure~\ref{fig:Nibble} that each vertex of $C$ has a positive $\tilde p_t$-value when \texttt{Nibble} halts and hence by the truncation operation, this value is at least $2\epsilon_b$ (we may assume that each such vertex has degree at least $1$). Since the total truncated probability mass is at most $1$, it follows that $|C| \le 1/(2\epsilon_b) = \tilde O(2^b/\theta^3)$. The last part of the corollary also follows from Lemma~\ref{Lem:Nibble} and from analyzing the pseudocode in Figure~\ref{fig:Nibble}.
\end{proof}

We need the following result which was shown in~\cite{SpielmanTeng}.
\begin{Lem}[\cite{SpielmanTeng}]\label{Lem:NibbleTrunc}
Let $s$ be a vertex of a connected graph $H$, let $b > 0$ and $t\geq 0$ be integers and let $0 <\theta\leq 1$. Let $\tilde p_t$ be the probability distribution found by iteration $t$ of \texttt{Nibble}$(H, s, \theta, b)$. Let $p_t$ be the probability distribution found by the variant of \texttt{Nibble}$(H, s, \theta, b)$ which does not truncate probabilities, i.e., line $4$ is replaced by $\tilde p_t \leftarrow P\tilde p_{t-1}$. Then $\tilde p_t \leq p_t$.
\end{Lem}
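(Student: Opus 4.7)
The plan is a straightforward induction on $t$, exploiting two monotonicity facts: multiplication by $P$ preserves the coordinatewise order on non-negative vectors, and the truncation operator $[\cdot]_{\varepsilon}$ only decreases entries. Both facts are immediate from the definitions, so no quantitative control over $\varepsilon_b$ or $t_0$ is needed.

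For the base case $t=0$, both procedures initialize their distribution to $\chi_s$, so $\tilde p_0 = p_0$ and in particular $\tilde p_0 \le p_0$ coordinatewise. For the inductive step, assume $\tilde p_{t-1} \le p_{t-1}$. Since $P = (AD^{-1} + I)/2$ has non-negative entries, applying $P$ to a coordinatewise inequality between non-negative vectors preserves it, giving $P\tilde p_{t-1} \le Pp_{t-1} = p_t$. Next, by the very definition of $[\cdot]_{\varepsilon_b}$, we have $[q]_{\varepsilon_b}(v) \in \{0, q(v)\}$ for every $v$, so in particular $[q]_{\varepsilon_b} \le q$ coordinatewise for any non-negative $q$. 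Applying this to $q = P\tilde p_{t-1}$ yields $\tilde p_t = [P\tilde p_{t-1}]_{\varepsilon_b} \le P\tilde p_{t-1} \le p_t$, completing the induction.

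There is no substantive obstacle here. The only minor point worth being explicit about is that the induction requires $\tilde p_{t-1}$ (and $p_{t-1}$) to be non-negative so that multiplication by $P$ is monotone in the coordinatewise order; this is trivially maintained since $\chi_s \ge 0$, $P$ has non-negative entries, and truncation to zero preserves non-negativity. Thus both sequences remain non-negative throughout, and the inductive step goes through as above.
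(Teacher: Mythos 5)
Your proof is correct, and the induction on $t$ with the two monotonicity observations (that $P$ has non-negative entries so applying it preserves the coordinatewise order, and that truncation only zeroes out entries so $[q]_{\varepsilon} \leq q$ for non-negative $q$) is exactly the right argument. Note that the paper itself does not supply a proof of this lemma at all: it is stated as a citation to Spielman and Teng, so you have reconstructed the proof that appears (in essence) in that reference rather than deviating from anything in the present paper. One small remark: for the step $P\tilde p_{t-1} \le Pp_{t-1}$, strictly speaking you only need $\tilde p_{t-1} \le p_{t-1}$ and $P \ge 0$ (since $P(p_{t-1} - \tilde p_{t-1}) \ge 0$ follows directly), so the explicit appeal to non-negativity of both vectors is not needed for that inequality; where non-negativity genuinely matters is in making sense of the truncation operator and confirming that $[q]_\varepsilon \le q$. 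This is a cosmetic point and does not affect correctness.
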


We say that \texttt{Nibble} \emph{visits} an edge $e = (u,v)$ if in some step, it sends a non-zero amount of probability mass along $e$, i.e., if $\tilde p_{t-1}$ has a non-zero entry for either $u$ or $v$ (or both) in some execution of line $4$ in Figure~\ref{fig:Nibble}. The next lemma bounds the number of edges visited by \texttt{Nibble}. This is key to making \texttt{Nibble} work efficiently in our dynamic setting.
\begin{Lem}\label{Lem:RandomWalkOverlap}
Let $e$ be an edge of an $n$-vertex connected graph $H$. Then the number of vertices $s$ for which \texttt{Nibble}$(H,s,\theta,b)$ visits $e$ is $O(2^b(\log^3n)/\theta^5)$.
\end{Lem}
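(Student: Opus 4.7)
\medskip
\noindent\textbf{Proof plan for Lemma~\ref{Lem:RandomWalkOverlap}.} The plan is to reduce the question to a bound on how many sources $s$ can place non-negligible probability mass on a fixed vertex after a fixed number of lazy-walk steps, and then apply reversibility to flip the roles of source and target. Fix the edge $e = (u,v)$. By definition, \texttt{Nibble}$(H,s,\theta,b)$ visits $e$ iff for some $t\in\{1,\dots,t_0\}$ we have $\tilde p_{t-1}(u) > 0$ or $\tilde p_{t-1}(v) > 0$, where $\tilde p_{t-1}$ is the truncated distribution produced by \texttt{Nibble} started at $s$. It therefore suffices to bound, for each fixed vertex $w \in \{u,v\}$ and each fixed $t \in\{0,\dots,t_0-1\}$, the number $N_{w,t}$ of sources $s$ with $\tilde p_{t-1}^{(s)}(w) > 0$, and then sum over the (at most $2t_0$) choices of $(w,t)$.

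\medskip
\noindent First I would use the truncation rule to turn the qualitative statement $\tilde p_{t}^{(s)}(w) > 0$ into the quantitative lower bound $\tilde p_{t}^{(s)}(w) \ge 2\epsilon_b d(w)$. Lemma~\ref{Lem:NibbleTrunc} then yields the untruncated-walk bound
\[
p_{t}^{(s)}(w) \;\ge\; 2\epsilon_b\, d(w),
\]
where $p_t^{(s)}$ denotes the lazy-walk distribution after $t$ steps starting from $s$, i.e., $p_t^{(s)} = P^t \chi_s$. Now I would invoke reversibility of the lazy random walk: the matrix $D^{-1/2} P D^{1/2}$ is symmetric (a direct calculation using $P = (AD^{-1}+I)/2$), which gives the detailed-balance identity $p_t^{(s)}(w)\, d(s) = p_t^{(w)}(s)\, d(w)$. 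Substituting this into the previous inequality gives
\[
p_t^{(w)}(s) \;\ge\; 2\epsilon_b\, d(s) \;\ge\; 2\epsilon_b,
\]
where the last step uses that $H$ is connected (so $d(s)\ge 1$).

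\medskip
\noindent Next, for fixed $w$ and $t$, I would sum this inequality over all sources $s$ in the set $S_{w,t}$ of qualifying starting vertices. The key point is that the sum swings in the \emph{opposite} direction after reversal: $p_t^{(w)}(\cdot)$ is a probability distribution in its \emph{second} argument, so $\sum_{s} p_t^{(w)}(s) = 1$. Combined with the per-term lower bound $p_t^{(w)}(s) \ge 2\epsilon_b$, this yields $|S_{w,t}| \le 1/(2\epsilon_b)$. Plugging in the values $t_0 = 49\ln(|E(H)|e^4)/\theta^2$ and $\epsilon_b = \theta/(56\ln(|E(H)|e^4)\,t_0\,2^b)$ from \texttt{Nibble} gives $1/(2\epsilon_b) = O(2^b \log^2 n / \theta^3)$.

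\medskip
\noindent Finally I would union-bound over the two endpoints $w\in\{u,v\}$ of $e$ and over the $t_0 = O(\log n/\theta^2)$ time steps to obtain
\[
|\{\,s : \texttt{Nibble}(H,s,\theta,b)\text{ visits }e\,\}| \;\le\; 2 t_0 \cdot \frac{1}{2\epsilon_b} \;=\; O\!\left(\frac{2^b \log^3 n}{\theta^5}\right),
\]
which is the claim. The main subtlety, and the one place where care is required, is the reversibility step: one must verify that the \emph{lazy} transition matrix $P=(AD^{-1}+I)/2$ is reversible with respect to the degree measure $d(\cdot)$ (so that both the identity and the laziness play symmetric roles), and then correctly transpose the event ``$s$ puts mass on $w$'' into the event ``$w$ puts mass on $s$'', which is what makes the trivial $\ell_1$ bound on a probability distribution applicable. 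Everything else is bookkeeping with the \texttt{Nibble} parameters.
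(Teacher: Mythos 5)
Your proof is correct and takes essentially the same approach as the paper. You derive the key identity $\sum_{s} d(s)\,p_t^{(s)}(w) = d(w)$ via an explicit reversibility/detailed-balance argument for the lazy-walk matrix $P$ (similarity to a symmetric matrix by conjugation with $D^{1/2}$), while the paper obtains the same identity by passing to the stationary distribution in the limit; beyond this cosmetic difference, the use of the truncation threshold $2\epsilon_b d(\cdot)$, Lemma~\ref{Lem:NibbleTrunc}, the bound $d(s)\ge 1$ from connectivity, and the union bound over the $t_0$ steps and the two endpoints of $e$ match the paper's argument.
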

\begin{proof}
We first consider probability distributions for random walks defined by matrix $P$ where no truncation occurs. Let $\Pr(u,w,t)$ denote the probability of reaching vertex $w$ in $t$ steps in a random walk in $H$ from vertex $u$ where in each step, the walk remains in the current vertex with probability $1/2$ and otherwise goes to one of the incident vertices with equal probability. Given $s_0,w\in V$ and integers $T\geq t\geq 0$, $\Pr(s_0,w,T) = \sum_{s\in V} \Pr(s_0,s,T - t)\Pr(s,w,t)$. It is well-known that in a connected graph $H'$, when the number of steps in a random walk from any starting vertex approaches infinity, the probability distribution for this walk converges to the stationary distribution in which the probability mass at each vertex $x$ is $d(x)/d(V(H'))$. Hence,
\[
  \frac{d(w)}{d(V(H))} = \lim_{T\rightarrow\infty}\Pr(s_0,w,T) = \lim_{T\rightarrow\infty}\sum_{s\in V} \Pr(s_0,s,T - t)\Pr(s,w,t) = \sum_{s\in V}\frac{d(s)}{d(V(H))}\Pr(s,w,t),
\]
implying that $d(w) = \sum_{s\in V(H)}d(s)\Pr(s,w,t)$.

Let $u$ and $v$ be the endpoints of $e$. \texttt{Nibble} visits $e$ if at some point it sends probability mass along $e$ either from $u$ to $v$ or from $v$ to $u$; we shall only bound the number of starting vertices for which the former happens since the same argument applies for the latter. By Lemma~\ref{Lem:NibbleTrunc}, \texttt{Nibble}$(H,s,\theta,b)$ only sends probability mass from $u$ to $v$ along $e$ if there is a $t$ such that $\Pr(s,u,t) \geq 2\epsilon_b d(u)$. Let $S_t = \{s\in V(H) | \Pr(s,u,t) \geq 2\epsilon_b d(u)\}$. Since $H$ is connected and contains at least two vertices, it has min degree at least $1$. The above then implies that the number of starting vertices $s$ for which \texttt{Nibble}$(H,s,\theta,b)$ sends probability mass from $u$ to $v$ along $e$ is at most
\[
  \left|\bigcup_{t = 1}^{t_0}S_t\right|\le \sum_{t = 1}^{t_0}|S_t|\le\sum_{t = 1}^{t_0}\sum_{s\in S_t}d(s)\le\sum_{t = 1}^{t_0}\sum_{s\in S_t}\frac{d(s)\Pr(s,u,t)}{2\epsilon_b d(u)}\le
  \frac{t_0}{2\epsilon_b} = O(2^b(\log^3n)/\theta^5).
\]
\end{proof}

\subsection{Preprocessing}
We are now ready to present \texttt{XPrune}$(e)$. Pseudocode can be seen in Figure~\ref{fig:XPrune}. In this subsection, we describe the preprocessing needed by this procedure.

In the following, we pick $\theta = \gamma/96 = \Theta_{n^\eps}(1)$ and $\gamma' = \theta_{+}^3/3^8 = \Theta_{n^{\eps}}(1)$. Note that our previous constraints in Section~\ref{subsec:PreprocDecSF} that $\gamma'\le \gamma$ and $\gamma' = \Theta_{n^\eps}(1)$ are satisfied.

Next, let $\rho = \frac 1 3 \theta_+ = \Theta_{n^\eps}(1)$ and let $p = \Theta((\ln n)/(\theta_+^2\kappa)) = \Theta_{n^\eps}(1/\kappa)$ be the probability from Corollary~\ref{Cor:LowConductance}. Furthermore, let $d_{\max} = 6p\kappa = \Theta_{n^\eps}(1)$ and let $b_{\max}$ be the largest integer $b$ such that the size lower bound in Corollary~\ref{Cor:SimpleNibble} is at most $64\Delta/(\gamma\kappa)$. Note that $b_{\max} = \lg(\Theta_{n^{\eps}}(\Delta/\kappa))$. Finally, let $h_{\max} = \lceil 64\Delta/(\gamma\kappa)\rceil = \Theta_{n^\eps}(\Delta/\kappa)$.

In the following, let multigraph $G_{\mathcal C}$ be defined as in Corollary~\ref{Cor:LowConductance}. For $i = 1,\ldots,h_{\max}$, we form a multigraph with vertex set $\mathcal C$ by sampling each edge of $G_{\mathcal C}$ independently with probability $p$. Let $\mathcal H$ denote a list of the $h_{\max}$ graphs obtained. By Invariant~\ref{Inv:Clustering} in Section~\ref{subsec:PreprocDecSF}, each vertex of each graph of $\mathcal H$ has expected degree at most $3p\kappa$ so by a Chernoff and a union bound, w.h.p.~each graph in $\mathcal H$ has max degree at most $d_{\max}$.


We shall assume that at any point during the sequence of updates, each graph in $\mathcal H$ is simple so that \texttt{Nibble} can be applied to it. Furthermore, we assume that no edge deletion disconnects a spanning tree $T(C)$ (see Section~\ref{subsec:PreprocDecSF}) of a cluster $C\in\mathcal C$. We later show how to get rid of these assumptions.

The following preprocessing is done for each graph $H_{\mathcal C}\in\mathcal H$ and for $b = 1,\ldots,b_{\max}$. For each $s\in V(H_{\mathcal C})$, we run \texttt{Nibble}$(H_{\mathcal C}, s, \theta, b)$ and store the set $\mathcal E_b(s, H_{\mathcal C})$ of edges of $H_{\mathcal C}$ visited by this call. Having executed these calls, we then obtain and store dual sets $\mathcal S_b(e, H_{\mathcal C})$ consisting of all $s$ such that $e\in\mathcal E_b(s, H_{\mathcal C})$. For each $s\in V(H_{\mathcal C})$, we store a bit indicating whether $s$ is \emph{$b$-active} or \emph{$b$-passive} (in $H_{\mathcal C}$); we say that $s$ is $b$-active if \texttt{Nibble}$(H_{\mathcal C}, s, \theta, b)$ outputs a set. Otherwise, $s$ is $b$-passive.

Next, we check the condition in line $4$ of \texttt{XPrune} for each $H_{\mathcal C}\in\mathcal H$ and each $C\in\mathcal C$. If the condition is satisfied, we mark $C$ as a low-degree cluster in graph $H_{\mathcal C}$. We keep these low-degree clusters in a linked list $\mathcal L(H_{\mathcal C})$ which will be maintained during updates.

The $\mathcal E_b$-sets, their duals $\mathcal S_b$, and the list $\mathcal L(H_{\mathcal C})$ will only become relevant later on when we focus on the implementation and show how the tests in lines $4$ and $7$ can be done efficiently.

\begin{figure}[!ht]
\begin{tabbing}
\rule{\linewidth}{\arrayrulewidth}\\
d\=dd\=\quad\=\quad\=\quad\=\quad\=\quad\=\quad\=\quad\=\quad\=\quad\=\quad\=\quad\=\kill
\>\texttt{XPrune}$(e)$:\\\\
\>1. \>\>for each graph $H_{\mathcal C}\in\mathcal H$, $H_{\mathcal C}\leftarrow H_{\mathcal C} - \{e\}$\\
\>2. \>\>while $\mathcal H\neq\emptyset$\\
\>3. \>\>\>let $H_{\mathcal C}$ be the first graph in $\mathcal H$\\
\>4. \>\>\>if $\exists C\in\mathcal C$ s.t.~the number of edges of $H_{\mathcal C}$ leaving $C$ is less than $\kappa\theta_{+} p$\\
\>5. \>\>\>\>update $\mathcal C(X)\leftarrow \mathcal C(X) - \{C\}$\\
\>6. \>\>\>\>$\mathcal H\leftarrow\mathcal H - \{H_{\mathcal C}\}$\\
\>7. \>\>\>else if $\exists b\in\{1,\ldots,b_{\max}\}\exists s\in V(H_{\mathcal C})$ s.t.~\texttt{Nibble}$(H_{\mathcal C}[\mathcal C(X)], s, \theta, b)$ outputs a set $K_{\mathcal C}$\\
\>8. \>\>\>\>let $K_{\mathcal C}'$ be a set of smaller size among $K_{\mathcal C}$ and $\mathcal C(X) - K_{\mathcal C}$\\
\>9. \>\>\>\>update $\mathcal C(X)\leftarrow \mathcal C(X) - K_{\mathcal C}'$\\
\>10. \>\>\>\>$\mathcal H\leftarrow\mathcal H - \{H_{\mathcal C}\}$\\
\>11. \>\>\>else output the set of clusters removed from $\mathcal C(X)$ in lines $2$--$10$\\
\rule{\linewidth}{\arrayrulewidth}
\end{tabbing}
\caption{Pseudocode for procedure \texttt{XPrune} which keeps track of low-conductance cuts of small size. It has access to the list of graphs $\mathcal H$ as well as to $\mathcal C$ and $\mathcal C(X)$. It is assumed that each graph $H_{\mathcal C}'$ is simple and that no edge deletion splits a cluster in $\mathcal C$. Set $X$ is implicitly updated when $\mathcal C(X)$ is updated in lines $5$ and $9$.}\label{fig:XPrune}
\end{figure}

\subsection{Correctness}\label{subsec:XPruneCorrectness}
We now show that the two properties of \texttt{XPrune} in Section~\ref{subsec:PreprocDecSF} hold. Lemma~\ref{Lem:TotSizeSmallCuts} below implies the second property of \texttt{XPrune}. To show this lemma, we first need the following result.
\begin{Lem}\label{Lem:XPruneDeg}
W.h.p., in each execution of the while-loop of \texttt{XPrune}$(e)$, $G_{\mathcal C}[\mathcal C(X)]$ has min degree at least $\frac 1 3 \kappa\theta_{+} = \kappa\rho$ in line $7$, and at the beginning of line $5$ $C$ has degree less than $3\kappa\theta_{+}$ in $G_{\mathcal C}[\mathcal C(X)]$.
\end{Lem}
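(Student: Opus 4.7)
The plan is to combine Chernoff concentration over the sampling of each $H_{\mathcal C}$ with a key independence observation. Each graph $H_{\mathcal C}\in\mathcal H$ is sampled once during preprocessing and then consumed by exactly one iteration of the while loop (it is discarded from $\mathcal H$ afterwards in line $6$ or $10$). Consequently, when the algorithm begins processing a given $H_{\mathcal C}$, the current value of $\mathcal C(X)$ has been determined solely by prior iterations that used \emph{other} graphs of $\mathcal H$ together with the (deterministic, adversary-chosen) sequence of edge deletions, and is therefore independent of the sampling randomness of $H_{\mathcal C}$. Hence, conditional on this state and on the current $G_{\mathcal C}$, the edges of $H_{\mathcal C}$ remain an independent Bernoulli$(p)$ sample of $G_{\mathcal C}$.

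Fixing such an iteration and any $C\in\mathcal C$, the random variable $\deg_{H_{\mathcal C}}(C)$ is a sum of $\deg_{G_{\mathcal C}}(C)$ independent Bernoulli$(p)$ trials with mean $p\deg_{G_{\mathcal C}}(C)$. Since $p\kappa\theta_{+}=\Theta_{n^{\eps}}(\ln n)$, enlarging the constant factor in $p$ makes the Chernoff exponent as large as desired, and a standard multiplicative Chernoff bound then yields, with failure probability at most $n^{-c'}$ for any desired constant $c'$, the two implications
\[
 \deg_{G_{\mathcal C}}(C)\le\tfrac{1}{3}\kappa\theta_{+}\;\Longrightarrow\;\deg_{H_{\mathcal C}}(C)<\kappa\theta_{+}p,\qquad
 \deg_{G_{\mathcal C}}(C)\ge 3\kappa\theta_{+}\;\Longrightarrow\;\deg_{H_{\mathcal C}}(C)\ge\kappa\theta_{+}p.
\]
I would then union bound these events over $C\in\mathcal C$, over $H_{\mathcal C}\in\mathcal H$, and over the at most $\Delta$ edge-deletion time steps between which $G_{\mathcal C}$ can change, bringing the total failure probability to $n^{-d}$ for any desired constant $d$; we condition on this good event henceforth.

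The second statement then follows at once: when line $5$ executes for a cluster $C$, the test in line $4$ guarantees $\deg_{H_{\mathcal C}}(C)<\kappa\theta_{+}p$, so the contrapositive of the second implication above gives $\deg_{G_{\mathcal C}}(C)<3\kappa\theta_{+}$, and since $G_{\mathcal C}[\mathcal C(X)]$ is a subgraph of $G_{\mathcal C}$ we conclude $\deg_{G_{\mathcal C}[\mathcal C(X)]}(C)\le\deg_{G_{\mathcal C}}(C)<3\kappa\theta_{+}$.

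The first statement is the harder half, and is where I expect the main obstacle. Reaching line $7$ means every $C\in\mathcal C$ satisfies $\deg_{H_{\mathcal C}}(C)\ge\kappa\theta_{+}p$, and the contrapositive of the first implication gives a matching lower bound on $\deg_{G_{\mathcal C}}(C)$; but the lemma demands the bound on $\deg_{G_{\mathcal C}[\mathcal C(X)]}(C)$, so I must control the number of edges of $G_{\mathcal C}$ from $C$ to already-pruned clusters in $\mathcal C-\mathcal C(X)$. The plan is an induction on the sequence of while-loop iterations across all \texttt{XPrune} calls: inductively, every line-$5$ removal discards a cluster of $G_{\mathcal C}$-degree less than $3\kappa\theta_{+}$ (by the second statement), and every line-$9$ removal discards the small side $K_{\mathcal C}'$ of a cut whose low conductance in $H_{\mathcal C}[\mathcal C(X)]$ transfers to low conductance (and thus few crossing edges) in $G_{\mathcal C}[\mathcal C(X)]$ via Corollary~\ref{Cor:LowConductance} applied using the inductive $\kappa\rho$ min-degree. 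Aggregating these contributions over the at most $h_{\max}=O_{n^{\eps}}(\Delta/\kappa)$ iterations, and using that $\Delta$ is only polynomially slightly larger than $\sqrt n$ while $\kappa=n^{1/2-\Theta(\eps)}$, should bound the stray-edge contribution to any fixed surviving $C$ by a small fraction of $\kappa\theta_{+}$, closing the induction and giving $\deg_{G_{\mathcal C}[\mathcal C(X)]}(C)\ge\kappa\rho$. Carrying out this bookkeeping tightly enough to match the precise constant $\tfrac{1}{3}$ in the lemma's statement, rather than some weaker constant, is the delicate point I expect to take the most care.
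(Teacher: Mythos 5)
The main issue is that you misread what line 4 of \texttt{XPrune} tests, which leads you to invent a compensating induction that is both unnecessary and unlikely to close. Per the implementation, each $H_{\mathcal C}$ is maintained with all edges incident to pruned clusters removed (so that $H_{\mathcal C}[\mathcal C(X)]$ is a component of $H_{\mathcal C}$); hence for $C\in\mathcal C(X)$ the quantity ``the number of edges of $H_{\mathcal C}$ leaving $C$'' equals $\deg_{H_{\mathcal C}[\mathcal C(X)]}(C)$. Your independence observation is exactly right but should be applied at the level of $\mathcal C(X)$: conditioned on $\mathcal C(X)$ (which, by that observation, is determined independently of the sampling randomness of the $H_{\mathcal C}$ currently at the head of $\mathcal H$), $\deg_{H_{\mathcal C}[\mathcal C(X)]}(C)$ is a sum of $\deg_{G_{\mathcal C}[\mathcal C(X)]}(C)$ independent Bernoulli$(p)$ trials. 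Writing $\delta=\deg_{G_{\mathcal C}[\mathcal C(X)]}(C)$, a two-sided Chernoff bound now gives both halves of the lemma at once: if $\delta<\tfrac 13\kappa\theta_+$ then w.h.p.\ $\deg_{H_{\mathcal C}[\mathcal C(X)]}(C)<\kappa\theta_+p$, so the test in line 4 fires and line 7 is never reached; if $\delta\ge 3\kappa\theta_+$ then w.h.p.\ $\deg_{H_{\mathcal C}[\mathcal C(X)]}(C)\ge\kappa\theta_+p$, so $C$ is not the cluster removed in line 5. A union bound over $C$ (and over iterations and updates) finishes. That is the whole proof; no stray-edge bookkeeping is needed.

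Your compensating induction for the first half would not close as sketched. You try to bound, for a fixed surviving $C$, the number of $G_{\mathcal C}$-edges from $C$ into $\mathcal C-\mathcal C(X)$ by a small fraction of $\kappa\theta_+$. But $\kappa\theta_+=\kappa\cdot\Theta_{n^\eps}(1)$ is a polynomially small fraction of $\kappa$, while a single pruned neighbor of $C$ can contribute $\Theta(\kappa)$ multi-edges into $C$ (clusters have size $\Theta(\kappa)$ and $G$ has constant degree), already exceeding the target budget; summing over up to $h_{\max}=\Theta_{n^\eps}(\Delta/\kappa)$ iterations only widens the gap. Your concern about recovering the exact constant $\tfrac 13$ is a symptom of this being the wrong route, not of sloppy constants. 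Your second half reaches the right conclusion, but it too should be phrased directly in terms of $\deg_{G_{\mathcal C}[\mathcal C(X)]}(C)$ rather than detouring through $\deg_{G_{\mathcal C}}(C)$, since the latter is not what governs the expectation of the maintained $\deg_{H_{\mathcal C}}(C)$.
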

\begin{proof}
Procedure \texttt{XPrune} maintains the invariant that in every execution of line $3$, the edges of $H_{\mathcal C}$ are sampled independently of the updates to $X$ thus far. This follows since every time we update $X$, we remove $H_{\mathcal C}$ from $\mathcal H$ while the updates to $X$ have been done independently of the remaining graphs in $\mathcal H$.

Consider a single iteration of the while-loop and consider some cluster $C\in \mathcal C(X)$ in line $3$. Let $\delta$ denote its degree in $G_{\mathcal C}[\mathcal C(X)]$ and assume first that $\delta < \frac 1 3 \kappa\theta_{+}$. We will show that w.h.p., line $7$ is not reached in the current iteration of the while-loop. By the above, the expected degree of $C$ in $H_{\mathcal C}[\mathcal C(X)]$ is $\delta p < \frac 1 3 \kappa\theta_{+}p$ and a Chernoff bound implies that w.h.p., the actual degree in $H_{\mathcal C}[\mathcal C(X)]$ is less than $\kappa\theta_{+} p$. Thus, w.h.p., line $7$ is not reached in this iteration of the while-loop.

Now, assume that $\delta \ge 3\kappa\theta_{+}$. Then the expected degree of $C$ in $H_{\mathcal C}[\mathcal C(X)]$ is at least $3\kappa\theta_{+}p$ and a Chernoff bound shows that w.h.p., its actual degree in $H_{\mathcal C}[\mathcal C(X)]$ is at least $\kappa\theta_{+}p$. Hence, w.h.p., $C$ is not removed in line $5$ of the current iteration of the while-loop. A union bound over all choices for $C$ shows the second part of the lemma.
\end{proof}
The following lemma easily implies the second property of \texttt{XPrune} in Section~\ref{subsec:PreprocDecSF}.
\begin{Lem}\label{Lem:TotSizeSmallCuts}
Let $S\subseteq V$ be the union of clusters output over all calls to a variant of \texttt{XPrune} which does not require the upper bound $b_{\max}$ on $b$ in line $7$. Then w.h.p., $|S| \leq 64\Delta/\gamma\leq h_{\max}\kappa$.
\end{Lem}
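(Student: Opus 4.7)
The plan is to contrast two bounds on the current edge-boundary $\delta_G^{\mathrm{now}}(S,V-S)$ of the total pruned set $S = \bigcup_i W_i$, where $W_1, W_2, \ldots$ are the sets output by the successive pruning iterations across all \texttt{XPrune} calls. The upper bound will come from the low-conductance guarantee that each individual pruning step enjoys, and the lower bound will come from the initial $\gamma$-expansion of $G$ together with the fact that at most $\Delta$ edges have been deleted.

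First I will show that at step $i$, $\delta_{G[X_{i-1}]}(W_i, X_{i-1} - W_i) \leq 12\theta|W_i|$ (recall $\theta = \gamma/96$). For case (a), namely removal via line~5 of Figure~\ref{fig:XPrune}, Lemma~\ref{Lem:XPruneDeg} gives w.h.p.\ that $W_i = C$ has degree less than $3\kappa\theta_{+}$ in $G_{\mathcal C}[\mathcal C(X)]$, which equals the inter-cluster boundary of $C$ in $G[X_{i-1}]$; by Invariant~\ref{Inv:Clustering}, $|C| \geq \kappa$ whenever this boundary is nonzero, so $\delta \leq 3\theta_{+}|C| \leq 3\theta|C|$. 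For case (b), namely removal via line~9, \texttt{Nibble} returns $K_{\mathcal C}$ with conductance at most $\theta$ in $H_{\mathcal C}[\mathcal C(X_{i-1})]$. Lemma~\ref{Lem:XPruneDeg} supplies the min-degree prerequisite of Corollary~\ref{Cor:LowConductance}, and since $\theta \geq \rho/4$, the corresponding cut $\bigcup_{C\in K_{\mathcal C}}C$ has conductance at most $4\theta$ in $G[X_{i-1}]$; the max-degree bound of $3$ in $G$ then yields $\delta \leq 4\theta \cdot 3|W_i| = 12\theta|W_i|$.

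Next I aggregate these bounds. Every edge of $\delta_G^{\mathrm{now}}(S, V-S)$ can be uniquely charged to the step $i$ at which its endpoint in $S$ was pruned, and at that step the edge was present in $\delta_{G[X_{i-1}]}(W_i, X_{i-1} - W_i)$ since edges of $G$ are only ever deleted, never added. Thus $\delta_G^{\mathrm{now}}(S, V-S) \leq \sum_i 12\theta|W_i| = 12\theta|S|$. The initial $\gamma$-expansion of $G$ together with the bound $\leq \Delta$ on total deletions gives $\delta_G^{\mathrm{now}}(S, V-S) \geq \gamma\min(|S|,|V-S|) - \Delta$. Combining these yields $\gamma\min(|S|,|V-S|) \leq \Delta + 12\theta|S|$. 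When $|S| \leq |V|/2$, this simplifies to $(\gamma - 12\theta)|S| = (7\gamma/8)|S| \leq \Delta$, i.e., $|S| \leq 8\Delta/(7\gamma) < 2\Delta/\gamma \leq 64\Delta/\gamma$, as desired.

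The main obstacle is ruling out the degenerate case $|S| > |V|/2$, for which the generous constant $64$ leaves headroom. I will use a stopping-time argument: let $k^*$ be the first step after which $|S_{k^*}| > |V|/2$; then $|S_{k^*-1}| \leq 2\Delta/\gamma$ by the main-case analysis applied at step $k^*-1$. Since $W_{k^*}$ is chosen as the smaller side by cluster count and cluster sizes lie in $[\kappa, 3\kappa]$, we have $|W_{k^*}| \leq 3|X_{k^*-1} - W_{k^*}|$, which forces $|W_{k^*}| \leq (3/4)|X_{k^*-1}| \leq (3/4)|V|$ and hence $|V - W_{k^*}| \geq |V|/4$. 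Applying the $\gamma$-expansion of $G$ directly to the cut $(W_{k^*}, V - W_{k^*})$ and splitting its boundary at time $k^*$ into contributions to $S_{k^*-1}$ (bounded by $\mathrm{Vol}_G(S_{k^*-1}) \leq 3|S_{k^*-1}| = O(\Delta/\gamma)$) and to $X_{k^*-1} - W_{k^*}$ (bounded by $12\theta|W_{k^*}| \leq 9\theta|V|$) yields $\gamma|V|/4 - \Delta \leq O(\Delta/\gamma) + 9\theta|V|$. With $\theta = \gamma/96$ this rearranges to $5\gamma|V|/32 \leq O(\Delta/\gamma)$, i.e., $|V| = O(\Delta/\gamma^2)$; but in our regime $\Delta = n^{1/2+O(\eps)}$ and $\gamma = \Theta_{n^{\eps}}(1)$ give $\Delta/\gamma^2 = n^{1/2+O(\eps)} \ll n = |V|$ for small $\eps$, a contradiction, completing the proof.
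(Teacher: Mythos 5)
Your main-case argument (charging each edge of $\delta_G(S,V-S)$ to the step at which its $S$-endpoint was pruned, then contrasting the aggregate bound $12\theta|S|$ with the initial $\gamma$-expansion minus $\Delta$) is correct and is essentially what the paper does, and the individual per-step bounds via Lemma~\ref{Lem:XPruneDeg} and Corollary~\ref{Cor:LowConductance} are the same.

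The degenerate case $|S|>|V|/2$ is where the proposal has a genuine gap. First, the lower bound $\gamma|V|/4-\Delta$ on the boundary of the cut $(W_{k^*},V-W_{k^*})$ requires $\min\{|W_{k^*}|,|V-W_{k^*}|\}\ge|V|/4$; you only argue $|V-W_{k^*}|\ge|V|/4$, and establishing $|W_{k^*}|\ge|V|/4$ needs the additional step $|S_{k^*-1}|\le 2\Delta/\gamma\le|V|/4$, i.e.\ $\Delta/\gamma\le|V|/8$. Second, and more seriously, the final contradiction derives $|V|=O(\Delta/\gamma^2)$ and then appeals to the application-specific regime $\Delta=n^{1/2+O(\eps)}$, $\gamma=\Theta_{n^\eps}(1)$ to conclude this is impossible. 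But the lemma is only trivially true when $64\Delta/\gamma\ge|V|$, so the nontrivial range is $\Delta/\gamma<|V|/64$; since $\gamma$ is polynomially small, there is a genuine sub-range $c|V|\gamma^2<\Delta<|V|\gamma/64$ in which the lemma is nontrivial and your argument fails — the derived constraint $|V|=O(\Delta/\gamma^2)$ is simply uninformative there. The paper avoids this by never invoking the parameter regime: instead of applying expansion to $(W_{k^*},V-W_{k^*})$, it applies the same ``initial boundary minus remaining boundary'' bound to the accumulated prefix $X_1$ (the union of the first $k$ removed sets) or, if $|X_1|<|X|/8$, to $X_2 = X_1\cup W_{k^*}$, using the fact that each pruned piece has size $\le\tfrac34|X|$ to guarantee $|X_2|<\tfrac78|X|$. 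Either way one of the two prefix cuts has both sides in $[|X|/8,\,7|X|/8]$, giving directly $\Delta\ge\gamma|S|/64$ with no extra assumptions. You should replace your regime-based contradiction with this prefix argument to make the proof self-contained.
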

\begin{proof}
First observe that w.h.p., $G[X_{\mathit{init}}]$ is a $\gamma$-expander graph where $X_{\mathit{init}}$ is the initial set $X$.

Now, consider the start of an execution of line $9$ and let $K = \cup_{C\in K_{\mathcal C}'}C$. By Corollary~\ref{Cor:SimpleNibble}, w.h.p., $\Phi_{H_{\mathcal C}}(K_{\mathcal C}') = \Phi_{H_{\mathcal C}}(K_{\mathcal C})\le \theta$ and $|K_{\mathcal C}'|\le |\mathcal C(X) - K_{\mathcal C}'|$. By Lemma~\ref{Lem:XPruneDeg}, w.h.p., $G_{\mathcal C}[\mathcal C(X)]$ has min degree at least $\kappa\rho$. With $\rho' = \theta > \rho$, it follows from the first part of Corollary~\ref{Cor:LowConductance} that w.h.p., $\Phi_{G[X]}(K)\le 4\theta = \gamma/24$ so by Invariant~\ref{Inv:Clustering}, the number of edges of $G[X]$ crossing $(K,X-K)$ is at most $(\gamma/24)\min\{\mbox{Vol}_{G[X]}(K),\mbox{Vol}_{G[X]}(X-K)\}\le(\gamma/8)|K|$. Since $|K_{\mathcal C}'|\le |\mathcal C(X) - K_{\mathcal C}'|$ and since $G$ has max degree $3$, Invariant~\ref{Inv:Clustering} implies that $|K|\le 3\kappa|K_{\mathcal C}'|\le 3\kappa|\mathcal C(X) - K_{\mathcal C}'|\le 3|X - K|$ so $4|K|\le 3|X|$ and hence $|K|\le\frac 3 4 |X|\le\frac 3 4 |X_{\mathit{init}}|$.

Next, consider the start of an execution of line $5$. By Lemma~\ref{Lem:XPruneDeg} and Invariant~\ref{Inv:Clustering}, w.h.p., the number of edges of $G[X]$ crossing $(C, X - C)$ is less than $3|C|\theta_{+} < 3|C|\theta < (\gamma/8)|C|$. We may assume that $|X_{\mathit{init}}| > 64\Delta/\gamma = \omega(\kappa)$ so that $|C|\le\frac 3 4 |X_{\mathit{init}}|$.

Now, let $\mathcal K'$ be the family of all sets removed from $X$, either in an execution of line $5$ or of line $9$. Note that $S = \cup_{K\in\mathcal K'}K$. Let $X$ denote $X_{\mathit{init}}$ and let $X' = X - S$. Note that $\mathcal K = \mathcal K'\cup\{X'\}$ is a partition of $X_{\mathit{init}}$. We consider two cases: $|S| < \frac 1 2 |X|$ and $|S|\ge\frac 1 2 |X|$.

If $|S| < \frac 1 2 |X|$ then w.h.p., the initial number of edges of $G[X]$ crossing $(X',X - X') = (X',S)$ is at least $\gamma|S|$. By the above, w.h.p.~the number of edges of $G[X]$ crossing $(X',X - X')$ at termination is at most $(\gamma/8)|S|$. Hence, w.h.p.~at least $\frac 7 8 \gamma|S|$ edges crossing this cut have been deleted over all updates in which case $\Delta\ge\frac 7 8\gamma|S|$, as desired.

Next, assume that $|S|\ge \frac 1 2 |X|$. Order the sets of $\mathcal K'$ by when they were removed from $X$. There is a an integer $k$ such that if $\mathcal K_1$ resp.~$\mathcal K_2$ is the subset of the $k$ resp.~$k+1$ first sets in this ordering then $X_1 = \cup_{K\in\mathcal K_1}K$ has size less than $\frac 1 2 |X|$ and $X_2 = \cup_{K\in\mathcal K_2}$ has size at least $\frac 1 2 |X|$. If $|X_1|\ge\frac 1 8 |X|$ then w.h.p., the total number of deleted edges crossing $(X_1,X-X_1)$ is at least $\frac 7 8 \gamma |X_1|\ge\frac 7{64}\gamma|X|\ge\frac 7{64}\gamma|S|$. If $|X_1| < \frac 1 8 |X|$ then $|X_2| \le |X_1| + \frac 3 4 |X| < \frac 7 8 |X|$ so w.h.p., the total number of deleted edges crossing $(X_2,X-X_2)$ is at least $\gamma |X - X_2| - \frac\gamma 8 |X_2| > \frac \gamma 8 |X| - \frac 7{64}\gamma |X| \ge \frac \gamma{64}|S|$.

In both cases, w.h.p.~$\Delta\ge\frac \gamma{64}|S|$, showing the desired.
\end{proof}
The second property of \texttt{XPrune} follows from this lemma since by the choice of $b_{\max}$ and by Invariant~\ref{Inv:Clustering}, w.h.p.~\texttt{XPrune} and the variant in Lemma~\ref{Lem:TotSizeSmallCuts} behave in exactly the same manner.

The next lemma shows that the first property of \texttt{XPrune} is maintained over all edge deletions, assuming $\mathcal H$ never becomes empty.
\begin{Lem}\label{Lem:DeleteEdgeProp}
Suppose a call to \texttt{XPrune} has just returned where in each execution of line $2$, $\mathcal H$ was non-empty. Then w.h.p., for every $\mathcal C(X)$-respecting cut $(K, X - K)$, the number of edges of $G[X]$ crossing $(K,X-K)$ is at least $\gamma'\min\{|K|,|X - K|\}$.
\end{Lem}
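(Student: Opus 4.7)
The plan is to begin from the hypothesis that $\mathcal H$ is non-empty at every evaluation of the while condition, so \texttt{XPrune}$(e)$ can only have exited through line~$11$ for some graph $H_{\mathcal C}^{*}\in\mathcal H$. For this $H_{\mathcal C}^{*}$ both the test of line~$4$ and the existential test of line~$7$ failed, so (i)~every cluster $C\in\mathcal C(X)$ has degree at least $\kappa\theta_{+}p$ in $H_{\mathcal C}^{*}$, and (ii)~for every $s\in V(H_{\mathcal C}^{*})$ and every $b\in\{1,\ldots,b_{\max}\}$, the call \texttt{Nibble}$(H_{\mathcal C}^{*}[\mathcal C(X)],s,\theta,b)$ outputs no set.

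Because $H_{\mathcal C}^{*}$ was never removed from $\mathcal H$, its edges are sampled from $G_{\mathcal C}$ independently of all updates to $\mathcal C(X)$, exactly the invariant used in the proof of Lemma~\ref{Lem:XPruneDeg}. Combining (i) with a multiplicative Chernoff bound over the at most $|\mathcal C|$ clusters, I would conclude w.h.p.\ that $G_{\mathcal C}[\mathcal C(X)]$ has minimum degree at least $\tfrac{1}{3}\kappa\theta_{+}=\kappa\rho$. This delivers the minimum-degree hypothesis needed to apply Corollary~\ref{Cor:LowConductance} to the pair $(G_{\mathcal C}[\mathcal C(X)],H_{\mathcal C}^{*}[\mathcal C(X)])$ with the preprocessing value of $p$ and with $\rho=\theta_{+}/3$.

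Now fix a $\mathcal C(X)$-respecting cut $(K,X-K)$ with $|K|\leq|X-K|$ and put $S=\{C\in\mathcal C(X):C\subseteq K\}$. For cuts whose $S$ lies inside the detection window of $b_{\max}$ (essentially those with $|S|=O(\Delta/(\gamma\kappa))$), I would use the contrapositive of Corollary~\ref{Cor:SimpleNibble}: any $S$ with $\Phi_{H_{\mathcal C}^{*}[\mathcal C(X)]}(S)\leq 2\theta_{+}=6\rho$ would force \texttt{Nibble}$(H_{\mathcal C}^{*}[\mathcal C(X)],s,\theta,b)$ to succeed for some $s$ and some $b\leq b_{\max}$, contradicting~(ii); hence $\Phi_{H_{\mathcal C}^{*}[\mathcal C(X)]}(S)>6\rho$, and Corollary~\ref{Cor:LowConductance}(2) with $d=3$ gives $\Phi_{G[X]}(K)\geq\rho^{3}/(9\cdot 27)=\theta_{+}^{3}/3^{8}=\gamma'$. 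Since $|C|\geq\kappa\geq 2$ for every $C\in S$, each vertex of $K$ has a neighbor inside its own cluster in $K$, so $\mbox{Vol}_{G[X]}(K)\geq|K|$ and therefore $\delta_{G[X]}(K)\geq\gamma'\,\mbox{Vol}_{G[X]}(K)\geq\gamma'|K|$, as desired.

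The main obstacle is to cover the cuts that are too large for Nibble's window, i.e.\ where the natural $b$ from Corollary~\ref{Cor:SimpleNibble} would exceed $b_{\max}$. For these I would fall back on the initial expansion of $G$: w.h.p.\ the initial graph is a $\gamma$-expander, so $\delta_{G_{\mathrm{init}}}(K)\geq\gamma|K|$ using $|K|\leq|X-K|\leq|V-K|$; at most $\Delta$ edges have been deleted and by Lemma~\ref{Lem:TotSizeSmallCuts} (applied to the uncapped variant of \texttt{XPrune}, which w.h.p.\ agrees with \texttt{XPrune} whenever its size bounds hold) the pruned set satisfies $|V-X|\leq 64\Delta/\gamma$, so at most $\Delta+3|V-X|=O(\Delta/\gamma)$ edges once incident to $K$ have been lost. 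This yields $\delta_{G[X]}(K)\geq\gamma|K|-O(\Delta/\gamma)\geq\gamma'|K|$ as soon as $|K|=\Omega(\Delta/\gamma^{2})$. The delicate point, which I would verify using the precise definition of $b_{\max}$ together with the parameter relations $\theta=\gamma/96$, $\rho=\theta_{+}/3$ and $\gamma'=\theta_{+}^{3}/3^{8}$, is that — up to the $n^{\eps}$ factors hidden in $d_{\max}$, $p$ and $\theta_{+}$ — the Nibble-covered range $|K|\lesssim\Delta/\gamma$ and the expansion-covered range $|K|\gtrsim\Delta/\gamma^{2}$ overlap, so the two arguments together handle every $\mathcal C(X)$-respecting cut.
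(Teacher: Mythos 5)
Your first paragraph (passing to the last surviving graph $H_{\mathcal C}^*$, noting (i) every cluster survives the degree test of line~4 and (ii) every $(s,b)$ with $b\leq b_{\max}$ fails the \texttt{Nibble} test, invoking Lemma~\ref{Lem:XPruneDeg} for the min-degree hypothesis, and then applying the contrapositive of Corollary~\ref{Cor:SimpleNibble} followed by the second part of Corollary~\ref{Cor:LowConductance}) is exactly the paper's argument. The divergence — and the place where the proposal breaks down — is the case split on cut size.

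You try to cover the cuts whose detecting $b$ would exceed $b_{\max}$ by falling back on the initial $\gamma$-expansion of $G$. But the two windows do not meet. From Corollary~\ref{Cor:SimpleNibble} the relevant $b$ satisfies $2^{b}=O(\mbox{Vol}_{H_{\mathcal C}^*}(S))\leq d_{\max}|S|$, so the guaranteed detection window is $|S|=O(h_{\max})$, i.e.\ $|K|=O(\Delta/\gamma)$ (the $d_{\max}$ and $p\kappa$ factors cancel). The expansion fallback needs $(\gamma-\gamma')|K|\geq\Delta+3|V-X|=\Theta(\Delta/\gamma)$, which requires $|K|=\Omega(\Delta/\gamma^{2})$. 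Since $\gamma=\Theta_{n^{\eps}}(1)<1$, we have $\Delta/\gamma\ll\Delta/\gamma^{2}$, so there is a polynomially large band of cut sizes $|K|\in(\Theta(\Delta/\gamma),\Theta(\Delta/\gamma^{2}))$ that neither argument covers. These are genuinely different $n^{\eps}$-exponents, not constants that can be absorbed; the ``delicate point'' you flag is in fact a real hole. The paper sidesteps this entirely by proving the lemma for the variant of \texttt{XPrune} that imposes no upper bound on $b$ — so the \texttt{Nibble} contrapositive applies to every cut, regardless of $|K|$ — and then observing (just after Lemma~\ref{Lem:TotSizeSmallCuts}) that w.h.p.\ the capped and uncapped procedures make exactly the same sequence of choices, because the uncapped variant never actually outputs a set with $b>b_{\max}$ without exceeding the total pruning budget. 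You invoke the uncapped variant only to bound $|V-X|$; moving that equivalence to the center of the argument, so that the detection window is unrestricted, is what closes the gap.
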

\begin{proof}
We prove the lemma for the variant of \texttt{XPrune} in Lemma~\ref{Lem:TotSizeSmallCuts}. This suffices as argued above.

Consider a moment where \texttt{XPrune} has just returned and let $H_{\mathcal C}$ be the first graph in $\mathcal H$. As argued in the proof of Lemma~\ref{Lem:XPruneDeg}, the edges of $H_{\mathcal C}$ are sampled independently of the updates to $X$ done so far. By Corollary~\ref{Cor:SimpleNibble}, for every cut $(K_{\mathcal C},\mathcal C(X) - K_\mathcal C)$, $\Phi_{H_{\mathcal C}}(K_{\mathcal C}) > 2\theta_{+} = 6\rho$. Lemma~\ref{Lem:XPruneDeg} implies that w.h.p., $G_{\mathcal C}[\mathcal C(X)]$ has min degree at least $\rho\kappa$. By the second part of Corollary~\ref{Cor:LowConductance}, w.h.p.~for every cut $(K_{\mathcal C},\mathcal C(X) - K_\mathcal C)$, $\Phi_{G[X]}(K)\ge \rho^3/3^5 = \gamma'$ where $K = \cup_{C\in K_{\mathcal C}}C$; hence, the number of edges of $G[X]$ crossing $(K,X-K)$ is at least $\gamma'\min\{\mbox{Vol}_{G[X]}(K),\mbox{Vol}_{G[X]}(X - K)\}\ge \gamma'\min\{|K|,|X-K|\}$, as desired.
\end{proof}

The final lemma of this subsection shows that the requirement in Lemma~\ref{Lem:DeleteEdgeProp} of $\mathcal H$ being non-empty can be dropped. This shows the correctness of \texttt{XPrune}.
\begin{Lem}\label{Lem:DeleteEdgeT}
W.h.p., $\mathcal H$ is non-empty in all executions of line $3$ in calls to \texttt{XPrune}.
\end{Lem}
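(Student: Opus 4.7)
The plan is to reduce the statement to a counting argument tying removals from $\mathcal H$ to removals from $\mathcal C(X)$. Observe that every iteration of the while loop inside a call to \texttt{XPrune} that reaches line~6 or line~10 also removes at least one cluster from $\mathcal C(X)$ (in line~5 or line~9, respectively). Hence, writing $N$ for the total number of graphs removed from $\mathcal H$ over all calls, $N$ is at most the total number of clusters removed from $\mathcal C(X)$ across all calls. Since $\mathcal H$ is initialized with $h_{\max}$ graphs and never gains new ones, showing $N\le h_{\max}$ (w.h.p.) will imply that the while-loop guard in line~2 is satisfied whenever we arrive at line~3, which is exactly the claim.

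First I would invoke Lemma~\ref{Lem:TotSizeSmallCuts}, which gives that with high probability the total vertex size of all clusters ever removed from $\mathcal C(X)$ is at most $64\Delta/\gamma$, and by the very definition $h_{\max}=\lceil 64\Delta/(\gamma\kappa)\rceil$ this is bounded by $h_{\max}\kappa$. To convert this size bound into a bound on the number of removed clusters, I would show that w.h.p.~every cluster removed from $\mathcal C(X)$ during any call has size at least $\kappa$. This is where Invariant~\ref{Inv:Clustering} plays the key role: a cluster $C\in\mathcal C$ with $|C|<\kappa$ has no edge of $G[X]$ leaving it, so if such a $C$ were in $\mathcal C(X)$ and $X\ne C$, the cut $(C,X-C)$ would have zero crossing edges in $G[X]$ and thus conductance zero, contradicting the $\gamma'$-expansion guarantee that Lemma~\ref{Lem:DeleteEdgeProp} gives at the end of the previous call (with the initial $\gamma$-expander assumption serving as the base case).

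Combining the two ingredients, the number of cluster removals is at most $(64\Delta/\gamma)/\kappa\le h_{\max}$, so $N\le h_{\max}$, and $\mathcal H$ is non-empty at every line-3 execution. A small subtlety: Lemma~\ref{Lem:DeleteEdgeProp} itself was stated conditionally on $\mathcal H$ staying non-empty throughout the current call, so the argument must be set up as a joint induction on the number of \texttt{XPrune} invocations, in which Lemmas~\ref{Lem:DeleteEdgeProp} and~\ref{Lem:DeleteEdgeT} are proved simultaneously: the conductance guarantee at the end of call $k$ is used to certify the size-$\ge\kappa$ property at the start of call $k+1$, which in turn keeps $\mathcal H$ non-empty during call $k+1$ and thereby legitimizes the use of Lemma~\ref{Lem:DeleteEdgeProp} for call $k+1$.

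The main obstacle is the $|C|\ge\kappa$ step, because within a single call to \texttt{XPrune} the set $X$ shrinks as clusters are pruned, and one has to argue that no small isolated cluster is created in $\mathcal C(X)$ mid-call — the Invariant prevents this for clusters in $\mathcal C$, but a cluster that used to touch $X$ could in principle become isolated after another cluster is removed. I would handle this by noting that line~5 removes only clusters of size $\le 3\kappa$ that were already low-degree in $H_{\mathcal C}$, and that line~9 removes the smaller side of a cut of conductance at most $\theta$ in $H_{\mathcal C}[\mathcal C(X)]$, which by Corollary~\ref{Cor:LowConductance} lifts to a conductance bound on the cut in $G[X]$; in particular the residual $G[X]$ retains enough connectivity that the Invariant's dichotomy continues to classify every cluster of $\mathcal C(X)$ on the large side, preserving $|C|\ge\kappa$ throughout the call.
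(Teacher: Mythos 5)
Your high-level bookkeeping is the same as the paper's: every time a graph is removed from $\mathcal H$ in line~6 or line~10, at least one cluster was just removed from $\mathcal C(X)$ in line~5 or line~9, so the number of graph removals is at most the number of cluster removals, and Lemma~\ref{Lem:TotSizeSmallCuts} caps the total removed volume at $64\Delta/\gamma\le h_{\max}\kappa$. The divergence — and the gap — is in how you try to establish that every removed cluster has size at least $\kappa$.

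You have overlooked that Lemma~\ref{Lem:DeleteEdgeT} is proved in the subsection that still operates under the simplifying assumption announced right before the \texttt{XPrune} preprocessing: ``we assume that no edge deletion disconnects a spanning tree $T(C)$ of a cluster $C\in\mathcal C$.'' Under that assumption, the maintenance procedure of Section~\ref{subsec:UpdateClusters} never splits or merges a cluster, so every cluster permanently retains its initial size from \texttt{FINDCLUSTERS}, which lies in $[\kappa,3\kappa]$. That observation — deterministic, no expansion argument required — is the entire content of the $|C|\ge\kappa$ step; the paper's invocation of Invariant~\ref{Inv:Clustering} is merely shorthand for it, and the invariant's ``or no edge of $G[X]$ leaves $C$'' escape clause never fires because cluster sizes never drop below $\kappa$ in the first place. (The case of genuinely small clusters is handled only later, in the ``Handling cluster splits and merges'' subsection, where \texttt{XPrune} is modified precisely so that undersized clusters are discarded in line~1 without removing any graph from $\mathcal H$.)

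Your replacement argument — $|C|<\kappa$ implies $C$ isolated by Invariant~\ref{Inv:Clustering}, which would contradict the $\gamma'$-expansion from Lemma~\ref{Lem:DeleteEdgeProp} — has two real problems beyond being unnecessary. First, the circularity you flag is not a cosmetic worry: Lemma~\ref{Lem:DeleteEdgeProp} is stated under the hypothesis that $\mathcal H$ is non-empty throughout the call, which is exactly Lemma~\ref{Lem:DeleteEdgeT}; a joint induction over calls could in principle untangle this, but it is a much heavier apparatus than the one-line observation above. Second, and more damaging, Lemma~\ref{Lem:DeleteEdgeProp} only certifies expansion of $G[X]$ \emph{after} a call to \texttt{XPrune} returns; clusters are removed in lines~5 and~9 \emph{during} the call, after the edge deletion, which is exactly the window in which expansion has not yet been restored — restoring it is the whole point of \texttt{XPrune}. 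Your closing paragraph tries to patch this by appealing to the low-conductance guarantee of the cut found in line~9, but a low-conductance cut bounds connectivity \emph{across} the removed side from above; it gives you no lower bound on the connectivity of what remains, which is what you would actually need mid-loop. Also note that your worry about a ``small isolated cluster being created'' mid-call is spurious: \texttt{XPrune} removes clusters from $\mathcal C(X)$ but never alters their vertex sets, so a cluster becoming isolated does not make it small.

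Bottom line: the counting skeleton is right, but the $|C|\ge\kappa$ step should be obtained from the no-split assumption (clusters never change size), not from an expansion argument. As written, your proof would not close.
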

\begin{proof}
Since we assume that clusters never become disconnected it follows from Invariant~\ref{Inv:Clustering} that every time \texttt{XPrune} removes a graph from $\mathcal H$, the size of $X$ is reduced by at least $\kappa$. By Lemma~\ref{Lem:TotSizeSmallCuts}, w.h.p.~this happens no more than $h_{\max}$ times which is the initial size of $\mathcal H$.
\end{proof}

\subsection{Implementation}
We now show how to implement the preprocessing and the update step of \texttt{XPrune} and analyze the performance of this implementation.

\subsubsection{Preprocessing}
We first describe how to obtain the list $\mathcal H$ of graphs. We will use an adjacency list representation for each graph in $\mathcal H$ and we use a linked list representation of $\mathcal H$ itself. Initially, all $h_{\max} = \Theta_{n^{\eps}}(\Delta/\kappa)$ graphs in $\mathcal H$ are empty, containing only vertices. To obtain the edges, the trivial way of scanning through all the graphs in $\mathcal H$ and including each $e$ of $G_{\mathcal C}$ in each of them independently with probability $p$ will be too slow so we need to do something more clever.

For each edge $e\in G_{\mathcal C}$, we apply a procedure that we describe in the following.

We keep a list $L$ of the graphs from $\mathcal H$. This list will shrink during the course of the algorithm. We represent the initial  $L$ as an array and keep an index to the start of $L$. We implicitly shrink $L$ by increasing this index, letting the new $L$ be the suffix of the array starting from this index.

Let $s$ be the current length of $L$. For $k = 1,\ldots,s$, let $\mathcal E_{k,s}$ be the event that the $k$th graph in $L$ is the first to include $e$ among all the graphs in $L$. Let $\mathcal E_{s+1,s}$ be the event that $e$ is not added to any graph in $L$. Then $p_{k,s} = \Pr(\mathcal E_{k,s}) = (1-p)^{k-1}p$ for $k = 1,\ldots,s$ and $p_{k,s} = \Pr(\mathcal E_{k,s}) = (1-p)^s$ for $k = s+1$. We pick $k$ randomly according to this probability distribution, add $e$ to the $k$th graph in $L$ (assuming $k \leq s$), update $L$ to its suffix of length $\max\{0, s - k\}$, and then repeat the procedure on the new list $L$. The procedure stops when $L$ is empty. Running this procedure is equivalent to including $e$ in each graph of $\mathcal H$ independently with probability $p$.

We need to describe how to pick $k$ from this distribution. We first precompute $p_{i_1,i_2,s} = \sum_{i = i_1}^{i_2} p_{i,s}$ for all $1\le i_1\le i_2\le s+1\le h+1$. We shall not do this explicitly. Instead, observing that $p_{i_1,i_2,s} = p_{i_1,i_2,i_2+1}$ when $i_2\le s$, we only compute $p_{i_1,i_2,i_2}$ and $p_{i_1,i_2,i_2+1}$ for all choices of $i_1$ and $i_2$, using a simple bottom-up dynamic programming procedure. From these values, we can obtain any $p_{i_1,i_2,s}$ in constant time.

Given these precomputed values and letting $s$ be the current length of $L$, we find the next $k$ with the following recursive procedure which takes the pair $(i_1,i_2)$ as input which is initially $(1,s+1)$. If $i_1 < i_2$, let $j = \lceil (i_1 + i_2)/2\rceil$. We recurse with the pair $(i_1,j-1)$ with probability $p_{i_1,j-1,s}/p_{i_1,i_2,s}$ and recurse with the pair $(j,i_2)$ otherwise, i.e., with probability $p_{j,i_2,s}/p_{i_1,i_2,s}$. The recursion stops once $i_1 = i_2$ in which case we pick $k = i_1$. It is easy to see that this recursive procedure picks $k$ according to the distribution above. This completes the description of the implementation of the procedure for forming $\mathcal H$.

Each set $\mathcal E_b(s,H_{\mathcal C})$ resp.~$\mathcal S_b(e,H_{\mathcal C})$ is stored as a linked list with a pointer from $s$ resp.~$e$ to the start of this list. This completes the description of the implementation of the preprocessing step.

The following lemma shows the performance of the procedure just described.
\begin{Lem}\label{Lem:ConstructionHGraphs}
With high probability, the initial graphs in $\mathcal H$ can be constructed in worst-case time $\tilde O(n + ph_{\max}n + h_{\max}^2) = \tilde O(n) + O_{n^\eps}(\Delta)$.
\end{Lem}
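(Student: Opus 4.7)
The plan is to account separately for the three components of the construction: initialization of the empty graphs, precomputation of the prefix-sum table used in the sampling procedure, and the edge-by-edge sampling loop.

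First, initializing the $h_{\max}$ empty graphs on vertex set $\mathcal C$ using adjacency lists takes $O(h_{\max}|\mathcal C|)$ time. Since $|\mathcal C| = O(n/\kappa)$ and $h_{\max} = O_{n^{\eps}}(\Delta/\kappa)$, while $\Delta = n^{1/2+\Theta(\eps)}$ and $\kappa^2 = n^{1-\Theta(\eps)}$, this is within $\tilde O(n)$. Building the DP table of values $p_{i_1,i_2,s}$ is handled by computing only $p_{i_1,i_2,i_2}$ and $p_{i_1,i_2,i_2+1}$ for each pair $(i_1,i_2)$ (bottom-up from single-element ranges, each entry obtained in $O(1)$ from two smaller ones), which costs $O(h_{\max}^2)$ time and space. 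Any query $p_{i_1,i_2,s}$ is then answered in $O(1)$ using the observation $p_{i_1,i_2,s} = p_{i_1,i_2,i_2+1}$ for $s \ge i_2$ and $p_{i_1,i_2,s} = p_{i_1,i_2,i_2}$ when $s = i_2 - 1$, etc.

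Next I would verify correctness of the sampling loop for a single edge $e$. The recursive routine on a range $(i_1,i_2)$ with midpoint $j$ recurses left with probability $p_{i_1,j-1,s}/p_{i_1,i_2,s}$ and right with the complementary probability; induction on $i_2 - i_1$ shows that the leaf $k$ is reached with probability $p_{k,s}/p_{i_1,i_2,s}$. Conditioning on the outcome being inside $\{i_1,\ldots,i_2\}$ at the top of each invocation, the returned $k$ is distributed exactly according to $p_{k,s}$ restricted to $\{1,\ldots,s+1\}$. Since each outer iteration then truncates $L$ to its suffix after position $k$, the sequence of choices is equivalent to flipping $s$ independent Bernoulli$(p)$ coins; hence the construction is distributionally identical to including each edge in each graph independently with probability $p$.

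For the running time of the sampling loop, each call to the recursive sampler costs $O(\log h_{\max})$, and the total number of calls across all edges is $E + N$, where $E$ is the number of (edge, graph) inclusions and $N = O(n)$ counts the final ``nothing selected'' terminations (one per edge). Here $E$ is a sum of $|E(G_{\mathcal C})| \cdot h_{\max} = O(nh_{\max})$ independent Bernoulli($p$) trials with mean $O(nph_{\max})$, so a standard Chernoff bound gives $E = O(nph_{\max} + \log n)$ w.h.p. The total cost of the loop is thus $\tilde O(n + nph_{\max})$ with high probability. Combining all three components yields the claimed bound $\tilde O(n + ph_{\max}n + h_{\max}^2)$; plugging in $p = \tilde\Theta(1/\kappa)$, $h_{\max} = O_{n^{\eps}}(\Delta/\kappa)$, and $\kappa^2 = n^{1-\Theta(\eps)}$ shows $nph_{\max} = O_{n^{\eps}}(\Delta)$ and $h_{\max}^2 = O_{n^{\eps}}(1)$, giving the final form $\tilde O(n) + O_{n^{\eps}}(\Delta)$.

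The only delicate step is the Chernoff bound for $E$: if the expectation $nph_{\max}$ is very small (smaller than $\polylog n$), the additive $O(\log n)$ slack above is what guarantees concentration, and this slack is absorbed by the $\tilde O$ notation. Everything else is a direct accounting of the procedure described before the lemma.
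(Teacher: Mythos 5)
Your proof matches the paper's approach essentially step for step: bound $\sum_e s(e)$ (your $E$) by Chernoff at $O(nph_{\max})$ w.h.p., charge $O(\log n)$ per recursive draw with $s(e)$ or $s(e)+1$ draws per edge, and add $O(h_{\max}^2)$ for the one-time precomputation of the $p_{i_1,i_2,s}$ table. The extra care you take with initialization of the empty graphs and the additive $\log n$ slack in the Chernoff bound are correct refinements that the paper leaves implicit (the expectation here is $\Omega_{n^{\eps}}(\Delta) = \Omega(\sqrt n)$, so the small-expectation case does not actually arise), but they do not change the argument in any substantive way.
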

\begin{proof}
In the proof, we use the same notation as in the description of the implementation above. For each edge $e\in G_{\mathcal C}$, we let $s(e)$ be the number of graphs in $\mathcal H$ that $e$ is included in at the end of the preprocessing step.

Consider an edge $e\in G_{\mathcal C}$. We get $E[\sum_{e\in E(G_{\mathcal C})}s(e)] = \sum_{e\in E(G_{\mathcal C})}E[s(e)] = \sum_{e\in E(G_{\mathcal C})}ph_{\max} = O(ph_{\max}n)$. A Chernoff bound shows that w.h.p., $\sum_{e\in E(G_{\mathcal C})}s(e) =  O(ph_{\max}n) = O_{n^{\eps}}(\Delta)$. Excluding the time to precompute $p_{i_1,i_2,s}$, the lemma will thus follow if we can show that $e$ can be processed in $O((s(e)+1)\log n)$ worst-case time.

The number of times we need to pick a $k$ from the distribution is either $s(e)$ or $s(e)+1$. It is easy to see that, given precomputed values $p_{i_1,i_2,s}$, the recursive procedure runs in $O(\log n)$ time, as desired. Values $p_{i_1,i_2,s}$ only need to be computed once, not for each edge $e$. The time to compute these values is $O(h_{\max}^2) = O((\Delta/\kappa)^2) = O_{n^{\eps}}(\Delta^2/n) = O_{n^{\eps}}(\Delta)$.
\end{proof}

The rest of the preprocessing is dominated by the time for calls to \texttt{Nibble}. With high probability, the number of calls to \texttt{Nibble} with parameter $b$ is $O(h_{\max}(n/\kappa))$. By Corollary~\ref{Cor:SimpleNibble}, each such call takes $\tilde O(2^b/\theta^5)$; hence w.h.p., the total time for calls to \texttt{Nibble} over all $b$ is $\tilde O(h_{\max}(n/\kappa)2^{b_{\max}}/\theta^5) = O_{n^\eps}(\Delta^2n/\kappa^3) = O_{n^{\eps}}(\Delta^2/\sqrt n)$.


We conclude that w.h.p., the total preprocessing time is $\tilde O(n) + O_{n^\eps}(\Delta + \Delta^2/\sqrt n) = \tilde O(n) + O_{n^{\eps}}(\Delta^2/\sqrt n)$ which is within the bound of Theorem~\ref{Thm:DisconnectExpanderGraph}.

\subsubsection{Updates}
We now describe how to implement \texttt{XPrune}. Note that in line $7$, \texttt{Nibble} is applied to subgraphs of graphs $H_{\mathcal C}\in\mathcal H$ induced by $\mathcal C(X)$. In our implementation, we shall maintain these subgraphs explicitly by removing edges of $H_{\mathcal C}$ incident to clusters removed from $\mathcal C(X)$. This way, $H_{\mathcal C}[\mathcal C(X)]$ is a component of $H_{\mathcal C}$ so when we run \texttt{Nibble} on $H_{\mathcal C}$ with a start vertex in $\mathcal C(X)$, we do not need to worry about edges not in $H_{\mathcal C}[\mathcal C(X)]$ being visited.

To implement line $1$, we do as follows for each $H_{\mathcal C}\in\mathcal H$. Assume that $e\in H_{\mathcal C}$ as otherwise nothing needs to be done. After deleting $e$ from $H_{\mathcal C}$, for each $b = 1,\ldots,b_{\max}$, the only calls \texttt{Nibble}$(H_{\mathcal C}[\mathcal C(X)], s, \theta, b)$ that are affected by the deletion are those from vertices $s\in \mathcal S_b(e, H_{\mathcal C})$. For each such $s$, we run \texttt{Nibble}$(H_{\mathcal C}[\mathcal C(X)], s, \theta, b)$; $s$ is made $b$-active if a set is returned and $b$-passive otherwise. Let $\mathcal E_b'(s,H_{\mathcal C})$ be the set of edges visited by this call. For each $e'\in\mathcal E_b(s,H_{\mathcal C}) - \mathcal E_b'(s,H_{\mathcal C})$, we remove $s$ from $\mathcal S_b(e',H_{\mathcal C})$ and for each $e'\in\mathcal E_b'(s,H_{\mathcal C}) - \mathcal E_b(s,H_{\mathcal C})$, we add $s$ to $\mathcal S_b(e',H_{\mathcal C})$. Finally, we update $\mathcal E_b(s,H_{\mathcal C})$ to $\mathcal E_b'(s,H_{\mathcal C})$. This correctly updates all $\mathcal S$- and $\mathcal E$-sets.

To maintain lists $\mathcal L(H_{\mathcal C})$ in line $1$, we only need to check the low-degree condition of line $4$ for the clusters containing $u$ and $v$ and for the clusters that have been merged or split during the current update (the latter will only be relevant when we later allow clusters to change over time).

Next, we describe how lines $3$ to $6$ are implemented. For the condition in line $4$, checking each $C\in\mathcal C$ will be too slow. Instead, we make use of the $\mathcal L$-lists. Consider any execution of line $4$. If $\mathcal L(H_{\mathcal C})$ is empty, no $C$ exists satisfying the condition. Otherwise, we obtain $C$ by extracting the first element of $\mathcal L(H_{\mathcal C})$.

Handling the update in line $5$ is done as follows. For each graph $H_{\mathcal C}'\in\mathcal H - \{H_{\mathcal C}\}$, we delete from $H_{\mathcal C}'$ every edge incident to $C$. For each deleted edge $e'$, we run \texttt{Nibble} from all vertices in $\mathcal S_b(e',H_{\mathcal C}')$ for all $b$ and update the $b$-active/$b$-passive bits and the $\mathcal S$- and $\mathcal E$-sets as above. For each cluster $C'$ incident to $C$ in $H_{\mathcal C}'$, the removal of $C$ may have caused $C'$ to now have low degree. We update the adjacency list of each such $C'$ and add it to $\mathcal L(H_{\mathcal C}')$ if it has low degree. We maintain $\mathcal C(X)$ implicitly by associating a bit with each cluster indicating whether it belongs to $\mathcal C(X)$. Clusters removed from $\mathcal C(X)$ in lines $5$ and $8$ are stored in a linked list which is output in line $10$.

To implement line $7$, we check for each $b$ if there are any $b$-active vertices in $H_{\mathcal C}$. If not, the condition in line $7$ cannot be satisfied and we execute line $10$. Otherwise, we pick a $b$ and a $b$-active vertex $s$ in $H_{\mathcal C}$ and run \texttt{Nibble}$(H_{\mathcal C}[\mathcal C(X)], s, \theta, b)$.

The update in line $9$ is handled similarly to line $5$ the only modification being that we process every cluster $C$ on the $K_{\mathcal C}$-side of the cut rather than just a single cluster.

\paragraph{Performance:}
The update time is dominated by the time spent in the while-loop. Consider a single execution of lines $3$ to $6$. Since we maintain the $\mathcal L$-lists, we can obtain a cluster $C$ satisfying the condition in line $4$ in $O(1)$ time, assuming such a $C$ exists. If it does then since w.h.p., each vertex of $H_{\mathcal C}$ has degree at most $d_{\max} = O_{n^{\eps}}(1)$ and since $C$ has degree less than $\kappa\theta_{+}p = O_{n^\eps}(1)$, w.h.p., updating $\mathcal L$-lists takes $O_{n^\eps}(1)$ time per graph $H_{\mathcal C}'\in\mathcal H - \{H_{\mathcal C}\}$ since we only need to update adjacency lists for clusters adjacent to $C$. Since we delete from $H_{\mathcal C}'$ the edges incident to $C$, it follows from Lemma~\ref{Lem:RandomWalkOverlap} that for each $b$, w.h.p.~only $\tilde O(d_{\max}2^b/\theta^5) = O_{n^\eps}(2^b)$ calls to \texttt{Nibble} in $H_{\mathcal C}'$ with parameter $b$ need to be updated which by Corollary~\ref{Cor:SimpleNibble} takes a total of $O_{n^\eps}(2^{2b})$ time. Over all $b$ and $H_{\mathcal C}'$, this is $O_{n^\eps}(h_{\max}2^{2b_{\max}}) = O_{n^\eps}(\Delta^3/\kappa^3) = O_{n^\eps}(\Delta^3/n^{3/2})$ time.

We have bounded the time for a single execution of lines $3$ to $6$. By Lemma~\ref{Lem:TotSizeSmallCuts}, the number of executions of these lines in an update is $O(\Delta/(\gamma\kappa)) = O_{n^\eps}(\Delta/\sqrt n)$ which sums up to a total time for these executions of $O_{n^\eps}(\Delta^4/n^2)$.

It remains to bound the total time spent in lines $7$ to $10$ during an update. Consider a single execution of these lines. For each $H_{\mathcal C}'\in\mathcal H - \{H_{\mathcal C}\}$, the number of edges deleted from $H_{\mathcal C}'$ is $O(|K|p)$ (w.h.p.) where $K = \cup_{C\in K_{\mathcal C}'}C$. These edges are incident in $H_{\mathcal C}'$ to at most $O(|K|p)$ clusters so updating the $\mathcal L$-lists in $H_{\mathcal C}'$ takes $O(|K|p^2\kappa) = O_{n^\eps}(|K|/\kappa) = O_{n^\eps}(|K|/\sqrt n)$ time. For each $H_{\mathcal C}\in\mathcal H$ and each $b$, we keep the set of $b$-active vertices of $H_{\mathcal C}$ in a linked list so that we can identify such a vertex in constant time if it exists. By Lemma~\ref{Lem:RandomWalkOverlap}, the number of calls to \texttt{Nibble} in $H_{\mathcal C}'$ with parameter $b$ that are updated is $\tilde O((|K|p)(2^b/\theta^5)) = O_{n^\eps}(|K|2^b/\sqrt n)$ and by Corollary~\ref{Cor:SimpleNibble}, the total time for these calls is $O_{n^\eps}(|K|2^{2b}/\sqrt n)$. Over all $b$ and $H_{\mathcal C}'$, this is $O_{n^\eps}(|K|2^{2b_{\max}}h_{\max}/\sqrt n) = O_{n^\eps}(|K|\Delta^3/n^2)$. By Lemma~\ref{Lem:TotSizeSmallCuts}, the total size of sets $K$ over all executions of line $9$ in an update is $O(\Delta/\gamma) = O_{n^\eps}(\Delta)$. Hence, total time for lines $7$ to $10$ in a single update is $O_{n^\eps}(\Delta^4/n^2)$.

It follows that w.h.p., we get an update time within the bound of Theorem~\ref{Thm:DisconnectExpanderGraph}.

\subsection{From multigraphs to simple graphs}
Above we made two simplifying assumptions, namely that each graph $H_{\mathcal C}\in\mathcal H$ is simple and that no edge deletion disconnects a cluster in $\mathcal C$. In this subsection, we focus on getting rid of the former assumption.

Associate with each graph $H_{\mathcal C}\in\mathcal H$ a simple graph $\overline H_{\mathcal C}$ as follows. For each vertex $C$ of $H_{\mathcal C}$, if $d_C$ denotes its degree, then we have vertex set $\overline C = \{v_1(C),\ldots,v_{d_C}(C)\}$ in $\overline H_{\mathcal C}$ where $\{v_1(C),\ldots,v_{d_C}(C)\}$ is the subset of $V$ of endpoints in $C$ of edges of $H_{\mathcal C}$ incident to $C$. For each edge $e = (C,D)$ of $H_{\mathcal C}$, if $e$ is the $i$th edge incident to $C$ and the $j$th edge incident to $D$ in their adjacency list orderings, we add to $\overline H_{\mathcal C}$ the edge $(v_i(C), v_j(D))$ and identify this edge with $e$. To complete the construction of $\overline H_{\mathcal C}$, we apply for each $C\in V(H_{\mathcal C})$ the algorithm of Lemma~\ref{Lem:SimpleExpander}, giving w.h.p.~a $1$-expander graph of $\overline C$ with $O(|\overline C|) = O(d_C)$ edges and max degree $O(\log d_C)$. Note that $V(\overline H_{\mathcal C})\subseteq X$.

In the following, let $\overline X$ be $X$ restricted to the union of $V(\overline C)$ over all $C\in\mathcal C(X)$.

\paragraph{Preprocessing:}
We now describe the modifications to \texttt{XPrune}. In the preprocessing step, we form both graphs $H_{\mathcal C}$ as well as the graphs $\overline H_{\mathcal C}$. Instead of applying \texttt{Nibble} to all vertices of $H_{\mathcal C}$, we now apply it to all vertices of $\overline H_{\mathcal C}$ and the $\mathcal E$- and $\mathcal S$-sets are formed w.r.t.~$\overline H_{\mathcal C}$ but only for inter-cluster edges. This suffices since edges of $1$-expander graphs are unchanged over all updates. The range of $b$-values is changed since $b_{\max}$ is adjusted, as we describe later.

\paragraph{Updates:}
Now consider the update step where we no longer assume that graphs in $\mathcal H$ are simple. Except for line $7$ in \texttt{XPrune}, we use these graphs as before. In line $7$, \texttt{Nibble} requires a simple graph as input. We instead give as input to this procedure the graph $\overline H_{\mathcal C}[\overline X]$.

Suppose \texttt{Nibble} outputs a set $K$. Note that $K$ may not be $\mathcal C(X)$-respecting. To form $K_{\mathcal C}$ in line $7$ of \texttt{XPrune}, we apply an algorithm which is essentially the same as the one in the third part of Lemma~\ref{Lem:LowConductanceMultigraph}. More precisely, let $\mathcal C_1$ be the collection of vertex sets $\overline C$ intersecting both sides of $(K,\overline X - K)$ and $|\overline C - K|\leq |\overline C\cap K|$ and let $\mathcal C_2$ be the collection of the remaining vertex sets $\overline C$ intersecting both sides of $(K,\overline X - K)$. We let $K_{\mathcal C}$ be the set of clusters $C$ such that $\overline C\subseteq (K\cup\cup_{C\in\mathcal C_1}\overline C) - (\cup_{C\in\mathcal C_2}\overline C)$.

The $\mathcal E$- and $\mathcal S$-sets are maintained as before but for the graphs $\overline H_{\mathcal C}$. This completes the description of the modifications needed in \texttt{XPrune}.

\subsubsection{Correctness}
We now show that the modified version of \texttt{XPrune} is correct for suitable new choices of the parameters of this section.

First we claim that for any $H_{\mathcal C}\in\mathcal H$ and any $\mathcal C(X)$-respecting cut, the conductance of this cut in $H_{\mathcal C}$ and in $\overline H_{\mathcal C}$ differ by only a constant factor. To see this, observe that the number of edges crossing the cut is the same in the two graphs. Since the expander graphss inserted when forming $\overline H_{\mathcal C}$ are sparse and since there are $\Theta(|\overline C|)$ edges of $H_{\mathcal C}$ incident to each cluster $C$, the volume of each side of the cut differs by only a constant factor in the two graphs. Hence, the conductance of the cut differs by only a constant factor in the two graphs.

It follows from what we have just shown that if every cut $(K,\overline X - K)$ has conductance at least $\theta$ in a graph $\overline H_{\mathcal C}[\overline X]$ then in particular every $\mathcal C(X)$-respecting cut in $H_{\mathcal C}[\mathcal C(X)]$ has conductance $\Omega(\theta)$.

Next, we claim that if a cut $(K,\overline X - K)$ has conductance at most $\theta$ in a graph $\overline H_{\mathcal C}[\overline X]$ then $K_{\mathcal C}$, obtained as described above, has conductance $O(\theta)$ in $H_{\mathcal C}[\mathcal C(X)]$. To see this, note that each vertex of $\overline H_{\mathcal C}$ has only a constant number of incident inter-cluster edges. The proof now follows using the same arguments as in the proof of the third part of Lemma~\ref{Lem:LowConductanceMultigraph}. These arguments also show that the number of vertices of $\overline X$ in $K_{\mathcal C}$
and in $K$ differ by only a constant factor.

We now go through the lemmas in Section~\ref{subsec:XPruneCorrectness} that need to be adjusted to the new version of \texttt{XPrune}. Previously, we set $\theta = \gamma/96$. By the above observations, Lemma~\ref{Lem:TotSizeSmallCuts} remains correct if we make $\theta$ smaller by a sufficiently big constant factor.

To ensure that Lemma~\ref{Lem:DeleteEdgeProp} remains correct, first note that w.h.p., $d_{\max}$ in Corollary~\ref{Cor:SimpleNibble} is now $\tilde O(1)$. We will determine the new value of $b_{\max}$. By Lemma~\ref{Lem:TotSizeSmallCuts}, \texttt{Nibble} only needs to identify sets $K\subseteq\overline X$ such that $|K_{\mathcal C}| = O(\Delta/(\gamma\kappa))$. Since w.h.p.~graphs in $\mathcal H$ have max degree $O(\kappa p)$, we get w.h.p.~that $|K|$ is upper bounded by $O(\kappa p|K_{\mathcal C}|) = O(p\Delta/\gamma) = \tilde O(\Delta/(\gamma\kappa\theta_{+}^2))$. Defining $b_{\max}$ as before to be the largest integer $b$ such that the size lower bound in Corollary~\ref{Cor:SimpleNibble} is at most this upper bound, we get $b_{\max} = \lg(\tilde\Theta(\Delta/(\gamma\kappa\theta_{+}^2))) = \lg(\Theta_{n^{\eps}}(\Delta/\kappa))$. Recall that we previously chose $\gamma' = \theta_{+}^3/3^8$. It follows from the above that Lemma~\ref{Lem:DeleteEdgeProp} remains correct if we make $\gamma'$ smaller by a sufficiently large constant factor.

It is easy to see that the remaining lemmas in Section~\ref{subsec:XPruneCorrectness} remain correct for the modified version of \texttt{XPrune}.

\subsubsection{Performance}
It remains to show the performance of the modified version of \texttt{XPrune}.

\paragraph{Preprocessing:}
By Lemma~\ref{Lem:SimpleExpander}, forming graphs $\overline H_{\mathcal C}$ does not increase the asymptotic preprocessing time. The remaining time spent is dominated by the calls to \texttt{Nibble}. Since w.h.p.~the number of vertices of each graph $\overline H_{\mathcal C}$ is a factor of $\Theta(\kappa p) = \Theta_{n^{\eps}}(1)$ larger than the number of vertices in $H_{\mathcal C}$, the number of calls to \texttt{Nibble} increases by this factor as well. Since $2^{b_{\max}}$ is a factor of $O_{n^\eps}(1)$ larger than before and since $d_{\max}$ in Corollary~\ref{Cor:SimpleNibble} is a factor of $O_{n^\eps}(1)$ smaller, each call to \texttt{Nibble} takes the same time as before up to a constant number of $n^{\eps}$-factors. Hence, the overall time for this part of the preprocessing increases by a factor of $O_{n^\eps}(1)$. We conclude that the preprocessing time bound in the previous subsection still holds.

\paragraph{Updates:}
As observed above, each call to \texttt{Nibble} takes the same amount of time as before up to $\log$-factors. By Lemma~\ref{Lem:RandomWalkOverlap} and the above, the number of calls to \texttt{Nibble} per edge deletion increases by a factor of $\tilde\Theta(1/\theta_{+}^2) = O_{n^\eps}(1)$. Since we only delete inter-cluster edges, our previous bound on the number of edges deleted per update remains valid. Hence, the update time bound in the previous subsection still holds.

\subsection{Handling cluster splits and merges}
We now remove the remaining simplifying assumption and allow clusters to become disconnected. Clusters can now both split and merge as described in Section~\ref{subsec:UpdateClusters}. The preprocessing step remains the same so we only focus on updates.

Recall that only $O(1)$ clusters become split or merged per update. Assume for now that at all times, each cluster has size between $\kappa$ and $3\kappa$. We modify line $1$ of \texttt{XPrune}$(e)$ so that it does the following for each $H_{\mathcal C}\in\mathcal H$ when a cluster is split by the deletion of $e$. For each cluster $C$ destroyed by the updates to $\mathcal C$, we delete $C$ and temporarily delete its incident inter-cluster edges from $H_{\mathcal C}$. In $\overline H_{\mathcal C}$, we delete $\overline C$ and its $1$-expander graph and temporarily delete its incident inter-cluster edges. For every new cluster $C$, we add $C$ to $H_{\mathcal C}$ along with its incident inter-cluster edges that were temporarily deleted. In $\overline H_{\mathcal C}$, we add $\overline C$ and its incident inter-cluster edges together with a new $1$-expander graph of $\overline C$.

Next, for every inter-cluster edge $e'$ of $\overline H_{\mathcal C}$ that was temporarily deleted and for each $b$, we run \texttt{Nibble} from every vertex in $\mathcal S(s,\overline H_{\mathcal C})$ and $\mathcal S$- and $\mathcal E$-sets are updated accordingly as described earlier.

\paragraph{Correctness:}
In the previous subsection, it sufficed to define the $\mathcal S$- and $\mathcal E$-sets only w.r.t.~inter-cluster edges since clusters remained fixed over all updates. We claim that this still suffices in this subsection. To see this, note that a call to \texttt{Nibble} visits an edge $e'$ of a $1$-expander graph iff it visits an inter-cluster edge incident to this edge. Hence, if $e'$ is deleted or if $e'$ is a new edge, there is an inter-cluster edge incident to $e'$ which is temporarily deleted in the above procedure. Therefore, \texttt{Nibble} is rerun from every starting vertex that is affected by the deletion or insertion of $e'$. It follows that $\mathcal S$- and $\mathcal E$-sets are correctly maintained.

Since updates to clusters happen independently of the random bits used to form the graphs $H_{\mathcal C}$ and $\overline H_{\mathcal C}$, it follows that \texttt{XPrune} remains correct.

So far, we have assumed that at all times, clusters have size between $\kappa$ and $3\kappa$. Our correctness and performance analysis in this section rely crucially on this property. By Invariant~\ref{Inv:Clustering}, it may happen that a cluster $C$ has size less than $\kappa$. We modify \texttt{XPrune}$(e)$ so that for each such $C$ formed in line $1$, we remove it from $\mathcal C(X)$ and add it as part of the set of clusters output in line $10$. By Invariant~\ref{Inv:Clustering}, $C$ is disconnected from the rest of $G[X]$ so the cut $(C,X - C)$ is independent of any random bits used to form the graphs in $\mathcal H$; hence, unlike in lines $5$ and $6$, we do not need to remove the current graph from $\mathcal H$ so Lemma~\ref{Lem:DeleteEdgeT} still holds. Also note that no $\mathcal L$-list and no $\mathcal S$- or $\mathcal E$-set need to be updated when $C$ is removed in line $1$.

The modification to line $1$ ensures that in lines $2$ to $9$, every cluster has size between $\kappa$ and $3\kappa$ as desired and correctness of \texttt{XPrune} follows.

\paragraph{Performance:}
Since the above modification to \texttt{XPrune} makes no changes to the preprocessing step, it suffices to bound the update time.

The only change to \texttt{XPrune}$(e)$ is in line $1$. Since only $O(1)$ clusters are changed, updating the $\mathcal L$-lists accordingly does not take asymptotically more time than before. The upper bound on the time spent by \texttt{Nibble} to maintain the $\mathcal S$- and $\mathcal E$-sets in the while-loop clearly is also an upper bound on the time spent by \texttt{Nibble} in line $1$, again since only $O(1)$ clusters are affected. By Lemma~\ref{Lem:SimpleExpander}, it takes $O(\kappa)$ time to compute $1$-expander graphs for the new clusters which is also within our previous update time bound.

We conclude that the new update time is asymptotically the same as in the previous subsection.

\section{Concluding Remarks}\label{sec:ConclRem}
We have given a Las Vegas data structure for fully-dynamic MSF which w.h.p.~handles an update in $O(n^{1/2 - c})$ worst-case time for some constant $c > 0$ where $n$ is the number of vertices of the graph. This is the first improvement over the $O(\sqrt n)$ worst-case bound of Eppstein et al.~\cite{Sparsification}. Previously, such an improvement was not even known for the problem of maintaining a spanning forest of an unweighted fully-dynamic graph. We also obtain the first Las Vegas data structure for fully-dynamic connectivity with worst-case update time polynomially better than $O(\sqrt n)$; this data structure has $O(1)$ worst-case query time.

By breaking this important barrier for fully-dynamic MSF, our hope is that further progress can be made for this problem as well as for fully-dynamic connectivity. We also hope that our techniques are applicable to other dynamic graph problems. Dynamic global minimum cut may be one such problem, especially given that the recent deterministic near-linear time algorithm for the static version of the problem~\cite{StaticMinCut} exploits properties related to low-conductance cuts.

We leave two open problems for dynamic MSF, namely can polylogarithmic update time be achieved w.h.p., thereby matching the best known amortized update time bounds, and can the worst-case update time of the type in this paper be matched deterministically?

\end{document}